\documentclass[12pt]{article}
\usepackage[authoryear, round]{natbib}
\usepackage[english]{babel}
\usepackage{amsmath,amssymb}
\usepackage{ifthen}
\usepackage{graphicx}
\usepackage{amsthm}
\usepackage{hyperref}
\usepackage{mathrsfs}
\usepackage{amssymb}
\usepackage{amsfonts}
\usepackage{mathtools}
\usepackage{rotating}
\usepackage[font=small,labelfont=bf]{caption}
\usepackage{subcaption}
\usepackage{geometry}
\usepackage{setspace}
\usepackage{multirow}
\usepackage{longtable}
\usepackage{booktabs}
\usepackage{enumitem}
\usepackage{color}
\usepackage{xcolor}
\usepackage{tikz}
\usetikzlibrary{calc}

\newcommand{\E}{\mathbb{E}}

\graphicspath{{figs/}}

\newcommand{\sizeCBi}{57}  %
\newcommand{\sizeCBii}{62}  %
\newcommand{\sizeCBiii}{68}  %
\newcommand{\sizeCBiv}{85}  %
\newcommand{\sizeCBv}{91}  %
\newcommand{\sizeCSi}{60}  %
\newcommand{\sizeCSii}{71}  %
\newcommand{\sizeCSiii}{109}  %
\newcommand{\sizeCSiv}{34}  %
\newcommand{\sizeCSv}{89}  %
\newcommand{\sizeSBi}{97}  %
\newcommand{\sizeSBii}{154}  %
\newcommand{\sizeSBiii}{75}  %
\newcommand{\sizeSBiv}{21}  %
\newcommand{\sizeSBv}{16}  %
\newcommand{\sizePWi}{89}  %
\newcommand{\sizePWii}{33}  %
\newcommand{\sizePWiii}{109}  %
\newcommand{\sizePWiv}{64}  %
\newcommand{\sizePWv}{68}  %
\newcommand{\sizeLEi}{25}  %
\newcommand{\sizeLEii}{114}  %
\newcommand{\sizeLEiii}{59}  %
\newcommand{\sizeLEiv}{101}  %
\newcommand{\sizeLEv}{64}  %
\newcommand{\sizeRKi}{113}  %
\newcommand{\sizeRKii}{17}  %
\newcommand{\sizeRKiii}{170}  %
\newcommand{\sizeRKiv}{62}  %
\newcommand{\sizeRKv}{1}  %
\newcommand{\pctRKlargest}{47}  %

\newcommand{\ratioObsSeventy}{15.5}

\newcommand{\ratioObsMillennium}{18.6}
\newcommand{\vintageGain}{20}

\newcommand{\piRankDesignOne}{0.88}
\newcommand{\piRankDesignTwo}{0.91}

\newtheorem{theorem}{Theorem}

\newtheorem{corollary}{Corollary}
\newtheorem{lemma}{Lemma}
\newtheorem{proposition}{Proposition}
\theoremstyle{definition}

\newtheorem{remark}{Remark}
\newtheorem{example-continued}{Example}

\newtheorem{assumption}{Assumption}

\author{
        Simon Freyaldenhoven\\
        \textit{Federal Reserve Bank of Philadelphia\thanks{The views expressed herein are those of the author and do not necessarily reflect the views of the Federal Reserve Bank of Philadelphia, or the Federal Reserve System. Email: \href{mailto:simon.freyaldenhoven@phil.frb.org}{simon.freyaldenhoven@phil.frb.org}}}
}
\title{When Can We Work in Embedding Space? What Text Embeddings Preserve}

\date{August 2026}
\begin{document}

\maketitle

\begin{abstract}
\noindent When do text embeddings work as inputs to empirical analysis? 
Their use rests on an assumption: that we can trade text for its low-dimensional embedding, and lose little in doing so. 
I make that assumption precise under a generative model in which documents are mixtures of latent topics.
I study two uses---\emph{clustering} units in embedding space and \emph{controlling} for high-dimensional text. A cluster of embeddings is a set of documents with similar topic mixtures; controlling for the embedding is equivalent to controlling for the topic mixture, so validity reduces to whether that mixture captures the confounding.
In an application to 363 U.S.\ metropolitan areas, embedding-based clusters of LLM-generated economic descriptions recover interpretable economic archetypes and separate local employment dynamics more sharply than clustering on model residuals, or on a curated set of industry and demographic covariates.
\end{abstract}

JEL-Classification:  C21, C38, C45, C55

\textsc{Keywords}: Text embeddings, topic models, text as data, clustering, high-dimensional controls, Word2Vec, Large Language Models

\thispagestyle{empty}
\newpage

\setcounter{page}{1}

\section{Introduction}

Text embeddings are now a standard tool in empirical economics. Researchers turn product descriptions \citep{bajari2025hedonic, bach2025adventures}, central bank communication (\cite{casella2026structural}), labor-market histories \citep{vafa2022career, vafa2025wage}, and word meanings \citep{kozlowski2019geometry} into vectors (embeddings) and use those in subsequent analysis. 
The appeal is dimension: While text itself is extremely high-dimensional, its embedding is (relatively) low-dimensional. 
The justification, either explicitly or implicitly, is that little should be lost by working in embedding space if the information contained in a text can be represented in a low-dimensional space.\footnote{For example, \cite{bajari2025hedonic} states that the success of their approach ``depends on the existence of parsimonious structures behind images and text'' and that they ``believe that information in images and sentences can be effectively represented in a much lower-dimensional space''.} I make this argument precise and study when text embeddings work as inputs to empirical analysis.

Any reduction of text to a vector may discard something the analysis needed. Existing theoretical work rules this out via a high level sufficiency assumption: that a low-dimensional attribute vector exists and the embedding proxies it \citep{christensen2026unstructured}, or that whatever the representation discards induces a bias vanishing faster than ${n}^{-1/2}$ \citep{vafa2025wage}. But how plausible is such an assumption? In order to answer this, my starting point is a simple generative model of text, which allows me to derive exactly what embeddings preserve.

I first show that, under a standard model in which documents are mixtures of $K$ latent topics \citep{blei2003latent, hofmann1999probabilistic}, the corpus's matrix of word co-occurrence ratios, centered at independence, is positive semi-definite of rank exactly $K-1$ (Proposition~\ref{thm:factorization}). %
Any embedding that matches this matrix inherits the topic-loading geometry --- plainly speaking: words with similar loadings receive similar embeddings (Theorem~\ref{thm:word_embedding_guarantee}). 
Standard embedding methods do not target this matrix directly, and instead factorize a \emph{logarithmic} transform of the co-occurrence ratios (Table \ref{tab:targets}), which is generically full rank. Focusing on one leading embedding --- Word2Vec \citep{mikolov2013efficient, mikolov2013distributed}, in its skip-gram-with-negative-sampling (SGNS) form --- I then prove that it recovers the same geometry up to a distortion (Proposition \ref{prop:oe_general}).

At the document level, averaging word embeddings yields an invertible linear image of the topic mixture (Proposition~\ref{prop:doc_embedding_general_freq}). 
I then consider two concrete use cases. 
The first \emph{groups} units by latent type: clustering in embedding space discretizes unobserved heterogeneity into peer groups. Because distinct topic mixtures cannot share an embedding, a cluster is a set of documents with similar mixtures, and its centroid corresponds to a well-defined mixture (Corollary~\ref{cor:doc_embedding_metric}). Recovering the latent partition exactly is a stronger claim: when documents concentrate tightly enough around a small number of common mixture-types, that partition is a fixed point of $k$-means (Proposition~\ref{prop:archetype_identification}), but separation alone does not make it the unique optimum.
The second \emph{conditions} on textual embeddings: the embedding enters a regression as a control. Adjusting for the embedding is equivalent to adjusting for the topic mixture, so the high-level assumption that ``the embedding is a sufficient control'' reduces to a transparent condition that the topic mixture captures the confounding. (Corollary~\ref{cor:embedding_adjustment}).

To make the grouping use concrete, consider a researcher who wants to model local employment dynamics. A natural starting point is an autoregressive model for log employment in location $i$, e.g.:
\begin{equation}\label{eq:ar_intro}
    y_{it} = \alpha_i + \rho_1^{i} y_{i,t-1} + \cdots + \rho_p^{i} y_{i,t-p} + \varepsilon_{it},
\end{equation}
where $y_{it}$ is log employment in year $t$ for a given Core-Based Statistical Area (CBSA) $i$. %
Pooling across all locations imposes implausible homogeneity, while estimating each city separately throws away cross-sectional information and can result in very noisy estimates.
I propose the following compromise: cluster units by their textual description. Concretely, for each unit CBSA (i) generate a 500-word economic narrative using a large language model, (ii) embed the narrative in $\mathbb{R}^r$, and (iii) apply $k$-means in embedding space. Our theoretical results from Section \ref{sec:theory} provide a lens into this algorithm:  If the corpus can be reasonably approximated by a topic model, $k$-means clustering recovers peer groups with similar topic mixtures.
I find that the peer groups that emerge indeed encode similarity (e.g., ``deindustrialized Eds-and-Meds'', ``Sunbelt growth'', ``energy and resource extraction''), and that the resulting clusters capture meaningful heterogeneity in employment dynamics which clustering on outcome residuals, or on observed covariates, misses.%

\textbf{Related Literature.} This paper connects several literatures. The generative model builds on the topic model literature, including probabilistic latent semantic indexing \citep{hofmann1999probabilistic} and Latent Dirichlet Allocation \citep{blei2003latent}. Our use of a topic model as the data-generating process is not only analytical convenience: a recent literature argues that large language models themselves behave as implicit latent-variable models, inferring a latent concept from context and generating text conditional on it \citep{xie2022, wang2023large}.

On the embedding side, the Word2Vec model was introduced by \cite{mikolov2013efficient, mikolov2013distributed}, and \cite{levy2014neural} established that skip-gram with negative sampling implicitly factorizes a shifted pointwise mutual information matrix when the embedding dimension is unrestricted.
\cite{arora2016rand} model the writing of a document as a random walk over a latent topic vector in a way that justifies what Word2Vec and GloVe \citep{pennington2014glove} estimate. I instead take the topic model as the data-generating process---a model that is easy to interpret and that economists already use to summarize text---and ask what it implies for embeddings. \cite{dieng2020topic} put words and topics in the same embedding space by assumption, giving each topic its own vector. I derive what they assume: under our model each topic has a centroid in embedding space, and document embeddings lie in the simplex those centroids span (Proposition~\ref{prop:doc_embedding_general_freq})---for an embedding estimated without any knowledge of the topics. Finally, \cite{li2023transformers} show that the attention layers of a transformer can also recover topic information, but under the restrictive assumption that each word belongs to a single topic, so that topics have no vocabulary in common.

\cite{veitch2020adapting} use text embeddings as a control for confounding. They construct an embedding that captures the confounding by explicitly supervising it on both treatment and outcome, while I consider ``standard'' unsupervised embeddings in this paper. Other papers studying causal inference with text include \cite{roberts2020adjusting} and \cite{egami2022how}.
Finally, \cite{battaglia2024inference}, \cite{vafa2025wage} and \cite{christensen2026unstructured} study the gap between the representation a researcher has and the object the model requires, and each closes it downstream: a bias correction for a noisy estimate of the right target, a characterization of the omitted-variable bias induced by coarsening, and a correction for an imperfect proxy, respectively.
Our paper is complementary to this literature: there, the object the representation recovers is taken as given; here, I ask what an unsupervised embedding recovers in the first place.
\section{Theoretical Results}\label{sec:theory}

\subsection{Generative Model}\label{sec:generative}

Suppose we generate a corpus of text, meaning a collection of $D$ documents from a vocabulary of size $V$. Each document $d$ consists of $N_d$ terms: $\{w_{d,1}, \ldots, w_{d,N_d}\}$. Within document $d$, each term is drawn i.i.d.\ according to the column-stochastic distribution $\Pi_{\bullet d}$, where $\Pi$ is a column-stochastic $V \times D$ matrix; that is, $\Pi_{vd}$ denotes the probability that a randomly drawn term in document $d$ equals word $v$. Throughout, I use $P(\cdot)$ exclusively as the probability operator and reserve $\Pi$ for the matrix of word--document probabilities. For a matrix $A$, $\sigma_{\max}(A)$ and $\sigma_{\min}(A)$ denote its largest and smallest singular values.

I further assume that $\Pi=B \Theta$, where $B$ is $V \times K$ (word-topic matrix) and $\Theta$ is $K \times D$ (topic-document matrix); I take $K \geq 2$ throughout to avoid degeneracy. Both $B$ and $\Theta$ have non-negative entries, and each column of $\Pi$ and $B$ sums to 1. The probability for a given term $v$ is thus given by $P(w = v | d) = (B \Theta_{\bullet d})_v.$ and does not depend on the position in the document. 

Following \cite{hofmann1999probabilistic}, I assume that for each document $d$
\begin{equation} \label{eq:multinomial}
    N_{\bullet d} | (B,\Theta) \sim \textrm{Multinomial}\left(N_d, B \Theta_{\bullet d} \right),
\end{equation}
where $N_{\bullet d} = (N_{1d}, \ldots, N_{Vd})^\top$ is the vector of word counts in document $d$, and the vectors of counts $N_{\bullet d}$ are independent across documents, conditional on $(B,\Theta)$.  

Once we add a Dirichlet prior on the per-document topic distribution we obtain a proper generative model for new documents in the form of standard LDA \citep{blei2003latent}.
On the other hand, treating the topic mixture $\theta_d$ as a fixed (unknown) parameter rather than a random variable, this has been studied in the Non-negative Matrix Factorization (NMF) literature \citep{lee1999learning, donoho2003does, arora2013practical}. 

\textbf{Notation.} Consider two words $v$ and $u$ appearing in the same document $d$. Under the i.i.d.-given-document model, conditional on $d$ a target word and a separately drawn context word are independent, so:
\begin{align}
P(w = v, c = u | d) = \Pi_{vd} \cdot \Pi_{ud} = (B\Theta_{\bullet d})_v (B\Theta_{\bullet d})_u.
\end{align}
Define the \textit{word co-occurrence matrix} $M \in \mathbb{R}^{V \times V}$ by aggregating over documents:
\begin{align}\label{eq:cooccurrence_nmf}
M_{vu} \;=\; P(w=v, c=u) \;=\; \sum_{d=1}^D p_d (B\Theta_{\bullet d})_v (B\Theta_{\bullet d})_u,
\end{align}
where $p_d = N_d / \sum_{d'} N_{d'}$ weights documents by length and $\sum_d p_d = 1$. Equivalently, with $p = [p_1, \ldots, p_D]^\top$ and $G := \sum_d p_d \Theta_{\bullet d}\Theta_{\bullet d}^\top = \Theta\,\mathrm{diag}(p)\,\Theta^\top \in \mathbb{R}^{K\times K}$ denoting the corpus second-moment matrix of topic mixtures,
\begin{align}\label{eq:M_BGB}
M \;=\; B\,G\,B^\top.
\end{align}

Let $\bar\theta := \sum_d p_d \Theta_{\bullet d}$ denote the corpus-mean topic mixture, let $q := B\bar\theta$ collect the marginal word probabilities, $q_v = P(w{=}v) = P(c{=}v)$, and let $D_q := \mathrm{diag}(q)$ denote the corresponding diagonal matrix. Define the \emph{probability-ratio matrix}
\begin{align}
R \;:=\; D_q^{-1}\, M\, D_q^{-1} \;\in\; \mathbb{R}_{>0}^{V \times V}, \qquad R_{vu} \;=\; \frac{P(w=v,\, c=u)}{P(w=v)\,P(c=u)}.
\end{align}

Finally, define the \emph{marginal-normalized loading matrix} $\tilde B := D_q^{-1} B \in \mathbb{R}^{V \times K}$ %
and the \emph{topic-mixture covariance}
\begin{align}\label{eq:def_sigma_theta}
\Sigma_\Theta \;:=\; \sum_{d=1}^D p_d\, (\Theta_{\bullet d} - \bar\theta)(\Theta_{\bullet d} - \bar\theta)^\top = G - \bar\theta\bar\theta^\top \;\in\; \mathbb{R}^{K \times K}.
\end{align}
\subsection{Word Embeddings}\label{sec:embeddings_lg}

This section derives our theoretical results on word embeddings. I first introduce a matrix I call the centered probability ratio and show that it is positive semi-definite (PSD) of rank $K-1$ (Proposition~\ref{thm:factorization}). I then construct an explicit embedding $\beta$ whose entries are written directly in the topic-model primitives $B$ and $\Theta$ (Lemma~\ref{thm:topic_primitive}) that factorize the centered probability ratio. Our main word-embedding result (Theorem~\ref{thm:word_embedding_guarantee}) then shows that any embedding $\hat\beta$ matching this factorization inherits the marginal-normalized topic-loading geometry. Simply stated: words with similar topic loadings receive similar embeddings.

 I maintain the following regularity conditions.

\begin{assumption}[Topic regularity]\label{ass:topic_reg} I maintain the following regularity conditions on the topic model throughout:
\begin{enumerate}[label=(\alph*)]
\item\label{ass:rank_B} $\mathrm{rank}\,B = K$;
\item\label{ass:rank_Sigma} $\mathrm{rank}\,\Sigma_\Theta = K-1$;
\item\label{ass:marg_pos} $q_v > 0$ for all $v \in [V]$.
\end{enumerate}
\end{assumption}

The full column rank of $B$ rules out redundant topics. Because the columns of $\Theta$ lie on the simplex $\Delta^{K-1}$, $\Sigma_\Theta\, \mathbf{1}_K = 0$, so $\mathrm{rank}\,\Sigma_\Theta \leq K-1$.  Assumption \ref{ass:topic_reg}\ref{ass:rank_Sigma} is thus a maximal rank condition that is implied by the existence of $K$ documents whose topic mixtures are affinely independent in $\Delta^{K-1}$. The marginal-positivity condition simply rules out terms with zero marginal probability and is used to invert $D_q$.\footnote{Note that, if there exists a term $v$ with $q_v = 0$, this term is not used in any document. Removing any unused terms from the dictionary and using the smaller vocabulary $V'$ immediately implies that $q_v > 0$ for all $v \in [V']$.}

\begin{proposition}[Exact rank-$(K-1)$ factorization]\label{thm:factorization}
Under Assumption~\ref{ass:topic_reg}, the centered probability ratio $(R - \mathbf{1}_V\mathbf{1}_V^\top)$ is PSD of rank $K-1$, such that the top-$(K-1)$ eigendecomposition $R - \mathbf{1}_V\mathbf{1}_V^\top = \Phi\Lambda \Phi^\top$ yields a factor $\beta_{\mathrm{SVD}} \;:=\; \Phi\,\Lambda^{1/2} \;\in\; \mathbb{R}^{V \times (K-1)},$ with
\[
\beta_{\mathrm{SVD}}\ \beta_{\mathrm{SVD}}^\top \;=\; R - \mathbf{1}_V\mathbf{1}_V^\top.
\]
\end{proposition}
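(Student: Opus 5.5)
The plan is to write the centered ratio as a congruence transform of $\Sigma_\Theta$ and read off positive semi-definiteness and rank from that representation. Start from $M = BGB^\top$ in \eqref{eq:M_BGB} and $G = \Sigma_\Theta + \bar\theta\bar\theta^\top$ from \eqref{eq:def_sigma_theta}. By definition $q = B\bar\theta$, so $D_q^{-1} B\bar\theta = D_q^{-1} q = \mathbf{1}_V$. With $\tilde B = D_q^{-1}B$ this gives
\[
R = \tilde B G \tilde B^\top = \tilde B \Sigma_\Theta \tilde B^\top + (\tilde B\bar\theta)(\tilde B\bar\theta)^\top = \tilde B \Sigma_\Theta \tilde B^\top + \mathbf{1}_V\mathbf{1}_V^\top .
\]
Hence
\[
R - \mathbf{1}_V\mathbf{1}_V^\top = \tilde B \Sigma_\Theta \tilde B^\top ,
\]
which is the key identity. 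Assumption~\ref{ass:topic_reg}\ref{ass:marg_pos} is needed here so that $D_q^{-1}$ exists.

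Positive semi-definiteness is then immediate. $\Sigma_\Theta$ is a $p$-weighted sum of outer products with $p_d \ge 0$, so it is PSD. Therefore $x^\top \tilde B\Sigma_\Theta\tilde B^\top x = (\tilde B^\top x)^\top \Sigma_\Theta (\tilde B^\top x) \ge 0$ for every $x$.

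For the rank, note that $D_q^{-1}$ is invertible, so $\mathrm{rank}\,\tilde B = \mathrm{rank}\,B = K$ by Assumption~\ref{ass:topic_reg}\ref{ass:rank_B}. Hence $\tilde B$ has full column rank and is injective on $\mathbb{R}^K$. Factor $\Sigma_\Theta = LL^\top$ with $L$ of size $K\times(K-1)$ and rank $K-1$, using Assumption~\ref{ass:topic_reg}\ref{ass:rank_Sigma}. Then $R - \mathbf{1}\mathbf{1}^\top = (\tilde B L)(\tilde B L)^\top$, and $\mathrm{rank}(\tilde B L) = \mathrm{rank}\, L = K-1$ because $\tilde B$ is injective. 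Since $\mathrm{rank}(AA^\top) = \mathrm{rank}(A)$ for real $A$, the centered ratio has rank exactly $K-1$.

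Finally, a PSD matrix of rank $K-1$ has a spectral decomposition with exactly $K-1$ positive eigenvalues and all remaining eigenvalues zero. The top-$(K-1)$ eigendecomposition $\Phi\Lambda\Phi^\top$ is therefore exact, with $\Lambda$ positive diagonal. Setting $\beta_{\mathrm{SVD}} = \Phi\Lambda^{1/2}$ gives $\beta_{\mathrm{SVD}}\beta_{\mathrm{SVD}}^\top = \Phi\Lambda\Phi^\top = R - \mathbf{1}_V\mathbf{1}_V^\top$.

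I expect the main obstacle to be only the algebraic identification $\tilde B\bar\theta = \mathbf{1}_V$, which is what makes the rank-one term cancel exactly against $\mathbf{1}_V\mathbf{1}_V^\top$. Once that is in place, the rank count is routine and needs only injectivity of $\tilde B$.
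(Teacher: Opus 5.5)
Your proposal is correct and follows the paper's proof essentially step for step: both use $\tilde B\bar\theta = D_q^{-1}q = \mathbf{1}_V$ to turn $R = \tilde B G \tilde B^\top$ into $R - \mathbf{1}_V\mathbf{1}_V^\top = \tilde B\,\Sigma_\Theta\,\tilde B^\top$, and then read off positive semi-definiteness from $\Sigma_\Theta$ and rank $K-1$ from the full column rank of $\tilde B$. Your factorization $\Sigma_\Theta = LL^\top$ simply spells out the rank-preservation step that the paper asserts in a single line.
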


\begin{proof}
By \eqref{eq:M_BGB}, $M = BGB^\top$, hence
\begin{align}\label{eq:uncentered_identity}
R = D_q^{-1}(BGB^\top)D_q^{-1} = \tilde B G\tilde B^\top.
\end{align}
Substituting $G = \Sigma_\Theta + \bar\theta\bar\theta^\top$ and using
$\tilde B\bar\theta = D_q^{-1}q = \mathbf{1}_V$,
\begin{align}\label{eq:centering_identity}
R - \mathbf{1}_V \mathbf{1}_V^\top \;=\; \tilde B\, \Sigma_\Theta\, \tilde B^\top.
\end{align}
Since $\Sigma_\Theta$ is a covariance matrix it is PSD, hence
$\tilde B\Sigma_\Theta \tilde B^\top$ is PSD. By Assumptions~\ref{ass:topic_reg}\ref{ass:rank_B} and \ref{ass:marg_pos}, $\tilde B$ has full column rank $K$.It follows that
\[
\mathrm{rank}(R - \mathbf{1}_V\mathbf{1}_V^\top) \;=\; \mathrm{rank}(\Sigma_\Theta) \;=\; K-1
\]
by Assumption~\ref{ass:topic_reg}\ref{ass:rank_Sigma}. The factorization follows.
\end{proof}

Proposition~\ref{thm:factorization} delivers an embedding $\beta_{\mathrm{SVD}}$ from the centered probability ratio $R - \mathbf{1}_V\mathbf{1}_V^\top$ alone, with no reference to the topic-model primitives. The next result gives a second, equivalent representative whose entries are written directly in those primitives ($B$, $\Theta$). %

\begin{lemma}[Topic-primitive representation]\label{thm:topic_primitive}
Under Assumption~\ref{ass:topic_reg}, take the compact eigendecomposition of the topic-mixture covariance defined in \eqref{eq:def_sigma_theta}:
\begin{align*}
\Sigma_\Theta \;=\; \Phi_\Theta\, \Lambda_\Theta\, \Phi_\Theta^\top. \footnotemark
\end{align*}
\footnotetext{Formally, $\Phi_\Theta \in \mathbb{R}^{K \times (K-1)}$ holds orthonormal eigenvectors of the nonzero eigenvalues ($\Phi_\Theta^\top \Phi_\Theta = I_{K-1}$) and $\Lambda_\Theta \in \mathbb{R}^{(K-1)\times(K-1)}$ is the diagonal matrix of corresponding (positive) eigenvalues.}
 For each $v \in [V]$, define
\begin{align}\label{eq:beta_def}
\beta_v \;:=\; \frac{\Lambda_\Theta^{1/2}\, \Phi_\Theta^\top\, B_{v\bullet}^\top}{q_v} \;\in\; \mathbb{R}^{K-1}.
\end{align}
Then, the matrix $\beta \in \mathbb{R}^{V \times (K-1)}$ with rows $\beta_v^\top$ satisfies
\begin{align}\label{eq:first_embedding_result}
R - \mathbf{1}_V\mathbf{1}_V^\top \;=\; \beta\,\beta^\top,
\end{align}
and is related to $\beta_{\mathrm{SVD}}$ from Proposition~\ref{thm:factorization} by an orthogonal transformation: there exists an orthogonal $Q$ ($Q^\top Q = I_{K-1}$) with $\beta = \beta_{\mathrm{SVD}}\, Q$.
\end{lemma}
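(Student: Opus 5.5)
The plan is to verify the factorization identity directly from the centering identity \eqref{eq:centering_identity} established in the proof of Proposition~\ref{thm:factorization}, and then to obtain the orthogonal relation to $\beta_{\mathrm{SVD}}$ from the standard fact that two full-column-rank factors of the same PSD matrix differ by an orthogonal transformation.

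\textbf{Step 1 (factorization).} Stack the rows \eqref{eq:beta_def} to write $\beta = D_q^{-1} B\,\Phi_\Theta\Lambda_\Theta^{1/2} = \tilde B\,\Phi_\Theta\Lambda_\Theta^{1/2}$. Then
\[
\beta\beta^\top = \tilde B\,\Phi_\Theta\Lambda_\Theta\Phi_\Theta^\top\,\tilde B^\top = \tilde B\,\Sigma_\Theta\,\tilde B^\top = R - \mathbf{1}_V\mathbf{1}_V^\top,
\]
where the second equality is the compact eigendecomposition of $\Sigma_\Theta$ and the last is \eqref{eq:centering_identity}. This gives \eqref{eq:first_embedding_result}.

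\textbf{Step 2 (rank of both factors).} Next I check that $\beta$ and $\beta_{\mathrm{SVD}}$ both have full column rank $K-1$. For $\beta_{\mathrm{SVD}}$ this holds because $\Phi$ has orthonormal columns and $\Lambda$ is positive definite, since Proposition~\ref{thm:factorization} gives exactly $K-1$ positive eigenvalues. For $\beta$ it holds because $\tilde B$ has full column rank $K$ and $\Phi_\Theta\Lambda_\Theta^{1/2}$ has full column rank $K-1$. Alternatively, it follows from $\mathrm{rank}(\beta\beta^\top)=K-1$.

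\textbf{Step 3 (orthogonal relation).} Set $Q := \Lambda^{-1/2}\Phi^\top\beta$, which is $(K-1)\times(K-1)$. The matrix $\beta$ has column space equal to $\mathrm{col}(\beta\beta^\top) = \mathrm{col}(\Phi)$, so $\Phi\Phi^\top\beta = \beta$. Hence
\[
\beta_{\mathrm{SVD}}Q = \Phi\Lambda^{1/2}\Lambda^{-1/2}\Phi^\top\beta = \Phi\Phi^\top\beta = \beta.
\]
For orthogonality, compute
\[
QQ^\top = \Lambda^{-1/2}\Phi^\top(\beta\beta^\top)\Phi\Lambda^{-1/2} = \Lambda^{-1/2}\Phi^\top\Phi\Lambda\Phi^\top\Phi\Lambda^{-1/2} = I_{K-1}.
\]
Since $Q$ is square, $QQ^\top = I_{K-1}$ implies $Q^\top Q = I_{K-1}$.

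The only mildly delicate point is the column-space claim $\mathrm{col}(\beta) = \mathrm{col}(\Phi)$. It holds because $\mathrm{col}(\beta\beta^\top) = \mathrm{col}(\beta)$ for any real matrix, and $\mathrm{col}(\beta\beta^\top) = \mathrm{col}(\Phi\Lambda\Phi^\top) = \mathrm{col}(\Phi)$ because $\Lambda$ is invertible. Everything else is routine algebra.
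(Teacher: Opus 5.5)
Your proof is correct and follows the paper's route: Step 1 is the paper's computation of $\beta_v^\top\beta_u$ via \eqref{eq:centering_identity}, written in matrix form as $\beta = \tilde B\,\Phi_\Theta\Lambda_\Theta^{1/2}$. The one difference is that the paper only invokes the fact that two full-column-rank factors of the same rank-$(K-1)$ PSD matrix differ by an orthogonal matrix, whereas you prove it by exhibiting $Q = \Lambda^{-1/2}\Phi^\top\beta$ and checking $\beta_{\mathrm{SVD}}Q = \beta$ and $QQ^\top = I_{K-1}$ directly.
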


\begin{proof}
For any $v, u \in [V]$,
\begin{align*}
\beta_v^\top \beta_u \;=\; \frac{B_{v\bullet}\, \Phi_\Theta \Lambda_\Theta \Phi_\Theta^\top\, B_{u\bullet}^\top}{q_v q_u} \;=\; \frac{B_{v\bullet}\, \Sigma_\Theta\, B_{u\bullet}^\top}{q_v q_u} \;=\; (\tilde B\,\Sigma_\Theta\,\tilde B^\top)_{vu} \;=\; (R - \mathbf{1}_V\mathbf{1}_V^\top)_{vu},
\end{align*}
where the last step follows from \eqref{eq:centering_identity}. Both $\beta$ and $\beta_{\mathrm{SVD}}$ are full-column-rank $V \times (K-1)$ factors of the same rank-$(K-1)$ PSD matrix, so they coincide up to an orthogonal transformation.
\end{proof}

The inner product $\beta_v^\top \beta_u = \frac{P(w=v,\,c=u)}{P(w=v)\,P(c=u)} - 1$ can be interpreted as the lift of the joint above independence: zero if $w$ and $c$ are independent, positive if they co-occur more than under independence, and negative if less. %
I next show that \emph{any} $(K-1)$-dimensional embedding matching this factorization inherits the marginal-normalized topic-loading geometry.

\begin{theorem}[Topic-loading geometry]\label{thm:word_embedding_guarantee}
Suppose Assumption~\ref{ass:topic_reg} holds and let $\hat\beta \in \mathbb{R}^{V \times (K-1)}$ satisfy $\hat\beta\,\hat\beta^\top = R - \mathbf{1}_V\mathbf{1}_V^\top$. Then, for all $v, u \in [V]$,
\begin{align}
\|\hat\beta_v - \hat\beta_u\|^2 \;=\; (\tilde B_{v\bullet} - \tilde B_{u\bullet})^\top \Sigma_\Theta\, (\tilde B_{v\bullet} - \tilde B_{u\bullet}). \label{eq:beta_distance}
\end{align}
In particular, $B_{v\bullet} = \lambda B_{u\bullet}$ for some $\lambda > 0$ implies $\hat\beta_v = \hat\beta_u$: words with proportional topic loadings receive identical embeddings.
\end{theorem}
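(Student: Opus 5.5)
The identity \eqref{eq:beta_distance} depends on $\hat\beta$ only through its Gram matrix. I would therefore expand the squared distance using inner products and then invoke the centering identity \eqref{eq:centering_identity} from the proof of Proposition~\ref{thm:factorization}. Concretely, for any $v,u$ write
\begin{align*}
\|\hat\beta_v - \hat\beta_u\|^2 = \hat\beta_v^\top\hat\beta_v + \hat\beta_u^\top\hat\beta_u - 2\hat\beta_v^\top\hat\beta_u = (e_v - e_u)^\top \hat\beta\hat\beta^\top (e_v - e_u),
\end{align*}
where $e_v$ denotes the $v$-th standard basis vector in $\mathbb{R}^V$. By hypothesis $\hat\beta\hat\beta^\top = R - \mathbf{1}_V\mathbf{1}_V^\top$, and by \eqref{eq:centering_identity} this matrix equals $\tilde B\Sigma_\Theta\tilde B^\top$. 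Since $\tilde B^\top(e_v - e_u) = \tilde B_{v\bullet}^\top - \tilde B_{u\bullet}^\top$, the quadratic form is exactly the right-hand side of \eqref{eq:beta_distance}. This step requires no orthogonal-alignment argument of the kind used in Lemma~\ref{thm:topic_primitive}: any factor with the correct Gram matrix gives the same pairwise distances.

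For the ``in particular'' claim, suppose $B_{v\bullet} = \lambda B_{u\bullet}$ with $\lambda>0$. I would first compute the marginals. Since $q = B\bar\theta$, we get $q_v = B_{v\bullet}\bar\theta = \lambda B_{u\bullet}\bar\theta = \lambda q_u$, and Assumption~\ref{ass:topic_reg}\ref{ass:marg_pos} guarantees $q_u>0$. It follows that $\tilde B_{v\bullet} = B_{v\bullet}/q_v = \lambda B_{u\bullet}/(\lambda q_u) = \tilde B_{u\bullet}$. The difference in \eqref{eq:beta_distance} is then the zero vector, so the squared distance vanishes and $\hat\beta_v = \hat\beta_u$.

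I do not expect a real obstacle. The one point needing care is the normalization step: proportional loadings become \emph{equal} rows of $\tilde B$ precisely because the marginal $q_v$ inherits the same factor $\lambda$. This is where the marginal-normalization in $\tilde B$ does the work, and where positivity of $q$ is needed to divide.
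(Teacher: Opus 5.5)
Your proof is correct and takes a more direct route than the paper's.

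The paper first uses Lemma~\ref{thm:topic_primitive} and the full column rank of both factors to write $\hat\beta = \beta Q$ with $Q$ orthogonal. It then reads $\beta_v - \beta_u = \Lambda_\Theta^{1/2}\Phi_\Theta^\top(\tilde B_{v\bullet} - \tilde B_{u\bullet})$ off the closed form \eqref{eq:beta_def} and transfers the norm through $Q$.

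You bypass the explicit representative entirely. You note that a pairwise distance is the quadratic form $(e_v - e_u)^\top\hat\beta\hat\beta^\top(e_v - e_u)$, so it depends only on the Gram matrix, and you then substitute \eqref{eq:centering_identity}.

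This buys generality. Your argument uses neither the orthogonal-alignment step nor conditions \ref{ass:rank_B} and \ref{ass:rank_Sigma} of Assumption~\ref{ass:topic_reg}; it needs only $q>0$, which is all \eqref{eq:centering_identity} requires. It also applies verbatim to $\hat\beta$ with any number of columns, so Remark~\ref{rem:dimension_slack} follows without the $QQ^\top = I_{K-1}$ argument. You also supply the step $q_v = \lambda q_u$ that the paper leaves implicit when it asserts $\tilde B_{v\bullet} = \tilde B_{u\bullet}$.

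What the paper's route buys in return is structure beyond the stated distance identity. Every admissible $\hat\beta$ is, up to a rotation, the explicit linear image $\Lambda_\Theta^{1/2}\Phi_\Theta^\top\tilde B_{v\bullet}$ of the normalized loadings, and this representative $\beta$ is what the paper carries into Proposition~\ref{prop:oe_general}.
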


\begin{proof}
By Lemma~\ref{thm:topic_primitive}, the topic-primitive representative $\beta$ satisfies $\beta\beta^\top = R - \mathbf{1}_V\mathbf{1}_V^\top = \hat\beta\hat\beta^\top$, so both are full-column-rank $V \times (K-1)$ factors of the same rank-$(K-1)$ PSD matrix and coincide up to an orthogonal transformation: $\hat\beta = \beta Q$ with $Q^\top Q = I_{K-1}$. From the closed form, $\beta_v - \beta_u = \Lambda_\Theta^{1/2} \Phi_\Theta^\top (\tilde B_{v\bullet} - \tilde B_{u\bullet})$, so
\[
\|\beta_v - \beta_u\|^2 \;=\; (\tilde B_{v\bullet} - \tilde B_{u\bullet})^\top \Phi_\Theta \Lambda_\Theta \Phi_\Theta^\top (\tilde B_{v\bullet} - \tilde B_{u\bullet}) \;=\; (\tilde B_{v\bullet} - \tilde B_{u\bullet})^\top \Sigma_\Theta\, (\tilde B_{v\bullet} - \tilde B_{u\bullet}), 
\]
using $\Sigma_\Theta = \Phi_\Theta \Lambda_\Theta \Phi_\Theta^\top$. Finally,  $\|\hat\beta_v - \hat\beta_u\| = \|\hat\beta_v Q - \hat\beta_u Q\| = \|(\hat\beta_v - \hat\beta_u) Q \| = \|\beta_v - \beta_u\|$ since orthogonality of $Q$ preserves the Euclidean norm. This completes the proof of \eqref{eq:beta_distance}. The proportional case follows from $B_{v\bullet} = \lambda B_{u\bullet} \Rightarrow \tilde B_{v\bullet} = \tilde B_{u\bullet} \Rightarrow \beta_v = \beta_u \Rightarrow \hat\beta_v = \hat\beta_u$.
\end{proof}

I call the metric on the right of \eqref{eq:beta_distance} --- the $\Sigma_\Theta$-weighted distance between marginal-normalized topic loadings --- the \emph{topic-loading geometry}. Theorem~\ref{thm:word_embedding_guarantee} provides an algorithm-agnostic reading: any procedure that matches the centered probability ratio $R - \mathbf{1}_V\mathbf{1}_V^\top$ in dimension $K-1$ recovers the topic-loading geometry, regardless of the construction. The direct example is the rank-$(K-1)$ singular value decomposition of $R - \mathbf{1}_V\mathbf{1}_V^\top$ (Proposition~\ref{thm:factorization}). I next analyze some of the word embeddings commonly used in the literature.

\begin{remark}[Dimensions above $K-1$]\label{rem:dimension_slack}
Theorem~\ref{thm:word_embedding_guarantee} is stated at $r = K-1$, but nothing in it requires equality. Let $\hat\beta \in \mathbb{R}^{V \times r}$ with $r \geq K-1$ satisfy $\hat\beta\hat\beta^\top = R - \mathbf{1}_V\mathbf{1}_V^\top$. The right-hand side has rank $K-1$ by Proposition~\ref{thm:factorization}, so $\hat\beta = \beta Q$ for some $Q \in \mathbb{R}^{(K-1) \times r}$ with orthonormal rows, $QQ^\top = I_{K-1}$, and
\[
\|\hat\beta_v - \hat\beta_u\|^2 \;=\; (\beta_v - \beta_u)^\top QQ^\top (\beta_v - \beta_u) \;=\; \|\beta_v - \beta_u\|^2,
\]
so \eqref{eq:beta_distance} holds unchanged. The proof above is the special case $r = K-1$, where $Q$ is square and orthogonal. The extra $r - (K-1)$ coordinates are identically zero and drop out of every distance.

Thus, any $r \ge K-1$ delivers the same geometry. This matters because $K$ is rarely known in practice. %
 Section~\ref{sec:application} sets $r = 50$ on this basis.
\end{remark}

\begin{remark}[Identification-invariance]\label{rem:identification_invariance}
The factorization $\Pi = B\Theta$ is not unique absent further structure: for any invertible $K \times K$ matrix $A$ (that preserves the column-stochasticity of both factors), the reparameterization $B' = BA$, $\Theta' = A^{-1}\Theta$ generates the same observable joint distribution (see, e.g., \cite{donoho2003does}, \cite{fu2019nonnegative}). However, under any such reparameterization the quadratic form in Theorem~\ref{thm:word_embedding_guarantee} transforms as
\begin{align*}
(\tilde B'_{v\bullet} - \tilde B'_{u\bullet})^\top \Sigma'_\Theta\,(\tilde B'_{v\bullet} - \tilde B'_{u\bullet})
&= (\tilde B_{v\bullet} - \tilde B_{u\bullet})^\top A\, (A^{-1}\Sigma_\Theta A^{-\top})\, A^\top (\tilde B_{v\bullet} - \tilde B_{u\bullet}) \\
&= (\tilde B_{v\bullet} - \tilde B_{u\bullet})^\top \Sigma_\Theta\, (\tilde B_{v\bullet} - \tilde B_{u\bullet}),
\end{align*}
so it is invariant. Theorem~\ref{thm:word_embedding_guarantee} therefore holds for any valid factorization, and the geometry it characterizes is identification-free. %
\end{remark}

\subsection{Relationship to Standard Algorithms}\label{sec:other_embeddings}

Word2Vec \citep{mikolov2013efficient, mikolov2013distributed} and GloVe \citep{pennington2014glove} are some of the most widely used word embeddings in practice, and this section asks what they recover under our model. I treat three objectives: skip-gram with a full softmax (the idealized version that is costly to estimate), its negative-sampling approximation SGNS \citep{levy2014neural} (which is what practitioners usually run), and GloVe.\footnote{Word2Vec's other architecture, CBOW, generally admits no closed-form target, and I return to it at the end of the section.}

None of these fit $R - \mathbf{1}_V\mathbf{1}_V^\top$.  Each scores a word--context pair $(v,u)$ by an inner product $X_{vu} = w_v^\top \tilde w_u$ between a target vector $w_v$ and a context vector $\tilde w_u$, the rows of $W, \tilde W \in \mathbb{R}^{V \times r}$, and fits that score to the corpus. They differ in only three ingredients: a \emph{response} $t_{vu}$ read off the corpus, a strictly convex \emph{generator} $\phi$ that fixes the loss, and \emph{weights} $\omega_{vu}$ on pairs. Where the rank constraint of the inner product does not bind, the fitted scores are $\phi'(t_{vu})$ entrywise. I therefore call that matrix the algorithm's \emph{target}. Table~\ref{tab:targets} lists the three ingredients and the target for each algorithm,  alongside $\beta_{\mathrm{SVD}}$ from the previous section. I derive its entries in Appendix \ref{app-app:log_approximation}. %

\begin{table}[tb!]
\centering
\footnotesize
\setlength{\tabcolsep}{3pt}
\begin{tabular}{@{}llllll@{}}
\toprule
& Response $t_{vu}$ & Generator $\phi$ & Weight $\omega_{vu}$ & Target $\phi'(t_{vu})$ & Free offsets \\
\midrule
$\beta_{\mathrm{SVD}}$ & $R_{vu} - 1$ & squared & $1$ & $R - \mathbf{1}_V\mathbf{1}_V^\top$ & --- \\
Full softmax & $P(w{=}v \mid c{=}u)$ & entropy & $q_u$ & $\log R + \log q\,\mathbf{1}_V^\top$ & $\mathbf{1}_V g^\top$ \\
SGNS & $R_{vu}/(R_{vu}+\nu)$ & binary entropy & $q_vq_u(R_{vu}+\nu)$ & $\log R - \log \nu\,\mathbf{1}_V\mathbf{1}_V^\top$ & --- \\
GloVe & $\log M_{vu}$ & squared & $f(M_{vu})$ & $\log R$ & $b\mathbf{1}_V^\top + \mathbf{1}_V\tilde b^\top$ \\
\bottomrule
\end{tabular}
\caption{The four objectives as weighted low-rank fits. The response $t_{vu}$ and the generator $\phi$ determine the target $\phi'(t_{vu})$; the weights $\omega_{vu}$ do not. The generators squared, entropy and binary entropy give squared-error, Kullback--Leibler and Bernoulli losses. ``Free offsets'' are the row- or column-constant terms the objective leaves undetermined: the softmax column gauge $g$ and GloVe's per-word biases $b, \tilde b$.  $\nu \geq 1$ is the negative-sampling rate and $f$ is GloVe's weighting function. Row one attains its target exactly at rank $K-1$ (Proposition~\ref{thm:factorization}); the
other three attain theirs only where the rank constraint does not bind. For more detail, see Appendix \ref{app-app:log_approximation}.}
\label{tab:targets}
\end{table}

Since none of these targets is $R - \mathbf{1}_V\mathbf{1}_V^\top$, Theorem~\ref{thm:word_embedding_guarantee} does not apply to the embeddings that fit them. 
However, I still obtain the following result on how word embeddings vary with the loadings, analogous to Theorem~\ref{thm:word_embedding_guarantee} for the SGNS target.

\begin{assumption}[Word co-occurrence]\label{ass:R_pos}
$R_{vu} > 0$ for all $v, u \in [V]$.
\end{assumption}

Assumption~\ref{ass:R_pos} requires every pair of words to co-occur and guarantees that every entry of $\log R$ is finite. 
Write $R_{\min} := \min_{v,u} R_{vu}$, $R_{\max} := \max_{v,u} R_{vu}$ and $\kappa_R := R_{\max}/R_{\min}$, so that $\log\kappa_R$ is the range of entries in $\log R$. 

  \begin{proposition}[Recovery from the SGNS target]\label{prop:oe_general}
  Let Assumptions~\ref{ass:topic_reg} and~\ref{ass:R_pos} hold. Suppose $\hat W\hat{\tilde W}^\top = \log R -\log \nu\,\mathbf{1}_V\mathbf{1}_V^\top$, the SGNS target of Table~\ref{tab:targets}, with
  $\hat{\tilde W} \in \mathbb{R}^{V\times r}$ of full column rank. Then, for all $v, v' \in [V]$,
  \begin{align}\label{eq:oe_embed_equiv}
  \frac{\sigma_{\min}(\beta)}{R_{\max}\,\sigma_{\max}(\hat{\tilde W})}\,\|\beta_v - \beta_{v'}\|
  \;\le\; \|\hat w_v - \hat w_{v'}\|
  \;\le\; \frac{\sigma_{\max}(\beta)}{R_{\min}\,\sigma_{\min}(\hat{\tilde W})}\,\|\beta_v - \beta_{v'}\| ,
  \end{align}
where $\|\beta_v - \beta_{v'}\|^2 = (\tilde B_{v\bullet} - \tilde B_{v'\bullet})^\top \Sigma_\Theta\, (\tilde B_{v\bullet} - \tilde B_{v'\bullet})$ is the topic-loading distance of Theorem~\ref{thm:word_embedding_guarantee}.
 
In particular, $B_{v\bullet} = \lambda B_{v'\bullet}$ for some $\lambda > 0$ implies $\hat w_v = \hat w_{v'}$.
  \end{proposition}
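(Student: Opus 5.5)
The plan is to reduce everything to a comparison of rows of $\log R$, then pass from rows of $\log R$ to rows of $R$ via the mean value theorem, and finally use the exact factorization $R - \mathbf{1}_V\mathbf{1}_V^\top = \beta\beta^\top$ from Lemma~\ref{thm:topic_primitive}.

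\emph{Step 1: isolate the target embedding.} Since $\hat W\hat{\tilde W}^\top = \log R - \log\nu\,\mathbf{1}_V\mathbf{1}_V^\top$, the constant shift cancels in row differences:
\[
(\hat w_v - \hat w_{v'})^\top \hat{\tilde W}^\top = (\log R)_{v\bullet} - (\log R)_{v'\bullet}.
\]
Because $\hat{\tilde W}$ has full column rank, for any $x \in \mathbb{R}^r$,
\[
\sigma_{\min}(\hat{\tilde W})\|x\| \le \|\hat{\tilde W}x\| \le \sigma_{\max}(\hat{\tilde W})\|x\|.
\]
Hence $\|\hat w_v - \hat w_{v'}\|$ is sandwiched between $\|\Delta\log\|/\sigma_{\max}(\hat{\tilde W})$ and $\|\Delta\log\|/\sigma_{\min}(\hat{\tilde W})$, where $\Delta\log := (\log R)_{v\bullet} - (\log R)_{v'\bullet} \in \mathbb{R}^V$.

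\emph{Step 2: from $\log R$ to $R$.} Entrywise, the mean value theorem gives
\[
\log R_{vu} - \log R_{v'u} = (R_{vu} - R_{v'u})/\xi_u
\]
for some $\xi_u$ between $R_{vu}$ and $R_{v'u}$. Assumption~\ref{ass:R_pos} ensures $\xi_u \in [R_{\min}, R_{\max}]$ with $R_{\min} > 0$. Therefore
\[
\|\Delta R\|/R_{\max} \le \|\Delta\log\| \le \|\Delta R\|/R_{\min},
\]
with $\Delta R := R_{v\bullet} - R_{v'\bullet}$.

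\emph{Step 3: from rows of $R$ to $\beta$.} The all-ones matrix has identical rows, so by Lemma~\ref{thm:topic_primitive},
\[
\Delta R = (R - \mathbf{1}\mathbf{1}^\top)_{v\bullet} - (R - \mathbf{1}\mathbf{1}^\top)_{v'\bullet} = \beta(\beta_v - \beta_{v'}).
\]
Since $\beta$ is $V\times(K-1)$ of full column rank, as shown in Proposition~\ref{thm:factorization} and Lemma~\ref{thm:topic_primitive},
\[
\sigma_{\min}(\beta)\|\beta_v - \beta_{v'}\| \le \|\Delta R\| \le \sigma_{\max}(\beta)\|\beta_v - \beta_{v'}\|.
\]
Chaining Steps 1--3 yields exactly \eqref{eq:oe_embed_equiv}. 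The identity $\|\beta_v - \beta_{v'}\|^2 = (\tilde B_{v\bullet} - \tilde B_{v'\bullet})^\top\Sigma_\Theta(\tilde B_{v\bullet} - \tilde B_{v'\bullet})$ is Theorem~\ref{thm:word_embedding_guarantee} applied with $\hat\beta = \beta$.

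\emph{Step 4: the proportional case.} If $B_{v\bullet} = \lambda B_{v'\bullet}$, then $\tilde B_{v\bullet} = \tilde B_{v'\bullet}$, because $q_v = \lambda q_{v'}$ in this case. Hence $\beta_v = \beta_{v'}$, and the upper bound forces $\hat w_v = \hat w_{v'}$. The same conclusion also follows directly, since then $R_{v\bullet} = R_{v'\bullet}$.

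The main care point is Step 1. We need $\hat{\tilde W}$ to have full column rank so that its smallest singular value is positive and the lower bound $\|\hat{\tilde W}x\| \ge \sigma_{\min}(\hat{\tilde W})\|x\|$ holds on all of $\mathbb{R}^r$. This is exactly the stated hypothesis. We must also track which bound uses $R_{\max}$ and which uses $R_{\min}$ in the mean value step. Everything else is linear algebra already established earlier in the paper.
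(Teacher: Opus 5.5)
Your proof is correct and follows the paper's argument essentially step for step: cancel the constant shift in row differences, pass from $\log R$ to $R$ via the entrywise mean value theorem, use $R_{v\bullet}-R_{v'\bullet}=\beta(\beta_v-\beta_{v'})$ with the singular-value bounds for $\beta$, and convert through the singular values of $\hat{\tilde W}$ (you do this last step first, which only reorders the chain). The proportional case matches the paper's argument, and your alternative route through $R_{v\bullet}=R_{v'\bullet}$ and injectivity of $\hat{\tilde W}$ is also valid.
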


  \begin{proof}
Since the shift $\log\nu\,\mathbf{1}_V\mathbf{1}_V^\top$ is constant and cancels in row differences, $(\log R)_{v\bullet} - (\log R)_{v'\bullet} = (\hat w_v - \hat w_{v'})\hat{\tilde W}^\top$. All entries of $R$ lie in $[R_{\min}, R_{\max}]$, so the mean value theorem applied to $\log$ gives $|R_{vu} - R_{v'u}|/R_{\max} \le |\log R_{vu} - \log R_{v'u}| \le |R_{vu} - R_{v'u}|/R_{\min}$ for each $u$.
Thus, for the entire vector:
\[
\frac{1}{R_{\max}}\,\|R_{v\bullet} - R_{v'\bullet}\|
\;\le\; \|(\hat w_v - \hat w_{v'}) \hat{\tilde W}^\top\|
\;\le\; \frac{1}{R_{\min}}\,\|R_{v\bullet} - R_{v'\bullet}\|.
\]

 By Lemma~\ref{thm:topic_primitive}, $\beta\beta^\top = R - \mathbf{1}_V\mathbf{1}_V^\top$. All rows of
  $\mathbf{1}_V\mathbf{1}_V^\top$ are equal, so they cancel in the row difference:
  $R_{v\bullet} - R_{v'\bullet} = (\beta_v - \beta_{v'})^\top\beta^\top$. Since $\beta$ has full column
  rank $K-1$ (Proposition~\ref{thm:factorization}), $\sigma_{\min}(\beta) > 0$ and
  \[
  \sigma_{\min}(\beta)\,\|\beta_v - \beta_{v'}\| \;\le\; \|R_{v\bullet} - R_{v'\bullet}\|
  \;\le\; \sigma_{\max}(\beta)\,\|\beta_v - \beta_{v'}\| .
  \]
Combining with the display above gives 
\begin{align*}
\frac{\sigma_{\min}(\beta)}{R_{\max}}\,\|\beta_v - \beta_{v'}\|
\;\le\; \|(\hat w_v - \hat w_{v'}) \hat{\tilde W}^\top\|
\;\le\; \frac{\sigma_{\max}(\beta)}{R_{\min}}\,\|\beta_v - \beta_{v'}\|.
\end{align*}
Finally, full column rank of $\hat{\tilde W}$ gives $\sigma_{\min}(\hat{\tilde W})\|x\| \le \|x\hat{\tilde W}^\top\| \le \sigma_{\max}(\hat{\tilde W})\|x\|$ for every $x$. Substituting into the display above gives \eqref{eq:oe_embed_equiv}. 

 The proportional case follows from $B_{v\bullet} = \lambda B_{v'\bullet}
  \Rightarrow \tilde B_{v\bullet} = \tilde B_{v'\bullet} \Rightarrow \beta_v = \beta_{v'}$, which makes the
  right-hand side of \eqref{eq:oe_embed_equiv} zero.
  \end{proof}

Proposition~\ref{prop:oe_general} is the analogue of Theorem~\ref{thm:word_embedding_guarantee} for the SGNS target: any $\hat W, \hat{\tilde W}$ that attain that target induce a metric equivalent to the one $\beta$ induces, reproducing the topic-loading geometry up to a distortion of at most $\kappa_R\,\mathrm{cond}(\beta)\,\mathrm{cond}(\hat{\tilde W})$. In particular, words with proportional topic loadings receive identical embeddings under any such factorization, exactly as in Theorem~\ref{thm:word_embedding_guarantee}. 

\begin{remark}[Target versus embedding]\label{rem:target_vs_embedding}
Proposition~\ref{prop:oe_general} is a statement about the SGNS target, not about the trained embedding. Since generically the SGNS target is full rank, the trained embedding can only approximate the target, and is instead the weighted low-rank fit of Table \ref{tab:targets}, with the weights determining which approximation \citep{levy2014neural}. The proposition's practical content therefore depends on the quality of this approximation, and hence on the spectrum of the target.\footnote{To bound the spectrum theoretically requires strong assumptions I deem unlikely to hold in practice. Two examples: i) Call $v$ an \emph{anchor word} for topic $k$ if $B_{vk} > 0$ and $B_{vk'} = 0$ for all $k' \neq k$ \citep{donoho2003does, arora2013practical}. If every word is an anchor, then $\log R$, and hence the target, has at most rank $K$; this is the population analogue of the block structure \citet{li2023transformers} obtain for transformers under disjoint topics.
ii) Write $\varrho := \|R - \mathbf{1}_V\mathbf{1}_V^\top\|_\infty$ for the maximum distance from independence. Then, $\varrho < 1$ gives $\log R = \beta\beta^\top + O(\varrho^2)$ entrywise, so the target is of rank at most $K$ up to an $O(V\varrho^2)$ perturbation. 
  }

I explore this further in Section~\ref{sec:simulations}, where I compute the spectrum exactly under our simulation designs and find the target approximately low rank. In our application in Section \ref{sec:application} I simply set $r=50$ to ensure $r>K$ and that the rank constraint is not too binding.
\end{remark}

\begin{remark}[The other log targets]\label{rem:other_targets}
Proposition~\ref{prop:oe_general} uses only that the SGNS target is $\log R$ up to a constant, so it covers any target that is $\log R$ up to an offset constant down each column. However, both full-softmax skip-gram and GloVe carry an offset constant along each \emph{row}, which does not cancel in a row difference. For example, for full softmax that offset is $\log q\,\mathbf{1}_V^\top$, so $R_{v\bullet} = R_{v'\bullet}$ gives
\[
(\hat w_v - \hat w_{v'})\hat{\tilde W}^\top \;=\; (\log q_v - \log q_{v'})\,\mathbf{1}_V^\top ,
\]
and the two embeddings coincide if and only if $q_v = q_{v'}$: same-loading words agree up to a single marginal direction. Lemma~\ref{app-lem:index} gives the general statement: for any entrywise transform of $R$ and any offsets constant along a row or a column, words with proportional loadings have target rows differing by the row-offset difference alone.
\end{remark}

CBOW \citep{mikolov2013efficient} predicts the target word from the \emph{average} of its context embeddings. With a one-word context the average is trivial: under negative sampling its target is again $\log R - \log\nu\,\mathbf{1}_V\mathbf{1}_V^\top$, so Proposition~\ref{prop:oe_general} applies verbatim. The averaging is what removes the theory: with more than one context word the score depends on the context only through the averaged embedding, so no closed-form $V\times V$ target exists. Given its prevalence in the literature, I nevertheless treat CBOW empirically in Sections~\ref{sec:simulations}--\ref{sec:application}.

\subsection{Document Embeddings}\label{sec:doc_embeddings}

Word embeddings can be aggregated to create document representations. A natural approach is to represent each document by the average of its word embeddings:
\begin{align}\label{eq:doc_embedding}
\bar{h}_d = \frac{1}{N_d} \sum_{j=1}^{N_d} h_{w_{dj}},
\end{align}
where $h_w$ is the embedding assigned to word $w$---for instance, the $\hat\beta_w \in \mathbb{R}^{K-1}$ of Theorem~\ref{thm:word_embedding_guarantee}, with explicit representative $\beta_w$ given by Lemma~\ref{thm:topic_primitive}---and $w_{dj}$ is the $j$-th word in document $d$.

First, note that averaging is the natural aggregator under the model of Section~\ref{sec:generative}, since the tokens $w_{d,1},\dots,w_{d,N_d}$ are i.i.d.\ draws from the document's word distribution $\Pi_{\bullet d}$. Averaging is also the operation behind a broad class of document- and sentence-embedding methods that are used in practice. Unweighted averages of static word vectors are a widely used baseline \citep{wieting2016towards, iyyer2015deep}; \citet{arora2017simple} reweight the average by \emph{smooth inverse frequency} (down-weighting frequent words), and \citet{joulin2017bag} average word- and $n$-gram vectors for text classification. The same pooling extends to contextual models, where transformer sentence encoders mean-pool token embeddings into a fixed-length vector \citep{reimers2019sentence}.\footnote{The main alternative learns document vectors directly rather than by averaging \citep[Paragraph Vector, or doc2vec;][]{le2014distributed}, which I do not treat here.}

Intuitively, under our topic model, documents with similar topic mixture $\theta_{d}$ should have similar document embeddings, since they draw words from similar distributions over the vocabulary. I now formalize this intuition.
Define the \emph{expected document embedding} for a document with topic mixture $\theta_d$ as:
\begin{align}\label{eq:expected_doc_embedding}
\mu_d = \E[\bar{h}_d | \theta_d] = \sum_{v=1}^V P(w = v | d) h_v = \sum_{v=1}^V (B \Theta_{\bullet d})_v  h_v.
\end{align}
Further, because the words within document $d$, $w_{d,1}, \ldots, w_{d,N_d}$ are i.i.d.\ conditional on $\theta_d$, clearly $\bar h_d = \frac{1}{N_d}\sum_j h_{w_{dj}} \xrightarrow{p} \mu_d$ as $N_d \to \infty$ by the law of large numbers.
Finally, define the \emph{topic centroid embeddings}
\begin{align}\label{eq:topic_centroids}
c_k \;:=\; \sum_{v=1}^V B_{vk}\, h_v, \qquad k=1,\ldots, K,
\end{align}
and let $C \equiv [c_1\;\cdots\;c_K]$ denote the matrix with the centroids as columns. I obtain the following result.

\begin{proposition}[Linearity in the topic mixture]\label{prop:doc_embedding_general_freq}
Suppose $\Pi = B\Theta$ and the topic model is identified.\footnote{The anchor-word condition --- every topic has at least one word exclusive to it --- is one sufficient condition for identification (\cite{donoho2003does, arora2013practical,fu2019nonnegative}), also see the discussion in \cite{freyaldenhoven2025testability}. However, it is not necessary; see  the recent work of \cite{chen2022learning} that uses the \emph{sufficiently-scattered} condition in \cite{huang2013non} and \cite{huang2016anchor}.} Then, for every document $d$:
\begin{enumerate}[label=(\roman*)]
\item the expected document embedding is a linear function of the topic mixture,
\begin{align}\label{eq:mu_linear}
\mu_d \;=\; C\, \Theta_{\bullet d};
\end{align}
\item $\mu_d$ lies in the centroid simplex, $\mu_d \in \mathrm{conv}\{c_1,\ldots,c_K\}$.
\end{enumerate}
\end{proposition}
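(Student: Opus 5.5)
The proof is a direct computation from the definition \eqref{eq:expected_doc_embedding} of $\mu_d$, followed by a convexity observation. Identification plays no role in the algebra itself. It only makes $\Theta_{\bullet d}$, and hence the centroids $c_k$, well defined objects attached to the model rather than to one of many equivalent factorizations. I will note this at the outset and then take the factorization $\Pi = B\Theta$ as fixed.

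For part (i), I start from $\mu_d = \sum_{v} (B\Theta_{\bullet d})_v h_v$ and expand the inner product entrywise:
\[
(B\Theta_{\bullet d})_v = \sum_{k=1}^K B_{vk}\Theta_{kd}.
\]
Since both sums are finite, I can exchange the order of summation to get
\[
\mu_d = \sum_{k=1}^K \Theta_{kd}\sum_{v=1}^V B_{vk} h_v = \sum_{k=1}^K \Theta_{kd}\, c_k,
\]
using the definition \eqref{eq:topic_centroids} of $c_k$. Written in matrix form, this is exactly $\mu_d = C\Theta_{\bullet d}$. Equivalently, stacking the word embeddings $h_v^\top$ as the rows of a matrix $H$, we have $\mu_d = H^\top B\Theta_{\bullet d}$ and $C = H^\top B$.

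For part (ii), I use that $\Theta_{\bullet d}$ lies in the simplex $\Delta^{K-1}$. Its entries are nonnegative by assumption. They sum to one because $\mathbf{1}_V^\top B = \mathbf{1}_K^\top$ (columns of $B$ sum to one), which gives
\[
1 = \mathbf{1}_V^\top \Pi_{\bullet d} = \mathbf{1}_V^\top B\Theta_{\bullet d} = \mathbf{1}_K^\top \Theta_{\bullet d}.
\]
So part (i) writes $\mu_d$ as a convex combination of $c_1,\dots,c_K$, which is precisely membership in $\mathrm{conv}\{c_1,\ldots,c_K\}$.

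None of these steps is a real obstacle. The only point needing care is the column-sum property of $\Theta$: the paper states that columns of $\Pi$ and $B$ sum to one, not that columns of $\Theta$ do, so I derive it from those two facts as above rather than assuming it.
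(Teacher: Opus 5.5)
Your proof is correct and follows the paper's own argument: expand $(B\Theta_{\bullet d})_v$, exchange the finite sums to get $\mu_d = C\,\Theta_{\bullet d}$, then read $\mu_d$ off as a convex combination of the centroids. Your derivation of $\mathbf{1}_K^\top \Theta_{\bullet d} = 1$ from the column sums of $\Pi$ and $B$ is a valid addition, since the paper simply asserts that the mixture weights sum to one.
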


\begin{proof}
For (i), simply expanding the definition of $\mu_d$ in \eqref{eq:expected_doc_embedding} and rearranging the sums yields
\begin{align*}
\mu_d \;&=\; \sum_{v=1}^V (B\Theta_{\bullet d})_v\, h_v \;=\; \sum_{v=1}^V \sum_{k=1}^K B_{vk}\, \theta_{dk}\, h_v \\
\;&=\; \sum_{k=1}^K \theta_{dk}\, \sum_{v=1}^V B_{vk}\, h_v \;=\; \sum_{k=1}^K \theta_{dk}\, c_k = C\,\Theta_{\bullet d}.
\end{align*}
For (ii), the topic mixture satisfies $\theta_{dk} \geq 0$ and $\sum_k \theta_{dk} = 1$, which immediately implies $\mu_d \in \mathrm{conv}\{c_1,\ldots,c_K\}$.
\end{proof}

Note that, beyond the topic-model decomposition $\Pi = B\Theta$, Proposition~\ref{prop:doc_embedding_general_freq} imposes no further conditions on the word embeddings $h_v$: the linear decomposition $\mu_d = C\,\Theta_{\bullet d}$ and the simplex containment follow from linearity of averaging and hold for arbitrary---even random---embeddings. %
Proposition~\ref{prop:doc_embedding_general_freq} characterizes the \emph{space} in which document embeddings live: a low-dimensional simplex spanned by the topic centroids, with each document's position encoding its topic mixture. Corollary~\ref{cor:doc_embedding_metric} below characterizes the \emph{metric} on that space---the pairwise distances between document embeddings when the word embeddings satisfy the factorization of Theorem~\ref{thm:word_embedding_guarantee}.\footnote{Note that linearity of $\mu_d = C\,\Theta_{\bullet d}$ already implies a Lipschitz bound between expected document embeddings and topic mixtures: for any two documents,
\begin{align}\label{eq:doc_lipschitz}
\|\mu_d - \mu_{d'}\|_2 \;\leq\; \|C\|_{\mathrm{op}}\,\|\Theta_{\bullet d} - \Theta_{\bullet d'}\|_2,
\end{align}
with Lipschitz constant equal to the operator norm of the centroid matrix. Corollary~\ref{cor:doc_embedding_metric} sharpens this bound once the word embeddings satisfy $HH^\top = R - \mathbf{1}_V\mathbf{1}_V^\top$.}

\begin{corollary}[Document-embedding metric]\label{cor:doc_embedding_metric}
Suppose Assumption~\ref{ass:topic_reg} holds, $\Pi = B\Theta$, and the topic model is identified. If the word embeddings  $H=[h_1, \ldots, h_V]^\top$  satisfy $HH^\top = R - \mathbf{1}_V\mathbf{1}_V^\top$ (such as the explicit construction in Lemma~\ref{thm:topic_primitive}), the pairwise distance between expected document embeddings is given by
\begin{align}\label{eq:doc_metric_observable}
\|\mu_d - \mu_{d'}\|_2^2 \;=\; (\Pi_{\bullet d} - \Pi_{\bullet d'})^\top \,(R - \mathbf{1}_V\mathbf{1}_V^\top)\, (\Pi_{\bullet d} - \Pi_{\bullet d'}).
\end{align}
Equivalently,
\begin{align}\label{eq:doc_metric_topic}
\|\mu_d - \mu_{d'}\|_2^2 \;=\; (\Theta_{\bullet d} - \Theta_{\bullet d'})^\top \,G_B\, \Sigma_\Theta\, G_B\, (\Theta_{\bullet d} - \Theta_{\bullet d'}),
\end{align}
where $G_B := B^\top D_q^{-1} B$. Moreover, this distance is strictly positive whenever $\Theta_{\bullet d} \neq \Theta_{\bullet d'}$.
\end{corollary}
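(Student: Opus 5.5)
The plan is to reduce everything to the linear representation of Proposition~\ref{prop:doc_embedding_general_freq} and the centering identity \eqref{eq:centering_identity}. First, with $H$ the $V\times r$ matrix of word embeddings, \eqref{eq:expected_doc_embedding} reads $\mu_d = H^\top \Pi_{\bullet d}$. Hence $\mu_d - \mu_{d'} = H^\top x$ with $x := \Pi_{\bullet d} - \Pi_{\bullet d'}$. Then
\[
\|\mu_d-\mu_{d'}\|_2^2 = x^\top H H^\top x = x^\top (R - \mathbf{1}_V\mathbf{1}_V^\top)\, x ,
\]
which is \eqref{eq:doc_metric_observable}. This step uses only the hypothesis $HH^\top = R-\mathbf{1}_V\mathbf{1}_V^\top$; no particular choice of representative among the orthogonally equivalent factors is needed.

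For \eqref{eq:doc_metric_topic}, I will substitute $R - \mathbf{1}_V\mathbf{1}_V^\top = \tilde B\Sigma_\Theta\tilde B^\top$ from \eqref{eq:centering_identity} together with $x = B\,\Delta$, where $\Delta := \Theta_{\bullet d}-\Theta_{\bullet d'}$. Since $\tilde B^\top B = B^\top D_q^{-1} B = G_B$ and $G_B$ is symmetric, the quadratic form becomes $\Delta^\top G_B \Sigma_\Theta G_B \Delta$, as claimed.

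The main step is strict positivity. Set $y := G_B\Delta$; the distance is then $y^\top \Sigma_\Theta y$. Since $\Sigma_\Theta$ is PSD, this vanishes iff $y \in \ker\Sigma_\Theta$. We know $\Sigma_\Theta\mathbf{1}_K = 0$, and by Assumption~\ref{ass:topic_reg}\ref{ass:rank_Sigma} the rank is $K-1$, so $\ker\Sigma_\Theta = \mathrm{span}(\mathbf{1}_K)$. It therefore suffices to show that $G_B\Delta = c\,\mathbf{1}_K$ forces $\Delta = 0$. I will use two facts:
\begin{enumerate}[label=(\alph*)]
\item $G_B$ is positive definite, hence invertible, because $D_q^{-1}$ is positive definite (Assumption~\ref{ass:topic_reg}\ref{ass:marg_pos}) and $B$ has full column rank (Assumption~\ref{ass:topic_reg}\ref{ass:rank_B});
\item $G_B\bar\theta = B^\top D_q^{-1} q = B^\top \mathbf{1}_V = \mathbf{1}_K$, since the columns of $B$ sum to one.
\end{enumerate}
Thus $G_B\Delta = c\,\mathbf{1}_K = c\,G_B\bar\theta$ gives $\Delta = c\,\bar\theta$. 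Summing the entries, $\mathbf{1}_K^\top\Delta = 0$ because both mixtures lie on the simplex, while $\mathbf{1}_K^\top\bar\theta = 1$. Hence $c=0$ and $\Delta=0$, which proves the contrapositive of the positivity claim.

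I expect this kernel argument to be the only nontrivial point: $\Sigma_\Theta$ is singular, so one must check that $G_B$ never maps a nonzero simplex difference into $\mathrm{span}(\mathbf{1}_K)$. The identity $G_B\bar\theta = \mathbf{1}_K$ is what handles this. Identification of the topic model is not used beyond fixing $(B,\Theta)$; by Remark~\ref{rem:identification_invariance}, the observable form \eqref{eq:doc_metric_observable} does not depend on that choice anyway.
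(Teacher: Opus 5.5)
Your proof is correct and follows the paper's argument step for step: $\mu_d = H^\top\Pi_{\bullet d}$ gives the observable form, substituting \eqref{eq:centering_identity} with $\Pi_{\bullet d} = B\Theta_{\bullet d}$ gives the topic form, and positivity reduces to showing that $G_B$ maps no nonzero simplex difference into $\ker\Sigma_\Theta = \mathrm{span}(\mathbf{1}_K)$. The one difference is minor. The paper concludes from $\mathbf{1}_K^\top G_B^{-1}\mathbf{1}_K > 0$, using positive definiteness. You instead note $G_B\bar\theta = \mathbf{1}_K$, so $G_B^{-1}\mathbf{1}_K = \bar\theta$ and that quadratic form equals exactly $1$, which reaches the same conclusion slightly more explicitly.
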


\begin{proof}
Using $\mu_d = H^\top \Pi_{\bullet d}$, we have
\begin{align*}
\|\mu_d - \mu_{d'}\|_2^2 &=\; (H^\top\Pi_{\bullet d} - H^\top \Pi_{\bullet d'})^\top(H^\top\Pi_{\bullet d} - H^\top \Pi_{\bullet d'})\\
&=(\Pi_{\bullet d} - \Pi_{\bullet d'})^\top\, H H^\top\, (\Pi_{\bullet d} - \Pi_{\bullet d'}) \\
&=\; (\Pi_{\bullet d} - \Pi_{\bullet d'})^\top \,(R - \mathbf{1}_V\mathbf{1}_V^\top)\, (\Pi_{\bullet d} - \Pi_{\bullet d'}).
\end{align*}
Using \eqref{eq:centering_identity} to substitute $R - \mathbf{1}_V\mathbf{1}_V^\top = D_q^{-1} B\, \Sigma_\Theta\, B^\top D_q^{-1}$ and using $\Pi_{\bullet d} = B\,\Theta_{\bullet d}$ yields
\begin{align*}
\|\mu_d - \mu_{d'}\|_2^2 &=\; (\Pi_{\bullet d} - \Pi_{\bullet d'})^\top \,(R - \mathbf{1}_V\mathbf{1}_V^\top)\, (\Pi_{\bullet d} - \Pi_{\bullet d'}) \\
&=\; (\Theta_{\bullet d} - \Theta_{\bullet d'})^\top \,B^\top D_q^{-1} B\, \Sigma_\Theta\, B^\top D_q^{-1} B\, (\Theta_{\bullet d} - \Theta_{\bullet d'}).
\end{align*}

For the final claim, write $x := \Theta_{\bullet d} - \Theta_{\bullet d'} \neq 0$, and note $\mathbf{1}_K^\top x = 0$ because both mixtures lie on the simplex. Since $\mathrm{rank}\,\Sigma_\Theta = K-1$ and $\Sigma_\Theta\mathbf{1}_K = 0$, we have $\ker\Sigma_\Theta = \mathrm{span}(\mathbf{1}_K)$, so \eqref{eq:doc_metric_topic} vanishes if and only if $G_B x = \lambda\mathbf{1}_K$ for some $\lambda \in \mathbb{R}$. In that case $x = \lambda G_B^{-1}\mathbf{1}_K$, and $\mathbf{1}_K^\top x = 0$ gives $\lambda\,\mathbf{1}_K^\top G_B^{-1}\mathbf{1}_K = 0$. Under Assumption~\ref{ass:topic_reg}, $G_B$ is positive definite, hence so is $G_B^{-1}$ and $\mathbf{1}_K^\top G_B^{-1}\mathbf{1}_K > 0$; therefore $\lambda = 0$ and $G_B x = 0$, so $x = 0$, a contradiction.
\end{proof}
Equation~\eqref{eq:doc_metric_observable} expresses the document-embedding distance through the centered probability ratio $R - \mathbf{1}_V\mathbf{1}_V^\top$ and the word-distribution profiles $\Pi_{\bullet d}$ with no reference to the topic-model primitives.\footnote{Note that, because $R - \mathbf{1}_V\mathbf{1}_V^\top$ has rank $K-1$ (Proposition~\ref{thm:factorization}), the distance depends only on the projection of $\Pi_{\bullet d} - \Pi_{\bullet d'}$ onto the $(K-1)$-dimensional topic subspace; word-frequency variation orthogonal to it leaves the embedding distance unchanged.}
The equivalent form \eqref{eq:doc_metric_topic} rewrites the same distance in latent topic coordinates as weighted topic-mixture differences. Those weights depend on the topic-mixture covariance $\Sigma_\Theta$ and the overlap among the topic loadings, $G_B$.
Taken together, Proposition~\ref{prop:doc_embedding_general_freq} and Corollary~\ref{cor:doc_embedding_metric} show that the document embedding is a linear---and, under Assumption~\ref{ass:topic_reg}, invertible---encoding of the latent topic mixture $\Theta_{\bullet d}$. In other words, $\mu_d$ is a sufficient statistic for $\Theta_{\bullet d}$, so a cluster of nearby embeddings is a set of documents with similar mixtures, and the map from an embedding-space centroid to a mixture is well defined. It requires no assumption about how documents are distributed over the simplex.

I next apply the results in this section to two uses of embeddings: one \emph{groups} units by latent type, where I show that $k$-means in embedding space\footnote{Throughout, I call the Euclidean space in which the word embeddings $h_v$, document embeddings $\mu_d$, and centroids $c_k$ live the embedding space — $\mathbb{R}^{K-1}$ under the exact factorization of Proposition~\ref{thm:factorization}.} recovers structure on the topic mixtures, Section~\ref{sec:clustering}.  The other \emph{conditions} on them, where I show that adjusting for the embedding is equivalent to adjusting for the topic mixture, Section~\ref{sec:conditioning}.

\subsection{Grouping: Clustering Documents}\label{sec:clustering}

I next ask when $k$-means clustering recovers meaningful structure. I begin with the simplest case---cluster centers at the topic centroids---and then generalize.

\begin{lemma}[Voronoi-cell membership for concentrated documents]\label{lem:voronoi_membership}
Suppose Assumption~\ref{ass:topic_reg} holds, the topic model is identified, and the word embeddings satisfy $HH^\top = R - \mathbf{1}_V\mathbf{1}_V^\top$ (such as the explicit construction in Lemma~\ref{thm:topic_primitive}). Let $\mathcal{V}_k$ denote the Voronoi cell of $c_k$ under the Euclidean metric in embedding space,\footnote{Given a simplex with vertices $c_1,\ldots,c_K$, the \emph{Voronoi cell} of $c_k$ is $\mathcal{V}_k = \{x : \|x - c_k\|_2 \leq \|x - c_{k'}\|_2 \text{ for all } k' \neq k\}$, the points at least as close to $c_k$ as to any other vertex. The cells $\{\mathcal{V}_1,\ldots,\mathcal{V}_K\}$ partition the simplex into convex polytopes meeting at the perpendicular bisectors of the vertex pairs; assigning each point to its nearest vertex is the nearest-centroid rule used by $k$-means with fixed centers.} and $k^*(d) = \arg\max_k \theta_{dk}$. Define
\begin{align}\label{eq:eta_def}
\eta \;:=\; \frac{1}{2} \frac{\min_{k \neq k'} \|c_k - c_{k'}\|_2}{\max_{k \neq k'} \|c_k - c_{k'}\|_2} \;\in\; (0, 1/2].
\end{align}
Then, for any document $d$ with dominant-topic weight $\theta_{d, k^*(d)} > 1 - \eta$, $\mu_d \in \mathcal{V}_{k^*(d)} \cap \mathrm{conv}\{c_1,\ldots,c_K\}$.
\end{lemma}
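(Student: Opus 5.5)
The plan is to use only two earlier facts: the linear representation of Proposition~\ref{prop:doc_embedding_general_freq}, and the fact that the topic centroids are pairwise distinct. Distinctness follows from Corollary~\ref{cor:doc_embedding_metric}. The unit vector $e_k\in\Delta^{K-1}$ is itself a valid topic mixture, and applying the linear map of Proposition~\ref{prop:doc_embedding_general_freq}\,(i) to it gives $C e_k = c_k$. The quadratic form in \eqref{eq:doc_metric_topic} is strictly positive for any nonzero difference of simplex points, so $\|c_k-c_{k'}\|_2^2 = (e_k-e_{k'})^\top G_B\Sigma_\Theta G_B(e_k-e_{k'})>0$ for $k\neq k'$. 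I would check that the final step of that proof uses only $\mathbf{1}_K^\top x=0$ and not that $x$ arises from actual documents; it does. Hence $D_{\min}:=\min_{k\neq k'}\|c_k-c_{k'}\|_2>0$, and so $\eta\in(0,1/2]$. The upper bound $\eta\le 1/2$ is trivial because $D_{\min}\le D_{\max}$.

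Next, fix a document $d$, write $k^*=k^*(d)$ and $\theta=\Theta_{\bullet d}$. Since $\sum_k\theta_k=1$, we can write $\mu_d - c_{k^*} = \sum_{j\neq k^*}\theta_j(c_j-c_{k^*})$. The triangle inequality then gives
\[
\|\mu_d-c_{k^*}\|_2 \;\le\; \Big(\sum_{j\neq k^*}\theta_j\Big) D_{\max} \;=\; (1-\theta_{k^*})D_{\max} \;<\; \eta D_{\max} \;=\; D_{\min}/2 .
\]
For any $k'\neq k^*$, the reverse triangle inequality gives
\[
\|\mu_d-c_{k'}\|_2 \;\ge\; \|c_{k^*}-c_{k'}\|_2-\|\mu_d-c_{k^*}\|_2 \;>\; D_{\min}-D_{\min}/2 \;=\; D_{\min}/2 .
\]
Combining the two bounds, $\|\mu_d-c_{k^*}\|_2<\|\mu_d-c_{k'}\|_2$ for every $k'\neq k^*$. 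So $\mu_d\in\mathcal V_{k^*}$, in fact in its interior. Membership in $\mathrm{conv}\{c_1,\ldots,c_K\}$ is exactly Proposition~\ref{prop:doc_embedding_general_freq}\,(ii).

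The only nontrivial step is the positivity of $D_{\min}$, which is where Assumption~\ref{ass:topic_reg} and the factorization $HH^\top=R-\mathbf{1}_V\mathbf{1}_V^\top$ enter. Everything else is linearity plus the triangle inequality. I expect no further obstacle, apart from noting that the argmax defining $k^*(d)$ is unique here. This holds because $\theta_{dk^*}>1-\eta\ge 1/2$.
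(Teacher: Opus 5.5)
Your proof is correct and follows the paper's argument. You establish distinctness of the centroids via Corollary~\ref{cor:doc_embedding_metric} applied to $e_k - e_{k'}$, derive the bound $\|\mu_d - c_{k^*}\|_2 \le (1-\theta_{d,k^*})\max_{k\neq k'}\|c_k - c_{k'}\|_2$, and finish with the reverse triangle inequality. The differences are cosmetic: you obtain the first bound by applying the triangle inequality directly to $\sum_{j\neq k^*}\theta_{dj}(c_j - c_{k^*})$, whereas the paper writes $\Theta_{\bullet d} = (1-\delta)e_{k^*} + \delta z$ and uses convexity of $z \mapsto \|Cz - c_{k^*}\|_2$ on the simplex. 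Your observations that the argmax is unique and that Corollary~\ref{cor:doc_embedding_metric} applies to arbitrary simplex points are small checks the paper leaves implicit.
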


\begin{proof} %
The centroids are distinct: $c_k = C e_k$ is the embedding of a pure-topic document, and $e_k \neq e_{k'}$ for $k \neq k'$, so Corollary~\ref{cor:doc_embedding_metric} gives $\|c_k - c_{k'}\|_2 > 0$. Hence $\eta \in (0, 1/2]$ is well defined.

Next, fix document $d$ and let $k^* = k^*(d)$. Write $\Theta_{\bullet d} = (1-\delta) e_{k^*} + \delta z$ with $\delta = 1 - \theta_{d, k^*} \in [0, 1]$ and $z \in \Delta_{K-1}$. With  $\mu_d = C\,\Theta_{\bullet d}$ (Proposition~\ref{prop:doc_embedding_general_freq}),
\begin{align*}
\mu_d - c_{k^*} &= C \Theta_{\bullet d} - c_{k^*} =  C (1-\delta) e_{k^*} + C \delta z - c_{k^*} = (1-\delta) c_{k^*} + C \delta z - c_{k^*} \\
&= \delta(Cz - c_{k^*}).
\end{align*}
Since $z \mapsto \|Cz - c_{k^*}\|_2$ is convex on the simplex $\Delta_{K-1}$, its maximum is attained at a vertex; hence $\|Cz - c_{k^*}\|_2 \leq \max_{k'} \|c_{k'} - c_{k^*}\|_2 \leq C^{\max}$, where $C^{\max} := \max_{k \neq k'} \|c_k - c_{k'}\|_2$. Therefore, since $\delta = 1 - \theta_{d, k^*}$,
\begin{align}\label{eq:slack}
\|\mu_d - c_{k^*}\|_2 \;\leq\; \delta\, C^{\max} \;=\; (1-\theta_{d, k^*})\,C^{\max}.
\end{align}
For any $k' \neq k^*$, the triangle inequality gives
\begin{align*}
\|\mu_d - c_{k'}\|_2 \;\geq\; \|c_{k^*} - c_{k'}\|_2 - \|\mu_d - c_{k^*}\|_2 \;\geq\; \Delta^* - \delta C^{\max},
\end{align*}
where $\Delta^* := \min_{k \neq k'} \|c_k - c_{k'}\|_2 > 0$. 

It follows that, whenever $\delta C^{\max} < \Delta^* - \delta C^{\max}$, or equivalently, $\delta < \Delta^*/(2 C^{\max}) = \frac{\min_{k \neq k'} \|c_k - c_{k'}\|_2}{2 \max_{k \neq k'} \|c_k - c_{k'}\|_2} = \eta$, we have:
\begin{align*}
\|\mu_d - c_{k^*}\|_2 \le \delta C^{\max} < \Delta^* - \delta C^{\max} \le \|\mu_d - c_{k'}\|_2 \quad \text{for all } k' \neq k^*.
\end{align*}
Since $\delta<\eta$, and $\theta_{d, k^*} > 1 - \eta$ are equivalent, $\theta_{d, k^*} > 1 - \eta$ thus implies $\mu_d \in \mathcal{V}_{k^*}$. Finally, since, by Proposition~\ref{prop:doc_embedding_general_freq}, $\mu_d \in \mathrm{conv}\{c_1,\ldots,c_K\}$ for all $d$, this completes the proof.
\end{proof}

In words, Lemma~\ref{lem:voronoi_membership} states that documents whose mass is concentrated on a single topic vertex of $\Delta_{K-1}$ end up in the Voronoi cell of the corresponding centroid.
The constant $\eta$ in Lemma~\ref{lem:voronoi_membership} is defined as half the ratio of minimum to maximum pairwise centroid distance. $\eta = 1/2$ is achieved when the centroids are equidistant, i.e., when they form an equilateral $K$-simplex; For the other extreme, $\eta \to 0$ when some pair of centroids becomes much closer than the others.
I illustrate in Figure \ref{fig:voronoi_lemma}.
\begin{figure}[tb!]
\centering
\begin{subfigure}[t]{0.48\textwidth}
\centering
\begin{tikzpicture}[scale=1.05]
\coordinate (c1) at (0, 0);
\coordinate (c2) at (5, 0);
\coordinate (c3) at (1, 2.5);
\coordinate (m12) at (2.5, 0);
\coordinate (m13) at (0.5, 1.25);
\coordinate (m23) at (3.0, 1.25);
\coordinate (circ) at (2.5, 0.45);

\fill[green!25] (0, 0)    -- (1.35, 0)    -- (0.27, 0.675) -- cycle;
\fill[green!25] (5, 0)    -- (3.65, 0)    -- (3.92, 0.675) -- cycle;
\fill[green!25] (1, 2.5)  -- (0.73, 1.825) -- (2.08, 1.825) -- cycle;

\draw[blue!70!black, dashed, thick] (circ) -- (m12);
\draw[blue!70!black, dashed, thick] (circ) -- (m13);
\draw[blue!70!black, dashed, thick] (circ) -- (m23);

\draw[black, thick] (c1) -- (c2) -- (c3) -- cycle;

\node[blue!60!black] at (1.4, 0.55) {\large $V_1$};
\node[blue!60!black] at (3.7, 0.4)  {\large $V_2$};
\node[blue!60!black] at (1.8, 1.5)  {\large $V_3$};

\fill[orange!85!black] (0.5, 0.3) circle (2.2pt);
\fill[orange!85!black] (1.0, 0.4) circle (2.2pt);
\fill[orange!85!black] (2.0, 0.4) circle (2.2pt);
\fill[orange!85!black] (4.0, 0.3) circle (2.2pt);
\fill[orange!85!black] (3.5, 0.7) circle (2.2pt);
\fill[orange!85!black] (1.0, 2.0) circle (2.2pt);
\fill[orange!85!black] (1.2, 1.7) circle (2.2pt);
\fill[orange!85!black] (1.5, 1.0) circle (2.2pt);
\fill[orange!85!black] (2.2, 1.0) circle (2.2pt);
\fill[orange!85!black] (2.8, 0.8) circle (2.2pt);

\fill[black] (c1) circle (2.6pt);
\fill[black] (c2) circle (2.6pt);
\fill[black] (c3) circle (2.6pt);
\node[anchor=north east] at (c1) {$c_1$};
\node[anchor=north west] at (c2) {$c_2$};
\node[anchor=south]      at (c3) {$c_3$};
\end{tikzpicture}
\caption{Unequal centroid triangle: $\eta \approx 0.27$.}
\label{fig:voronoi_lemma_irregular}
\end{subfigure}
\hfill
\begin{subfigure}[t]{0.48\textwidth}
\centering
\begin{tikzpicture}[scale=1.15]
\coordinate (c1) at (0, 0);
\coordinate (c2) at (4, 0);
\coordinate (c3) at (2, 3.464);
\coordinate (m12) at (2, 0);
\coordinate (m13) at (1, 1.732);
\coordinate (m23) at (3, 1.732);
\coordinate (circ) at (2, 1.155);

\fill[green!25] (0, 0)      -- (2, 0)      -- (1, 1.732) -- cycle;
\fill[green!25] (4, 0)      -- (2, 0)      -- (3, 1.732) -- cycle;
\fill[green!25] (2, 3.464)  -- (1, 1.732)  -- (3, 1.732) -- cycle;

\draw[blue!70!black, dashed, thick] (circ) -- (m12);
\draw[blue!70!black, dashed, thick] (circ) -- (m13);
\draw[blue!70!black, dashed, thick] (circ) -- (m23);

\draw[black, thick] (c1) -- (c2) -- (c3) -- cycle;

\node[blue!60!black] at (1.2, 0.7) {\large $V_1$};
\node[blue!60!black] at (2.8, 0.7) {\large $V_2$};
\node[blue!60!black] at (2.0, 2.3) {\large $V_3$};

\fill[orange!85!black] (0.4, 0.3) circle (2.2pt);
\fill[orange!85!black] (0.8, 0.6) circle (2.2pt);
\fill[orange!85!black] (3.4, 0.4) circle (2.2pt);
\fill[orange!85!black] (3.2, 0.8) circle (2.2pt);
\fill[orange!85!black] (1.8, 2.7) circle (2.2pt);
\fill[orange!85!black] (2.3, 2.6) circle (2.2pt);
\fill[orange!85!black] (2.0, 1.4) circle (2.2pt);
\fill[orange!85!black] (1.5, 1.5) circle (2.2pt);
\fill[orange!85!black] (2.5, 1.5) circle (2.2pt);
\fill[orange!85!black] (1.5, 2.0) circle (2.2pt);

\fill[black] (c1) circle (2.6pt);
\fill[black] (c2) circle (2.6pt);
\fill[black] (c3) circle (2.6pt);
\node[anchor=north east] at (c1) {$c_1$};
\node[anchor=north west] at (c2) {$c_2$};
\node[anchor=south]      at (c3) {$c_3$};
\end{tikzpicture}
\caption{Equilateral centroid triangle: $\eta = 1/2$ (maximal).}
\label{fig:voronoi_lemma_equilateral}
\end{subfigure}
\caption{Illustration of Lemma~\ref{lem:voronoi_membership} for $K=3$. In each panel, the topic centroids $c_1, c_2, c_3$ (black) span the convex hull $\mathrm{conv}\{c_1,c_2,c_3\}$ (solid triangle). The perpendicular bisectors of the centroid pairs (dashed) partition the triangle into three Voronoi cells $\mathcal{V}_1, \mathcal{V}_2, \mathcal{V}_3$, with $\mathcal{V}_k$ containing the points closer to $c_k$ than to any other centroid. Green shading marks the regions where the lemma guarantees correct assignment, $\theta_{d, k^*(d)} > 1-\eta$. Orange dots illustrate document embeddings $\mu_d = C\,\Theta_{\bullet d}$ for ten documents with varying topic mixtures.}
\label{fig:voronoi_lemma}
\end{figure}

Lemma~\ref{lem:voronoi_membership} is the vertex case: documents close to a vertex will be assigned to the corresponding Voronoi cell. But the logic behind this is not special to the vertices: Below, I establish the more general result that, for any set of ``archetypes'' $\mu^*_1, \ldots, \mu^*_L$, documents whose embeddings are sufficiently close to an archetype $\mu^*_\ell$ will be assigned to the Voronoi cell of $\mu^*_\ell$. Thus, the recovered clusters then reflect \emph{mixture-type} similarity rather than dominant-topic similarity, and the number of clusters $L$ need not equal the number of topics $K$. Figure~\ref{fig:voronoi_lemma_general} illustrates the archetype geometry. 
\begin{figure}[tb]
\centering
\begin{tikzpicture}[scale=1.15]
\coordinate (c1) at (0, 0);
\coordinate (c2) at (4, 0);
\coordinate (c3) at (2, 3.464);
\coordinate (muone) at (1, 0.7);
\coordinate (mutwo) at (2.5, 2.0);

\begin{scope}
\clip (c1) -- (c2) -- (c3) -- cycle;
\fill[green!25] (muone) circle (0.99);
\fill[green!25] (mutwo) circle (0.99);
\end{scope}

\draw[blue!70!black, dashed, thick] (1.168, 2.022) -- (2.92, 0);

\draw[black, thick] (c1) -- (c2) -- (c3) -- cycle;

\node[blue!60!black] at (1.2, 1.6) {\large $V_1$};
\node[blue!60!black] at (2.0, 3.0) {\large $V_2$};

\fill[orange!85!black] (0.5, 0.4) circle (2.2pt);
\fill[orange!85!black] (1.0, 1.2) circle (2.2pt);
\fill[orange!85!black] (1.5, 0.5) circle (2.2pt);
\fill[orange!85!black] (2.2, 0.3) circle (2.2pt);
\fill[orange!85!black] (2.4, 1.9) circle (2.2pt);
\fill[orange!85!black] (3.0, 1.5) circle (2.2pt);
\fill[orange!85!black] (2.0, 2.5) circle (2.2pt);
\fill[orange!85!black] (1.8, 1.5) circle (2.2pt);
\fill[orange!85!black] (3.0, 0.8) circle (2.2pt);
\fill[orange!85!black] (1.2, 1.0) circle (2.2pt);
\fill[orange!85!black] (2.5, 1.4) circle (2.2pt);
\fill[orange!85!black] (1, 0.6) circle (2.2pt);
\fill[orange!85!black] (2.5, 2.3) circle (2.2pt);

\node[anchor=north east] at (c1) {$c_1$};
\node[anchor=north west] at (c2) {$c_2$};
\node[anchor=south]      at (c3) {$c_3$};

\fill[black] (muone) circle (2.6pt);
\fill[black] (mutwo) circle (2.6pt);
\node[anchor=north east, xshift=3pt] at (muone) {$\mu^*_1$};
\node[anchor=south west, xshift=1pt]  at (mutwo) {$\mu^*_2$};
\end{tikzpicture}
\caption{Illustration of the nearest-archetype membership for $K=3$ and two archetypes ($L = 2$). The convex hull $\mathrm{conv}\{c_1,c_2,c_3\}$ (solid black) is partitioned by the perpendicular bisectors of archetype pairs (dashed) into Voronoi cells $V'_\ell$ of the archetype embeddings $\mu^*_\ell$ (black dots in the interior or on the hull boundary). Green shading marks the guaranteed regions $\{\mu_d : \|\mu_d - \mu^*_\ell\|_2 < \Delta'^*/2\}$ — disks of radius $\Delta'^*/2$ around each archetype, clipped to the hull. Orange dots illustrate document embeddings $\mu_d = C\,\Theta_{\bullet d}$. Documents inside a green disk are guaranteed by the triangle-inequality argument to lie in the corresponding Voronoi cell.}
\label{fig:voronoi_lemma_general}
\end{figure}
Formally, extending the argument from a single document to the full corpus, I obtain the following proposition.

\begin{proposition}[$k$-means identification under archetype concentration]\label{prop:archetype_identification}
Suppose Assumption~\ref{ass:topic_reg} holds. Let $\pi^*_1, \ldots, \pi^*_L \in \Delta_{K-1}$ be $L \geq 2$ distinct \emph{archetype mixtures}---points in $\Delta_{K-1}$ around which documents concentrate---with minimum separation
\[
\Delta'^* \;:=\; \min_{\ell \neq \ell'} \|C(\pi^*_\ell - \pi^*_{\ell'})\|_2 \;>\; 0,
\]
and let $\ell(d) := \arg\min_\ell \|C(\Theta_{\bullet d} - \pi^*_\ell)\|_2$ denote the nearest-archetype assignment in topic-mixture space. If every document satisfies
\[
\|C(\Theta_{\bullet d} - \pi^*_{\ell(d)})\|_2 \;<\; \Delta'^*/4,
\]
then the partition $\mathcal{P}^* := \{d : \ell(d) = \ell\}_{\ell=1}^L$ is a fixed point of $k$-means: assigning each document to the nearest cell mean of $\mathcal{P}^*$ returns $\mathcal{P}^*$, and recomputing the means returns the same means. Equivalently, $\mathcal{P}^*$ is simultaneously the Voronoi partition of the archetype embeddings $\{\mu^*_\ell\}$ and of its own cell means.
\end{proposition}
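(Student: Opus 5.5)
We work entirely in embedding space. By Proposition~\ref{prop:doc_embedding_general_freq}, $\mu_d = C\,\Theta_{\bullet d}$. Set $\mu^*_\ell := C\pi^*_\ell$, so the hypothesis reads $\|\mu_d - \mu^*_{\ell(d)}\|_2 < \Delta'^*/4$ and $\min_{\ell\neq\ell'}\|\mu^*_\ell - \mu^*_{\ell'}\|_2 = \Delta'^*$. By linearity, $\ell(d)$ is exactly the nearest-archetype assignment of $\mu_d$ among $\{\mu^*_\ell\}$. Hence $\mathcal{P}^*$ is by construction the Voronoi partition of the archetype embeddings, which gives the first half of the ``equivalently'' clause. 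The hypothesis is used only through this triangle-inequality geometry, as in Lemma~\ref{lem:voronoi_membership}.

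\textbf{Step 1 (cell means stay near archetypes).} Let $\bar\mu_\ell$ be the mean of $\{\mu_d : \ell(d)=\ell\}$. Each cell must be nonempty for its mean to be defined, so I will either add this as an implicit hypothesis or restrict attention to nonempty cells and note it. Each member lies in the open ball $B(\mu^*_\ell, \Delta'^*/4)$, which is convex, so the mean lies in it too: $\|\bar\mu_\ell - \mu^*_\ell\|_2 < \Delta'^*/4$. To make this strict on a finite average, I will use that the maximum over the finitely many documents is itself strictly below $\Delta'^*/4$.

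\textbf{Step 2 (reassignment returns $\mathcal{P}^*$).} Fix $d$ with $\ell(d)=\ell$ and any $\ell'\neq\ell$. The triangle inequality gives
\[
\|\mu_d - \bar\mu_\ell\|_2 \le \|\mu_d - \mu^*_\ell\|_2 + \|\mu^*_\ell - \bar\mu_\ell\|_2 < \Delta'^*/2.
\]
In the other direction,
\[
\|\mu_d - \bar\mu_{\ell'}\|_2 \ge \|\mu^*_\ell - \mu^*_{\ell'}\|_2 - \|\mu_d-\mu^*_\ell\|_2 - \|\bar\mu_{\ell'}-\mu^*_{\ell'}\|_2 > \Delta'^* - \Delta'^*/4 - \Delta'^*/4 = \Delta'^*/2.
\]
So $d$ is strictly closer to $\bar\mu_\ell$ than to any other cell mean, and nearest-mean reassignment reproduces $\mathcal{P}^*$ with no ties. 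Recomputing the means on the same partition then returns the same $\bar\mu_\ell$, so $(\mathcal{P}^*, \{\bar\mu_\ell\})$ is a fixed point of Lloyd's iteration. This also shows that $\mathcal{P}^*$ is the Voronoi partition of its own cell means, which completes the equivalence.

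\textbf{Main obstacle.} The delicate step is the strictness bookkeeping in Steps 1–2, since the factor $1/4$ is exactly what makes the two $\Delta'^*/4$ slacks sum to $\Delta'^*/2$. Two side issues also need care. One is empty cells, which I will handle by assumption. The other is that $\ell(d)$ must be well defined: the $\Delta'^*/4$ condition makes the nearest archetype unique, because any other archetype is at distance greater than $3\Delta'^*/4$.
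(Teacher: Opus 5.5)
Your proof is correct and follows the paper's argument essentially step for step. Like the paper, you bound each cell mean within $\Delta'^*/4$ of its archetype by convexity, then use the same two triangle inequalities to get $<\Delta'^*/2$ to the own mean versus $>\Delta'^*/2$ to any other mean. The paper additionally records that within-cell pairwise distances are below $\Delta'^*/2$ and cross-cell ones at least $\Delta'^*/2$, which the fixed-point claim does not need; your remarks on nonempty cells and on the uniqueness of $\ell(d)$ make explicit points the paper leaves tacit.
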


\begin{proof}
 Let $\mathcal{G}_\ell := \{d : \ell(d) = \ell\}$, $\mu^*_\ell := C\,\pi^*_\ell$ and $\mu_d = C\,\Theta_{\bullet d}$. Within-cluster pairwise distances then satisfy, for $d, d' \in \mathcal{G}_\ell$,
\begin{align*}
\|\mu_d - \mu_{d'}\|_2 \;\leq\; \|\mu_d - \mu^*_\ell\|_2 + \|\mu_{d'} - \mu^*_\ell\|_2 \;<\; \tfrac{\Delta'^*}{4} + \tfrac{\Delta'^*}{4} \;=\; \tfrac{\Delta'^*}{2},
\end{align*}
while for $d \in \mathcal{G}_\ell, d' \in \mathcal{G}_{\ell'}$ with $\ell \neq \ell'$,
\begin{align*}
\|\mu_d - \mu_{d'}\|_2 \;\geq\; \|\mu^*_\ell - \mu^*_{\ell'}\|_2 - \|\mu_d - \mu^*_\ell\|_2 - \|\mu_{d'} - \mu^*_{\ell'}\|_2 \;\geq\; \Delta'^* - 2\,\tfrac{\Delta'^*}{4} \;=\; \tfrac{\Delta'^*}{2}.
\end{align*}
so within-cluster pairwise distances are strictly smaller than cross-cluster ones.

For the fixed-point claim, let $m_\ell := \bar\mu_{\mathcal{G}_\ell}$ denote the cell means. Each $m_\ell$ is a convex combination of points within $\Delta'^*/4$ of $\mu^*_\ell$, so $\|m_\ell - \mu^*_\ell\|_2 < \Delta'^*/4$. For $d \in \mathcal{G}_\ell$,
\begin{align*}
\|\mu_d - m_\ell\|_2 \;\leq\; \|\mu_d - \mu^*_\ell\|_2 + \|\mu^*_\ell - m_\ell\|_2 \;<\; \tfrac{\Delta'^*}{2},
\end{align*}
while for $\ell' \neq \ell$,
\begin{align*}
\|\mu_d - m_{\ell'}\|_2 \;\geq\; \|\mu^*_\ell - \mu^*_{\ell'}\|_2 - \|\mu_d - \mu^*_\ell\|_2 - \|\mu^*_{\ell'} - m_{\ell'}\|_2 \;>\; \Delta'^* - \tfrac{\Delta'^*}{2} \;=\; \tfrac{\Delta'^*}{2}.
\end{align*}
Every document is therefore strictly closer to its own cell mean than to any other, so the assignment step reproduces $\mathcal{P}^*$; the update step then returns the same means. %
\end{proof}

At the simplex vertices, Proposition~\ref{prop:archetype_identification} specializes to the dominant-topic case, translating the embedding-distance condition back into a sufficient threshold on $\theta$. 
It is worth noting that Proposition~\ref{prop:archetype_identification} is a stability statement, not a recovery statement.\footnote{Note that the archetypes in Proposition~\ref{prop:archetype_identification} need not be fixed in advance. A natural choice is the $k$-means centers at convergence: by construction they lie in $\mathrm{conv}\{c_1,\ldots,c_K\}$ and equal $C\pi^*_\ell$ for some $\pi^*_\ell \in \Delta_{K-1}$. Proposition~\ref{prop:archetype_identification} then applies to these centers: if they are separated and every document lies within $\Delta'^*/4$ of its assigned center, the converged partition is a $k$-means fixed point.} 
Global optimality requires even stronger conditions.
However, even the weaker separation condition in Proposition~\ref{prop:archetype_identification} is not satisfied in our application. 
I therefore claim no recovery of a latent partition in our application. Instead, I rely on Corollary~\ref{cor:doc_embedding_metric}, which connects embedding distance to the difference of topic mixtures. A cluster is thus a set of documents with similar mixtures. %
This is what the interpretation of the clusters in Section~\ref{sec:application} rests on, and it requires no concentration assumption.

\subsection{Conditioning: Embeddings as Controls}\label{sec:conditioning}

The second use of document embeddings I consider is as a \emph{control}. A researcher who wants to control for a high-dimensional text $X_d$---a firm disclosure, a court filing, a job posting---cannot condition on the raw text and instead conditions on its embedding, implicitly assuming that adjustment for the embedding suffices. Proposition~\ref{prop:doc_embedding_general_freq}, and in particular the identity $\mu_d = C\,\Theta_{\bullet d}$, immediately makes the content of that assumption precise.

Formally, let $Z_d \in \{0,1\}$ be a treatment, $Y_d(0), Y_d(1)$ the potential outcomes for unit $d$, and $Y_d = Y_d(Z_d)$ the observed outcome.

\begin{assumption}[Topic unconfoundedness]\label{ass:topic_unconf}
$\{Y_d(0), Y_d(1)\} \perp Z_d \mid \Theta_{\bullet d}$, and $0 < P(Z_d = 1 \mid \Theta_{\bullet d}) < 1$.
\end{assumption}

Assumption~\ref{ass:topic_unconf} states that the document's topic mixture captures all confounding between treatment and potential outcomes: given $\Theta_{\bullet d}$, the realized words carry no further information about $(Z_d, Y_d(0), Y_d(1))$. This is the explicit content of the informal claim that the text is a sufficient control.

\begin{corollary}[Embedding adjustment]\label{cor:embedding_adjustment}
Suppose Assumption \ref{ass:topic_reg} holds, the topic model is identified, and $C$ is injective on $\Delta_{K-1}$.%
Then, controlling for document embedding is valid if and only if controlling for the topic mixture is. In particular, under Assumption~\ref{ass:topic_unconf}, the average treatment effect is identified by
\begin{align*}
\tau \;=\; \E\big[\, \E[Y_d \mid Z_d = 1, \mu_d] - \E[Y_d \mid Z_d = 0, \mu_d] \,\big].
\end{align*}
\end{corollary}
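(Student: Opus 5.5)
The argument rests on one observation. By Proposition~\ref{prop:doc_embedding_general_freq}, $\mu_d = C\,\Theta_{\bullet d}$, and $C$ is injective on $\Delta_{K-1}$. Hence the map $\Theta_{\bullet d} \mapsto \mu_d$ is a bijection from the simplex onto its image $C(\Delta_{K-1}) = \mathrm{conv}\{c_1,\ldots,c_K\}$. Its inverse is also a measurable function. Concretely, the inverse is the restriction of a linear left inverse, obtained by augmenting $C$ with the row $\mathbf{1}_K^\top$: injectivity on the simplex makes $\binom{C}{\mathbf{1}_K^\top}$ injective on the affine hull, and the inverse can be written with its pseudoinverse. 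It follows that $\sigma(\mu_d) = \sigma(\Theta_{\bullet d})$: the two variables generate the same information.

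The plan has three steps. \emph{First}, I establish this equality of $\sigma$-algebras, which gives $\Theta_{\bullet d} = g(\mu_d)$ and $\mu_d = C\Theta_{\bullet d}$ with $g$ measurable. \emph{Second}, I show that every conditional statement is invariant under replacing $\Theta_{\bullet d}$ by $\mu_d$. Conditional independence $\{Y_d(0),Y_d(1)\}\perp Z_d \mid \Theta_{\bullet d}$ holds if and only if $\{Y_d(0),Y_d(1)\}\perp Z_d \mid \mu_d$, since conditional distributions given a variable depend only on the $\sigma$-algebra it generates. Likewise, $P(Z_d=1\mid \mu_d) = P(Z_d=1\mid\Theta_{\bullet d})$ almost surely, so the overlap condition transfers. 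Applying the same reasoning to any candidate conditioning assumption (unconfoundedness plus overlap) gives the ``if and only if'' claim: validity of adjustment for $\mu_d$ is the same event as validity of adjustment for $\Theta_{\bullet d}$.

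\emph{Third}, I apply the standard adjustment argument conditional on $\mu_d$. By iterated expectations, $\tau = \E[\E[Y_d(1)-Y_d(0)\mid \mu_d]]$. Unconfoundedness given $\mu_d$ yields $\E[Y_d(z)\mid\mu_d] = \E[Y_d(z)\mid Z_d=z,\mu_d]$. Consistency gives $Y_d(z) = Y_d$ on $\{Z_d = z\}$, so this equals $\E[Y_d \mid Z_d=z,\mu_d]$. Overlap guarantees that both conditional expectations are well defined almost surely.

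I expect the main obstacle to be the measure-theoretic care in the first step. Injectivity alone gives a set-theoretic inverse, and I need that inverse to be measurable in order to conclude $\sigma(\mu_d)=\sigma(\Theta_{\bullet d})$. The explicit affine left inverse handles this directly. As a backup, a continuous injection from a compact set onto its image has a continuous inverse. Injectivity of $C$ on $\Delta_{K-1}$ is also supported by Corollary~\ref{cor:doc_embedding_metric} whenever $HH^\top = R-\mathbf{1}_V\mathbf{1}_V^\top$, but here I take it as an assumption, as the statement does. The remaining steps are routine.
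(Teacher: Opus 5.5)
Your proposal is correct and follows the paper's proof. Both arguments rest on $\mu_d = C\,\Theta_{\bullet d}$ with $C$ injective on $\Delta_{K-1}$, so conditioning on $\mu_d$ is the same as conditioning on $\Theta_{\bullet d}$; unconfoundedness and overlap then transfer, and the standard adjustment formula applies. The paper simply asserts the one-to-one correspondence and cites \citet{rosenbaum1983central}, whereas you also supply the measurable affine inverse through $\binom{C}{\mathbf{1}_K^\top}$ and write out the iterated-expectations step, which is a harmless tightening.
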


\begin{proof}
Since $\Pi = B\Theta$ and the topic model is identified, $\mu_d = C\,\Theta_{\bullet d}$ (Proposition~\ref{prop:doc_embedding_general_freq}). Since $C$ is injective on $\Delta_{K-1}$, $\mu_d$ and $\Theta_{\bullet d}$ are one-to-one functions of each other: conditioning on one is the same as conditioning on the other. Adjustment on the embedding is therefore valid exactly when adjustment on the topic mixture is. In particular, under Assumption~\ref{ass:topic_unconf}, $\{Y_d(0), Y_d(1)\} \perp Z_d \mid \mu_d$ and $0 < P(Z_d = 1 \mid \mu_d) < 1$, so $\tau$ is identified by adjustment on $\mu_d$ \citep{rosenbaum1983central}.
\end{proof}

Corollary~\ref{cor:embedding_adjustment} is an identification statement at the population level; it is exact only under three conditions. First, it uses the expected embedding $\mu_d$. With finite document length, both $\mu_d$ and $\Theta_{\bullet d}$ will be measured with error (e.g., $\bar h_d = \mu_d + O_p(N_d^{-1/2})$; see the discussion in Section~\ref{sec:doc_embeddings}). Also see \cite{battaglia2024inference} for a more general discussion of measurement error in AI/ML-generated variables. 
Second, it requires embedding dimension $r \ge K-1$: with $r < K-1$ the map $\mu_d = C,\Theta_{\bullet d}$ cannot be injective and conditions on only a projection of $\Theta_{\bullet d}$, leaving residual confounding regardless of $N_d$. However, for the high-dimensional embeddings used in practice I believe this bound is slack.
Third, $C$ must be injective on $\Delta_{K-1}$. For an embedding satisfying $H H^\top = R - \mathbf{1}_V\mathbf{1}_V^\top$ this holds by construction (Corollary~\ref{cor:doc_embedding_metric}). For Word2Vec, GloVe, or pretrained embeddings, it is a maintained assumption; see the discussion in Section~\ref{sec:other_embeddings}.

The point is not that embedding-based adjustment is valid, but that its validity reduces to a transparent condition on the topic mixture. This connects to a growing literature on causal inference with text \citep{roberts2020adjusting, egami2022how}; relative to that work, I derive the sufficiency condition from the generative model rather than assert it.

Although Corollary~\ref{cor:embedding_adjustment} is stated for the topic model, its two structural requirements are general. Suppose the confounding is driven by a latent summary of the text of dimension $s$ (here $s = K-1$, the topic mixture). Embedding-based control then requires, first, an embedding of dimension $r \ge s$---otherwise it encodes only a projection of that summary and leaves residual confounding; and second, that the embedding actually \emph{recover} the summary, so that its low-dimensional geometry coincides with the one the confounding lives in. %
Dimension alone thus does not suffice---an embedding can be low-dimensional in a geometry that has little to do with the text's, and adjusting for it then controls for the wrong object.

\section{Numerical Illustration}\label{sec:simulations}

I next simulate data from the model described in Section~\ref{sec:generative} and compare two embedding constructions: (i) the closed-form SVD-$\beta$ embedding of Proposition~\ref{thm:factorization}, and (ii) skip-gram with negative sampling (SGNS) \citep{mikolov2013distributed} as implemented in gensim \citep{rehurek2010gensim} (cf. Table \ref{tab:targets}). Theorem~\ref{thm:word_embedding_guarantee} predicts that SVD-$\beta$ inherits the topic-loading geometry, and Proposition~\ref{prop:oe_general} carries the prediction over to the SGNS target.%
\footnote{Appendix \ref{app-sec:CBOW} repeats the exercise with continuous bag-of-words (CBOW). The results are qualitatively similar to SGNS.}

Both embeddings use dimension $r=K$. By Proposition~\ref{thm:factorization}, $r = K-1$ is the minimum sufficient embedding dimension for SVD-$\beta$. The extra dimension should have population eigenvalue zero and be genuine slack, and I confirm this in the figures below. The SGNS target is the approximately rank-$(K-1)$ $\log R$ shifted by the constant $-\log\nu$. We quantify that approximation in the last paragraph of this section. Because the shift is constant, it raises the rank of the target without spreading words apart, so it does not appear in the centered projections plotted below.

To illustrate our theoretical results, consider two simulation designs on a common $V=16$ vocabulary of fruits and vegetables: a two-topic case (Design 1) and a three-topic extension (Design 2). Documents are generated with topic mixtures drawn from a symmetric Dirichlet $(\alpha=1)$ prior on the simplex, with $D=363$ documents and $N_d=487$ words per document, matching the dimensions of the application in Section~\ref{sec:application}, and the full document as context.\footnote{Appendix~\ref{app-sec:simulations} repeats this exercise with two further specifications: i) a larger corpus, $D=1{,}000$ and $N_d=10{,}000$ (Appendix~\ref{app-sec:large_corpus}); and ii) a more concentrated topic-mixture distribution ($\alpha=0.01$) that puts more mass near the simplex vertices (Appendix~\ref{app-app:sim_sparse}).}%

\textbf{Design 1: Two Topics ($K=2$).} The topic-word matrix is
\begin{align}
B = \left(\begin{array}{cc}
\mathbf{0.08} & 0.00 \\
\mathbf{0.08} & 0.01 \\
\mathbf{0.08} & 0.00 \\
\mathbf{0.12} & 0.00 \\
\mathbf{0.06} & 0.04 \\
\mathbf{0.07} & 0.00 \\
\mathbf{0.10} & 0.02 \\
\mathbf{0.12} & 0.00 \\
\mathbf{0.10} & 0.01 \\
\mathbf{0.13} & 0.00 \\
0.01 & \mathbf{0.10} \\
0.02 & \mathbf{0.18} \\
0.00 & \mathbf{0.30} \\
0.01 & \mathbf{0.05} \\
0.01 & \mathbf{0.12} \\
0.01 & \mathbf{0.17}
\end{array}\right) \quad \text{with rows corresponding to:}
\begin{array}{l}
\textcolor{red}{\text{Apple}} \\
\textcolor{red}{\text{Banana}} \\
\textcolor{red}{\text{Pear}} \\
\textcolor{red}{\text{Mango}} \\
\textcolor{red}{\text{Peach}} \\
\textcolor{red}{\text{Cherry}} \\
\textcolor{red}{\text{Orange}} \\
\textcolor{red}{\text{Lemon}} \\
\textcolor{red}{\text{Mandarin}} \\
\textcolor{red}{\text{Grapefruit}} \\
\textcolor{green!60!black}{\text{Cucumber}} \\
\textcolor{green!60!black}{\text{Tomato}} \\
\textcolor{green!60!black}{\text{Kale}} \\
\textcolor{green!60!black}{\text{Pepper}} \\
\textcolor{green!60!black}{\text{Carrot}} \\
\textcolor{green!60!black}{\text{Onion}}
\end{array}.
\end{align}
Apple, Pear, Mango, Cherry, Lemon, and Grapefruit are anchor words for topic 1 (fruits) with varying prevalence, and Kale is an anchor word for topic 2 (vegetables). The remaining words load on both topics with varying intensity.

Figure~\ref{fig:embeddings_large} shows the resulting word embeddings.
\begin{figure}[tb]
\centering
\begin{subfigure}[b]{0.48\textwidth}
\centering
\includegraphics[width=\textwidth]{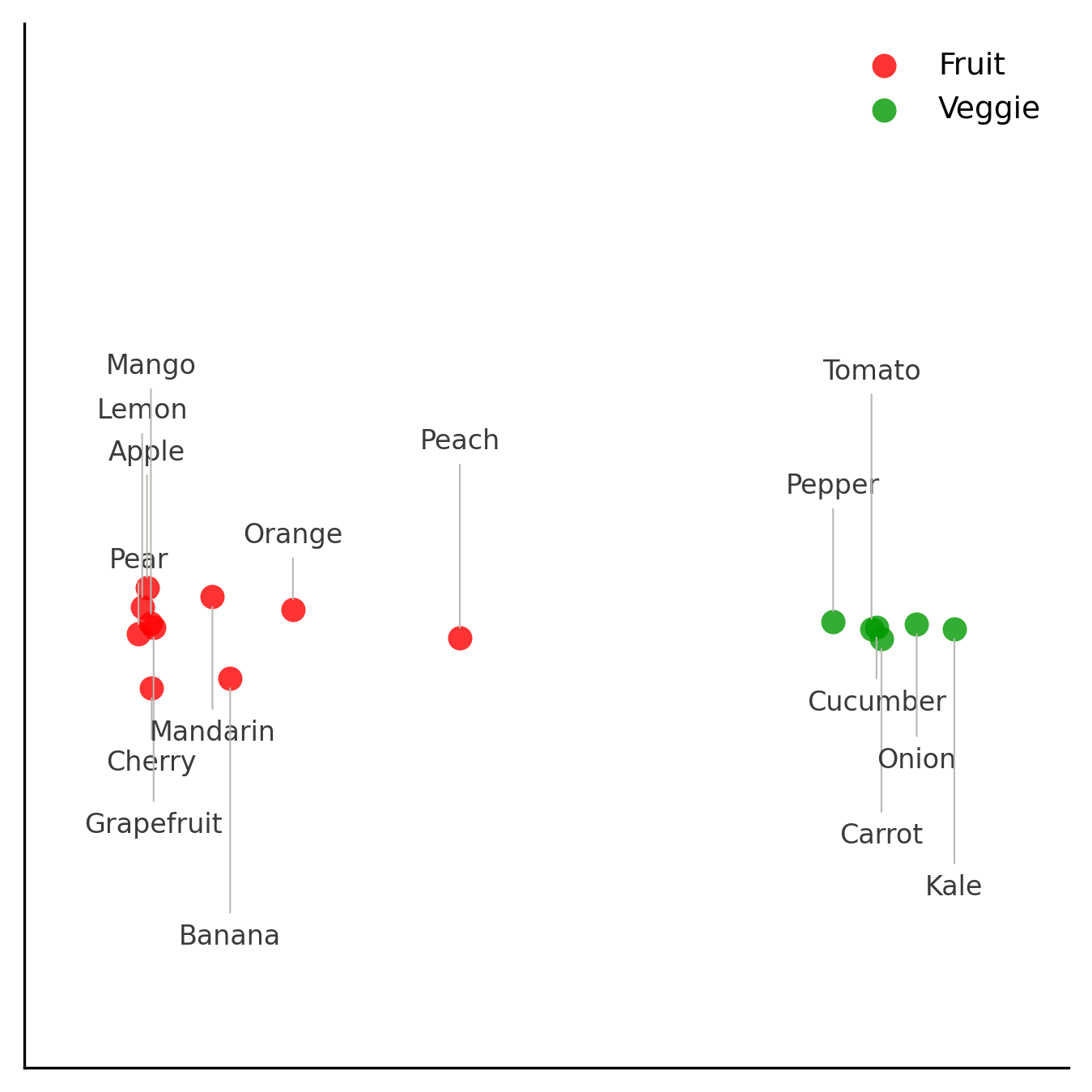}
\caption{SVD-$\beta$}
\label{fig:2d-svd_embeddings}
\end{subfigure}
\hfill
\begin{subfigure}[b]{0.48\textwidth}
\centering
\includegraphics[width=\textwidth]{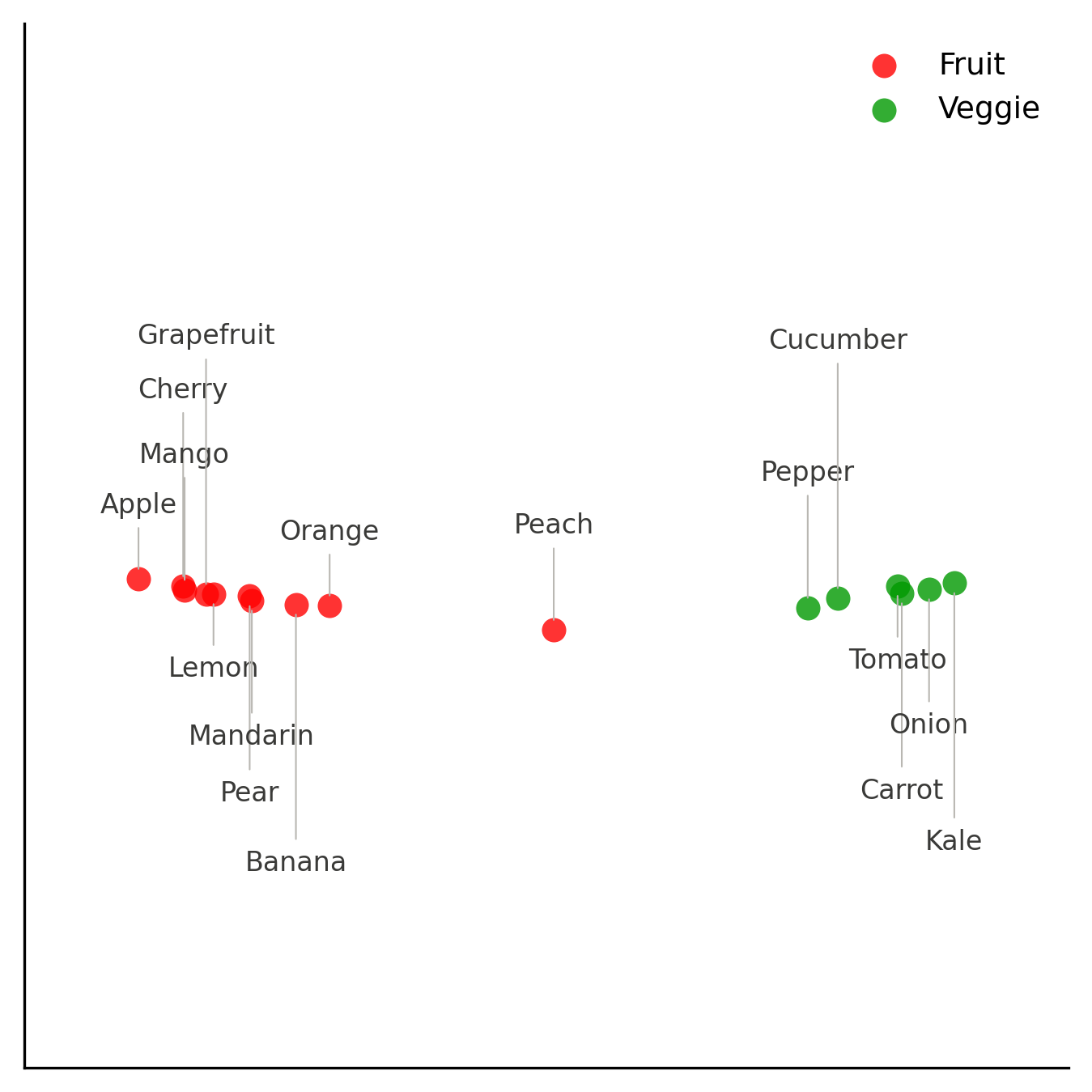}
\caption{SGNS}
\end{subfigure}
\caption{Word embeddings for Design 1 ($K=2$): SGNS vs.\ the closed-form SVD-$\beta$ construction of Proposition~\ref{thm:factorization}, both computed on the same corpus ($D=363$, $N_d=487$). Colors mark the dominant topic from the true $B$ matrix.}
\label{fig:embeddings_large}
\end{figure}
Both methods recover the same qualitative structure: words separate cleanly along their dominant topic, with fruits (red) and vegetables (green) forming distinct clusters. The SGNS and SVD-$\beta$ geometries are nearly identical up to rotation and scaling. Both clouds are one-dimensional, which is what the theory implies: SVD-$\beta$ has population rank $K-1=1$, and SGNS's second direction is the common offset, which the centered projection removes. Proposition~\ref{prop:oe_general} predicts only that the two induce equivalent metrics on the words; here the distortion it permits is evidently small. Within each cluster, the relative positions of words reflect their loadings: anchor words (e.g., Mango, Grapefruit, Kale) lie at the extremes, while mixed words (e.g., Peach, Orange) sit closer to the boundary between clusters, with their positions reflecting their relative loadings. This pattern holds exactly for SVD-$\beta$, and approximately for SGNS. On the other hand, the SVD-$\beta$ embedding appears somewhat noisier (further from rank 1), perhaps reflecting the difficulty in directly matrix-factorizing the noisy sample co-occurrence structure.

\textbf{Design 2: Three Topics ($K=3$).} I next split the "fruit" category into a "citrus" topic (orange) and a "non-citrus" topic (red), while keeping the same vegetable topic (green). The topic-word matrix is
\begin{align}
B = \left(\begin{array}{ccc}
\mathbf{0.10} & 0.05 & 0.00 \\
\mathbf{0.15} & 0.02 & 0.01 \\
\mathbf{0.13} & 0.04 & 0.00 \\
\mathbf{0.19} & 0.05 & 0.00 \\
\mathbf{0.10} & 0.03 & 0.04 \\
\mathbf{0.10} & 0.04 & 0.00 \\
0.05 & \mathbf{0.15} & 0.02 \\
0.05 & \mathbf{0.18} & 0.00 \\
0.03 & \mathbf{0.16} & 0.01 \\
0.04 & \mathbf{0.22} & 0.00 \\
0.02 & 0.00 & \mathbf{0.10} \\
0.02 & 0.02 & \mathbf{0.18} \\
0.00 & 0.00 & \mathbf{0.30} \\
0.00 & 0.02 & \mathbf{0.05} \\
0.01 & 0.01 & \mathbf{0.12} \\
0.01 & 0.01 & \mathbf{0.17}
\end{array}\right) \quad \text{with rows corresponding to:}
\begin{array}{l}
\textcolor{red}{\text{Apple}} \\
\textcolor{red}{\text{Banana}} \\
\textcolor{red}{\text{Pear}} \\
\textcolor{red}{\text{Mango}} \\
\textcolor{red}{\text{Peach}} \\
\textcolor{red}{\text{Cherry}} \\
\textcolor{orange}{\text{Orange}} \\
\textcolor{orange}{\text{Lemon}} \\
\textcolor{orange}{\text{Mandarin}} \\
\textcolor{orange}{\text{Grapefruit}} \\
\textcolor{green!60!black}{\text{Cucumber}} \\
\textcolor{green!60!black}{\text{Tomato}} \\
\textcolor{green!60!black}{\text{Kale}} \\
\textcolor{green!60!black}{\text{Pepper}} \\
\textcolor{green!60!black}{\text{Carrot}} \\
\textcolor{green!60!black}{\text{Onion}}
\end{array}
\end{align}
Kale is an anchor word for topic 3 (vegetables), but topics 1 (non-citrus fruits) and 2 (citrus) have no anchor words, though they differ in their relative loadings on the same set of words.

Figure~\ref{fig:embeddings_noanchor_k3} shows the resulting word embeddings.
With $r=K=3$, the embeddings live in $\mathbb{R}^3$, and I project to two dimensions by simply retaining the first two coordinates. 
For SVD-$\beta$, $\beta_{\bullet 1}$ and $\beta_{\bullet 2}$ span the two largest-variance directions and the third column corresponds to a zero eigenvalue in population.
For SGNS, the coordinate basis is chosen by the gradient-based optimizer and is arbitrary.  Dropping the third coordinate projects out one mixture of the three coordinates. %
\begin{figure}[tb]
\centering
\begin{subfigure}[b]{0.48\textwidth}
\centering
\includegraphics[width=\textwidth]{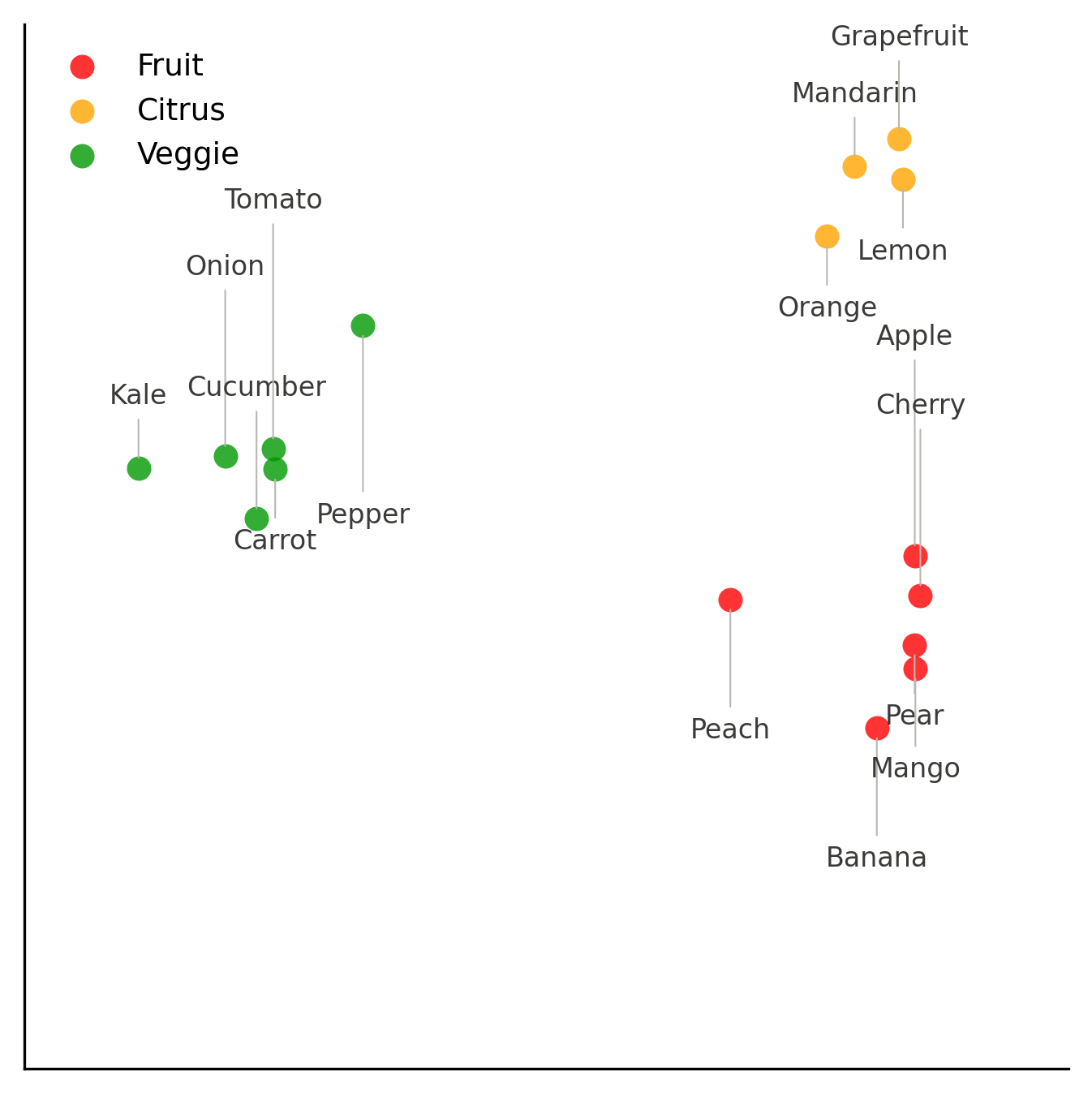}
\caption{SVD-$\beta$}
\end{subfigure}
\hfill
\begin{subfigure}[b]{0.48\textwidth}
\centering
\includegraphics[width=\textwidth]{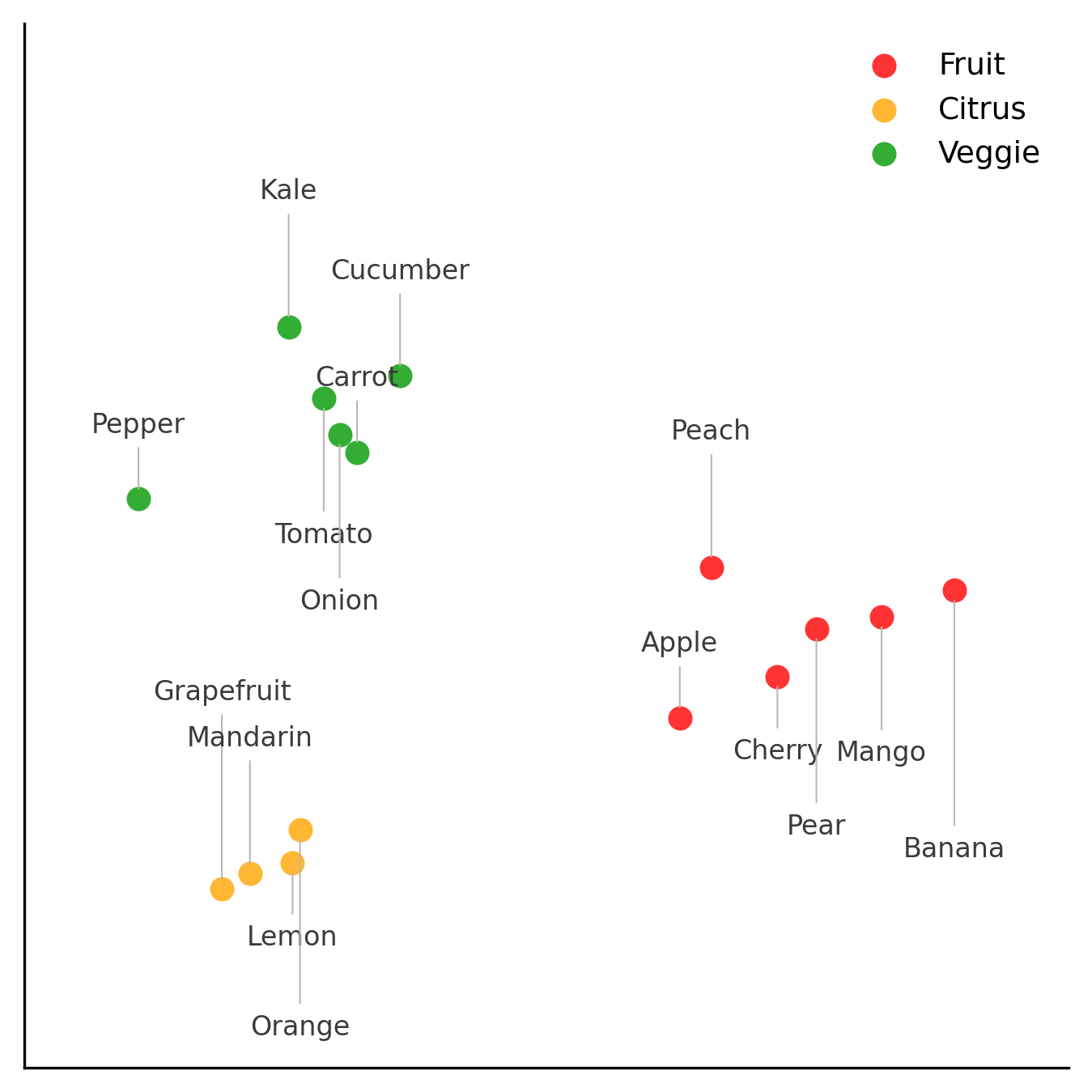}
\caption{SGNS}
\end{subfigure}
\caption{Word embeddings for Design 2 ($K=3$), projected into two dimensions. Both embeddings recover topic structure even though topic~3 lacks an anchor word.}
\label{fig:embeddings_noanchor_k3}
\end{figure}

Both methods produce embeddings in which the three topics are clearly separated, with words organized into three distinguishable groups corresponding to non-citrus fruits (red), citrus fruits (orange), and vegetables (green). The anchor word for vegetables (Kale) sits on the edge of the convex hull, consistent with the theory: anchor words receive embeddings proportional to a single topic centroid, while non-anchor words are pulled toward mixtures of centroids in proportion to their topic loadings. Notably, the absence of anchor words for topics~1 and~2 does not prevent the embeddings from separating these two groups---the differing relative loadings of shared words across the two topics are sufficient to recover the distinction. The SGNS and SVD-$\beta$ embeddings again yield qualitatively similar geometries, providing empirical support for the theoretical connections established in Section~\ref{sec:other_embeddings}.

\paragraph{Document Embeddings.} Next, I compute document embeddings as the average of word embeddings within each document (cf. Equation~\eqref{eq:doc_embedding}). By Proposition~\ref{prop:doc_embedding_general_freq}, documents with similar topic mixtures $\Theta_{\bullet d}$ should have similar embeddings, and document embeddings should lie in the $(K-1)$-dimensional simplex spanned by the topic centroids.
Figure~\ref{fig:doc_emb_noanchor_k3_alt} shows the document embeddings for Design 2 ($K=3$), colored by dominant topic (projected into 2-d following the same step as in Figure \ref{fig:embeddings_noanchor_k3}).\footnote{Alternatively, Appendix~\ref{app-app:sim_umap} presents a UMAP-projected version of Figure~\ref{fig:embeddings_noanchor_k3}. UMAP's nonlinear transformation distorts the geomerty and ``warps'' the simplex, but the results are qualitatively similar. I use UMAP projections in our empirical application in Section~\ref{sec:application}, where the dimensionality is much larger.}
\begin{figure}[htbp]
\centering
\begin{subfigure}[b]{0.48\textwidth}
\centering
\includegraphics[width=\textwidth]{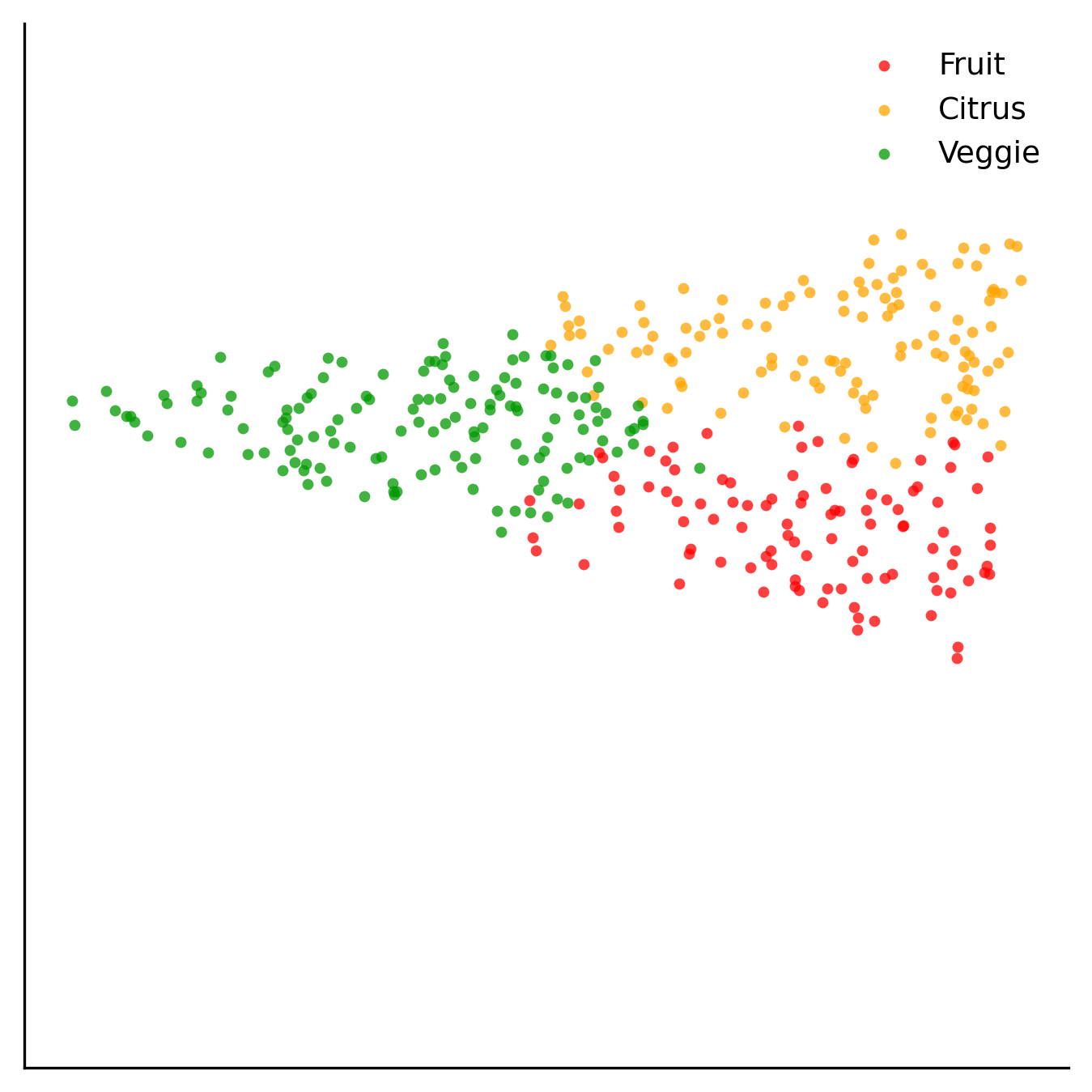}
\caption{SVD-$\beta$}
\end{subfigure}
\hfill
\begin{subfigure}[b]{0.48\textwidth}
\centering
\includegraphics[width=\textwidth]{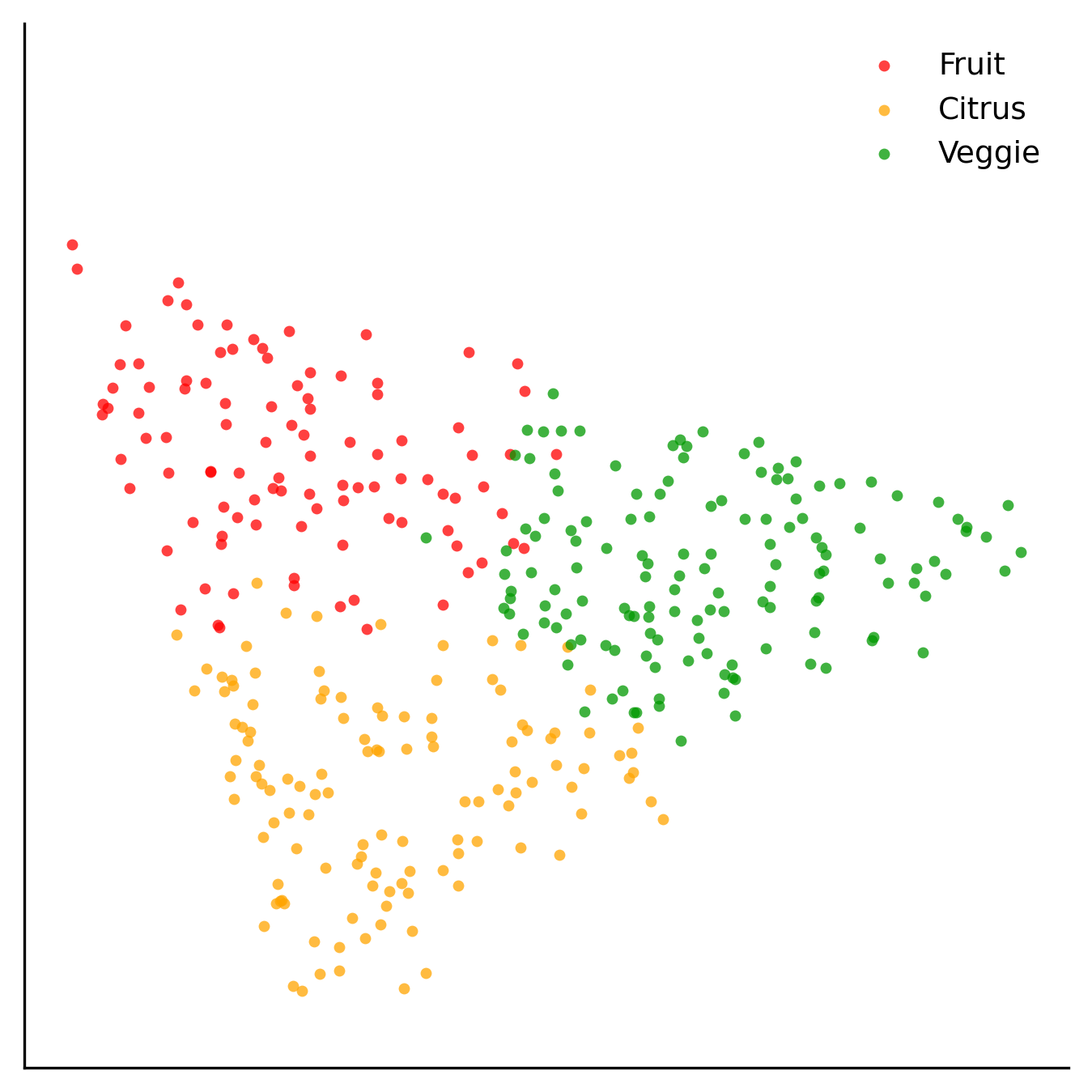}
\caption{SGNS}
\end{subfigure}
\caption{Document embeddings for Design 2 ($K=3$), projected in two dimensions. Color indicates the dominant topic.}
\label{fig:doc_emb_noanchor_k3_alt}
\end{figure}
With Dirichlet concentration parameter $\alpha=1$, the topic mixtures are well spread across the simplex, and the document embeddings fill out the simplex spanned by the three topic centroids. The two embedding methods yield similar document-embedding geometries. Appendix~\ref{app-app:sim_archetype} shows additional document-embedding figures for a version that replaces the Dirichlet prior on $\Theta_{\bullet d}$ with an archetype mixture in the interior of the topic simplex, with documents concentrated around the two archetypes.

I note that all of the figures above are based on a single corpus. However, I found the results to be representative across multiple realizations.

\textbf{The rank of the SGNS target.} Remark~\ref{rem:target_vs_embedding} reduces the gap between the SGNS target and the trained embedding to a property of the spectrum of $T = \log R - \log\nu\,\mathbf{1}_V\mathbf{1}_V^\top$. The shift is a known constant and cancels in the row differences, so the question is how much of $\log R$ the remaining $K-1$ directions capture. I measure that by \begin{align}\label{eq:nuclear_share}
\pi_{K-1} \;=\; \frac{\sum_{i \le K-1} |\lambda_i(\log R)|}{\sum_i |\lambda_i(\log R)|},
\end{align}
the share of spectral mass in the leading $K-1$ eigenvalues.
Under the two designs above the spectrum is available exactly --- $R = \tilde B G \tilde B^\top$ from the design $B$ and the closed-form Dirichlet moments --- and gives $\pi_{K-1} = \piRankDesignOne$ for Design 1 and $\piRankDesignTwo$ for Design 2. %
In these designs we can therefore think of Proposition~\ref{prop:oe_general} approximately describing the trained embedding and not only its target, in line with the empirical results above.
Table~\ref{tab:target_rank_alpha1} reports $\pi_{K-1}$ on a grid of vocabulary sizes and topic counts under a generic data-generating process: each column of $B$ and each topic mixture $\Theta_{\bullet d}$ is an independent draw from a symmetric Dirichlet$(\alpha=1)$, uniform on its simplex. Across the grid, a rank-$K$ approximation remains close to the SGNS target.\footnote{It is worth noting that the quality of the approximation degrades for smaller values of $\alpha$ (cf. Appendix \ref{app-app:target_rank_sparse}.}

\begin{table}[tb!]
\centering
\footnotesize
\setlength{\tabcolsep}{6pt}
\begin{tabular}{lrrrrrr}
\toprule
$V$ & $K=2$ & $K=3$ & $K=5$ & $K=10$ & $K=25$ & $K=50$ \\
\midrule
100 & $0.909$ & $0.926$ & $0.946$ & $0.968$ & $0.987$ & $0.995$ \\
250 & $0.905$ & $0.924$ & $0.943$ & $0.966$ & $0.985$ & $0.993$ \\
500 & $0.905$ & $0.923$ & $0.943$ & $0.964$ & $0.984$ & $0.992$ \\
1,000 & $0.905$ & $0.924$ & $0.943$ & $0.964$ & $0.983$ & $0.991$ \\
2,500 & $0.905$ & $0.923$ & $0.943$ & $0.963$ & $0.983$ & $0.991$ \\
\bottomrule
\end{tabular}
\caption[Rank of the SGNS target across $V$ and $K$]{Share of the spectral
mass of $\log R$ carried by its leading $K-1$ eigenvalues,
$\pi_{K-1} = \sum_{i \le K-1}|\lambda_i| / \sum_i |\lambda_i|$, at
$\alpha = 1$. Each column of $B$ and each topic mixture
$\Theta_{\bullet d}$ is drawn from a symmetric Dirichlet$(\alpha)$; $R$ is then
formed in population from $B$ and $G$, so no corpus is sampled. Means over 8
draws; the largest standard deviation in any cell is $0.004$. The
negative-sampling shift contributes one further eigenvalue, of size $V\log\nu$,
and is excluded.}
\label{tab:target_rank_alpha1}
\end{table}

\section{Application: Clustering Metropolitan Economies}\label{sec:application}

I now apply the embedding-then-cluster pipeline to 363 U.S.\ Core-Based Statistical Areas (CBSAs). The empirical setup is straightforward: generate a textual economic description of each CBSA, embed the descriptions, and apply $k$-means.

The theory in Section~\ref{sec:theory} provides a principled reading of what this pipeline does. Corollary~\ref{cor:doc_embedding_metric} gives the pairwise document-embedding distance in closed form:
\begin{align*}
\|\mu_d - \mu_{d'}\|_2^2 \;=\; (\Pi_{\bullet d} - \Pi_{\bullet d'})^\top \,(R - \mathbf{1}_V\mathbf{1}_V^\top)\, (\Pi_{\bullet d} - \Pi_{\bullet d'}).
\end{align*}
Running $k$-means on these embeddings groups CBSAs whose descriptions imply similar mixtures. The clusters therefore reflect \emph{narrative similarity}, and each centroid corresponds to a well-defined mixture.

Our main empirical analysis considers five embedding models: (i) the closed-form SVD-$\beta$ embedding of Proposition \ref{thm:factorization}; (ii) corpus-trained skip-gram with negative sampling (SGNS; Proposition \ref{prop:oe_general}); (iii) corpus-trained continuous bag-of-words (CBOW)\footnote{Both corpus-trained arms and SVD-$\beta$ use the \emph{full document} as context. Note that the model from Section \ref{sec:generative} suggests this choice: since the words of document $d$ are drawn from the same mixture $\Pi_{\bullet d}$, positions within a document are exchangeable, so $R$ does not depend on the window at all in population and every within-document pair is another draw on the same object. Widening the window therefore reduces estimation variance without moving the target, and taking all pairs is the efficient choice. I show in Appendix~\ref{app-app:app_context_window} that results are robust to shorter context windows, although SVD-$\beta$ tends to deteriorate for very short windows $J$.};  (iv) pretrained Word2Vec averaging, using the Google News vectors; and (v) a transformer-based variant using OpenAI's \texttt{text-embedding-3-large}.\footnote{Alternatively, one could ask an LLM to form the clusters and assign labels directly, skipping the embedding step. This did not work at this scale: prompted with all 363 descriptions, the model returned partitions covering only part of the sample and including locations absent from the input list.} Detailed method descriptions are collected in Appendix~\ref{app-app:app_methods}.

\begin{remark}[Theoretical coverage of the five methods]\label{rem:methods}
The five differ in how much of Section~\ref{sec:theory} applies to them. Method (i) is the theory's own construction: it computes the rank-$(K-1)$ factorization of Proposition~\ref{thm:factorization} directly, and averaging within a document returns exactly the $\mu_d$ of Proposition~\ref{prop:doc_embedding_general_freq}. Method (ii) is the corpus-trained arm the theory covers: the SGNS target is $\log R$ up to a constant, so by Proposition~\ref{prop:oe_general} it recovers the topic-loading geometry up to a distortion bounded by $\kappa_R$, the exponentiated spread of $\log R$ (Section~\ref{sec:other_embeddings}). Method (iii), CBOW, is not covered: it has no closed-form $V \times V$ target once the context holds more than one word (Section~\ref{sec:other_embeddings}), and the context here is the full document. Method (iv) runs the same architecture on a different corpus, so the geometry it carries is that of the corpus it was trained on---the object of interest only if that corpus and ours share their topic structure. Method (v) falls outside the framework altogether: it is a contextual transformer rather than an average of word vectors. I include it as a benchmark for a modern state-of-the-art approach to embeddings.
\end{remark}

For each Core-Based Statistical Area (CBSA), a 500-word economic description is generated by Claude Opus 4.8. I provide the exact prompt in Appendix~\ref{app-app:app_prompts}. I then apply $k$-means with $L=5$ clusters to group the 363 CBSAs. 
Table~\ref{tab:cluster_comparison_363} summarizes the five approaches.
\begin{table}[htb!]
\centering
\small
\begin{tabular}{lcccc}
\hline
\textbf{Method} & \textbf{L} & \textbf{Cluster sizes} & \textbf{Max share} & \textbf{Within-cluster share} \\
\hline
Corpus CBOW & 5 & 57/62/68/85/91 & 25\% & 0.90 \\
Corpus SGNS & 5 & 60/71/109/34/89 & 30\% & 0.78 \\
SVD-$\beta$ ($r=50$) & 5 & 97/154/75/21/16 & 42\% & 0.77 \\
Pretrained (Google News) & 5 & 89/33/109/64/68 & 30\% & 0.83 \\
LLM Embeddings & 5 & 25/114/59/101/64 & 31\% & 0.90 \\
\hline
\end{tabular}
\caption[Comparison of clustering approaches]{Comparison of clustering approaches (363 CBSAs). Cluster sizes are member counts by $k$-means label index. ``Within-cluster share'' is the fraction of embedding variance within clusters.}
\label{tab:cluster_comparison_363}
\end{table}

All five identify recognizable archetypes: energy dependence, university and government anchoring, deindustrialization, and diversified growth. I give per-cluster interpretations for all methods in Appendix~\ref{app-app:clusters}.
The number of clusters is chosen to balance interpretability and flexibility. I consider alternative values for $L$ in the Appendix (e.g., Online Appendix Figures \ref{app-fig:app_wcss_all} and \ref{app-fig:sensitivity_L}). The qualitative finding that these groupings separate economic narratives (and employment dynamics, cf Section \ref{sec:ar_panel}) holds across different choices of $L$.

\subsection{Local Employment Dynamics}\label{sec:ar_panel}

I return to our motivating example in the introduction: modelling local employment dynamics. 
In order to do so, I estimate autoregressive models for log employment across 363 CBSAs and examine whether the text-based clusters we derived above capture meaningful heterogeneity in employment dynamics.
Specifically, for each of the five embedding-based clustering methods (SVD-$\beta$, corpus-trained SGNS, corpus-trained CBOW, pretrained Word2Vec averaging, and LLM embeddings), I estimate:
\begin{equation}\label{eq:ar_grouped}
    y_{it} = \alpha_i + \rho_1^{(g_i)} y_{i,t-1} + \rho_2^{(g_i)} y_{i,t-2} + \varepsilon_{it},
\end{equation}
where $y_{it}$ is log total employment in year $t$ at the CBSA level, and $g_i \in \{1,\ldots,L\}$ denotes the cluster membership of CBSA $i$.\footnote{I construct a panel of geographically-harmonized metropolitan areas for which I can measure employment at an annual frequency starting in 1969. I use 2019 as the last year of analysis to avoid complications from the COVID-19 pandemic and recession. I focus on wage and salary employment, which is constructed by the Bureau of Economic Analysis (BEA) using administrative data and is less sensitive than self employment to changes over time in the tax code. The covariates used for the observables benchmark in Section~\ref{sec:ar_panel} are 1970 cross-sections from the same source.}
 I take $p=2$ as the primary specification based on our discussion in Appendix~\ref{app-app:app_p_sweep}. I thus allow the AR(2) slope coefficients in \eqref{eq:ar_intro} to vary across groups, but remain common \emph{within} each group---using our text-based cluster assignments.
The hope is that similarity in economic conditions between CBSAs translates into similar local employment dynamics. Under the topic model of Section~\ref{sec:theory}, I can make this condition more precise: that similarity in topic shares between CBSAs translates into similar local employment dynamics. As a concrete example, this might entail that descriptions with a relatively large share of words about manufacturing decline (treating this as a ``topic''), are associated with similar employment dynamics.

Figure~\ref{fig:ar_grouped}  depicts the group-specific implied impulse response functions (IRF) over a 10-year horizon under each clustering method. 
\begin{figure}[htb!]
\centering
\begin{minipage}{0.48\textwidth}
    \centering
    \includegraphics[width=\textwidth]{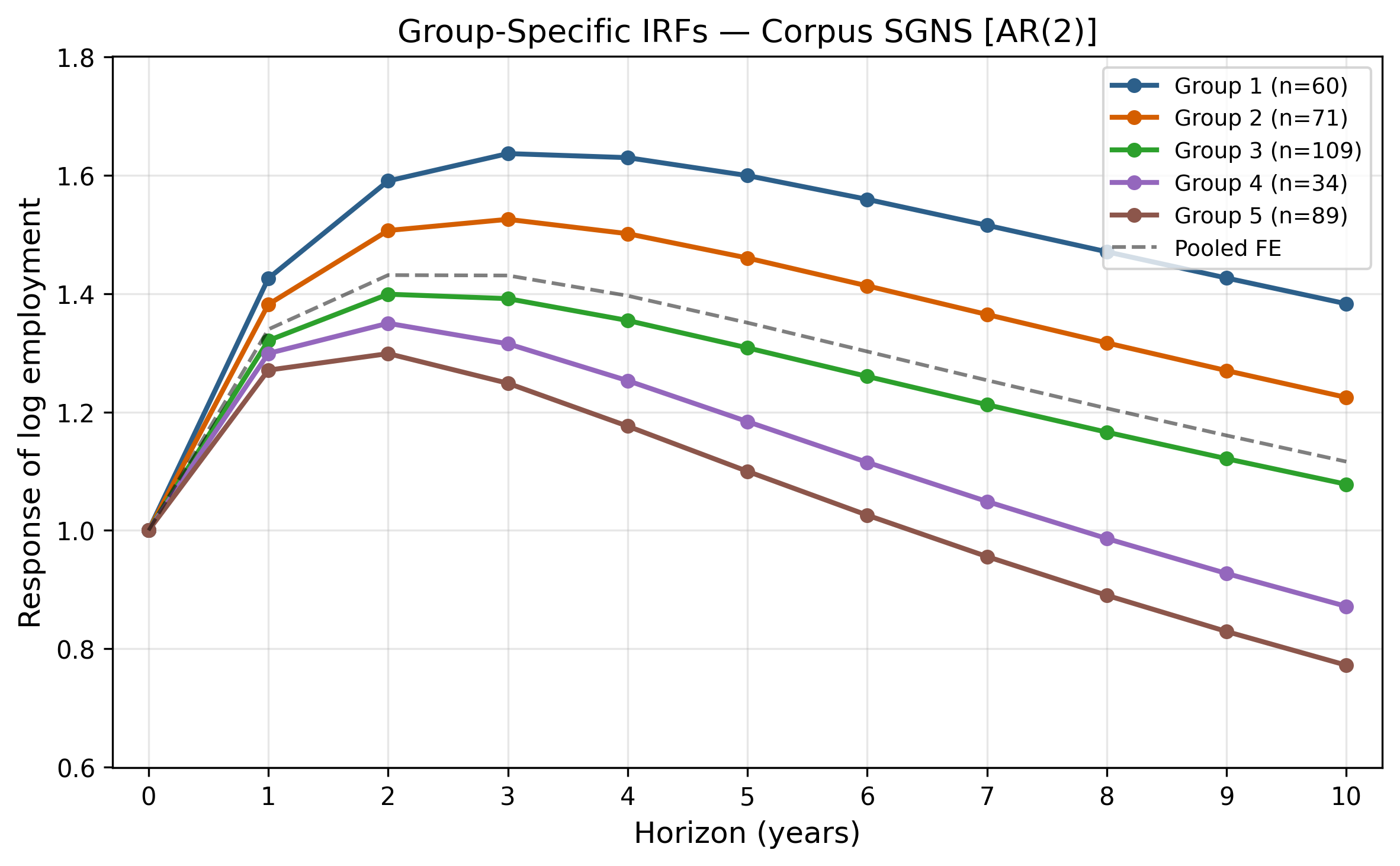}
\end{minipage}\hfill
\begin{minipage}{0.48\textwidth}
    \centering
    \includegraphics[width=\textwidth]{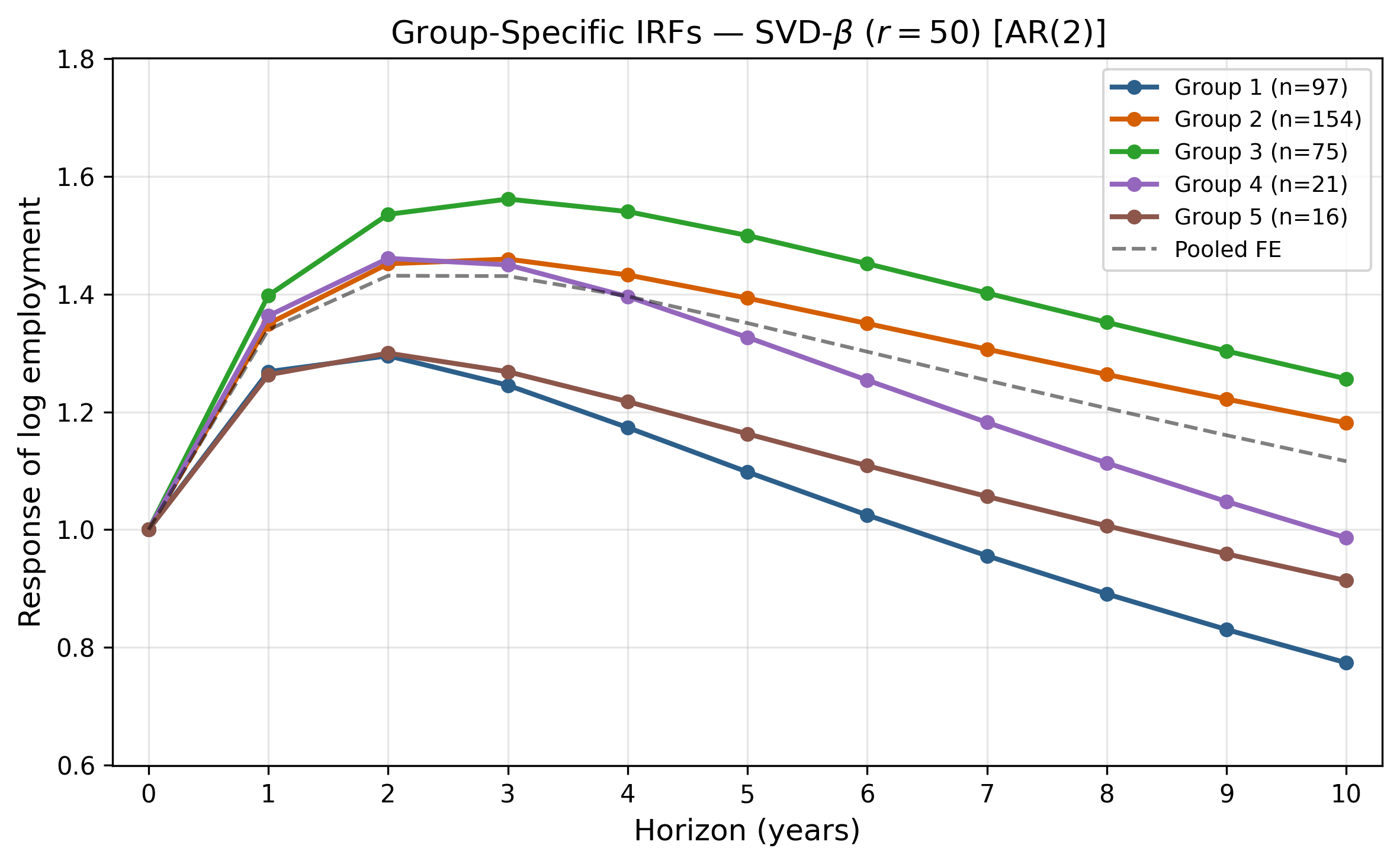}
\end{minipage}

\vspace{0.5em}

\begin{minipage}{0.48\textwidth}
    \centering
    \includegraphics[width=\textwidth]{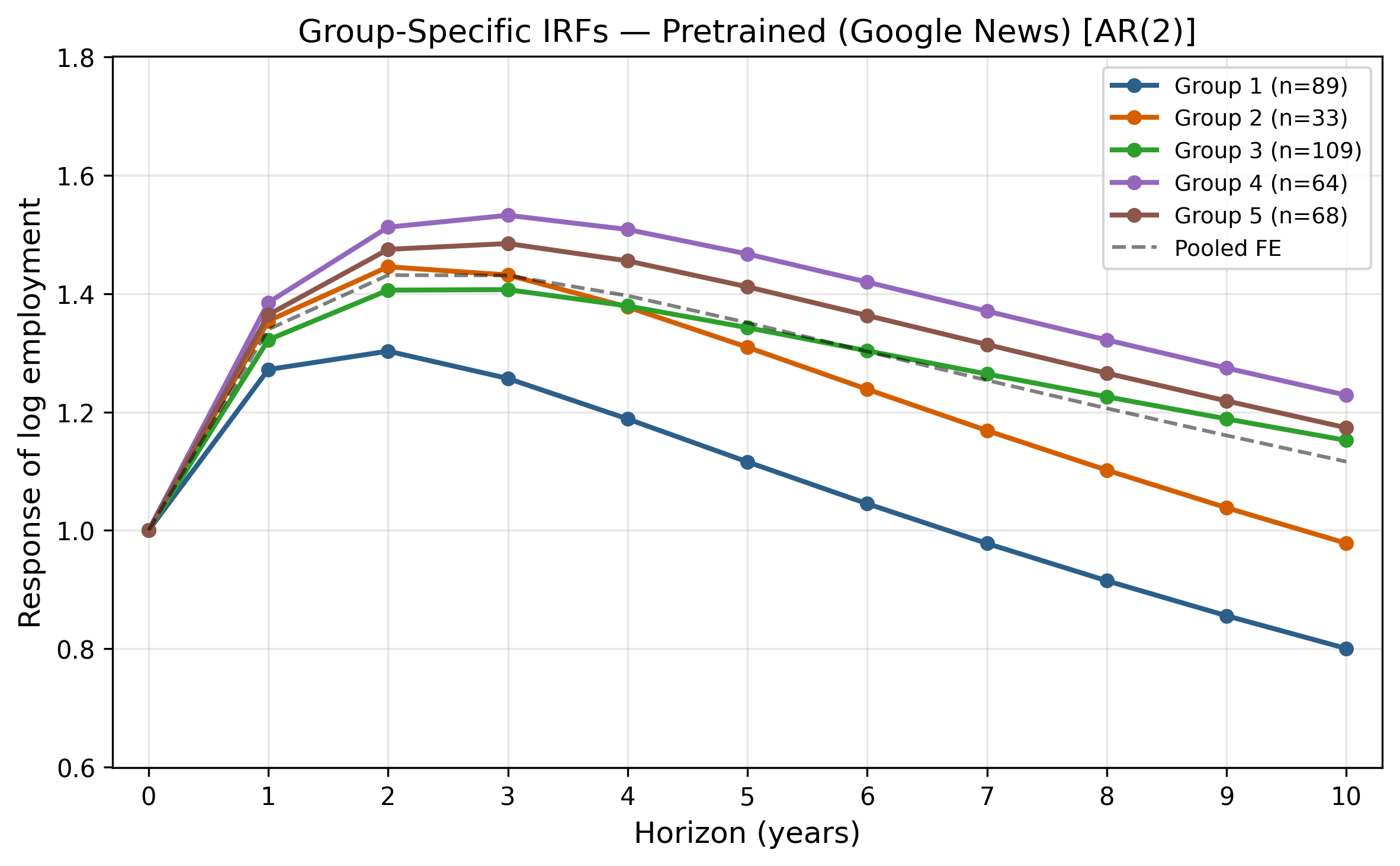}
\end{minipage}\hfill
\begin{minipage}{0.48\textwidth}
    \centering
    \includegraphics[width=\textwidth]{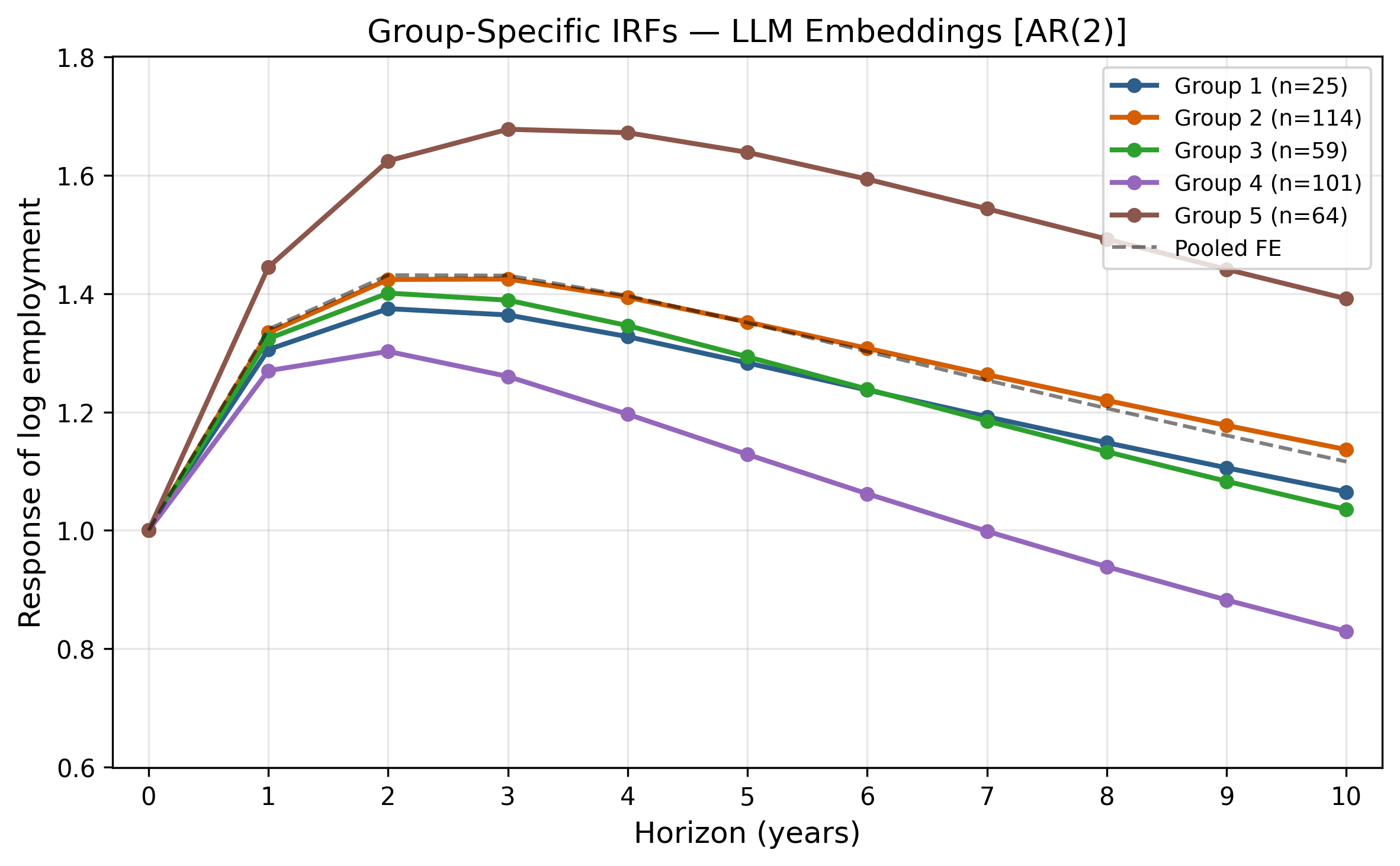}
\end{minipage}

\vspace{0.5em}

\begin{minipage}{0.48\textwidth}
    \centering
    \includegraphics[width=\textwidth]{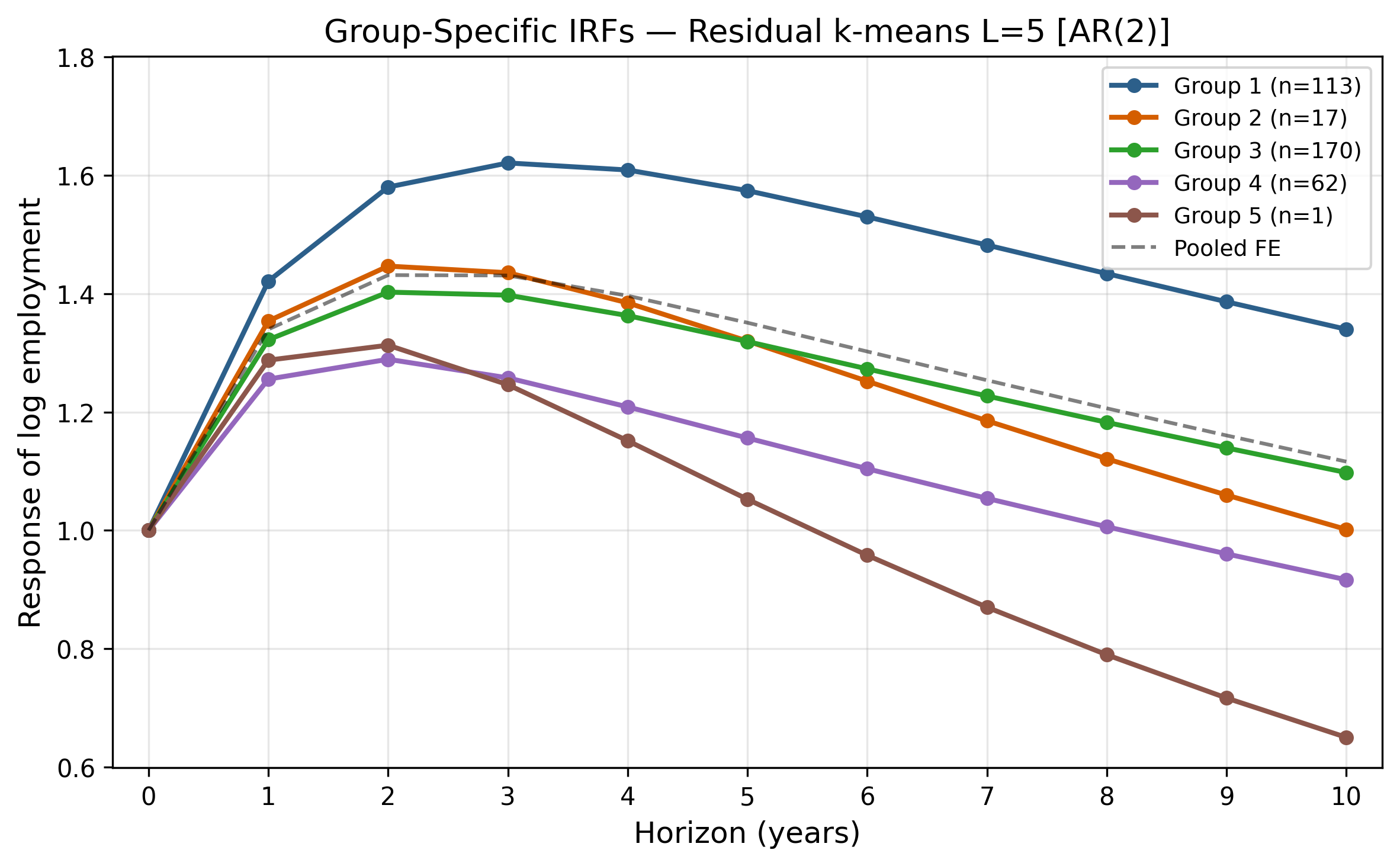}
\end{minipage}\hfill
\begin{minipage}{0.48\textwidth}
    \centering
    \includegraphics[width=\textwidth]{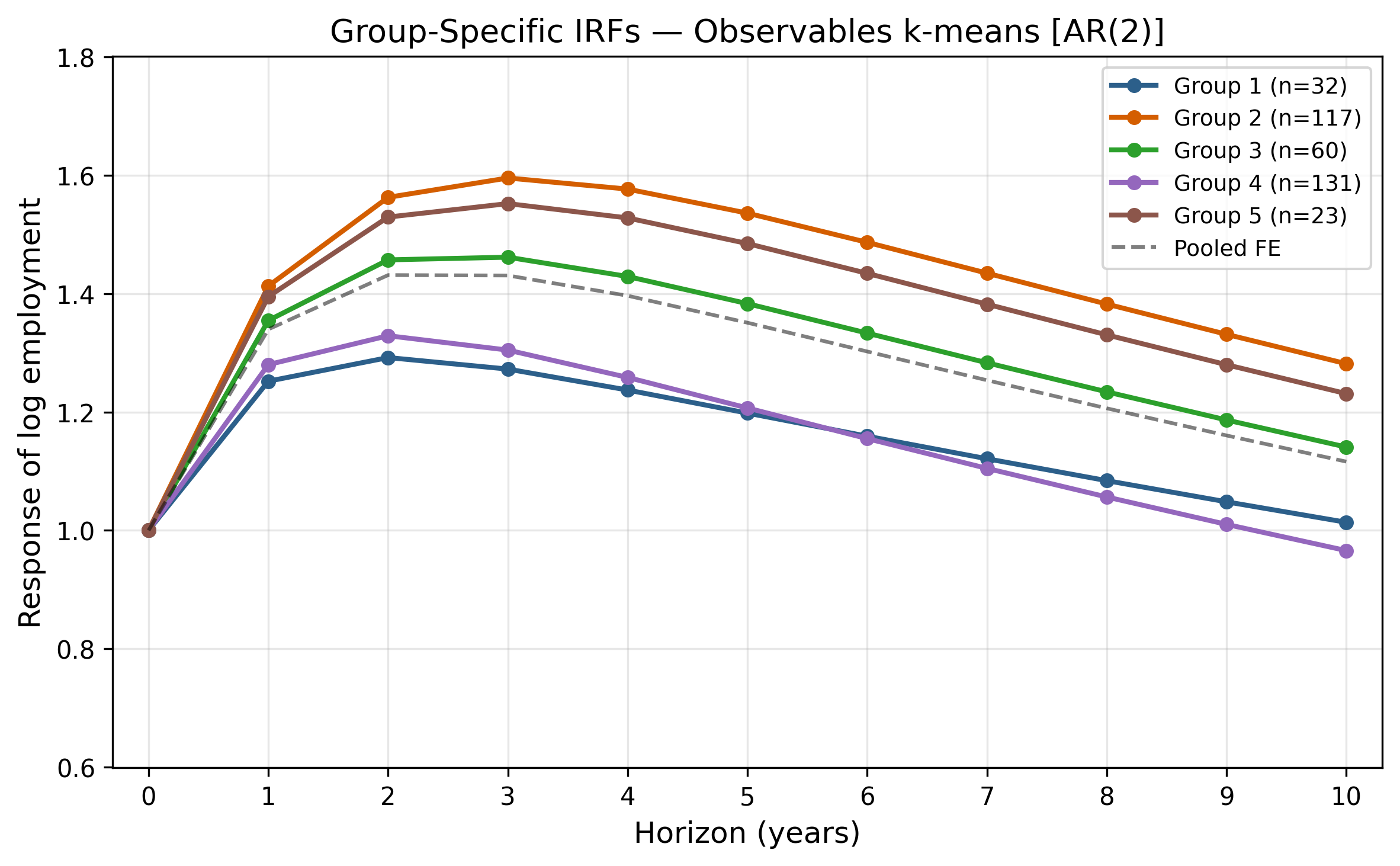}
\end{minipage}
\caption{Group-specific AR(2) impulse responses under each clustering ($L=5$), with the pooled FE estimate (dashed) for reference. The top four panels are embedding-based groupings; corpus CBOW is omitted here for space and appears in Table~\ref{tab:irf_ratio_main}. The bottom row gives the two groupings that do not use the text---residual $k$-means and observables $k$-means, on the 1970 covariates.}
\label{fig:ar_grouped}
\end{figure}
The top four panels use embedding-based groupings. The bottom row gives the two groupings that do not use the text. The bottom left uses residual $k$-means, which is formed by applying $k$-means clustering to the residuals from a pooled AR(2) specification---analogous to the first step of \cite{bonhomme2022discretizing}. The bottom right is observables $k$-means, applied to 1970 covariates (standardized industry mix, government and military employment shares, education, foreign-born and minority shares and population density; see Appendix~\ref{app-app:app_covariate_vintage} for more detail). 
I use $L=5$ throughout to match the number of clusters used by the text-based methods and report alternative specifications in Appendix~\ref{app-app:app_kmeans_sensitivity}.

To quantify how well each clustering method captures heterogeneity in employment dynamics, I compare the between-group variance of the 363 unit-specific IRFs across clustering methods in Table~\ref{tab:irf_ratio_main}. I do this in two ways: first, I compute the ratio of between-group variance to total variance, which is a standard measure of how well a grouping explains variation in a variable of interest (column 1). Since the total variance includes estimation error in the unit-specific responses, which is substantial in our application, I also include a version that normalizes the between-group variance relative to a permutation floor, which is the mean between-group variance over 2{,}000 random relabelings that hold the realized group sizes fixed (column 2). This is the level of between-group variance a random partition of the same sizes would deliver and provides a reference point for evaluating whether a given partition captures real heterogeneity in employment dynamics.
The last two columns count how frequently the grouping exceeds the two non-text based benchmarks when I vary the number of clusters $L\in\{2,3,4,5,6,8,10\}$.

\begin{table}[htb!]
\centering
\small
\setlength{\tabcolsep}{3pt}
\begin{tabular}{lcccc}
\hline
Grouping & \% Between & Ratio to floor & \multicolumn{2}{c}{Exceeds benchmark, of 7 $L$ values} \\
\cline{4-5}
 & & & Residual $k$-means & Observables $k$-means \\
\hline
Corpus SGNS & 28.3\% & 25.7$\times$ & 7 & 6 \\
LLM Embeddings & 27.1\% & 24.7$\times$ & 7 & 6 \\
SVD-$\beta$ ($r=50$) & 24.3\% & 21.8$\times$ & 7 & 6 \\
Corpus CBOW & 22.5\% & 20.4$\times$ & 7 & 5 \\
Pretrained (Google News) & 18.8\% & 16.8$\times$ & 6 & 5 \\
\hline
Observables $k$-means & 16.7\% & 15.5$\times$ & 7 & --- \\
Residual $k$-means & 13.9\% & 12.6$\times$ & --- & 0 \\
\hline
\end{tabular}
\caption[Between-group variance against the permutation floor]{Grouping-level summary at $L=5$, AR(2). ``\% Between'' is the share of total cross-city IRF variance explained by between-group differences in group-mean IRFs. ``Ratio to floor'' is the between-group variance of the 363 unit-specific impulse responses divided by that grouping's own size-matched permutation floor. The last two columns count how frequently the grouping exceeds each benchmark out of 7 cluster counts we consider. The upper block is the four embedding-based groupings, the lower the two that do not use the text.}
\label{tab:irf_ratio_main}
\end{table}

At $L=5$,  every grouping separates employment dynamics far more sharply than chance, and the five text-based groupings tend to separate it more sharply than the two benchmarks.\footnote{I date the covariates to 1970, so that they are effectively predetermined for a panel beginning in 1969. Repeating the exercise at later vintages raises its ratio monotonically, from $\ratioObsSeventy\times$ at 1970 to $\ratioObsMillennium\times$ at 2000. %
Nevertheless four of the five text methods exceed the benchmark at every vintage. I acknowledge that the same objection (not being predetermined) applies to the textual descriptions, which span 1979--2019 and are thus concurrent with the outcome window. This is because it is challenging to obtain a good textual description of the economy in the 1960s for some of the smaller CBSAs in the sample. Perhaps the modest increase in performance from using later vintages of observables, well short of the margin by which text exceeds the two benchmarks, is reassuring in that regard.} %
Varying $L$ does not change the main finding: The textual descriptions seem to carry real information about employment dynamics, more so than the two benchmarks I consider. Among the text-based methods, pretrained Word2Vec averaging tends to perform worst, while SVD-$\beta$, corpus-trained SGNS, and the LLM embeddings tend to perform best, trading top spots at different values of $L$ (cf. Appendix~\ref{app-app:app_kmeans_sensitivity}).
It is perhaps worth noting that a simple closed-form factorization computed from the corpus in seconds matches a commercial transformer embedding on this task.

\paragraph{In-sample vs Out-of-sample.} Our exercise in this section is entirely in-sample.%
I read the evidence as establishing that narrative text encodes economically meaningful structure about local labor markets---more of it than either benchmark recovers, including the curated covariate set---but not as establishing that text-based peer groups are the preferred way to model this panel for forecasting purposes.

\section{Conclusion}\label{sec:conclusion}

Embeddings are ubiquitous in NLP and increasingly used in economics. 
This paper asks what replacing a text with its embedding preserves, under a generative model in which documents are mixtures of $K$ latent topics.

At the word level, an embedding built from how often words appear together reflects the topic geometry: words that play the same role across topics end up close together. Extra dimensions do no harm, so a researcher need not know how many topics a corpus contains.
At the document level, proximity in embedding space is proximity in topic mixture. This means that a cluster of document embeddings corresponds to a set of documents with similar mixtures, and that adjusting for the embedding is equivalent to adjusting for the topic mixture.

Another lesson is that dimension alone is not the relevant property. 
Two alignments matter: the embedding must recover the text's low-dimensional structure, and that structure must be the one the analysis requires. My theory delivers the first. 
At the same time, connecting embeddings to the parameters of an underlying topic model makes the second an assumption about the economics rather than the algorithm. Take the control use: the informal premise that ``the embedding is a sufficient control'' reduces to the transparent condition that the topic mixture captures the confounding---an assumption with economic content that a practitioner can defend. Directly assuming that $d$-dimensional embedding vectors in Euclidean space are a sufficient control is much harder to justify on economic grounds.

In my application, I find that clustering LLM-generated descriptions of 363 U.S.\ metropolitan areas yields interpretable economic types, and those types separate local employment dynamics more sharply than a curated set of industry and demographic covariates. The comparisons line up with the theory: SVD-$\beta$ and corpus-trained SGNS, which target closely related population objects, perform similarly, while pretrained averaging, which carries the geometry of a different corpus, trails at almost every number of clusters.
Remarkably, our closed-form embedding matches a commercial transformer embedding on this task, at a fraction of the cost and with a guarantee attached.

Several extensions merit investigation. Transformer embeddings currently fall outside our theoretical framework. Characterizing what transformer architectures learn about topic structure---building on \cite{li2023transformers}---would bring the embeddings practitioners actually use inside the theory. 
A second extension is dynamic topic models \citep{blei2006dynamic}: characterizing how embeddings track changes in word meaning over time.

Finally, the pipeline studied here---generate text with a language model, embed it, then use the embedding downstream---retains the representational power of a large model while leaving every downstream step inspectable and reproducible. That division of labour seems a reasonable way to incorporate such models into empirical work.

\clearpage

\onehalfspacing
\allowdisplaybreaks

\newgeometry{margin=24mm}

\captionsetup[table]{name=Online Appendix Table}
\captionsetup[figure]{name=Online Appendix Figure}

\renewcommand{\theequation}{OA.\arabic{equation}}
\renewcommand{\thecorollary}{OA.\arabic{corollary}}
\renewcommand{\theproposition}{OA.\arabic{proposition}}
\renewcommand{\thedefn}{OA.\arabic{defn}}
\renewcommand{\theassumption}{OA.\arabic{assumption}}
\renewcommand{\thefigure}{OA.\arabic{figure}}
\renewcommand{\thetable}{OA.\arabic{table}}
\renewcommand{\thelemma}{OA.\arabic{lemma}}
\renewcommand{\theremark}{OA.\arabic{remark}}

\setcounter{equation}{0}
\setcounter{corollary}{0}
\setcounter{proposition}{0}
\setcounter{defn}{0}
\setcounter{assumption}{0}
\setcounter{figure}{0}
\setcounter{table}{0}
\setcounter{lemma}{0}
\setcounter{remark}{0}

\begin{center}
\textbf{\LARGE Online Appendix}
\end{center}
\bigskip

\appendix

\section{Other Word-Embedding Algorithms under the Topic Model}\label{app-app:log_approximation}

Section~\ref{sec:other_embeddings} of the main paper states what other standard embedding algorithms
target and what those targets recover. This appendix supplies the background. Section~\ref{app-app:fo_working}
writes the four objectives in a common form.
Sections~\ref{app-app:fullsoftmax_lemma}--\ref{app-app:cbow_remark} derive the specific target of this algorithm by algorithm.
Section~\ref{app-app:log_geometry} then shows what the embedding inherits from the topic model, using only that every pair of words co-occurs.

\subsection{The algorithms as weighted low-rank fits}\label{app-app:fo_working}

In the word embedding literature, $log (R)$ is called the pointwise mutual information $\mathrm{PMI}(v,u) = \log R_{vu}$ \citep{levy2014neural}, and I follow this convention here. Set
\begin{align}\label{app-eq:kappaR}
R_{\min} := \min_{v,u} R_{vu}, \qquad R_{\max} := \max_{v,u} R_{vu}, \qquad \kappa_R := R_{\max}/R_{\min}.
\end{align}

Each of the four algorithms scores the pair $(v,u)$ by an inner product $X_{vu} = w_v^\top \tilde w_u$ and fits that score to the data by minimizing a loss. Three ingredients pin the loss down: a \emph{response} $t_{vu}$ read off the corpus, a convex generator $\phi$, and weights $\omega_{vu} > 0$ on the pairs. Response and score are  different kinds of object. The response is data, fixed once the corpus is given; the score is the parameter, constrained to rank $r$. %
Lemma~\ref{app-lem:link} then shows that the weights drop out of the population optimum, so what the score converges to is determined by $t_{vu}$ and $\phi$ alone.

Take the generator first. Fix a strictly convex, differentiable $\phi$ on an interval $\mathcal{I}$ and set
\begin{align}\label{app-eq:bregman}
D_\phi(t, y) \;:=\; \phi(t) - \phi(y) - \phi'(y)(t-y), \qquad t, y \in \mathcal{I},
\end{align}
the vertical gap at $t$ between $\phi$ and its tangent line at $y$ --- the \emph{Bregman divergence} generated
by $\phi$ \citep{bregman1967relaxation} (Also see \cite{collins2001generalization, udell2016generalized}). Convexity makes $D_\phi \geq 0$ and strict
convexity makes it vanish only at $t = y$, so $D_\phi(t, \cdot)$ measures discrepancy from $t$; it is not
symmetric and is not a metric. Every loss below is a $D_\phi$ for one of three generators,
\begin{align*}
\phi(y) = \tfrac12 y^2, \qquad
\phi(y) = \textstyle\sum_v y_v \log y_v, \qquad
\phi(y) = y\log y + (1-y)\log(1-y),
\end{align*}
which give squared error, the Kullback--Leibler divergence, and the Bernoulli log-loss respectively.

The previous paragraph gave us a discrepancy, the Bregman divergence. But there is a scale mismatch: the score $X_{vu}$ lives in $\mathbb{R}$. On the other hand, the domain of the response $t_{vu}$, $\mathcal{I}$, depends on the algorithm, but generally $\mathbb{R} \ne \mathcal{I}$ (cf. Table \ref{tab:targets}). The derivative $\phi'$ is the \emph{link} that maps $\mathbb{R}$ onto the response scale:
Strict convexity makes $\phi'$ strictly increasing and hence invertible, so $(\phi')^{-1}(X_{vu})$ is the score expressed on the response's scale --- the fitted value for $t_{vu}$.
A loss of the form $D_\phi\bigl(t_{vu},\, (\phi')^{-1}(X_{vu})\bigr)$ therefore compares response with fitted value, and I call the matrix with entries $\phi'(t_{vu})$ the algorithm's \emph{target}.

\begin{lemma}[Weights do not move the target]\label{app-lem:link}
Let $\omega_{vu} > 0$ and $t_{vu} \in \mathrm{int}\,\mathcal{I}$ for all $v,u$. Then, the unconstrained minimizer of
$\sum_{v,u}\omega_{vu} D_\phi\bigl(t_{vu},\, (\phi')^{-1}(X_{vu})\bigr)$ over $X \in \mathbb{R}^{V\times V}$ is
$X_{vu} = \phi'(t_{vu})$, entrywise and independently of the weights.
\end{lemma}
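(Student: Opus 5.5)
The plan is to exploit that the objective is a sum of nonnegative terms, each depending on a single entry $X_{vu}$. The problem therefore separates entrywise, and the weights $\omega_{vu}>0$ only rescale each summand. I would first reparameterize each entry by its fitted value $y_{vu} := (\phi')^{-1}(X_{vu})$. Strict convexity makes $\phi'$ strictly increasing, hence a bijection from $\mathrm{int}\,\mathcal{I}$ onto its image. On the three generators of interest that image is all of $\mathbb{R}$: for the squared generator $\phi'(y)=y$, for binary entropy $\phi'(y)=\log\frac{y}{1-y}$ on $(0,1)$, and for entropy the log link. Minimizing over $X_{vu}\in\mathbb{R}$ is thus the same as minimizing over $y_{vu}\in\mathrm{int}\,\mathcal{I}$.

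For each pair $(v,u)$, the summand $\omega_{vu}D_\phi(t_{vu},y_{vu})$ is nonnegative, because convexity places $\phi$ above its tangent line at $y_{vu}$. It equals zero at $y_{vu}=t_{vu}$, which is admissible because $t_{vu}\in\mathrm{int}\,\mathcal{I}$ by assumption. By strict convexity, it is zero only there: if $y\neq t$, strict convexity gives $\phi(t)>\phi(y)+\phi'(y)(t-y)$. Multiplying by $\omega_{vu}>0$ preserves both facts. So the unique minimizer of each summand is $y_{vu}=t_{vu}$, i.e.\ $X_{vu}=\phi'(t_{vu})$.

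Since the full objective is a sum of nonnegative terms in disjoint variables, it attains its minimum value $0$ exactly when every summand vanishes. That forces $X_{vu}=\phi'(t_{vu})$ for all $v,u$, a point that does not depend on the $\omega_{vu}$. This is the claim.

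The only step I expect to need care is the entropy (KL) generator, where $\phi$ acts on a vector $y_{\bullet u}$ constrained to the simplex rather than on a scalar entry. There I would either treat the column as the unit of separation, or, following Table~\ref{tab:targets}, allow the free column offset $g$ so that the link is entrywise $\log$ up to a normalization. The argument is the same: the column-wise Bregman divergence is zero iff the fitted distribution equals $t_{\bullet u}$. The target is then determined up to the column gauge listed in the table.
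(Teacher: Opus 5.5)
Your proof is correct and follows the paper's argument: each weighted summand is a nonnegative Bregman divergence that vanishes exactly when $(\phi')^{-1}(X_{vu}) = t_{vu}$. Strict monotonicity of $\phi'$ turns this into $X_{vu} = \phi'(t_{vu})$, and positive weights let the sum be minimized term by term. Your extra checks are not needed for the scalar lemma as stated. The first is that $\phi'$ maps onto $\mathbb{R}$ for the three generators. The second is the column-wise treatment of the entropy generator, which the paper handles separately in its full-softmax lemma using the same column-wise argument and gauge you describe.
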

\begin{proof}
Each term is nonnegative and vanishes iff $(\phi')^{-1}(X_{vu}) = t_{vu}$; $\phi'$ is
strictly increasing, hence invertible, so this holds iff $X_{vu} = \phi'(t_{vu})$. The weights are positive, so
minimizing the sum minimizes each term.
\end{proof}

Lemma~\ref{app-lem:link} states that the score equals the target whenever the rank constraint is non-binding. %

\subsection{Full-softmax skip-gram}\label{app-app:fullsoftmax_lemma}

The population full-softmax skip-gram objective is
\begin{align}\label{app-eq:fullsoftmax_pop_app}
\mathcal{L}_{\mathrm{full}}(W, \tilde W) \;:=\; \sum_{v, u \in [V]} M_{vu}\, \log\!\frac{\exp(w_v^\top \tilde w_u)}{\sum_{v' \in [V]} \exp(w_{v'}^\top \tilde w_u)},
\end{align}
where $w_v$ is the $v$-th row of $W$ and $\tilde w_u$ is the $u$-th row of $\tilde W$, so that $X := W\tilde W^\top$ collects the
scores $X_{vu} = w_v^\top \tilde w_u$. %
Equation~\eqref{app-eq:fullsoftmax_pop_app} is the population counterpart of the skip-gram model of \citet[eqs.\ 1--2]{mikolov2013distributed}, whose objective averages$\log P(w \mid c)$ over corpus positions within a context window; here that average is replaced by the
population co-occurrence $M$. 

\begin{lemma}[Full-softmax skip-gram target]\label{app-lem:fullsoftmax}
Under Assumptions~\ref{ass:topic_reg} and~\ref{ass:R_pos}, the unconstrained maximizers of
\eqref{app-eq:fullsoftmax_pop_app} satisfy $X^* = \log R + \log q\,\mathbf{1}^\top - \mathbf{1}g^\top$
for some $g \in \mathbb{R}^V$.
\end{lemma}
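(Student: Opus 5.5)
The objective \eqref{app-eq:fullsoftmax_pop_app} separates across columns $u$, and the scores enter only through $X$. So I would treat $X \in \mathbb{R}^{V\times V}$ as unconstrained and maximize column by column. For fixed $u$, the $u$-th term is $\sum_v M_{vu}\log \mathrm{softmax}(X_{\bullet u})_v$. Write $m_u := \sum_v M_{vu} = q_u > 0$ and $p_u := M_{\bullet u}/q_u$, which is the conditional distribution $P(w=\cdot\mid c=u)$. The $u$-th term then equals $q_u\sum_v p_{u,v}\log s_v$, where $s = \mathrm{softmax}(X_{\bullet u})$ ranges over the open simplex.

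By Gibbs' inequality, $\sum_v p_{u,v}\log s_v \le \sum_v p_{u,v}\log p_{u,v}$, with equality iff $s = p_u$. This is the Kullback--Leibler case of Lemma~\ref{app-lem:link}, and I would cite it that way. The target $s=p_u$ is attainable in the interior only because Assumption~\ref{ass:R_pos} gives $M_{vu} = q_vq_uR_{vu} > 0$, so $p_u$ has full support and $\log p_u$ is finite. Hence the supremum over column $u$ is attained, and exactly at those $X_{\bullet u}$ with $\mathrm{softmax}(X_{\bullet u}) = p_u$.

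The softmax map is invariant only to adding a constant vector. So $\mathrm{softmax}(X_{\bullet u}) = p_u$ iff $X_{vu} = \log p_{u,v} - g_u$ for some scalar $g_u$, where $g_u$ is the negative of the log-partition. Substituting $\log p_{u,v} = \log M_{vu} - \log q_u = \log R_{vu} + \log q_v$ gives $X_{vu} = \log R_{vu} + \log q_v - g_u$. In matrix form this is $X^* = \log R + \log q\,\mathbf{1}^\top - \mathbf{1}g^\top$. Because the columns decouple, the maximizers of the full objective are exactly the products of the columnwise maximizers. This gives both directions: every maximizer has this form, and every $g \in \mathbb{R}^V$ yields one.

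There is no real obstacle here. The only care needed is in two places: checking strict positivity of $p_u$, which is where Assumptions~\ref{ass:topic_reg}\ref{ass:marg_pos} and~\ref{ass:R_pos} enter, and correctly characterizing the fibre of the softmax as a shift by $\mathbf{1}$, which produces the free column offset $\mathbf{1}g^\top$.
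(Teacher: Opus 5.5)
Your proposal is correct and follows essentially the same route as the paper: write each column's term as $q_u$ times a negative KL divergence (your Gibbs' inequality) plus an $X$-free entropy term, maximize at $\mathrm{softmax}(X_{\bullet u}) = P(w \mid c{=}u)$, and read off the free column gauge $g$ from the shift-invariance of the softmax. One small correction: the paper notes that Lemma~\ref{app-lem:link} does not literally apply here, because that lemma covers entrywise losses while the softmax couples the entries of a column, so rest the step on Gibbs' inequality directly rather than citing it as a case of that lemma.
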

\begin{proof}
Lemma~\ref{app-lem:link} does not apply directly: the softmax in \eqref{app-eq:fullsoftmax_pop_app} normalizes over the
whole column $X_{\bullet u}$, so the loss is not separable across $v$ within a column, and we need to consider entire columns instead.
Writing $M_{vu} = q_u P(w{=}v \mid c{=}u)$ and $s_{vu} := \mathrm{softmax}(X_{\bullet u})_v$,
\begin{align}\label{app-eq:softmax_kl}
\mathcal{L}_{\mathrm{full}} \;=\; -\sum_u q_u\,\mathrm{KL}\bigl(P(w \mid c{=}u)\,\big\|\, s_{\bullet u}\bigr)
\;-\; \sum_u q_u H\bigl(P(w \mid c{=}u)\bigr),
\end{align}
whose second term does not involve $X$. So the objective is again a weighted Bregman fit, with response
$t_{vu} = P(w{=}v \mid c{=}u)$, the entropy generator, and weight $\omega_{vu} = q_u$; what differs from
Lemma~\ref{app-lem:link} is only that the divergence is taken one column at a time rather than one entry at a
time. It is maximized at
$s_{\bullet u} = P(w \mid c{=}u)$ for every $u$, that is at $X_{vu} = \log P(w{=}v \mid c{=}u) + g_u$, the
column gauge $g$ being free because $\mathrm{softmax}$ is invariant to adding a constant to a column. Finally
$\log P(w{=}v \mid c{=}u) = \log R_{vu} + \log q_v$ by definition of $R$.
\end{proof}

\subsection{SGNS}\label{app-app:sgns_lemma}

Computing the population full-softmax skip-gram objective is computationally difficult. \cite{mikolov2013distributed} therefore introduce negative sampling as an approximation used in practice, which I consider next.

Summing the pair-specific objective of \citet[eq.\ 5]{levy2014neural} over $(v,u)$ and normalizing counts by
corpus size gives the population SGNS objective
\begin{align}\label{app-eq:sgns_pop_app}
\mathcal{L}_{\mathrm{pop}}(W, \tilde W) \;:=\; \sum_{v,u \in [V]}\!\Bigl\{ M_{vu}\, \log \sigma(w_v^\top \tilde w_u) \;+\; \nu\, q_v q_u\, \log \sigma(-w_v^\top \tilde w_u) \Bigr\},
\end{align}
with $\sigma(x) := 1/(1+e^{-x})$, $\nu \geq 1$ the negative-sampling rate, and $w_v, \tilde w_u$ as above.\footnote{Implementations draw negatives from a smoothed unigram distribution rather than from $q$: \texttt{gensim} uses $P_{0.75}(u) \propto q_u^{3/4}$ \citep{mikolov2013distributed, levy2015improving}. This replaces $q_vq_u$ by $q_vP_{0.75}(u)$ in \eqref{app-eq:sgns_pop_app} and shifts the target to $\log R_{vu} + \log\bigl(q_u/P_{0.75}(u)\bigr) - \log\nu$. The added term is constant down each column, so it cancels in the row differences that Proposition~\ref{prop:oe_general} uses (cf.\ Remark~\ref{rem:other_targets}); I therefore work with the unsmoothed objective.}

\begin{lemma}[SGNS target]\label{app-lem:lg_restate}
 Under Assumptions~\ref{ass:topic_reg} and~\ref{ass:R_pos}, \eqref{app-eq:sgns_pop_app} is a weighted
Bernoulli-likelihood fit with weights $\omega_{vu} = q_vq_u(R_{vu}+\nu)$ and responses
$t_{vu} = R_{vu}/(R_{vu}+\nu)$, and its unconstrained maximizers satisfy
$X^* = \log R - \log\nu\,\mathbf{1}_V\mathbf{1}_V^\top$.
\end{lemma}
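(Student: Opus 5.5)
The plan is to rewrite \eqref{app-eq:sgns_pop_app} term by term as a weighted Bernoulli log-likelihood and then read off the maximizer from Lemma~\ref{app-lem:link}.

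\textbf{Step 1: rewrite the objective.} Fix a pair $(v,u)$ and write $M_{vu} = q_vq_uR_{vu}$, which holds by definition of $R$. The summand becomes
\[
q_vq_u\bigl\{R_{vu}\log\sigma(X_{vu}) + \nu\log\sigma(-X_{vu})\bigr\}.
\]
Factor out $\omega_{vu} := q_vq_u(R_{vu}+\nu)$, set $t_{vu} := R_{vu}/(R_{vu}+\nu)$, and use $1-t_{vu} = \nu/(R_{vu}+\nu)$. The summand is then
\[
\omega_{vu}\bigl\{t_{vu}\log\sigma(X_{vu}) + (1-t_{vu})\log(1-\sigma(X_{vu}))\bigr\},
\]
which is a weighted Bernoulli log-likelihood with success probability $\sigma(X_{vu})$.

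\textbf{Step 2: match it to a Bregman divergence.} Take the binary-entropy generator $\phi(y) = y\log y + (1-y)\log(1-y)$ on $\mathcal I=(0,1)$. Its derivative is $\phi'(y) = \log\frac{y}{1-y}$, so $(\phi')^{-1} = \sigma$. A direct expansion then gives
\[
D_\phi\bigl(t,\sigma(X)\bigr) = -\bigl[t\log\sigma(X) + (1-t)\log(1-\sigma(X))\bigr] + \phi(t).
\]
Since $\phi(t_{vu})$ does not depend on $X$, we get
\[
\mathcal L_{\mathrm{pop}} = -\sum_{v,u}\omega_{vu}D_\phi\bigl(t_{vu},(\phi')^{-1}(X_{vu})\bigr) + \text{const}.
\]
Maximizing $\mathcal L_{\mathrm{pop}}$ is therefore the same as minimizing the weighted Bregman fit.

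\textbf{Step 3: check the hypotheses of Lemma~\ref{app-lem:link} and apply it.} By Assumption~\ref{ass:topic_reg}\ref{ass:marg_pos} we have $q_v,q_u>0$. By Assumption~\ref{ass:R_pos} we have $R_{vu}>0$, and $\nu\ge 1$. Hence $\omega_{vu}>0$ and $t_{vu}\in(0,1)=\mathrm{int}\,\mathcal I$. This interior condition is the only point needing care: Assumption~\ref{ass:R_pos} is exactly what excludes $t_{vu}=0$, where the optimal score would be $-\infty$. Lemma~\ref{app-lem:link} then gives the unique unconstrained optimum entrywise,
\[
X^*_{vu} = \phi'(t_{vu}) = \log\frac{R_{vu}/(R_{vu}+\nu)}{\nu/(R_{vu}+\nu)} = \log R_{vu} - \log\nu .
\]
In matrix form this is $X^* = \log R - \log\nu\,\mathbf 1_V\mathbf 1_V^\top$.

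Unlike the full softmax case, the objective is separable across entries, so no gauge freedom arises and the optimum is unique. The calculation is otherwise routine.
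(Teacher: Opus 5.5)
Your proof is correct and follows the paper's argument: you write the summand with $\omega_{vu}=q_vq_u(R_{vu}+\nu)$ and $t_{vu}=R_{vu}/(R_{vu}+\nu)$ as a weighted Bernoulli log-likelihood, identify it up to a constant with $-\omega_{vu}D_\phi(t_{vu},\sigma(X_{vu}))$ under the binary-entropy generator, and apply Lemma~\ref{app-lem:link}, using Assumption~\ref{ass:R_pos} to place $t_{vu}$ in $(0,1)$. You additionally verify the Bregman expansion explicitly and check positivity of the weights via $q_v>0$, which the paper leaves implicit.
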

\begin{proof}
With $\omega_{vu} := M_{vu} + \nu q_vq_u = q_vq_u(R_{vu}+\nu)$ and $t_{vu} := M_{vu}/\omega_{vu} = R_{vu}/(R_{vu}+\nu)$,
\begin{align*}
\mathcal{L}_{\mathrm{pop}}(W, \tilde W) \;&=\; \sum_{v,u}\omega_{vu}\Bigl\{t_{vu}\log\sigma(X_{vu}) + (1-t_{vu})\log\sigma(-X_{vu})\Bigr\} \\
&=\; -\sum_{v,u}\omega_{vu} D_\phi\bigl(t_{vu}, \sigma(X_{vu})\bigr) \;+\; \mathrm{const},
\end{align*}
where $X := W \tilde W^\top$ and $\phi(y) = y\log y + (1-y)\log(1-y)$, for which \eqref{app-eq:bregman} is the Bernoulli
Kullback--Leibler divergence and $\phi' = \mathrm{logit} = \sigma^{-1}$. Assumption~\ref{ass:R_pos}
puts $t_{vu}$ in $(0,1)$, so Lemma~\ref{app-lem:link} applies and gives
$X^*_{vu} = \mathrm{logit}(t_{vu}) = \log(R_{vu}/\nu)$.
\end{proof}

The target is \citet[eqs.\ 6--7]{levy2014neural}: they let the inner products vary freely, which is the step
Lemma~\ref{app-lem:link} formalizes, and solve for $\vec w \cdot \vec c = \mathrm{PMI}(w,c) - \log \nu$. The
weighted-Bernoulli form is what I add. \citet[Section 3.2]{levy2014neural} observe that the loss on a pair
depends on its co-occurrence count and its expected negative-sample count, and conclude that SGNS performs a
weighted matrix factorization; they do not identify the weight $\omega_{vu}$ and response $t_{vu}$ that make it
one, and those are what place SGNS in the same frame as the other three algorithms in
Table~\ref{tab:targets}.

\subsection{GloVe}\label{app-app:glove_lemma}

The GloVe model \citep{pennington2014glove} minimizes the population objective
\begin{align}\label{app-eq:glove_pop_app}
\mathcal{L}_{\mathrm{GloVe}}(W, \tilde W, b, \tilde b) \;:=\; \sum_{v, u \in [V]} f(M_{vu})\, \Bigl(w_v^\top \tilde w_u \;+\; b_v \;+\; \tilde b_u \;-\; \log M_{vu}\Bigr)^2,
\end{align}
where $f: [0, \infty) \to [0, \infty)$ is positive on the support of $M$ and $b_v, \tilde b_u \in \mathbb{R}$
are learned per-word biases.

\begin{lemma}[GloVe target]\label{app-lem:glove}
Under Assumptions~\ref{ass:topic_reg} and~\ref{ass:R_pos} and with $f$ positive on
$\mathrm{supp}(M)$, the unconstrained minimizers of \eqref{app-eq:glove_pop_app} satisfy
$X_{vu} + b_v + \tilde b_u = \log M_{vu}$ entrywise. With $b^*_v = -\log q_v$ and
$\tilde b^*_u = -\log q_u$ the implied score matrix is $X^* = \log R$.
\end{lemma}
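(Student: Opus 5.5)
The plan follows the template of Lemma~\ref{app-lem:link}: treat the score and biases as free, so the objective splits across pairs. With $X := W\tilde W^\top$ unconstrained and the biases free, \eqref{app-eq:glove_pop_app} is a sum over $(v,u)$ of terms $f(M_{vu})\,(X_{vu} + b_v + \tilde b_u - \log M_{vu})^2$.

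The first step is to check that every term is well defined and has a positive weight. Assumption~\ref{ass:R_pos} gives $M_{vu} = q_vq_uR_{vu} > 0$ for all pairs, because $q_v > 0$ by Assumption~\ref{ass:topic_reg}\ref{ass:marg_pos}. Hence $\log M_{vu}$ is finite, and since $f$ is positive on $\mathrm{supp}(M)$, which is every pair, $f(M_{vu}) > 0$.

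The second step is the characterization of the minimizers. Each term is nonnegative, so the objective is bounded below by zero. Zero is attainable: for any fixed biases, take $X_{vu} := \log M_{vu} - b_v - \tilde b_u$. A configuration is therefore a minimizer exactly when every term vanishes. Because every weight is strictly positive, this means $X_{vu} + b_v + \tilde b_u = \log M_{vu}$ for all $v,u$. This is the squared-error case of Lemma~\ref{app-lem:link}, with response $t_{vu} = \log M_{vu}$ shifted by the bias offsets.

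The third step fixes the biases. Substitute $b^*_v = -\log q_v$ and $\tilde b^*_u = -\log q_u$ and solve:
\[
X^*_{vu} = \log M_{vu} + \log q_v + \log q_u = \log\bigl(M_{vu}/(q_vq_u)\bigr) = \log R_{vu},
\]
which is the definition of $R$. So $X^* = \log R$.

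The main subtlety is interpretive rather than technical. The minimizer is not unique: any row offset $b\mathbf{1}_V^\top$ and column offset $\mathbf{1}_V\tilde b^\top$ can be absorbed into $X$. This is why the lemma states the target only for one particular choice of biases, and it matches the ``free offsets'' column of Table~\ref{tab:targets}. I would note this explicitly rather than claim a unique target.
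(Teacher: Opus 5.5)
Your proposal is correct and follows essentially the same route as the paper, which invokes Lemma~\ref{app-lem:link} with the squared generator, response $\log M_{vu}$ and weights $f(M_{vu})$, then decomposes $\log M_{vu} = \log q_v + \log q_u + \log R_{vu}$ and absorbs the marginals into the biases. Your version unpacks that lemma for this case and adds two things the paper's two-line proof leaves implicit: the check that $M_{vu} = q_vq_uR_{vu} > 0$ makes every weight positive and every response finite, and the remark that the minimizer is unique only up to the row and column offsets.
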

\begin{proof}
Apply Lemma~\ref{app-lem:link} with $\phi(y) = y^2/2$, response $\log M_{vu}$, and weights
$f(M_{vu})$. Decomposing $\log M_{vu} = \log q_v + \log q_u + \log R_{vu}$ and absorbing the marginals into
the biases leaves $X_{vu} = \log R_{vu}$.
\end{proof}

The biases do here what the gauge $g$ does in Lemma~\ref{app-lem:fullsoftmax}: both absorb the marginal offsets that separate $\log M$ from $\log R$. 

\subsection{CBOW}\label{app-app:cbow_remark}

CBOW \citep{mikolov2013efficient} reverses the conditioning relative to skip-gram: it predicts the target word
from the average of its surrounding context embeddings. The population objective is
\begin{align}\label{app-eq:cbow_pop_app}
\mathcal{L}_{\mathrm{CBOW}}(W, \tilde W) \;:=\; \sum_{w, \mathbf{u}} P(w, \mathbf{u})\, \log\!\frac{\exp\!\bigl(h_w^\top \bar h_{\mathbf{u}}\bigr)}{\sum_{w' \in [V]} \exp\!\bigl(h_{w'}^\top \bar h_{\mathbf{u}}\bigr)},
\end{align}
where $\mathbf{u} = (u_1, \ldots, u_{2J})$ is the context window and $\bar h_{\mathbf{u}} := \frac{1}{2J}\sum_{j=1}^{2J} h_{u_j}$
is the average context embedding. At the population optimum the softmax form yields
$h_w^\top \bar h_{\mathbf{u}} = \log P(w \mid \mathbf{u}) + \kappa(\mathbf{u})$ with $\kappa(\mathbf{u})$ the
per-context log-normalizer.

With a single-word context this is Lemma~\ref{app-lem:fullsoftmax}.. The same relabeling applies under negative sampling: with a one-word context the CBOW negative-sampling objective is \eqref{app-eq:sgns_pop_app} with $W$ and $\tilde W$ exchanged, and $M$ is symmetric, so its target is again $\log R - \log\nu\,\mathbf{1}_V\mathbf{1}_V^\top$ and Proposition~\ref{prop:oe_general} applies verbatim.
With more than one context word it does not: $h_w^\top \bar h_{\mathbf{u}}$ is a function of the context bundle
rather than of pairs $(w, c)$, so there is no $V \times V$ target matrix analogous to those in Table~\ref{tab:targets},
and a PMI factorization for CBOW analogous to Lemma~\ref{app-lem:lg_restate} remains open in the literature. 

\subsection{What the targets inherit from the topic model}\label{app-app:log_geometry}

The proof of Proposition~\ref{prop:oe_general} bounds distances between rows of $\log R$ and then carries
them to the embedding. That proposition is stated for the SGNS target, whose only offset is a scalar. The
following lemma isolates its first step and states it more generally --- for an arbitrary entrywise transform
and arbitrary offsets constant along a row or a column. The row offsets are what it takes to cover the
full-softmax and GloVe targets: both carry an offset constant along a row, and a row offset does not cancel in
a row difference (Remark~\ref{rem:other_targets}).

\begin{lemma}[Index sufficiency]\label{app-lem:index}
Let Assumptions~\ref{ass:topic_reg} and~\ref{ass:R_pos} hold, let $\psi$ be any function on
$(0,\infty)$, let $a, b \in \mathbb{R}^V$, and set $T_{vu} := \psi(R_{vu}) + a_v + b_u$. Then, for all
$v, v' \in [V]$, $\tilde B_{v\bullet} = \tilde B_{v'\bullet}$ implies
$T_{v\bullet} - T_{v'\bullet} = (a_v - a_{v'})\mathbf{1}^\top$.
\end{lemma}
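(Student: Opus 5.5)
The plan is to show that equal marginal-normalized loadings force equal rows of $R$, and then read the identity off the definition of $T$ entry by entry.

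\textbf{Step 1: equal rows of $R$.} By \eqref{eq:uncentered_identity} in the proof of Proposition~\ref{thm:factorization}, $R = \tilde B G \tilde B^\top$. Hence for every $u \in [V]$,
\[
R_{vu} = \tilde B_{v\bullet}^\top G\, \tilde B_{u\bullet}.
\]
Because $R_{vu}$ depends on $v$ only through $\tilde B_{v\bullet}$, the hypothesis $\tilde B_{v\bullet} = \tilde B_{v'\bullet}$ gives $R_{vu} = R_{v'u}$ for all $u$. Equivalently, one can use \eqref{eq:centering_identity}: $R - \mathbf{1}_V\mathbf{1}_V^\top = \tilde B\Sigma_\Theta\tilde B^\top$, and the all-ones matrix has identical rows. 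Either route gives $R_{v\bullet} = R_{v'\bullet}$.

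\textbf{Step 2: apply $\psi$.} Assumption~\ref{ass:R_pos} guarantees $R_{vu} \in (0,\infty)$, so $\psi(R_{vu})$ is defined for every pair. Equal arguments give equal values, so $\psi(R_{vu}) = \psi(R_{v'u})$ for all $u$. No regularity of $\psi$ is needed; it is used only as a function.

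\textbf{Step 3: subtract rows.} For each $u$,
\[
T_{vu} - T_{v'u} = \psi(R_{vu}) - \psi(R_{v'u}) + (a_v - a_{v'}) + (b_u - b_u) = a_v - a_{v'}.
\]
The $\psi$ terms cancel by Step 2, and the column offset $b_u$ cancels because it is common to both rows. Stacking over $u$ gives $T_{v\bullet} - T_{v'\bullet} = (a_v - a_{v'})\mathbf{1}^\top$.

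There is no real obstacle here. The only points to flag are that positivity of $R$ is what makes $\psi(R_{vu})$ well defined, and that Assumption~\ref{ass:topic_reg}\ref{ass:marg_pos} is what makes $\tilde B = D_q^{-1}B$ well defined. As a remark, proportional loadings $B_{v\bullet} = \lambda B_{v'\bullet}$ with $\lambda > 0$ imply $\tilde B_{v\bullet} = \tilde B_{v'\bullet}$, since $q = B\bar\theta$ scales by the same $\lambda$. This connects the lemma to Theorem~\ref{thm:word_embedding_guarantee} and Remark~\ref{rem:other_targets}.
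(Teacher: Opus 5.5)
Your proof is correct and follows the paper's argument essentially line for line. The paper likewise reads $R_{vu} = \tilde B_{v\bullet}^\top G\,\tilde B_{u\bullet}$ off \eqref{eq:uncentered_identity} to get $R_{v\bullet} = R_{v'\bullet}$, applies $\psi$ entrywise, and notes that the column offsets $b_u$ cancel in the row difference; your remark on proportional loadings matches the paper's discussion immediately after the lemma.
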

\begin{proof}
$R = \tilde B G \tilde B^\top$ \eqref{eq:uncentered_identity} gives
$R_{vu} = \tilde B_{v\bullet}^\top G\, \tilde B_{u\bullet}$, so $\tilde B_{v\bullet} = \tilde B_{v'\bullet}$
gives $R_{v\bullet} = R_{v'\bullet}$ and hence $\psi(R_{v\bullet}) = \psi(R_{v'\bullet})$ entrywise. The
column offsets $b_u$ cancel in the difference.
\end{proof}

The hypothesis is the one used throughout: since $\tilde B = D_q^{-1}B$, $\tilde B_{v\bullet} = \tilde B_{v'\bullet}$
holds exactly when $B_{v\bullet} = \lambda B_{v'\bullet}$ for some $\lambda > 0$, in which case $q_v = \lambda q_{v'}$.
Words with proportional loadings therefore have target rows that differ by the constant $a_v - a_{v'}$, in every
one of the targets and whatever transformation produced them. Whether that constant is zero is what separates
the algorithms: for SGNS $a = 0$, so the rows coincide and Proposition~\ref{prop:oe_general} carries this
to the embedding; for full-softmax and GloVe $a \neq 0$, and Remark~\ref{rem:other_targets} works out what
survives.

\clearpage

\section{Additional Application Results}\label{app-app:application}

\subsection{Prompts and Generation Settings}\label{app-app:app_prompts}

Three separate calls to a language model appear in the application, and only the first affects the
partitions. This subsection records all three, since the corpus is the one input to the exercise that a
reader cannot inspect directly.

\paragraph{Generating the descriptions.} Each of the 363 descriptions comes from a single call to
\texttt{claude-opus-4-8} with a 1{,}500-token cap. The user message is
\begin{quote}\small
\texttt{Summarize the economic development of \{city\} from 1979-2019 in around 500 words with a focus
on the labor market. Focus on broad trends rather than specific numbers, and on aspects that make it
unique relative to the US as a whole.}
\end{quote}
and the system message is
\begin{quote}\small
\texttt{Write a substantive, best-effort description using your general knowledge of the place. You do
not need verified statistics or sources, and you should not refuse or add disclaimers about lacking data
or access -- always produce a description, doing the best you can even when you are uncertain. Start
right away with the description, without an opening sentence like `Here's a 500-word summary of
Philadelphia's economic development:'.}
\end{quote}

Four features of this are deliberate and worth stating.

First, ``broad trends rather than specific numbers'' is there to discourage invented statistics. The exercise needs qualitative economic structure, not a list of numbers, and the model is known to hallucinate statistics. Further, it will tend to recall numbers for locations it knows well and revert to broad trends for those it knows less well anyways, which would systematically change the structure of the description across locations.

Second, and relatedly, the refusal suppression is what makes the corpus complete. Without it the model declines or hedges for the smaller CBSAs it knows least about.

Third, ``unique relative to the US as a whole'' pushes the descriptions toward cross-sectional differentiation. A prompt asking simply for an economic history would return text dominated by features common to all US metros---national recessions, the general decline of manufacturing---which contribute a near-constant component to every document embedding and so cannot separate cities.

Fourth, the instruction to start immediately, without a preamble, matters for the same reason. A boilerplate opening clause repeated across 363 documents is a constant vector added to every average of
word embeddings, which shrinks the relative distances the clustering depends on.

The resulting corpus is 363 descriptions of 476 to 510 words, median 495, so document length is close to constant. %

\paragraph{Labelling the clusters.} The cluster names in Section~\ref{app-app:clusters} come from a second
call, made once per cluster per method, to \texttt{claude-opus-5} at low reasoning effort with a
2{,}000-token cap. It is given the member city names and their descriptions and asked
\begin{quote}\small
\texttt{Your response should include less than 60 words. Start right away to list the common themes.
Cities: \{cities\}. Based on the following documents, what do these cities have in common? Identify the
common themes: \{documents\}}
\end{quote}
These labels are purely \emph{ex post}. They are produced after the partitions are fixed, play no part in forming them, and enter no statistic reported anywhere in the paper---they exist only to interpret the clusters. %

\paragraph{The direct-LLM alternative.} The comparison in
Section~\ref{sec:application}'s footnote asks \texttt{claude-opus-5}, with a 4{,}000-token cap, to
partition the cities itself rather than embed them: it receives all 363 descriptions at once and is
asked to return exactly $L$ clusters as JSON, each with a name, a member list and a short explanation,
with the instruction that every city belong to exactly one cluster. It did not comply at this scale. The
returned partition covered only part of the sample and included place names absent from the input list,
which is why the paper routes through embeddings instead.

\subsection{Application Embedding Implementation}\label{app-app:app_methods}

This section specifies the five embedding methods compared in Section~\ref{sec:application}. %

\paragraph{Common to all five.} Every method starts from the same 363 descriptions and differs only
in the map from a description to a vector. Text is lower-cased and tokenized on the regular expression
\texttt{[A-Za-z]+}\allowbreak\texttt{(?:'[A-Za-z]+)?}, which keeps letter strings with internal
apostrophes and drops digits and punctuation. The result is $183{,}633$ tokens and $7{,}411$ distinct word types, with
documents between $485$ and $525$ tokens long (median $506$), so the averages below are taken over a
near-constant number of terms. The four word-level methods embed words and then take the unweighted
within-document average, which is the $\mu_d$ of
Proposition~\ref{prop:doc_embedding_general_freq}; the fifth embeds the document directly. Each
of the five is then clustered by $k$-means with $L = 5$, \texttt{k-means++} seeding
\citep{arthur2007kmeans} and $500$ restarts, keeping the best fit. 

\subsubsection{Closed-Form SVD-$\beta$ Embedding}\label{app-sec:app_svd_beta}

We directly estimate $\widehat R$ from the corpus word-context co-occurrence matrix (taking the full document as context), center to $\widehat R - \mathbf{1}_V\mathbf{1}_V^\top$, and take the top-$r$ eigendecomposition. The resulting $\widehat\beta = \Phi\Lambda^{1/2}$ satisfies $\widehat\beta\widehat\beta^\top = \widehat R - \mathbf{1}_V\mathbf{1}_V^\top$ on the top-$r$ eigenspace. Document embeddings are the within-document average of $\widehat\beta_v$. The eigenpairs are computed by Lanczos iteration.

\paragraph{Vocabulary filter.} We restrict to words appearing in at least 5 documents, giving $V = 2{,}440$ types and $94.1\%$ token coverage. Rare words have very small marginal probabilities $P(w)$, inflating $1/P(w)$ in the centered ratio, making it computationally unstable, and dominating the resulting eigendecomposition.

\paragraph{Choice of dimension.} Proposition~\ref{thm:factorization} gives rank $K-1$, and $K$ is not observable for the CBSA corpus. By Remark~\ref{rem:dimension_slack} we do not need it: any $r \geq K-1$ recovers the same geometry in population, so the dimension only has to be chosen large enough. We set $r = 50$ as the primary specification. Section~\ref{app-app:app_d_sweep} reports the sweep over $r \in \{10, 25, 50, 100, 200\}$ in place of the elbow the spectrum does not supply.

\subsubsection{Corpus-Trained Word2Vec: SGNS}\label{app-sec:app_corpus_sgns}

Skip-gram with negative sampling using the \texttt{gensim}
implementation \citep{rehurek2010gensim}: $r = 50$ dimensions, \texttt{min\_count} $= 2$, $5$ epochs, and
the full document as context, implemented as window $J = 10{,}000$, which exceeds every document length. \texttt{gensim} draws the effective half-window uniformly on
$\{1, \ldots, \texttt{window}\}$ per token, so about $95\%$ of tokens see the entire document and the
rest a narrower one.  Section~\ref{app-app:app_context_window} reports the sweep over
$J \in \{5, 10, 20, 50\}$
The remaining hyperparameters are \texttt{gensim} defaults: $\nu = 5$ negative samples per update drawn
from the unigram distribution raised to the $0.75$ power, frequent-word downsampling at threshold
$10^{-3}$, and an initial learning rate of $0.025$ decayed linearly.

\paragraph{Epochs.} The $5$ epochs are set against the $200$ of the CBOW arm
(Section~\ref{app-sec:corpus_cbow}) because a common count would not be equal training. CBOW makes one
update per target token, while SGNS makes one per (target, context) pair, so at full-document
context SGNS performs roughly $500$ times as many updates per epoch. Matching CBOW's $200$ epochs
would give SGNS about $500$ times CBOW's total training rather than the same amount. At $5$ epochs SGNS
still performs about $12$ times CBOW's total updates, so it is not the under-trained arm. At the
narrower windows of Section~\ref{app-app:app_context_window} the ratio is small enough that both objectives
run $200$ epochs.

\paragraph{Coverage.} After \texttt{min\_count} $= 2$, the trained vocabulary contains $4{,}876$ word types covering $98.6\%$ of the $183{,}633$ tokens, against $86.3\%$ for the pretrained vectors of Section~\ref{app-sec:app_pretrained_w2v}, because the vocabulary is learned from the data rather than inherited. The trade-off is corpus size: $184$ thousand tokens is arguably small to train a Word2Vec model from scratch. 

\subsubsection{Corpus-Trained Word2Vec: CBOW}\label{app-sec:corpus_cbow}

\paragraph{Settings.} $r = 50$ dimensions, \texttt{min\_count} $= 2$, $200$ epochs, and the full
document as context. Full-document context is implemented as window  $J= 10{,}000$. The remaining hyperparameters are \texttt{gensim} defaults: $5$ negative
samples per update drawn from the unigram distribution raised to the $0.75$ power, frequent-word
downsampling at threshold $10^{-3}$, and an initial learning rate of $0.025$ decayed linearly.

\subsubsection{Pretrained Word2Vec: Google News}\label{app-sec:app_pretrained_w2v}

We average pretrained Google News Word2Vec vectors---$300$-dimensional,
trained on  part of a Google News corpus of about 100 billion words \citep{word2vec2013} and distributed through \texttt{gensim}---over each CBSA description, yielding 363 embeddings in
$\mathbb{R}^{300}$. These vectors cover $86.3\%$ of the corpus tokens; the remainder are dropped from
the average.

\subsubsection{LLM Embedding-Based Clustering}\label{app-sec:app_llm_embeddings}

Using OpenAI's \texttt{text-embedding-3-large}, I embed each of the 363 descriptions into $\mathbb{R}^{3072}$. This is a contextual transformer rather than an average of word vectors, so none of the results of Section~\ref{sec:theory} technically apply to it; it enters as a benchmark for current practice.

\subsection{Choice of $L$ for the Application Embeddings}\label{app-app:app_wcss}

For each embedding method in Section~\ref{sec:application}, I use $L=5$ clusters. Figure~\ref{app-fig:app_wcss_all} reports the within-cluster sum of squares (WCSS) as a function of $L$ for all five methods. In each case the elbow is gradual, but $L=5$ provides a reasonable trade-off. The absence of a sharp elbow is itself informative: it is what one expects when the document embeddings are spread over the simplex rather than concentrated at a small number of vertices, which is the regime Section~\ref{sec:theory} distinguishes from exact recovery.
The vertical axis is the share of embedding variance within clusters.

\begin{figure}[htb!]
\centering
\begin{subfigure}[b]{0.48\textwidth}
\centering
\includegraphics[width=\textwidth]{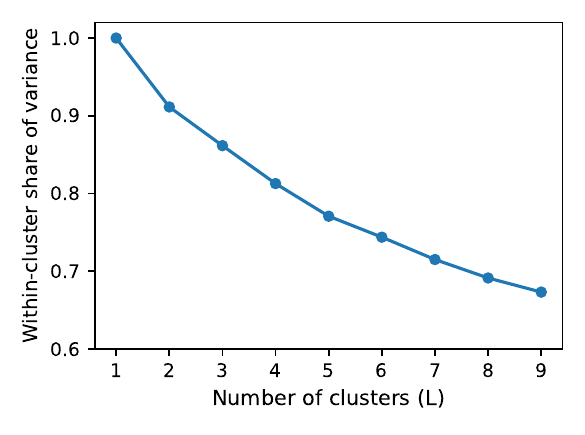}
\caption{SVD-$\beta$ ($r=50$)}
\label{app-fig:app_svd_beta_wcss}
\end{subfigure}
\hfill
\begin{subfigure}[b]{0.48\textwidth}
\centering
\includegraphics[width=\textwidth]{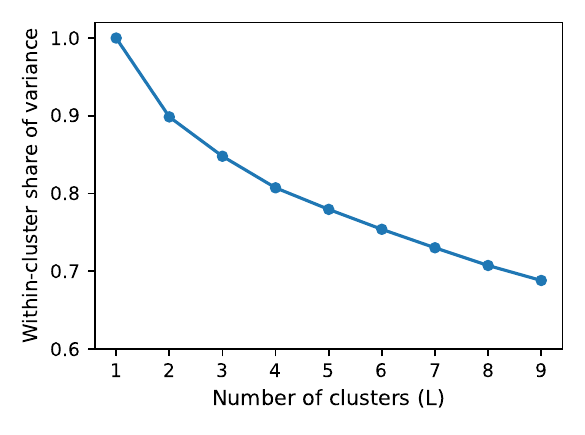}
\caption{Corpus SGNS}
\label{app-fig:app_sgns_wcss}
\end{subfigure}

\vspace{0.25em}

\begin{subfigure}[b]{0.48\textwidth}
\centering
\includegraphics[width=\textwidth]{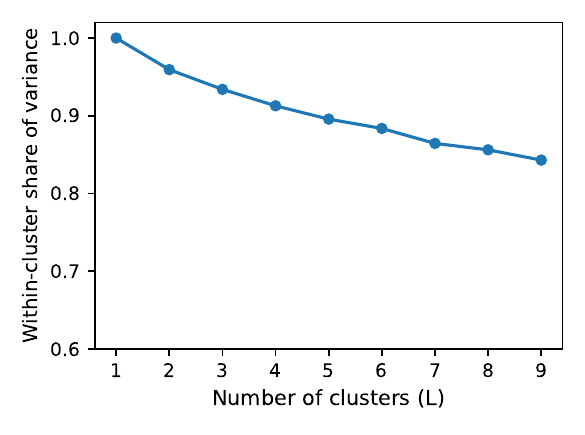}
\caption{Corpus CBOW}
\label{app-fig:app_trained_wcss}
\end{subfigure}
\hfill
\begin{subfigure}[b]{0.48\textwidth}
\centering
\includegraphics[width=\textwidth]{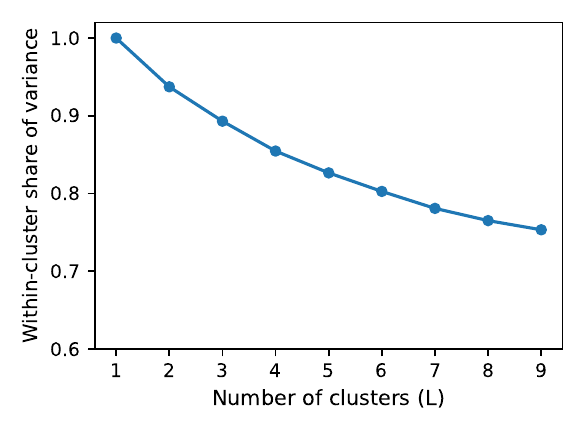}
\caption{Pretrained (Google News)}
\label{app-fig:app_w2v_wcss}
\end{subfigure}

\vspace{0.25em}

\begin{subfigure}[b]{0.48\textwidth}
\centering
\includegraphics[width=\textwidth]{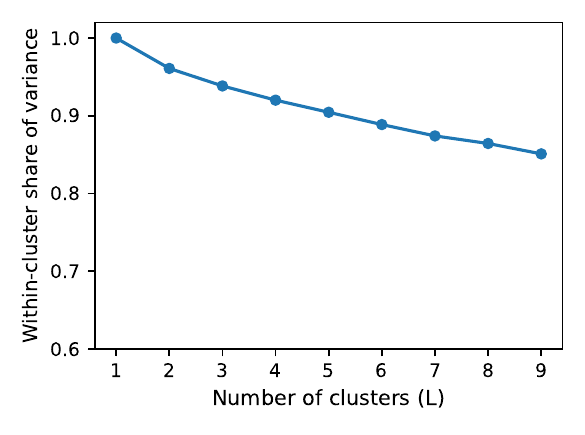}
\caption{LLM embeddings}
\label{app-fig:app_wcss}
\end{subfigure}
\caption[WCSS against the cluster count, by embedding method]{Within-cluster sum of squares as a function of the number of clusters $L$ for each of the five embedding methods (363 CBSAs), normalized by the total sum of squares. For $k$-means the centroids are the cluster means, so $\mathrm{TSS} = \mathrm{WCSS} + \mathrm{BCSS}$ holds exactly and the plotted quantity is the share of embedding variance still within clusters; one minus it is the share the partition explains.}
\label{app-fig:app_wcss_all}
\end{figure}

\clearpage

\subsection{Cluster Visualizations in UMAP Space}\label{app-app:app_umap}

\begin{figure}[htb!]
\centering
\begin{subfigure}[b]{0.48\textwidth}
\centering
\includegraphics[width=\textwidth]{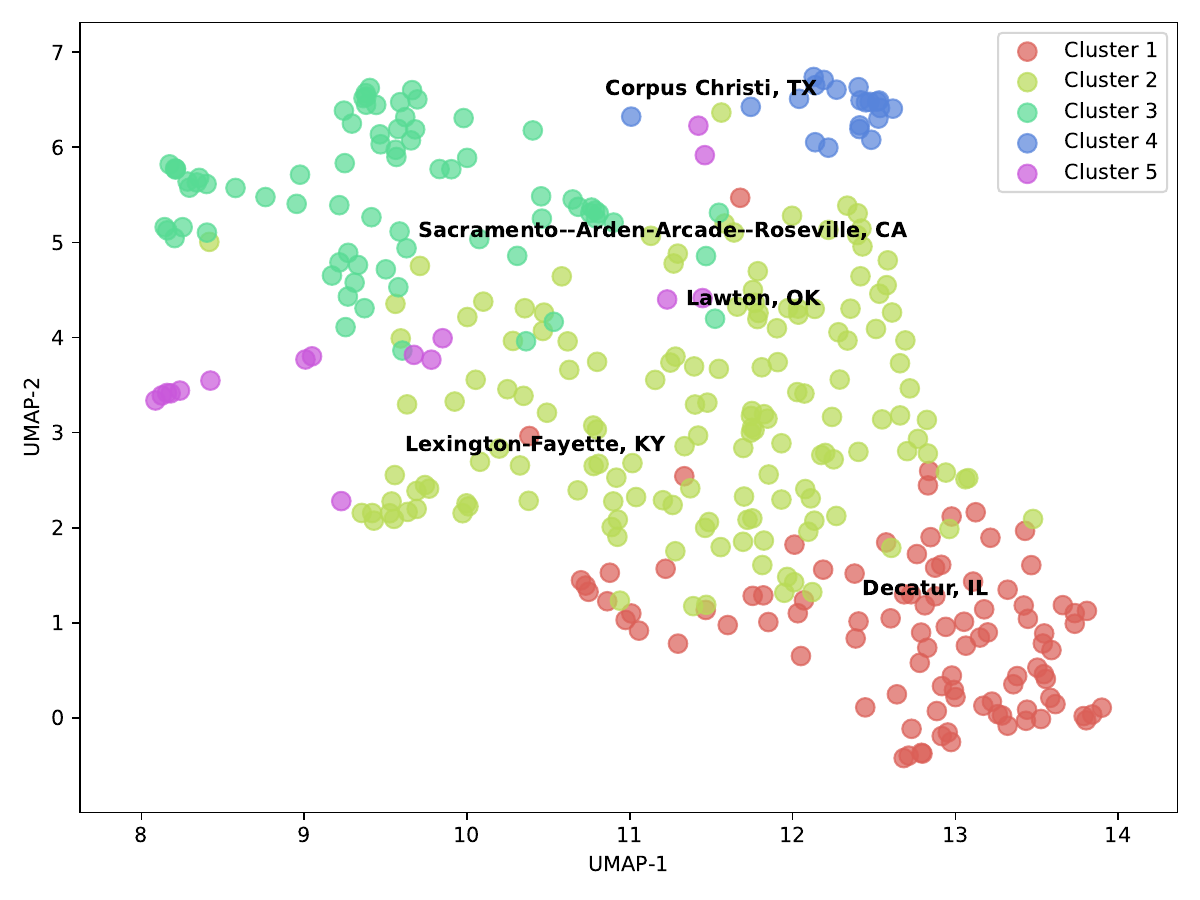}
\caption{SVD-$\beta$ ($r=50$)}
\label{app-fig:app_svd_beta_clusters}
\end{subfigure}
\hfill
\begin{subfigure}[b]{0.48\textwidth}
\centering
\includegraphics[width=\textwidth]{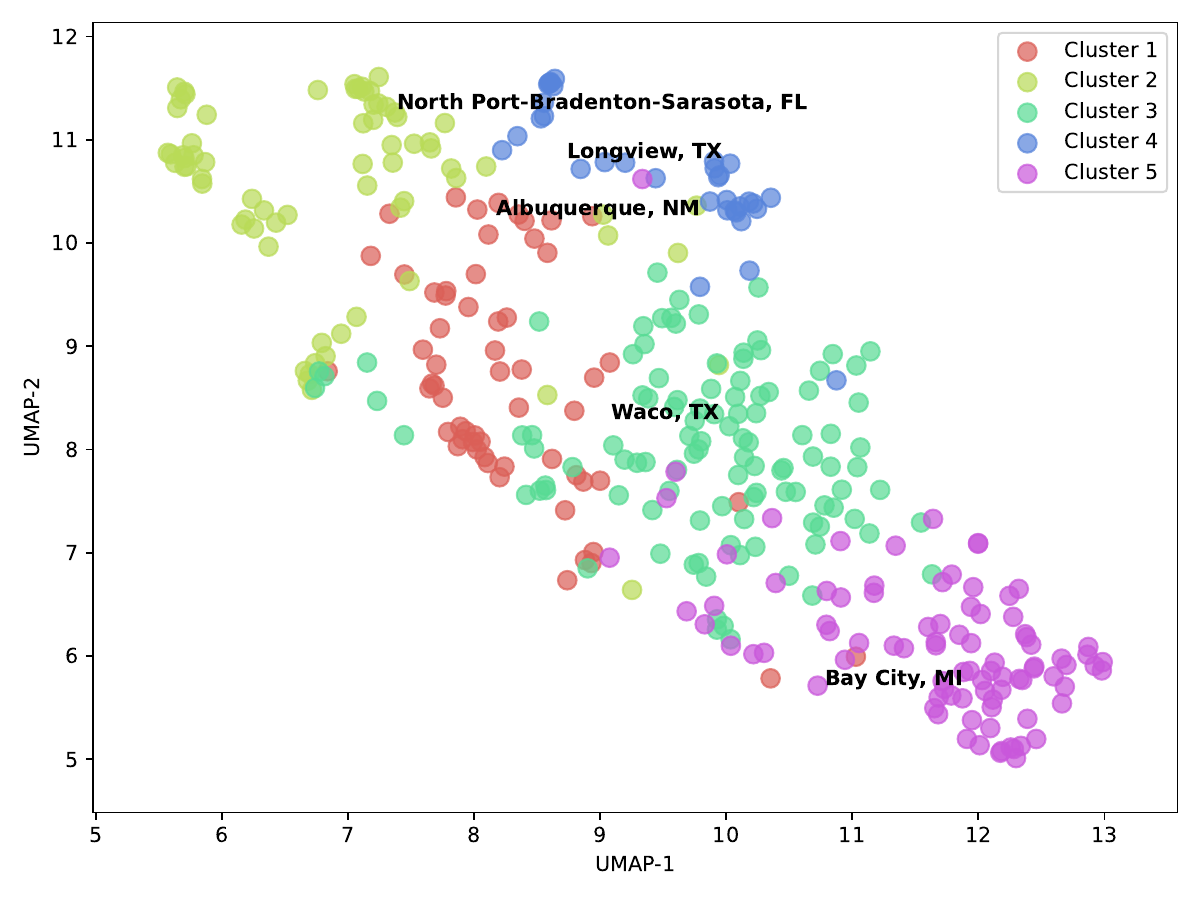}
\caption{Corpus SGNS}
\label{app-fig:app_sgns_clusters}
\end{subfigure}

\begin{subfigure}[b]{0.48\textwidth}
\centering
\includegraphics[width=\textwidth]{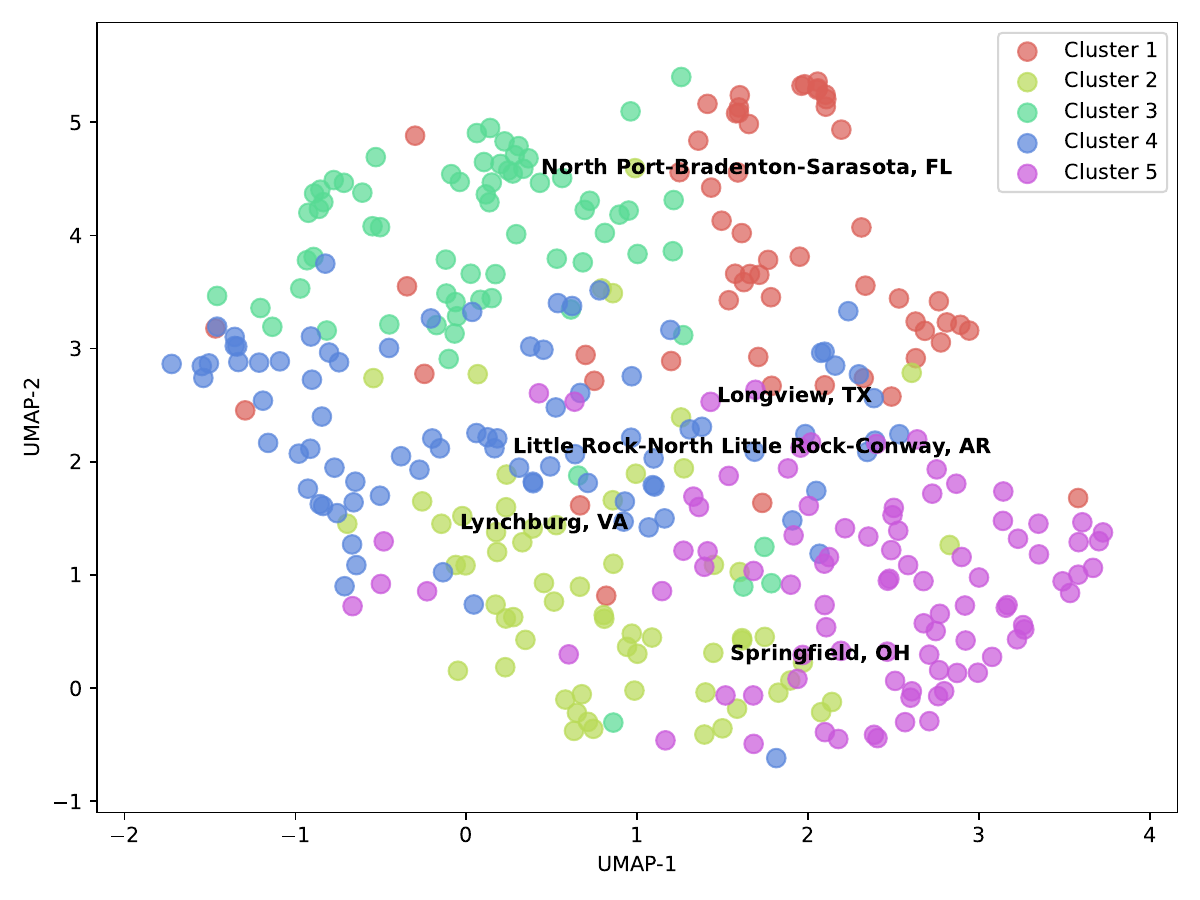}
\caption{Corpus CBOW}
\label{app-fig:app_trained_clusters}
\end{subfigure}
\hfill
\begin{subfigure}[b]{0.48\textwidth}
\centering
\includegraphics[width=\textwidth]{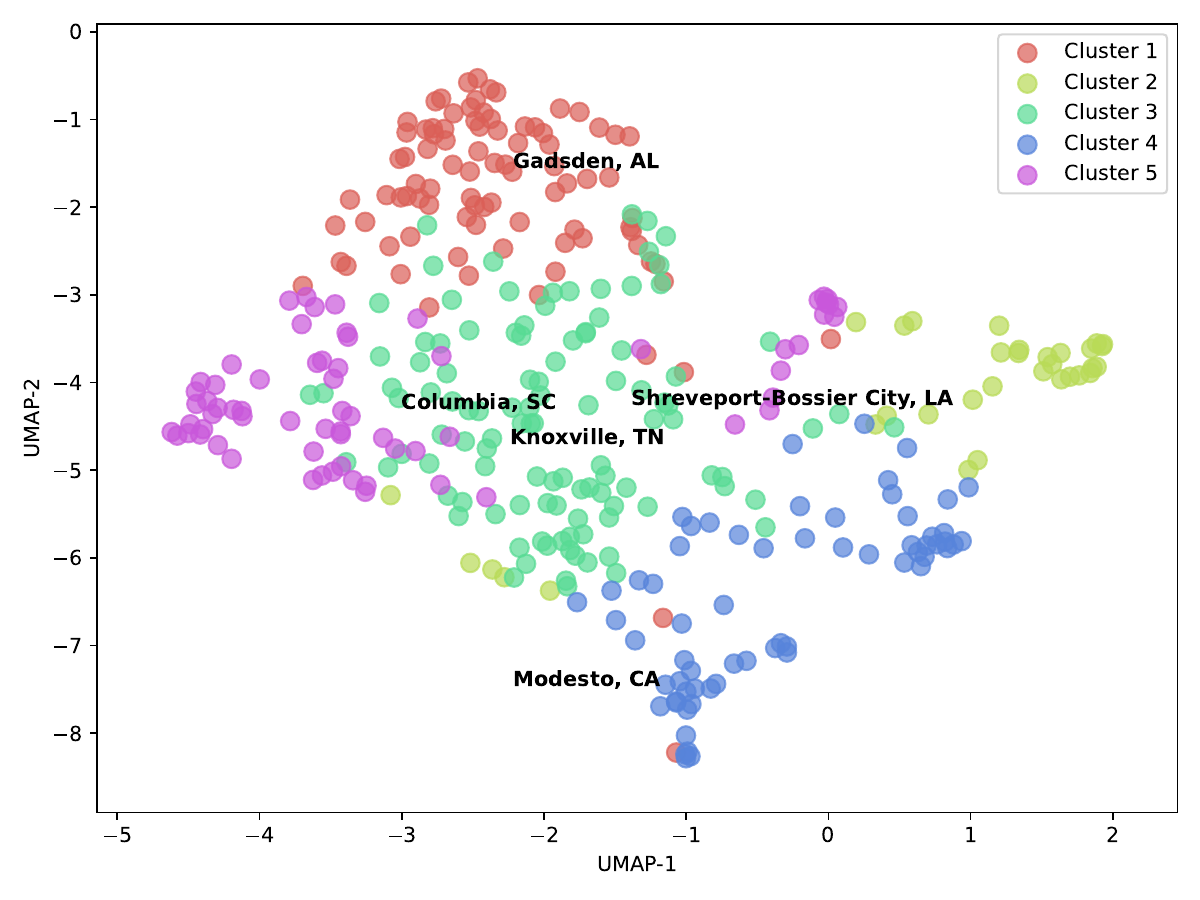}
\caption{Pretrained (Google News)}
\label{app-fig:app_w2v_clusters}
\end{subfigure}

\begin{subfigure}[b]{0.48\textwidth}
\centering
\includegraphics[width=\textwidth]{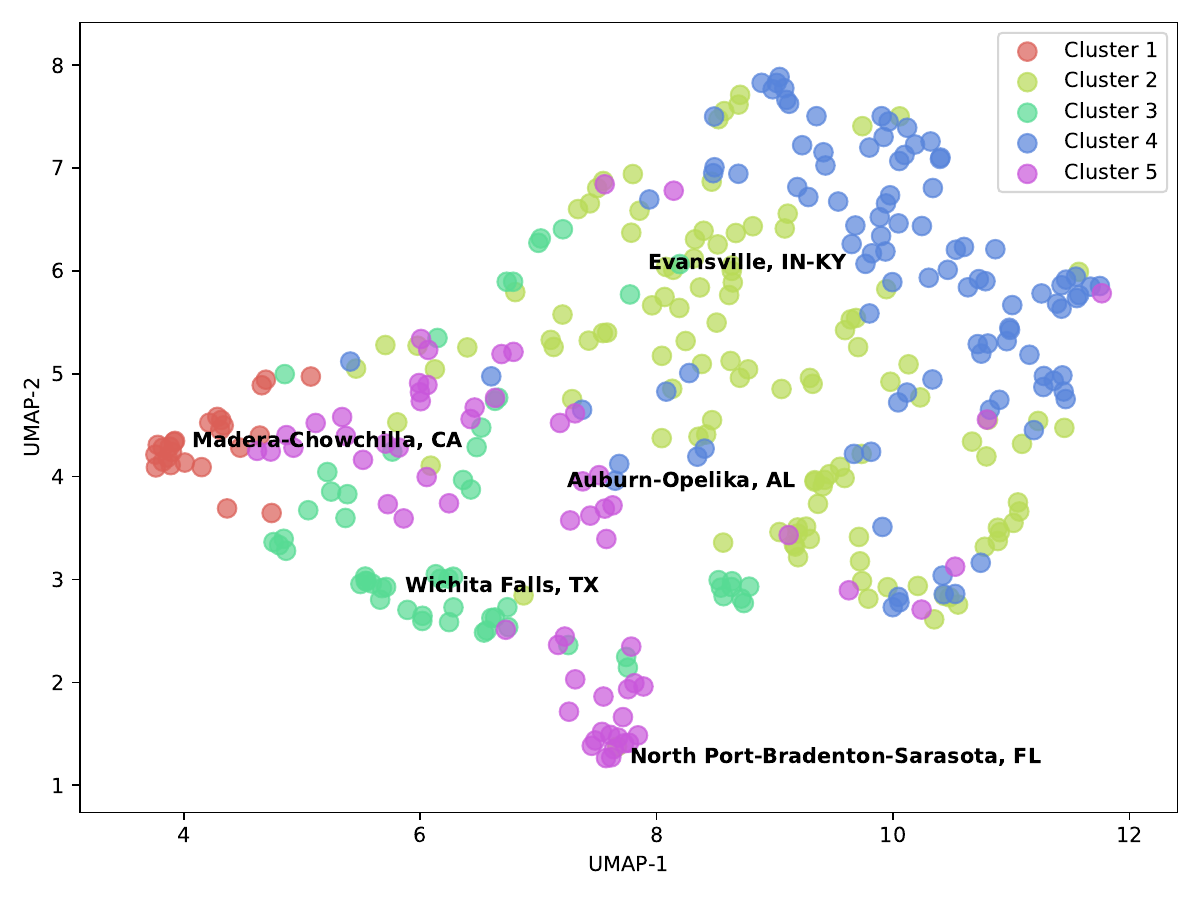}
\caption{LLM embeddings}
\label{app-fig:app_clusters}
\end{subfigure}
\caption[Cluster assignments in UMAP space, by embedding method]{Document embeddings projected to two dimensions with UMAP \citep{mcinnes2018umap}. Labels corresponds to the CBSA closest to that cluster's centroid in the full embedding space.}
\label{app-fig:app_clusters_all}
\end{figure}

\clearpage

\subsection{Sensitivity to the Number of Clusters}\label{app-app:app_kmeans_sensitivity}

Section~\ref{sec:application} fixes $L=5$. Figure~\ref{app-fig:sensitivity_L} sweeps $L \in \{2,3,4,5,6,8,10\}$ for each method, recomputing the permutation floor at every $L$ so the ratios are comparable down a column.

First, every method stays well above its floor at every $L$, so the finding that these
groupings separate employment dynamics more sharply than chance is not an artifact of the cluster count.

Second, a text method leads at every $L$: LLM embeddings at $L=2$ and $L=6$, SVD-$\beta$ at $L=3$, $4$, and $8$, corpus CBOW at $L=6$, corpus SGNS at $L=5$ and $10$. Residual $k$-means is below all five text methods at every $L$ except $L=8$, and below all three corpus-based methods across the grid. Observables $k$-means tends to fall between residual $k$-means and the text-based methods.

We conclude that the ordering \emph{among} the three corpus-based text methods is not stable across the grid. But the finding that
text-based groupings separate employment dynamics better than our benchmarks does not depend on the cluster count.

\begin{figure}[htb!]
\centering
\includegraphics[width=0.86\textwidth]{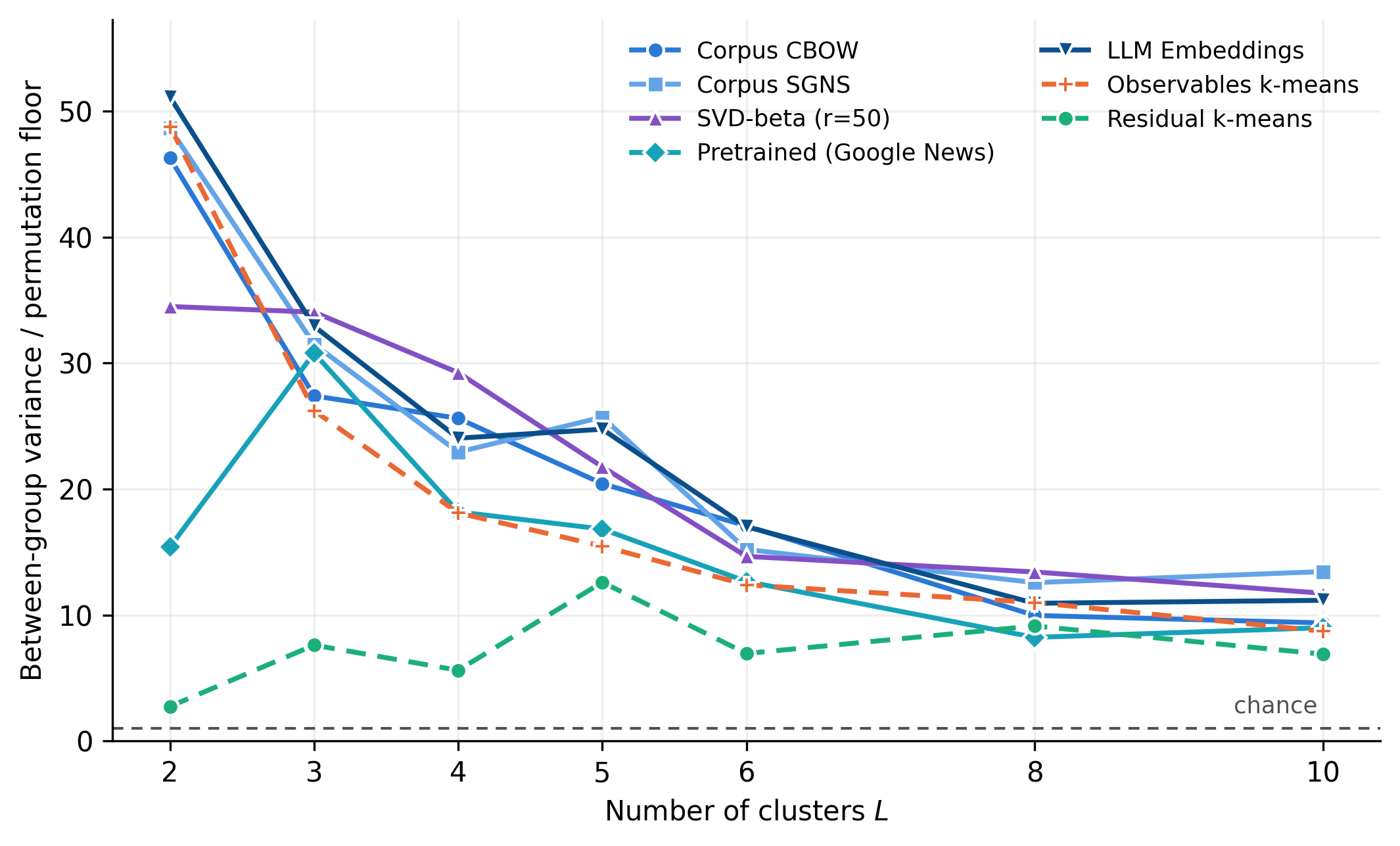}
\caption[Sensitivity to the number of clusters]{Between-group variance of unit-specific AR(2) impulse responses relative to the size-matched permutation floor, against the number of clusters $L$. The floor is recomputed at every $L$, so ratios are comparable along each line. Dashed lines represent the non-text benchmarks.}
\label{app-fig:sensitivity_L}
\end{figure}

\subsection{Sensitivity to the Embedding Dimension}\label{app-app:app_d_sweep}

Remark~\ref{rem:dimension_slack} shows that any $r \geq K-1$ recovers the same geometry in population, so the dimension needs only to be chosen large enough---which matters because $K$ is not observable for this corpus. Figure~\ref{app-fig:sensitivity_dimension} sweeps $r \in \{10,25,50,100,200\}$ for the three corpus-based methods at the full-document window and $L=5$.

\begin{figure}[htb!]
\centering
\includegraphics[width=\textwidth]{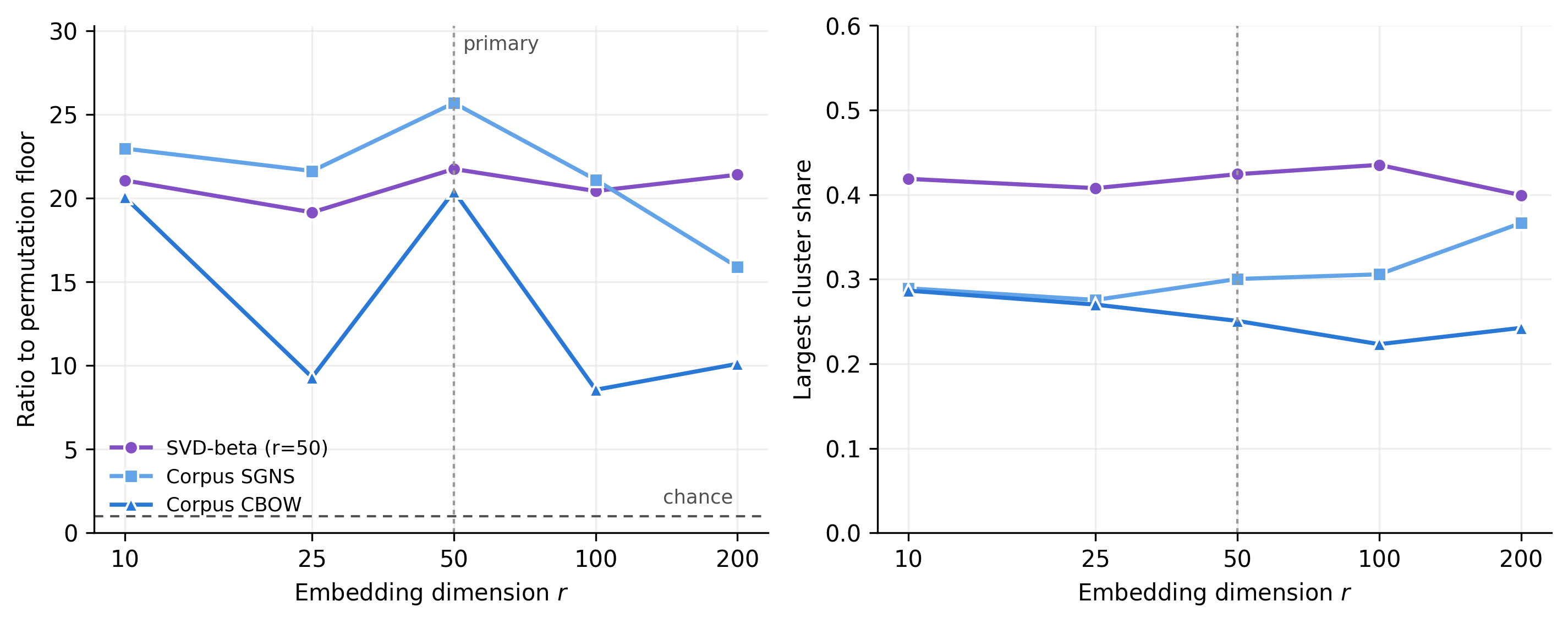}
\caption[Sensitivity to the embedding dimension]{Sensitivity to the embedding dimension $r$ at $L=5$, $p=2$ and the full-document window, for the three corpus-based methods. Left: between-group variance relative to the size-matched permutation floor. Right: share of CBSAs in the largest cluster. SVD-$\beta$ and corpus SGNS are flat from $r=10$ to $r=200$, which is what justifies the $r=50$ primary specification (dotted) without identifying $K$. Corpus CBOW alternates between two families of near-equivalent $k$-means partitions.}
\label{app-fig:sensitivity_dimension}
\end{figure}

SVD-$\beta$ and corpus SGNS are flat across the entire range: neither ratio trends in $r$, and their partitions stay comparably balanced throughout. This is what Remark~\ref{rem:dimension_slack} predicts once $\widehat R$ is well estimated---coordinates beyond $K-1$ contribute nothing in population and, at the full-document window, little in sample either. It justifies $r=50$ without an estimate of $K$: any choice across this range would serve.

Corpus CBOW is not flat, but in this case it reflects the instability of k-means. For the CBOW embeddings, the algorithm is choosing among near-equivalent optima, and which one it lands on moves the between-group ratio by a factor of two.

\subsection{Sensitivity to the Autoregressive Order}\label{app-app:app_p_sweep}

The AR order $p$ is a nuisance parameter of the same kind as the cluster number $L$, context window $J$ and embedding dimension $r$. In this section I explore the choice of $p$.

Figure~\ref{app-fig:sensitivity_p_noise} shows the share of cross-city IRF dispersion attributable to estimation error. 
To compute this, I hold the DGP fixed at a common-slope $AR(8)$ whose coefficients are the pooled FE estimates on the same panel, and whose innovations are each unit's own $AR(8)$ residuals resampled $i.i.d.$ The fitted model is then $AR(p)$, unit by unit, on the simulated paths, averaged over 20 replications. Cross-city IRF dispersion is defined as the mean squared deviation from the cross-unit mean response, horizon by horizon, and Figure~\ref{app-fig:sensitivity_p_noise} depicts the ratio between the simulated dipersion (under homogeneity) and the dispersion observed in the data.
\begin{figure}[htb!]
\centering
\includegraphics[width=0.86\textwidth]{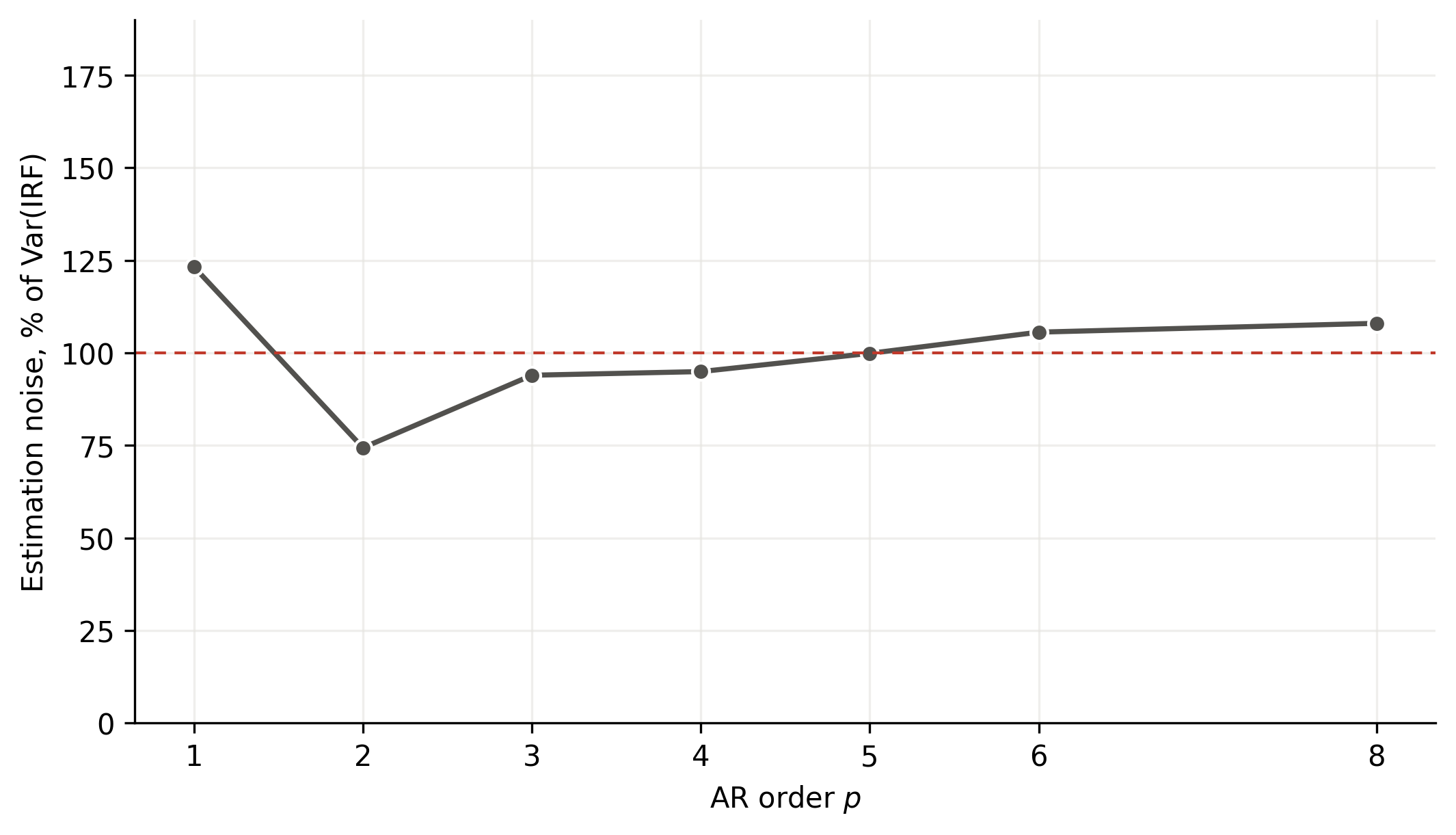}
\caption[Estimation noise against the autoregressive order]{Share of total cross-city IRF variance attributable to estimation error, against the autoregressive order $p$. The benchmark holds the DGP fixed at a common-slope AR(8) and fits AR($p$) to the simulated paths. The dashed line marks the point at which the simulated data generates as much dispersion as the observed data.}
\label{app-fig:sensitivity_p_noise}
\end{figure}

On that benchmark the share is minimised at $p=2$ (74\%) and is at least 94\% at every other order, reaching 100\% at $p=5$ and exceeding it from $p=6$. One note of caution: We redraw innovations independently across units; employment shocks are not (and likely positively correlated). That would bias all depicted ratios up and may be why the share exceeds 100\% from $p = 5$ on.

Figure~\ref{app-fig:sensitivity_p_ratio} depicts the ratio from Table \ref{tab:irf_ratio_main} for different values of $p$. Seven of the eight approaches peak at $p=2$, though they all remain above 1 throughout.
\begin{figure}[htb!]
\centering
\includegraphics[width=0.86\textwidth]{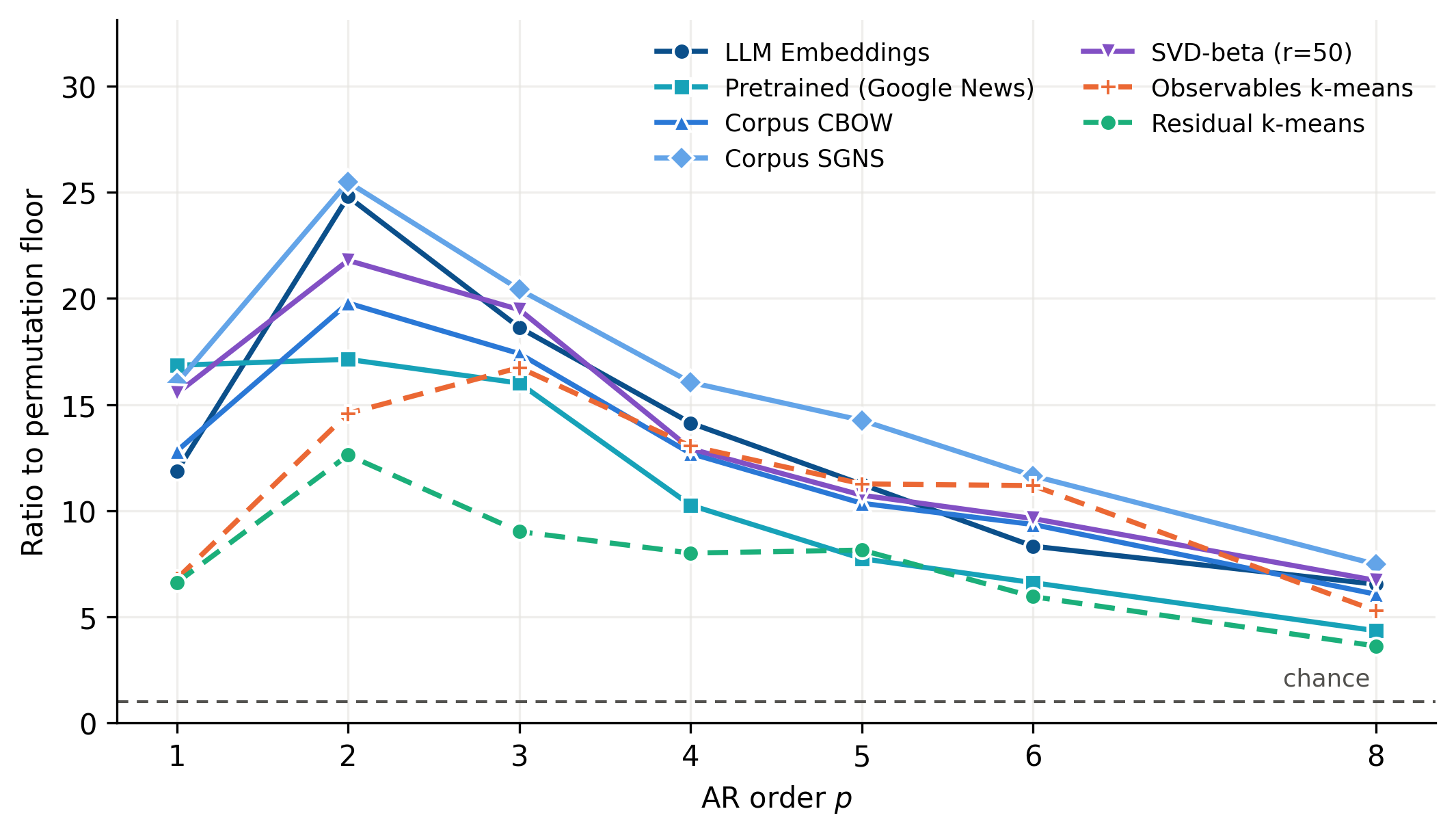}
\caption[Between-group variance against the autoregressive order]{Between-group variance of the unit-specific IRFs relative to the size-matched permutation floor, against the autoregressive order $p$, at $L=5$. Chance is at $1\times$. Dashed lines represent the two non-text benchmarks.}
\label{app-fig:sensitivity_p_ratio}
\end{figure}

We thus read $p=2$ as the best low-dimensional \emph{projection} of the dynamics for cross-sectional comparison rather than as a correctly specified model: the pooled panel clearly wants more lags, but fitting them spends per-unit degrees of freedom that a $T=51$ series does not have, and a parsimonious fit concentrates whatever genuine heterogeneity exists into few well-estimated coefficients instead of dispersing it across many noisy ones. 

One caveat applies to both figures. The Nickell bias in $\hat\rho$ is $O(1/T)$, but the horizon-$h$ impulse response is a compounding function of it, so the bias in the object plotted here is of order $h/T$; at $h=10$ and $T=51$ that is not negligible. Correcting it---by split-panel jackknife or an analytic adjustment---would plausibly shift the levels in both at the expense of even noisier estimates. We leave this for future work.

\subsection{Sensitivity to the Context Window}\label{app-app:app_context_window}

Section~\ref{sec:application} uses the full document as context for both corpus-trained arms and SVD-$\beta$, on the grounds that positions within a document are exchangeable under the model, so $R$ does not depend on the window in population. Figure~\ref{app-fig:sensitivity_context_window} tests that choice against $J \in \{5,10,20,50\}$, holding $L=5$.

Neither corpus CBOW nor corpus SGNS exhibit a clear trend. SVD-$\beta$ does trend. It holds $21.8\times$ at the full document and $21.7\times$ at $J=50$, but falls to $12.1\times$ at $J=5$---roughly half. This is in line with our findings from our numerical exercise: 
With $V$ in the low thousands, $\widehat R$ has on the order of six million cells, and the ordered word--context pairs the 363 descriptions supply fall from roughly eighty million at the full-document window to a few million at $J=5$. With that sample size the estimated SVD-$\beta$ embeddings start to become somewhat noisier, reflecting the difficulty in directly matrix-factorizing the noisy sample co-occurrence structure.

One implementation detail matters for reading the two Word2Vec lines: gensim's window parameter is a maximum, not a fixed width. For each token the effective window is drawn uniformly between one and that maximum, which is equivalent to weighting contexts by their distance from the focus word
\citep{levy2015improving}. Any direct comparison to SVD-$\beta$ based on Figure~\ref{app-fig:sensitivity_context_window} is thus to be treated with caution.

\begin{figure}[htb!]
\centering
\includegraphics[width=0.86\textwidth]{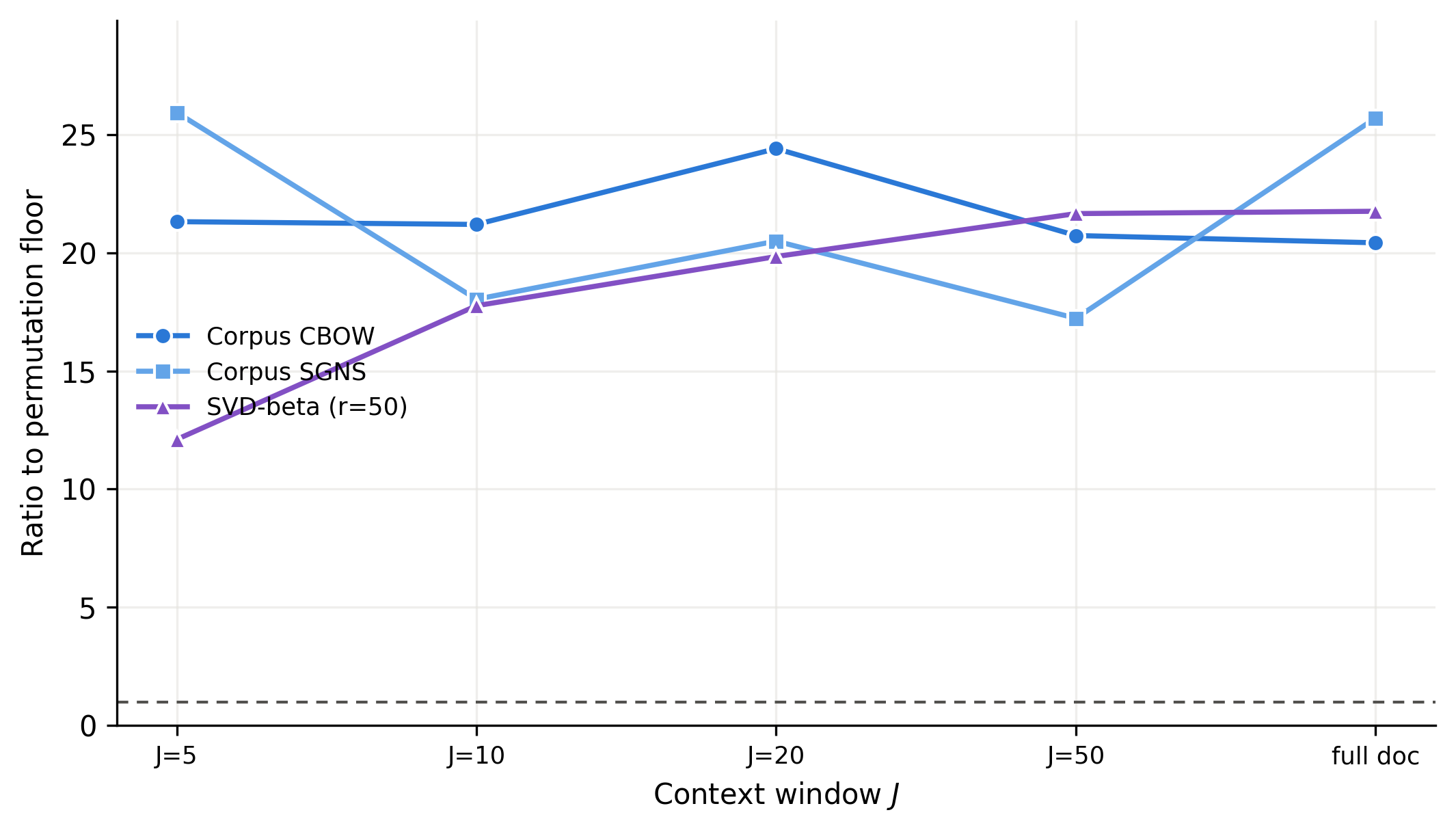}
\caption[Sensitivity to the context window]{Sensitivity to the context window $J$ at $L=5$ and $p=2$. Between-group variance of the unit-specific IRFs relative to the size-matched permutation floor, with chance at $1\times$. The SVD-$\beta$ ratio falls by about half as the window narrows to $J=5$; neither corpus-trained arm trends either way.}
\label{app-fig:sensitivity_context_window}
\end{figure}

\subsection{Dating the Observables: Covariate Vintage}\label{app-app:app_covariate_vintage}

The covariate set consists of ten variables, all at the CBSA level. Four are employment shares---manufacturing, total government, military, and federal civilian. Three are population shares---foreign-born, Black, and Hispanic. Two are education shares---the share with no high-school diploma and the share with a four-year college degree. The tenth covariate is population per square mile. We standardize each to mean zero and unit variance and then apply $k$-means with $L=5$ (\texttt{k-means++}, $500$ restarts, seed $42$).

The observables benchmark in Section~\ref{sec:ar_panel} clusters CBSAs on covariates measured at a single Census vintage. We take 1970 as the primary specification: it is the only vintage that is effectively predetermined, being measured one year into a fifty-one-year panel. In this section, I sweep the vintage to measure the contamination induced by using concurrent Census vintages. Table~\ref{app-tab:sensitivity_covariate_vintage} reports the sweep.Population density enters at 1970 throughout, being the only vintage available. The other nine (industry-mix, demographic and education shares) are avilable at 1970, 1980, 1990 and 2000 with complete coverage of the 363 CBSAs.
\begin{table}[htb!]
\centering
\small
\begin{tabular}{lrrrr}
\hline
Vintage & Years elapsed & Var(Between) & Ratio & Max share \\
\hline
1970 & 1 & 0.0209 & 15.5$\times$ & 36\% \\
1980 & 11 & 0.0238 & 17.2$\times$ & 46\% \\
1990 & 21 & 0.0243 & 17.4$\times$ & 44\% \\
2000 & 31 & 0.0248 & 18.6$\times$ & 41\% \\
\hline
\end{tabular}
\caption[Observables benchmark by covariate vintage]{Observables benchmark at each Census vintage, $L=5$ and $p=2$. ``Years elapsed'' is the number of years of the 1969--2019 outcome window already observed at the measurement date. ``Ratio'' is between-group variance over the size-matched permutation floor. All nine covariates are complete at every vintage; population density enters at 1970 throughout, being the only vintage available.}
\label{app-tab:sensitivity_covariate_vintage}
\end{table}

The ratio rises monotonically in the vintage, from $\ratioObsSeventy\times$ the permutation floor at 1970 to $\ratioObsMillennium\times$ at 2000---a gain of about $\vintageGain\%$ from dating the same nine covariates thirty years later.

We read this as a measurement of look-ahead: later measurement is itself partly an outcome of the dynamics being explained. On this metric the advantage is worth roughly a fifth of the benchmark's apparent performance. That matters beyond the choice of covariates, because the same objection applies to the text. The descriptions span 1979--2019 and so are concurrent with the outcome window rather than predetermined, a limitation Section~\ref{sec:application} states but cannot quantify. The vintage sweep gives an order of magnitude for what such concurrency can buy---roughly $\vintageGain\%$ here---which is well short of the margin by which the text-based groupings exceed the benchmark.

\subsection{Interpreting the Clusters}\label{app-app:clusters}

Table~\ref{app-tab:cluster_assignments} reports the cluster assignment for each of the 363 CBSAs under seven clustering approaches. Clusters are numbered 1--5 within each method; descriptions for the text-based methods follow below. Figure~\ref{app-fig:cluster_maps} shows the geographic distribution of cluster assignments across the continental United States.

\begin{figure}[htb!]
\centering
\includegraphics[width=.9\textwidth]{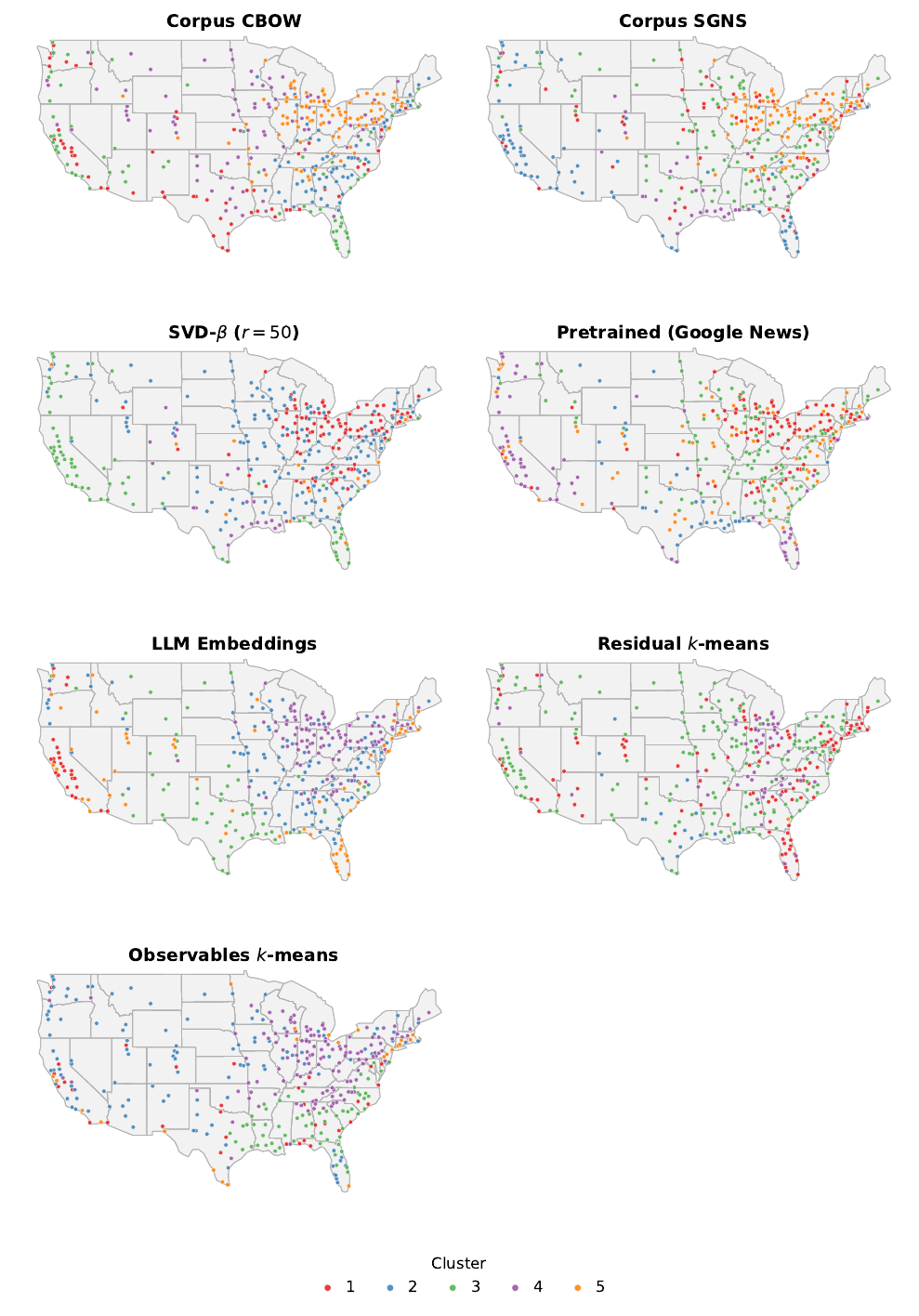}
\caption[Geographic distribution of cluster assignments]{Geographic distribution of cluster assignments for 363 CBSAs. Each dot is a CBSA, colored by its cluster. Cluster numbers are assigned within each method, so a color is comparable across CBSAs in one panel but not across panels.}
\label{app-fig:cluster_maps}
\end{figure}

{\small
\begin{longtable}{llccccccc}
\caption[Cluster assignments for all CBSAs, by method]{Cluster assignments for all 363 CBSAs: corpus CBOW (CB), corpus SGNS (CS), closed-form SVD-$\beta$ at $r=50$ (SB), pretrained Google News vectors (PW), LLM embedding (LE, $k$-means on OpenAI \texttt{text-embedding-3-large}), residual $k$-means (RK, $k$-means on pooled AR(2) residuals from Section~\ref{sec:ar_panel}), and observables $k$-means (OB, on the 1970 covariates of Section~\ref{sec:ar_panel}).Numeric labels correspond to the cluster descriptions below the table. }\label{app-tab:cluster_assignments} \\
\toprule
 & & \multicolumn{7}{c}{\textbf{Clustering Method}} \\
\cmidrule(lr){3-9}
\textbf{CBSA} & \textbf{State} & \textbf{CB} & \textbf{CS} & \textbf{SB} & \textbf{PW} & \textbf{LE} & \textbf{RK} & \textbf{OB} \\
\midrule
\endfirsthead
\multicolumn{9}{l}{\small\itshape Table~\ref{app-tab:cluster_assignments} continued} \\
\toprule
\textbf{CBSA} & \textbf{State} & \textbf{CB} & \textbf{CS} & \textbf{SB} & \textbf{PW} & \textbf{LE} & \textbf{RK} & \textbf{OB} \\
\midrule
\endhead
\midrule
\multicolumn{9}{r}{\textit{Continued on next page}} \\
\endfoot
\bottomrule
\endlastfoot
Abilene & TX & 4 & 3 & 2 & 3 & 3 & 2 & 2 \\
Akron & OH & 5 & 5 & 1 & 1 & 4 & 1 & 4 \\
Albany & GA & 2 & 3 & 2 & 3 & 2 & 1 & 3 \\
Albany-Schenectady-Troy & NY & 5 & 3 & 2 & 5 & 5 & 3 & 2 \\
Albuquerque & NM & 3 & 1 & 2 & 5 & 3 & 3 & 2 \\
Alexandria & LA & 4 & 3 & 2 & 5 & 3 & 3 & 3 \\
Allentown-Bethlehem-Easton & PA-NJ & 5 & 5 & 1 & 1 & 4 & 1 & 4 \\
Altoona & PA & 5 & 5 & 1 & 1 & 4 & 3 & 4 \\
Amarillo & TX & 4 & 3 & 2 & 3 & 3 & 3 & 2 \\
Ames & IA & 4 & 1 & 2 & 5 & 2 & 3 & 2 \\
Anderson & IN & 5 & 5 & 1 & 1 & 4 & 4 & 4 \\
Anderson & SC & 2 & 3 & 1 & 1 & 4 & 4 & 4 \\
Ann Arbor & MI & 4 & 1 & 2 & 5 & 2 & 4 & 2 \\
Anniston-Oxford & AL & 5 & 4 & 5 & 1 & 2 & 3 & 3 \\
Appleton & WI & 5 & 3 & 1 & 1 & 4 & 1 & 4 \\
Asheville & NC & 3 & 2 & 3 & 3 & 2 & 4 & 4 \\
Athens-Clarke County & GA & 2 & 3 & 2 & 5 & 2 & 1 & 3 \\
Atlanta-Sandy Springs-Marietta & GA & 2 & 1 & 2 & 3 & 5 & 1 & 3 \\
Atlantic City-Hammonton & NJ & 5 & 5 & 1 & 4 & 5 & 1 & 3 \\
Auburn-Opelika & AL & 2 & 3 & 2 & 3 & 2 & 1 & 3 \\
Augusta-Richmond County & GA-SC & 2 & 3 & 2 & 3 & 2 & 3 & 3 \\
Austin-Round Rock-San Marcos & TX & 4 & 1 & 2 & 5 & 5 & 1 & 2 \\
Bakersfield-Delano & CA & 1 & 2 & 3 & 4 & 1 & 3 & 2 \\
Baltimore-Towson & MD & 2 & 3 & 2 & 5 & 2 & 3 & 3 \\
Bangor & ME & 2 & 3 & 2 & 3 & 2 & 3 & 4 \\
Barnstable Town & MA & 3 & 2 & 3 & 4 & 5 & 1 & 2 \\
Baton Rouge & LA & 1 & 4 & 4 & 2 & 3 & 3 & 3 \\
Battle Creek & MI & 5 & 5 & 1 & 1 & 4 & 4 & 4 \\
Bay City & MI & 5 & 5 & 1 & 1 & 4 & 4 & 4 \\
Beaumont-Port Arthur & TX & 1 & 4 & 4 & 2 & 3 & 2 & 3 \\
Bellingham & WA & 3 & 2 & 3 & 4 & 2 & 3 & 2 \\
Bend & OR & 3 & 2 & 3 & 4 & 5 & 4 & 2 \\
Billings & MT & 4 & 3 & 2 & 2 & 3 & 3 & 2 \\
Binghamton & NY & 5 & 5 & 1 & 1 & 4 & 3 & 4 \\
Birmingham-Hoover & AL & 2 & 3 & 2 & 1 & 2 & 3 & 3 \\
Bismarck & ND & 4 & 3 & 2 & 2 & 3 & 3 & 2 \\
Blacksburg-Christiansburg-Radford & VA & 4 & 3 & 2 & 5 & 2 & 1 & 4 \\
Bloomington & IN & 4 & 1 & 2 & 5 & 2 & 3 & 2 \\
Bloomington-Normal & IL & 5 & 1 & 2 & 5 & 2 & 3 & 2 \\
Boise City-Nampa & ID & 4 & 1 & 2 & 3 & 5 & 3 & 2 \\
Boston-Cambridge-Quincy & MA-NH & 3 & 1 & 2 & 5 & 5 & 1 & 5 \\
Boulder & CO & 4 & 1 & 2 & 5 & 5 & 1 & 2 \\
Bowling Green & KY & 4 & 3 & 2 & 3 & 2 & 4 & 4 \\
Bremerton-Silverdale & WA & 1 & 4 & 5 & 5 & 3 & 3 & 1 \\
Bridgeport-Stamford-Norwalk & CT & 2 & 5 & 1 & 3 & 5 & 1 & 5 \\
Brownsville-Harlingen & TX & 1 & 2 & 3 & 4 & 3 & 3 & 5 \\
Brunswick & GA & 2 & 3 & 3 & 3 & 2 & 1 & 3 \\
Buffalo-Niagara Falls & NY & 5 & 5 & 1 & 1 & 4 & 3 & 5 \\
Burlington & NC & 2 & 5 & 1 & 3 & 2 & 1 & 4 \\
Burlington-South Burlington & VT & 4 & 1 & 2 & 5 & 5 & 1 & 2 \\
Canton-Massillon & OH & 5 & 5 & 1 & 1 & 4 & 1 & 4 \\
Cape Coral-Fort Myers & FL & 3 & 2 & 3 & 4 & 5 & 1 & 2 \\
Cape Girardeau-Jackson & MO-IL & 4 & 3 & 2 & 3 & 2 & 3 & 4 \\
Carson City & NV & 4 & 3 & 2 & 5 & 2 & 4 & 2 \\
Casper & WY & 4 & 4 & 4 & 2 & 3 & 2 & 2 \\
Cedar Rapids & IA & 4 & 3 & 2 & 3 & 2 & 3 & 4 \\
Champaign-Urbana & IL & 4 & 1 & 2 & 5 & 2 & 3 & 2 \\
Charleston & WV & 5 & 5 & 1 & 1 & 4 & 3 & 4 \\
Charleston-North Charleston-Summerville & SC & 3 & 1 & 2 & 3 & 5 & 3 & 1 \\
Charlotte-Gastonia-Rock Hill & NC-SC & 2 & 3 & 2 & 3 & 5 & 1 & 4 \\
Charlottesville & VA & 3 & 1 & 2 & 5 & 2 & 1 & 3 \\
Chattanooga & TN-GA & 2 & 5 & 1 & 3 & 4 & 4 & 4 \\
Cheyenne & WY & 1 & 4 & 2 & 2 & 3 & 3 & 2 \\
Chicago-Joliet-Naperville & IL-IN-WI & 2 & 5 & 1 & 3 & 4 & 1 & 5 \\
Chico & CA & 4 & 2 & 3 & 4 & 1 & 3 & 2 \\
Cincinnati-Middletown & OH-KY-IN & 2 & 3 & 2 & 3 & 4 & 1 & 4 \\
Clarksville & TN-KY & 1 & 4 & 5 & 3 & 3 & 4 & 1 \\
Cleveland & TN & 2 & 3 & 2 & 3 & 2 & 4 & 4 \\
Cleveland-Elyria-Mentor & OH & 5 & 5 & 1 & 1 & 4 & 1 & 5 \\
Coeur d'Alene & ID & 3 & 2 & 3 & 4 & 5 & 4 & 2 \\
College Station-Bryan & TX & 4 & 1 & 2 & 5 & 3 & 2 & 3 \\
Colorado Springs & CO & 4 & 4 & 5 & 5 & 3 & 1 & 1 \\
Columbia & MO & 4 & 3 & 2 & 5 & 2 & 3 & 2 \\
Columbia & SC & 2 & 3 & 2 & 5 & 2 & 1 & 3 \\
Columbus & GA-AL & 2 & 3 & 2 & 3 & 2 & 3 & 1 \\
Columbus & IN & 4 & 5 & 1 & 1 & 4 & 4 & 4 \\
Columbus & OH & 4 & 3 & 2 & 3 & 2 & 1 & 2 \\
Corpus Christi & TX & 1 & 4 & 4 & 2 & 3 & 2 & 2 \\
Corvallis & OR & 4 & 1 & 2 & 5 & 2 & 3 & 2 \\
Crestview-Fort Walton Beach-Destin & FL & 3 & 2 & 3 & 4 & 5 & 3 & 1 \\
Cumberland & MD-WV & 5 & 5 & 1 & 3 & 4 & 3 & 4 \\
Dallas-Fort Worth-Arlington & TX & 4 & 1 & 2 & 3 & 5 & 1 & 4 \\
Dalton & GA & 5 & 5 & 1 & 1 & 4 & 4 & 4 \\
Danville & IL & 5 & 5 & 1 & 1 & 4 & 4 & 4 \\
Danville & VA & 5 & 5 & 1 & 1 & 4 & 4 & 3 \\
Davenport-Moline-Rock Island & IA-IL & 5 & 5 & 1 & 1 & 4 & 3 & 4 \\
Dayton & OH & 5 & 5 & 1 & 1 & 4 & 1 & 4 \\
Decatur & AL & 5 & 5 & 1 & 1 & 4 & 1 & 4 \\
Decatur & IL & 5 & 5 & 1 & 1 & 4 & 3 & 4 \\
Deltona-Daytona Beach-Ormond Beach & FL & 3 & 2 & 3 & 4 & 5 & 1 & 2 \\
Denver-Aurora-Broomfield & CO & 4 & 1 & 4 & 2 & 5 & 1 & 2 \\
Des Moines-West Des Moines & IA & 4 & 1 & 2 & 3 & 5 & 3 & 2 \\
Detroit-Warren-Livonia & MI & 5 & 5 & 1 & 1 & 4 & 4 & 5 \\
Dothan & AL & 2 & 3 & 2 & 3 & 2 & 4 & 3 \\
Dover & DE & 1 & 3 & 2 & 5 & 2 & 3 & 1 \\
Dubuque & IA & 4 & 3 & 2 & 1 & 4 & 3 & 4 \\
Duluth & MN-WI & 5 & 5 & 1 & 1 & 4 & 3 & 2 \\
Durham-Chapel Hill & NC & 4 & 1 & 2 & 5 & 2 & 3 & 3 \\
Eau Claire & WI & 4 & 3 & 2 & 3 & 2 & 3 & 4 \\
El Centro & CA & 1 & 2 & 3 & 4 & 1 & 3 & 5 \\
Elizabethtown & KY & 1 & 3 & 2 & 3 & 2 & 3 & 1 \\
Elkhart-Goshen & IN & 4 & 2 & 2 & 1 & 4 & 4 & 4 \\
Elmira & NY & 5 & 5 & 1 & 1 & 4 & 1 & 4 \\
El Paso & TX & 1 & 2 & 2 & 1 & 3 & 3 & 5 \\
Erie & PA & 5 & 5 & 1 & 1 & 4 & 3 & 4 \\
Eugene-Springfield & OR & 4 & 3 & 3 & 4 & 2 & 4 & 2 \\
Evansville & IN-KY & 2 & 5 & 1 & 3 & 4 & 3 & 4 \\
Fargo & ND-MN & 4 & 1 & 2 & 3 & 2 & 3 & 2 \\
Farmington & NM & 1 & 4 & 4 & 2 & 3 & 2 & 2 \\
Fayetteville & NC & 1 & 4 & 5 & 5 & 3 & 3 & 1 \\
Fayetteville-Springdale-Rogers & AR-MO & 4 & 1 & 2 & 3 & 2 & 1 & 4 \\
Flagstaff & AZ & 3 & 2 & 3 & 4 & 2 & 1 & 2 \\
Flint & MI & 5 & 5 & 1 & 1 & 4 & 4 & 4 \\
Florence & SC & 2 & 3 & 2 & 3 & 2 & 1 & 3 \\
Florence-Muscle Shoals & AL & 5 & 3 & 1 & 3 & 2 & 3 & 4 \\
Fond du Lac & WI & 4 & 3 & 2 & 1 & 4 & 3 & 4 \\
Fort Collins-Loveland & CO & 4 & 1 & 2 & 3 & 5 & 1 & 2 \\
Fort Smith & AR-OK & 5 & 5 & 1 & 1 & 4 & 4 & 4 \\
Fort Wayne & IN & 5 & 5 & 1 & 3 & 4 & 4 & 4 \\
Fresno & CA & 1 & 2 & 3 & 4 & 1 & 3 & 2 \\
Gadsden & AL & 5 & 5 & 1 & 1 & 4 & 4 & 4 \\
Gainesville & FL & 4 & 1 & 2 & 5 & 2 & 1 & 2 \\
Gainesville & GA & 3 & 2 & 3 & 3 & 2 & 1 & 4 \\
Glens Falls & NY & 5 & 3 & 2 & 3 & 4 & 3 & 4 \\
Goldsboro & NC & 1 & 3 & 2 & 4 & 3 & 3 & 3 \\
Grand Forks & ND-MN & 4 & 3 & 2 & 3 & 3 & 3 & 5 \\
Grand Junction & CO & 4 & 2 & 4 & 2 & 3 & 2 & 2 \\
Grand Rapids-Wyoming & MI & 4 & 1 & 2 & 3 & 4 & 4 & 4 \\
Great Falls & MT & 4 & 3 & 2 & 2 & 3 & 3 & 2 \\
Greeley & CO & 1 & 2 & 4 & 2 & 3 & 1 & 2 \\
Green Bay & WI & 5 & 3 & 2 & 3 & 4 & 3 & 4 \\
Greensboro-High Point & NC & 2 & 5 & 1 & 1 & 4 & 4 & 4 \\
Greenville & NC & 2 & 3 & 2 & 3 & 2 & 3 & 3 \\
Greenville-Mauldin-Easley & SC & 2 & 5 & 1 & 3 & 4 & 1 & 4 \\
Gulfport-Biloxi & MS & 1 & 4 & 4 & 2 & 5 & 3 & 1 \\
Hagerstown-Martinsburg & MD-WV & 5 & 3 & 2 & 3 & 2 & 3 & 4 \\
Hanford-Corcoran & CA & 1 & 2 & 3 & 4 & 1 & 3 & 1 \\
Harrisburg-Carlisle & PA & 4 & 3 & 2 & 3 & 2 & 3 & 2 \\
Harrisonburg & VA & 4 & 3 & 2 & 3 & 2 & 3 & 4 \\
Hartford-West Hartford-East Hartford & CT & 5 & 1 & 2 & 1 & 5 & 1 & 5 \\
Hattiesburg & MS & 2 & 3 & 2 & 3 & 2 & 3 & 3 \\
Hickory-Lenoir-Morganton & NC & 5 & 5 & 1 & 1 & 4 & 4 & 4 \\
Hinesville-Fort Stewart & GA & 1 & 4 & 5 & 5 & 3 & 5 & 1 \\
Holland-Grand Haven & MI & 4 & 1 & 2 & 1 & 4 & 4 & 4 \\
Hot Springs & AR & 3 & 2 & 3 & 4 & 2 & 4 & 4 \\
Houma-Bayou Cane-Thibodaux & LA & 1 & 4 & 4 & 2 & 3 & 2 & 3 \\
Houston-Sugar Land-Baytown & TX & 4 & 4 & 4 & 2 & 3 & 2 & 3 \\
Huntington-Ashland & WV-KY-OH & 5 & 5 & 1 & 1 & 4 & 3 & 4 \\
Huntsville & AL & 2 & 1 & 5 & 5 & 2 & 1 & 1 \\
Idaho Falls & ID & 4 & 3 & 2 & 3 & 3 & 3 & 2 \\
Indianapolis-Carmel & IN & 2 & 5 & 2 & 3 & 4 & 1 & 4 \\
Iowa City & IA & 4 & 1 & 2 & 5 & 2 & 3 & 2 \\
Ithaca & NY & 4 & 1 & 2 & 5 & 2 & 3 & 2 \\
Jackson & MI & 5 & 5 & 1 & 1 & 4 & 4 & 4 \\
Jackson & MS & 2 & 3 & 2 & 5 & 2 & 1 & 3 \\
Jackson & TN & 2 & 3 & 2 & 3 & 2 & 4 & 3 \\
Jacksonville & FL & 2 & 1 & 2 & 3 & 2 & 1 & 3 \\
Jacksonville & NC & 1 & 4 & 5 & 5 & 3 & 3 & 1 \\
Janesville & WI & 5 & 5 & 1 & 1 & 4 & 4 & 4 \\
Jefferson City & MO & 4 & 3 & 2 & 5 & 2 & 3 & 2 \\
Johnson City & TN & 2 & 3 & 2 & 3 & 2 & 4 & 4 \\
Johnstown & PA & 5 & 5 & 1 & 1 & 4 & 3 & 4 \\
Jonesboro & AR & 2 & 3 & 2 & 3 & 2 & 3 & 4 \\
Joplin & MO & 5 & 3 & 2 & 3 & 2 & 4 & 4 \\
Kalamazoo-Portage & MI & 5 & 1 & 2 & 3 & 2 & 3 & 4 \\
Kankakee-Bradley & IL & 5 & 5 & 1 & 3 & 4 & 3 & 4 \\
Kansas City & MO-KS & 4 & 3 & 2 & 3 & 2 & 1 & 2 \\
Kennewick-Pasco-Richland & WA & 1 & 2 & 2 & 4 & 3 & 3 & 2 \\
Killeen-Temple-Fort Hood & TX & 1 & 4 & 5 & 5 & 3 & 3 & 1 \\
Kingsport-Bristol-Bristol & TN-VA & 5 & 5 & 1 & 3 & 4 & 3 & 4 \\
Kingston & NY & 3 & 5 & 1 & 1 & 4 & 3 & 4 \\
Knoxville & TN & 2 & 3 & 2 & 3 & 2 & 3 & 4 \\
Kokomo & IN & 5 & 5 & 1 & 1 & 4 & 4 & 4 \\
La Crosse & WI-MN & 4 & 3 & 2 & 3 & 2 & 3 & 4 \\
Lafayette & IN & 4 & 1 & 2 & 1 & 4 & 3 & 2 \\
Lafayette & LA & 4 & 4 & 4 & 2 & 3 & 2 & 3 \\
Lake Charles & LA & 1 & 4 & 4 & 2 & 3 & 3 & 3 \\
Lake Havasu City-Kingman & AZ & 3 & 2 & 3 & 4 & 5 & 1 & 2 \\
Lakeland-Winter Haven & FL & 3 & 2 & 3 & 4 & 5 & 1 & 3 \\
Lancaster & PA & 4 & 1 & 2 & 3 & 2 & 1 & 4 \\
Lansing-East Lansing & MI & 4 & 3 & 2 & 5 & 2 & 3 & 2 \\
Laredo & TX & 1 & 2 & 3 & 4 & 3 & 2 & 5 \\
Las Cruces & NM & 1 & 2 & 3 & 4 & 3 & 3 & 1 \\
Las Vegas-Paradise & NV & 3 & 2 & 3 & 4 & 5 & 1 & 2 \\
Lawrence & KS & 3 & 1 & 2 & 5 & 2 & 3 & 2 \\
Lawton & OK & 1 & 4 & 5 & 5 & 3 & 3 & 1 \\
Lebanon & PA & 5 & 5 & 1 & 1 & 4 & 3 & 4 \\
Lewiston & ID-WA & 1 & 3 & 2 & 3 & 2 & 3 & 4 \\
Lewiston-Auburn & ME & 2 & 5 & 1 & 3 & 4 & 1 & 4 \\
Lexington-Fayette & KY & 2 & 3 & 2 & 3 & 2 & 4 & 2 \\
Lima & OH & 5 & 5 & 1 & 1 & 4 & 4 & 4 \\
Lincoln & NE & 4 & 3 & 2 & 5 & 2 & 3 & 2 \\
Little Rock-North Little Rock-Conway & AR & 4 & 3 & 2 & 5 & 2 & 3 & 3 \\
Logan & UT-ID & 4 & 1 & 2 & 3 & 2 & 3 & 2 \\
Longview & TX & 1 & 4 & 4 & 2 & 3 & 2 & 3 \\
Longview & WA & 1 & 2 & 3 & 2 & 4 & 1 & 4 \\
Los Angeles-Long Beach-Santa Ana & CA & 3 & 2 & 3 & 1 & 5 & 1 & 5 \\
Louisville/Jefferson County & KY-IN & 2 & 3 & 2 & 3 & 2 & 1 & 4 \\
Lubbock & TX & 4 & 3 & 2 & 3 & 3 & 3 & 2 \\
Lynchburg & VA & 2 & 3 & 2 & 3 & 2 & 1 & 4 \\
Macon & GA & 2 & 3 & 2 & 3 & 2 & 3 & 3 \\
Madera-Chowchilla & CA & 1 & 2 & 3 & 4 & 1 & 3 & 4 \\
Madison & WI & 4 & 1 & 2 & 5 & 2 & 3 & 2 \\
Manchester-Nashua & NH & 2 & 1 & 2 & 5 & 5 & 1 & 4 \\
Manhattan & KS & 1 & 1 & 5 & 5 & 2 & 3 & 1 \\
Mankato-North Mankato & MN & 4 & 3 & 2 & 3 & 2 & 3 & 2 \\
Mansfield & OH & 5 & 5 & 1 & 1 & 4 & 4 & 4 \\
McAllen-Edinburg-Mission & TX & 1 & 2 & 3 & 4 & 3 & 3 & 5 \\
Medford & OR & 3 & 2 & 3 & 4 & 2 & 4 & 2 \\
Memphis & TN-MS-AR & 2 & 3 & 2 & 3 & 2 & 1 & 3 \\
Merced & CA & 1 & 2 & 3 & 4 & 1 & 3 & 1 \\
Miami-Fort Lauderdale-Pompano Beach & FL & 3 & 2 & 3 & 4 & 5 & 1 & 5 \\
Michigan City-La Porte & IN & 5 & 5 & 1 & 3 & 4 & 1 & 4 \\
Midland & TX & 1 & 4 & 4 & 2 & 3 & 2 & 2 \\
Milwaukee-Waukesha-West Allis & WI & 2 & 5 & 1 & 3 & 4 & 1 & 5 \\
Minneapolis-St. Paul-Bloomington & MN-WI & 2 & 1 & 2 & 5 & 5 & 1 & 2 \\
Missoula & MT & 3 & 2 & 3 & 3 & 2 & 3 & 2 \\
Mobile & AL & 2 & 3 & 2 & 2 & 2 & 3 & 3 \\
Modesto & CA & 1 & 2 & 3 & 4 & 1 & 3 & 4 \\
Monroe & LA & 2 & 3 & 2 & 3 & 2 & 3 & 3 \\
Monroe & MI & 5 & 5 & 1 & 1 & 4 & 4 & 4 \\
Montgomery & AL & 2 & 3 & 2 & 3 & 2 & 3 & 3 \\
Morgantown & WV & 4 & 3 & 2 & 5 & 2 & 3 & 2 \\
Morristown & TN & 2 & 3 & 1 & 1 & 4 & 4 & 4 \\
Mount Vernon-Anacortes & WA & 1 & 2 & 3 & 4 & 1 & 4 & 2 \\
Muncie & IN & 5 & 5 & 1 & 1 & 4 & 4 & 4 \\
Muskegon-Norton Shores & MI & 5 & 5 & 1 & 1 & 4 & 1 & 4 \\
Myrtle Beach-North Myrtle Beach-Conway & SC & 3 & 2 & 3 & 4 & 5 & 3 & 3 \\
Napa & CA & 3 & 2 & 3 & 4 & 1 & 3 & 2 \\
Naples-Marco Island & FL & 3 & 2 & 3 & 4 & 5 & 4 & 2 \\
Nashville-Davidson--Murfreesboro--Franklin & TN & 3 & 1 & 2 & 3 & 2 & 4 & 3 \\
New Haven-Milford & CT & 2 & 5 & 1 & 5 & 5 & 1 & 5 \\
New Orleans-Metairie-Kenner & LA & 3 & 4 & 4 & 2 & 5 & 3 & 3 \\
New York-Northern New Jersey-Long Island & NY-NJ-PA & 3 & 1 & 2 & 5 & 5 & 1 & 5 \\
Niles-Benton Harbor & MI & 5 & 5 & 1 & 3 & 4 & 4 & 4 \\
North Port-Bradenton-Sarasota & FL & 3 & 2 & 3 & 4 & 5 & 4 & 2 \\
Norwich-New London & CT & 1 & 4 & 5 & 1 & 5 & 3 & 2 \\
Ocala & FL & 3 & 2 & 3 & 4 & 2 & 1 & 3 \\
Ocean City & NJ & 3 & 2 & 3 & 4 & 5 & 3 & 2 \\
Odessa & TX & 1 & 4 & 4 & 2 & 3 & 2 & 2 \\
Ogden-Clearfield & UT & 4 & 3 & 2 & 3 & 5 & 3 & 1 \\
Oklahoma City & OK & 4 & 4 & 4 & 2 & 3 & 3 & 2 \\
Olympia & WA & 1 & 3 & 2 & 5 & 2 & 3 & 2 \\
Omaha-Council Bluffs & NE-IA & 4 & 1 & 2 & 3 & 2 & 3 & 2 \\
Orlando-Kissimmee-Sanford & FL & 3 & 2 & 3 & 4 & 5 & 1 & 2 \\
Oshkosh-Neenah & WI & 5 & 3 & 2 & 3 & 4 & 3 & 4 \\
Owensboro & KY & 5 & 3 & 1 & 3 & 4 & 1 & 4 \\
Oxnard-Thousand Oaks-Ventura & CA & 3 & 2 & 3 & 4 & 1 & 3 & 2 \\
Palm Bay-Melbourne-Titusville & FL & 3 & 1 & 5 & 5 & 5 & 1 & 2 \\
Palm Coast & FL & 3 & 2 & 3 & 4 & 5 & 1 & 3 \\
Panama City-Lynn Haven-Panama City Beach & FL & 3 & 2 & 3 & 4 & 5 & 3 & 1 \\
Parkersburg-Marietta-Vienna & WV-OH & 5 & 5 & 1 & 1 & 4 & 4 & 4 \\
Pascagoula & MS & 1 & 4 & 5 & 2 & 3 & 2 & 4 \\
Pensacola-Ferry Pass-Brent & FL & 1 & 4 & 3 & 2 & 3 & 3 & 1 \\
Peoria & IL & 5 & 5 & 1 & 1 & 4 & 3 & 4 \\
Philadelphia-Camden-Wilmington & PA-NJ-DE-MD & 2 & 5 & 2 & 3 & 5 & 1 & 5 \\
Phoenix-Mesa-Glendale & AZ & 3 & 2 & 3 & 4 & 5 & 1 & 2 \\
Pine Bluff & AR & 5 & 5 & 1 & 1 & 4 & 3 & 3 \\
Pittsburgh & PA & 5 & 5 & 1 & 1 & 4 & 3 & 4 \\
Pittsfield & MA & 5 & 5 & 1 & 1 & 4 & 1 & 4 \\
Pocatello & ID & 5 & 3 & 1 & 1 & 2 & 3 & 2 \\
Portland-South Portland-Biddeford & ME & 3 & 3 & 2 & 3 & 5 & 1 & 2 \\
Portland-Vancouver-Hillsboro & OR-WA & 3 & 1 & 3 & 3 & 5 & 1 & 2 \\
Port St. Lucie & FL & 3 & 2 & 3 & 4 & 5 & 1 & 3 \\
Poughkeepsie-Newburgh-Middletown & NY & 3 & 3 & 2 & 3 & 4 & 3 & 2 \\
Prescott & AZ & 3 & 2 & 3 & 4 & 3 & 4 & 2 \\
Providence-New Bedford-Fall River & RI-MA & 2 & 5 & 1 & 1 & 4 & 1 & 5 \\
Provo-Orem & UT & 4 & 1 & 2 & 5 & 5 & 1 & 2 \\
Pueblo & CO & 5 & 5 & 1 & 1 & 4 & 3 & 2 \\
Punta Gorda & FL & 3 & 2 & 3 & 4 & 5 & 4 & 2 \\
Racine & WI & 5 & 5 & 1 & 1 & 4 & 1 & 4 \\
Raleigh-Cary & NC & 3 & 1 & 2 & 5 & 5 & 1 & 3 \\
Rapid City & SD & 3 & 3 & 2 & 4 & 3 & 3 & 2 \\
Reading & PA & 5 & 5 & 1 & 1 & 4 & 1 & 4 \\
Redding & CA & 3 & 3 & 3 & 4 & 1 & 3 & 2 \\
Reno-Sparks & NV & 3 & 2 & 3 & 2 & 5 & 1 & 2 \\
Richmond & VA & 2 & 3 & 2 & 5 & 2 & 1 & 3 \\
Riverside-San Bernardino-Ontario & CA & 1 & 2 & 3 & 4 & 5 & 1 & 2 \\
Roanoke & VA & 5 & 3 & 2 & 3 & 2 & 1 & 4 \\
Rochester & MN & 4 & 1 & 2 & 5 & 2 & 3 & 2 \\
Rochester & NY & 5 & 5 & 1 & 1 & 2 & 3 & 4 \\
Rockford & IL & 5 & 5 & 1 & 1 & 4 & 4 & 4 \\
Rocky Mount & NC & 2 & 5 & 1 & 1 & 4 & 4 & 3 \\
Rome & GA & 2 & 3 & 2 & 3 & 2 & 4 & 4 \\
Sacramento--Arden-Arcade--Roseville & CA & 3 & 2 & 3 & 5 & 1 & 3 & 2 \\
Saginaw-Saginaw Township North & MI & 5 & 5 & 1 & 1 & 4 & 4 & 4 \\
St. Cloud & MN & 4 & 3 & 2 & 3 & 2 & 3 & 4 \\
St. George & UT & 3 & 2 & 3 & 4 & 5 & 1 & 2 \\
St. Joseph & MO-KS & 5 & 3 & 2 & 3 & 2 & 3 & 4 \\
St. Louis & MO-IL & 5 & 5 & 2 & 3 & 2 & 1 & 4 \\
Salem & OR & 3 & 3 & 3 & 4 & 2 & 3 & 2 \\
Salinas & CA & 3 & 2 & 3 & 4 & 1 & 3 & 1 \\
Salisbury & MD & 4 & 3 & 2 & 3 & 2 & 1 & 3 \\
Salt Lake City & UT & 4 & 1 & 2 & 5 & 5 & 1 & 2 \\
San Angelo & TX & 1 & 4 & 2 & 2 & 3 & 3 & 2 \\
San Antonio-New Braunfels & TX & 1 & 1 & 2 & 5 & 3 & 3 & 1 \\
San Diego-Carlsbad-San Marcos & CA & 3 & 1 & 3 & 5 & 5 & 1 & 1 \\
Sandusky & OH & 5 & 5 & 1 & 1 & 4 & 4 & 4 \\
San Francisco-Oakland-Fremont & CA & 3 & 1 & 3 & 1 & 5 & 1 & 5 \\
San Jose-Sunnyvale-Santa Clara & CA & 3 & 1 & 3 & 5 & 5 & 1 & 5 \\
San Luis Obispo-Paso Robles & CA & 3 & 2 & 3 & 4 & 1 & 3 & 2 \\
Santa Barbara-Santa Maria-Goleta & CA & 3 & 2 & 3 & 4 & 1 & 3 & 2 \\
Santa Cruz-Watsonville & CA & 3 & 2 & 3 & 4 & 1 & 3 & 2 \\
Santa Fe & NM & 3 & 2 & 3 & 5 & 3 & 3 & 2 \\
Santa Rosa-Petaluma & CA & 3 & 2 & 3 & 4 & 1 & 1 & 2 \\
Savannah & GA & 2 & 3 & 2 & 3 & 2 & 3 & 3 \\
Scranton--Wilkes-Barre & PA & 5 & 5 & 1 & 1 & 4 & 1 & 4 \\
Seattle-Tacoma-Bellevue & WA & 3 & 1 & 2 & 5 & 5 & 1 & 2 \\
Sebastian-Vero Beach & FL & 3 & 2 & 3 & 4 & 5 & 4 & 3 \\
Sheboygan & WI & 5 & 3 & 1 & 1 & 4 & 1 & 4 \\
Sherman-Denison & TX & 4 & 3 & 2 & 3 & 5 & 1 & 4 \\
Shreveport-Bossier City & LA & 1 & 4 & 4 & 2 & 3 & 3 & 3 \\
Sioux City & IA-NE-SD & 4 & 3 & 2 & 4 & 4 & 3 & 4 \\
Sioux Falls & SD & 4 & 1 & 2 & 3 & 2 & 3 & 2 \\
South Bend-Mishawaka & IN-MI & 5 & 5 & 1 & 1 & 4 & 4 & 4 \\
Spartanburg & SC & 2 & 5 & 1 & 1 & 4 & 1 & 4 \\
Spokane & WA & 4 & 3 & 2 & 3 & 2 & 1 & 2 \\
Springfield & IL & 5 & 3 & 2 & 5 & 2 & 3 & 2 \\
Springfield & MA & 5 & 5 & 1 & 1 & 4 & 1 & 4 \\
Springfield & MO & 4 & 3 & 2 & 3 & 2 & 1 & 4 \\
Springfield & OH & 5 & 5 & 1 & 1 & 4 & 4 & 4 \\
State College & PA & 4 & 1 & 2 & 5 & 2 & 3 & 2 \\
Steubenville-Weirton & OH-WV & 5 & 5 & 1 & 1 & 4 & 3 & 4 \\
Stockton & CA & 1 & 2 & 3 & 4 & 1 & 3 & 2 \\
Sumter & SC & 2 & 3 & 1 & 3 & 2 & 3 & 3 \\
Syracuse & NY & 5 & 5 & 1 & 1 & 4 & 3 & 2 \\
Tallahassee & FL & 3 & 1 & 2 & 5 & 2 & 1 & 3 \\
Tampa-St. Petersburg-Clearwater & FL & 3 & 2 & 3 & 4 & 5 & 1 & 2 \\
Terre Haute & IN & 5 & 5 & 1 & 1 & 4 & 3 & 4 \\
Texarkana, TX-Texarkana & AR & 5 & 3 & 2 & 3 & 3 & 1 & 3 \\
Toledo & OH & 5 & 5 & 1 & 1 & 4 & 4 & 4 \\
Topeka & KS & 4 & 3 & 2 & 3 & 2 & 3 & 2 \\
Trenton-Ewing & NJ & 2 & 5 & 1 & 5 & 2 & 1 & 5 \\
Tucson & AZ & 1 & 2 & 3 & 5 & 3 & 1 & 2 \\
Tulsa & OK & 4 & 4 & 4 & 2 & 3 & 2 & 4 \\
Tuscaloosa & AL & 2 & 3 & 2 & 1 & 2 & 3 & 3 \\
Tyler & TX & 4 & 3 & 2 & 3 & 3 & 3 & 3 \\
Utica-Rome & NY & 5 & 5 & 1 & 1 & 4 & 3 & 4 \\
Valdosta & GA & 2 & 3 & 2 & 3 & 2 & 3 & 3 \\
Vallejo-Fairfield & CA & 1 & 2 & 3 & 5 & 1 & 3 & 1 \\
Victoria & TX & 1 & 4 & 4 & 2 & 3 & 2 & 4 \\
Vineland-Millville-Bridgeton & NJ & 2 & 5 & 1 & 1 & 4 & 1 & 4 \\
Virginia Beach-Norfolk-Newport News & VA-NC & 1 & 4 & 5 & 2 & 5 & 3 & 1 \\
Visalia-Porterville & CA & 1 & 2 & 3 & 4 & 1 & 3 & 2 \\
Waco & TX & 4 & 3 & 2 & 3 & 3 & 3 & 3 \\
Warner Robins & GA & 1 & 4 & 5 & 5 & 3 & 3 & 1 \\
Washington-Arlington-Alexandria & DC-VA-MD-WV & 3 & 1 & 2 & 5 & 5 & 3 & 1 \\
Waterloo-Cedar Falls & IA & 5 & 3 & 2 & 1 & 4 & 3 & 4 \\
Wausau & WI & 4 & 3 & 2 & 3 & 2 & 3 & 4 \\
Wenatchee-East Wenatchee & WA & 3 & 2 & 3 & 4 & 1 & 3 & 2 \\
Wheeling & WV-OH & 5 & 5 & 1 & 1 & 4 & 3 & 4 \\
Wichita & KS & 5 & 3 & 1 & 1 & 3 & 1 & 2 \\
Wichita Falls & TX & 4 & 3 & 2 & 2 & 3 & 3 & 1 \\
Williamsport & PA & 5 & 5 & 1 & 1 & 4 & 1 & 4 \\
Wilmington & NC & 3 & 1 & 2 & 3 & 2 & 1 & 3 \\
Winchester & VA-WV & 4 & 3 & 2 & 3 & 2 & 4 & 4 \\
Winston-Salem & NC & 2 & 1 & 1 & 1 & 2 & 1 & 4 \\
Worcester & MA & 2 & 5 & 1 & 3 & 4 & 1 & 4 \\
Yakima & WA & 1 & 2 & 3 & 4 & 1 & 3 & 2 \\
York-Hanover & PA & 5 & 3 & 1 & 3 & 4 & 1 & 4 \\
Youngstown-Warren-Boardman & OH-PA & 5 & 5 & 1 & 1 & 4 & 4 & 4 \\
Yuba City & CA & 1 & 2 & 3 & 4 & 1 & 3 & 1 \\
Yuma & AZ & 1 & 2 & 3 & 4 & 1 & 3 & 1 \\
\end{longtable}
}

\subsection*{Cluster Descriptions by Method}

\paragraph{Corpus CBOW (CB).}
\begin{enumerate}[nosep]
    \item \emph{Amenity Migration and Population-Led Growth} (\sizeCBi{} CBSAs). Sun Belt and Western metros whose growth is driven by in-migration of retirees and lifestyle migrants rather than by industry. Construction, real estate, retail, and healthcare are the core employers, which leaves these labor markets acutely exposed to housing cycles and to the 2008 shock. Examples: Asheville, Austin, Boise~City, Cape~Coral, Dallas.

    \item \emph{University and Hospital Anchors} (\sizeCBii{} CBSAs). Flagship universities and major hospital systems dominate employment. Unemployment runs persistently below national averages and educational attainment is high, but the labor market is bifurcated between credentialed professionals and low-wage student and service workers. Examples: Ames, Ann~Arbor, Boston, Boulder, Champaign--Urbana.

    \item \emph{Regional Hubs in Manufacturing Transition} (\sizeCBiii{} CBSAs). Legacy manufacturing---textiles, paper, steel, timber---has declined since 1979 and been partly replaced by ``eds and meds,'' logistics along interstate and port corridors, and government or defense spending. These metros serve as retail and medical hubs for rural hinterlands, with below-average wages and uneven recoveries. Examples: Allentown, Birmingham, Charlotte, Chattanooga, Cincinnati.

    \item \emph{Severe Deindustrialization} (\sizeCBiv{} CBSAs). Steel, auto, textile, and tire dependence followed by heavy job losses after 1979. Loss of high-wage unionized work, wage stagnation, population decline, and out-migration of younger workers, with ``eds and meds,'' retail, and logistics as partial replacements. Examples: Akron, Buffalo, Chicago, Cleveland, Detroit.

    \item \emph{Agriculture and Resource Extraction} (\sizeCBv{} CBSAs). Economies anchored in agriculture, oil and gas, timber, or mining, with boom-bust cycles tied to commodity prices and weather. Seasonal low-wage work, chronically high unemployment and poverty, and large Hispanic and immigrant workforces, often near the border. Examples: Bakersfield, Brownsville, Corpus~Christi, El~Paso, Fresno.
\end{enumerate}

\paragraph{Corpus SGNS (CS).}
\begin{enumerate}[nosep]
    \item \emph{Knowledge and Anchor-Institution Economies} (\sizeCSi{} CBSAs). Universities, hospital systems and research employers dominate, with a shift out of manufacturing, resource extraction and defense into knowledge work and services. Unemployment is persistently low and recoveries---2008 in particular---are fast, but growth arrives with housing pressure and a labor market split between credentialed and low-wage service work. Examples: Albuquerque, Ames, Ann~Arbor, Atlanta, Austin.

    \item \emph{Amenity Migration and Tourism} (\sizeCSii{} CBSAs). Sun Belt, Western and coastal destinations growing through retiree and lifestyle in-migration rather than through industry. Tourism, hospitality, retail and healthcare dominate, construction and real estate amplify the cycle, and the 2008 housing shock hit hard. Seasonal low-wage work and large Hispanic workforces are common. Examples: Asheville, Bakersfield, Barnstable~Town, Bend, Cape~Coral.

    \item \emph{Regional Hubs in Transition} (\sizeCSiii{} CBSAs). The largest group: mid-sized metros that lost textiles, paper, timber, mining or tobacco employment and replaced part of it with ``eds and meds,'' logistics along interstate corridors, and state or federal anchors. Each serves as the retail and medical hub of a broad rural hinterland. Examples: Abilene, Albany~(GA), Albany--Schenectady--Troy, Amarillo, Appleton.

    \item \emph{Defense and Energy Single-Sector Metros} (\sizeCSiv{} CBSAs). One dominant sector---a military installation or oil, gas and petrochemicals---rather than a diversified base, so employment tracks Pentagon budgets or commodity prices. Blue-collar workforces, below-average wages, and Gulf Coast exposure to hurricanes and spills. Examples: Baton~Rouge, Beaumont--Port~Arthur, Bremerton--Silverdale, Casper, Colorado~Springs.

    \item \emph{Severe Deindustrialization} (\sizeCSv{} CBSAs). Steel, auto, textile, rubber, glass and coal employment collapsing from the early-1980s recessions onward, often around a single dominant employer. High-wage unionized work gives way to services, prisons and warehousing, alongside population loss and an aging workforce. Examples: Akron, Allentown, Altoona, Atlantic~City, Binghamton.
\end{enumerate}

\paragraph{Closed-Form SVD-$\beta$ at $r=50$ (SB).}
\begin{enumerate}[nosep]
    \item \emph{Deindustrialization} (\sizeSBi{} CBSAs). Collapse of steel, auto, textile, and machinery employment; loss of unionized blue-collar jobs, population decline, and weak post-2008 recoveries. Examples: Akron, Allentown, Binghamton, Buffalo, Canton.

    \item \emph{Anchor Institutions and Regional Hubs} (\sizeSBii{} CBSAs). ``Eds and meds'' as dominant, recession-resistant employers, often alongside state government, serving as service hubs for rural hinterlands. Low unemployment but below-average wages. Examples: Albuquerque, Ann~Arbor, Atlanta, Austin, Madison.

    \item \emph{Amenity, Tourism and Agriculture} (\sizeSBiii{} CBSAs). Sun Belt and coastal in-migration destinations driven by retirees and lifestyle migration, together with seasonal agricultural and border economies. Heavy tourism and construction exposure and a severe 2008 housing shock. Examples: Asheville, Bakersfield, Bend, Brownsville, Cape~Coral.

    \item \emph{Energy and Resource Extraction} (\sizeSBiv{} CBSAs). Oil, gas, refining, and coal dependence, with boom-bust cycles tied to commodity prices rather than the national business cycle. Examples: Baton~Rouge, Beaumont, Casper, Corpus~Christi, Houston.

    \item \emph{Military and Defense Installations} (\sizeSBv{} CBSAs). A dominant Army post, Navy base, or shipyard anchors each metro; fortunes track federal budgets and BRAC rounds rather than market conditions. Young, transient, diverse populations. Examples: Clarksville, Colorado~Springs, Fayetteville, Huntsville, Killeen.
\end{enumerate}

\paragraph{Pretrained Google News Vectors (PW).}
\begin{enumerate}[nosep]
    \item \emph{Deindustrialized Manufacturing} (\sizePWi{} CBSAs). Steel, auto, textile, rubber, and machinery dependence followed by sustained decline after 1979. High-wage unionized jobs give way to lower-paying service work, with single-employer vulnerability and sharp exposure to the 1981--82 and 2008--09 recessions. Examples: Akron, Allentown, Buffalo, Canton, Cleveland.

    \item \emph{Energy and Resource Extraction} (\sizePWii{} CBSAs). Oil, gas, coal, refining, and petrochemicals anchor these economies, so employment tracks commodity prices rather than the national cycle: the 1980s oil bust, the shale boom of the 2000s and 2010s, and the 2014--16 collapse. Diversification is incomplete. Examples: Baton~Rouge, Casper, Corpus~Christi, Houston, Lake~Charles.

    \item \emph{Regional Service Hubs} (\sizePWiii{} CBSAs). Small and mid-sized metros where legacy manufacturing and agriculture have given way to hospitals and anchor universities, retail and warehousing along interstate corridors, and tourism. They serve as hubs for multi-county rural hinterlands, with military bases and state government as stabilizers. Examples: Abilene, Asheville, Atlanta, Charlotte, Chattanooga.

    \item \emph{Seasonal and Amenity Economies} (\sizePWiv{} CBSAs). Tourism, hospitality, and recreation with sharp seasonal swings, retiree and ``snowbird'' in-migration, and heavy construction and real-estate dependence that left these metros badly exposed to the 2008 housing crash. Agriculture and extraction appear alongside. Examples: Bakersfield, Bend, Brownsville, Cape~Coral, Fresno.

    \item \emph{University and Government Centers} (\sizePWv{} CBSAs). Flagship universities, ``eds and meds,'' state capitals, federal agencies, national labs, and military bases dominate employment. Unemployment sits below national averages and these metros were cushioned in both the 1980s and 2008 downturns. Examples: Albuquerque, Ann~Arbor, Austin, Boston, Boulder.
\end{enumerate}

\paragraph{LLM Embeddings (LE).}
\begin{enumerate}[nosep]
    \item \emph{Agricultural Economies} (\sizeLEi{} CBSAs). Agriculture as the economic bedrock, often in high-value specialty crops, with limited diversification. Seasonal low-wage labor, pronounced employment cyclicality, large Latino and immigrant workforces, and above-average unemployment and poverty. Examples: Bakersfield, Fresno, Merced, Modesto, Salinas.

    \item \emph{Eds and Meds Regional Hubs} (\sizeLEii{} CBSAs). Universities and hospital systems as dominant, recession-resistant employers, following the decline of textiles, timber, steel, and tobacco. Retail, hospitality, and logistics replace blue-collar work, and these metros serve as hubs for rural hinterlands. Examples: Ann~Arbor, Asheville, Baltimore, Birmingham, Bloomington.

    \item \emph{Federal Anchors and Commodity Exposure} (\sizeLEiii{} CBSAs). Regional centers combining military bases, laboratories, and defense spending with boom-bust dependence on oil, gas, or agriculture. Healthcare and education are the growth sectors replacing extraction and manufacturing; wages sit below national averages. Examples: Albuquerque, Amarillo, Baton~Rouge, Colorado~Springs, Corpus~Christi.

    \item \emph{Heavy Industry and Deindustrialization} (\sizeLEiv{} CBSAs). Metros dependent in 1979 on steel, autos, textiles, paper, chemicals, or coal, often around a single dominant employer, that suffered plant closures and the early-1980s recession shock. Globalization, automation, and offshoring are the recurring drivers. Examples: Akron, Buffalo, Canton, Chicago, Cleveland.

    \item \emph{Diversified Service and Knowledge Economies} (\sizeLEv{} CBSAs). A shift from manufacturing and extraction toward services, with ``eds and meds'' as stable anchors and rising technology and professional-services sectors. In-migration drives growth, and construction exposure made these metros vulnerable in 2008. Examples: Atlanta, Austin, Boston, Boulder, Charlotte.
\end{enumerate}

\paragraph{Residual $k$-means (RK).}
This purely outcome-based clustering applies $k$-means with $L=5$ to the residuals from the pooled AR(2) specification with unit fixed effects (Section~\ref{sec:ar_panel}). Unlike the text-based methods, groups are formed to minimize within-group variation in employment dynamics rather than narrative similarity. The resulting clusters are less balanced than the text-based approaches---one group is a singleton and the largest contains \pctRKlargest\% of all CBSAs---but an ex-post examination of their composition reveals interpretable economic patterns:
\begin{enumerate}[nosep]
    \item \emph{Diversified, Higher-Density Metros} (\sizeRKi{} CBSAs, e.g., Atlanta, Austin, Akron, Allentown, Barnstable~Town). The densest group by some margin (1970 population density of 138 against 77--79 elsewhere), with moderate manufacturing ($20.6\%$ in 1980) and the second-highest college-educated share ($16.9\%$).

    \item \emph{Energy-Dependent} (\sizeRKii{} CBSAs, e.g., Houston, Beaumont--Port~Arthur, Corpus~Christi, Lafayette, Casper). Concentrated in Texas and Louisiana with the lowest manufacturing share of any non-singleton group ($11.8\%$); employment dynamics shaped by oil and gas cycles rather than the national one.

    \item \emph{Government-Anchored} (\sizeRKiii{} CBSAs, e.g., Albuquerque, Baltimore, Ames, Amarillo, Augusta--Richmond~County). The largest group, holding \pctRKlargest\% of the sample, with the highest government employment share ($23.4\%$) and the highest military share outside the singleton ($4.9\%$).

    \item \emph{Manufacturing-Heavy} (\sizeRKiv{} CBSAs, e.g., Battle~Creek, Bay~City, Chattanooga, Anderson, Asheville). The highest manufacturing share of any group ($26.2\%$) and the lowest college-educated share ($13.0\%$).

    \item \emph{Military Outlier} (\sizeRKv{} CBSA: Hinesville--Fort~Stewart, GA). A singleton with $61.0\%$ military and $79.3\%$ total government employment in 1980. Its impulse response is far from the pooled estimate, making it a residual outlier that no other CBSA resembles.

\end{enumerate}

The alignment between these ex-post labels and the text-based cluster themes---particularly the energy and manufacturing archetypes---provides a form of external validation: qualitatively similar economic groupings emerge whether cities are classified by narrative content or by the time-series behavior of their employment.

\clearpage

\section{Additional Numerical Results}\label{app-sec:simulations}

This appendix considers number of alternatives specifications for our numerical illustrations.

\subsection{Large Corpus}\label{app-sec:large_corpus}

Figures~\ref{app-fig:paper_scale_word_k2}--\ref{app-fig:paper_scale_doc_k3} repeat
Figures~\ref{fig:embeddings_large}--\ref{fig:doc_emb_noanchor_k3_alt} on the larger corpus
($D=1{,}000$, $N_d=10{,}000$, against $D=363$ and $N_d=487$). The comparison to make is panel against
its main-text counterpart, and what changes differs by panel.

For SVD-$\beta$ (left) only the sample size changes. We keep the full document as context, so the
number of ordered word--context pairs entering $\widehat R$ rises from $8.6\times 10^{7}$ to
$1.0\times 10^{11}$, a factor of about $1{,}200$, on $56$ times as many tokens. The panel is
therefore a pure sample-size comparison, and it is the cleanest picture of
Proposition~\ref{thm:factorization} in the paper: at $K=2$ the cloud collapses onto a line.

For SGNS (right) it does not. Full-document context is not feasible at this scale. We therefore run
$J=5$ (the default in \texttt{gensim}), and the two changes work against each other. \texttt{gensim} draws the effective half-window uniformly on $\{1,\dots,J\}$, so at $J=5$ each target sees about six contexts: the objective consumes roughly $6.0\times 10^{7}$ pairs per epoch, against $8.6\times 10^{7}$ in Figure~\ref{fig:embeddings_large}. So effectively, the training sample is about $30\%$ \emph{smaller} in pairs compared to its main-text counterpart. 

\begin{figure}[htbp]
\centering
\begin{subfigure}[b]{0.48\textwidth}
\centering
\includegraphics[width=\textwidth]{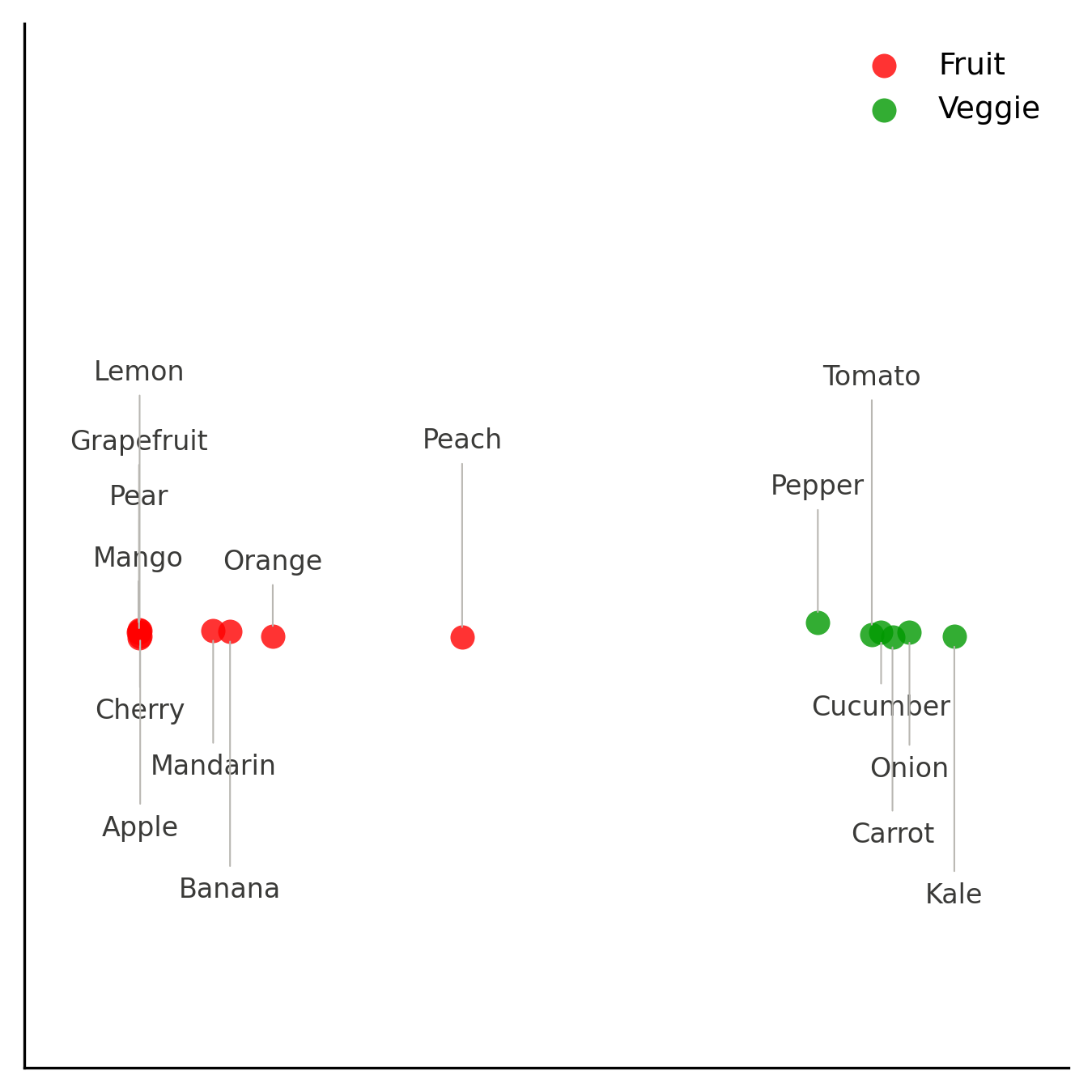}
\caption{SVD-$\beta$, full document}
\end{subfigure}
\hfill
\begin{subfigure}[b]{0.48\textwidth}
\centering
\includegraphics[width=\textwidth]{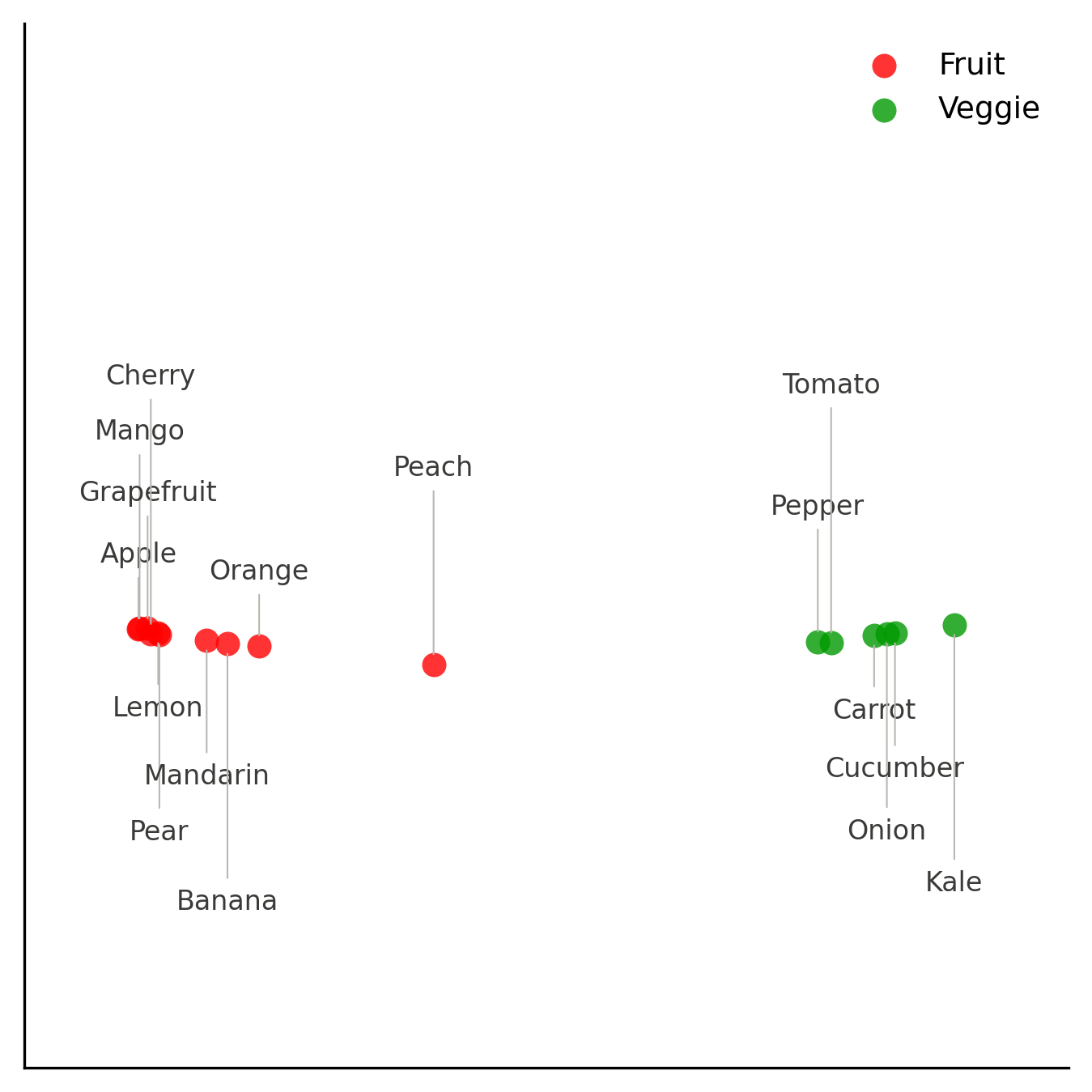}
\caption{SGNS, $J=5$}
\end{subfigure}
\caption[Word embeddings, Design 1 ($K=2$), large corpus]{Word embeddings for Design 1 ($K=2$) on the larger corpus ($D=1{,}000$, $N_d=10{,}000$). SVD-$\beta$ uses the full document as context, as in the main text; SGNS uses $J=5$, which is what is feasible at this scale. Counterpart of Figure~\ref{fig:embeddings_large}. Colors mark the dominant topic in the true $B$. Both panels are drawn at equal aspect: in population the embedding has rank $K-1=1$, so the cloud is one-dimensional and the vertical spread is estimation noise.}
\label{app-fig:paper_scale_word_k2}
\end{figure}

\begin{figure}[htb!]
\centering
\begin{subfigure}[b]{0.48\textwidth}
\centering
\includegraphics[width=\textwidth]{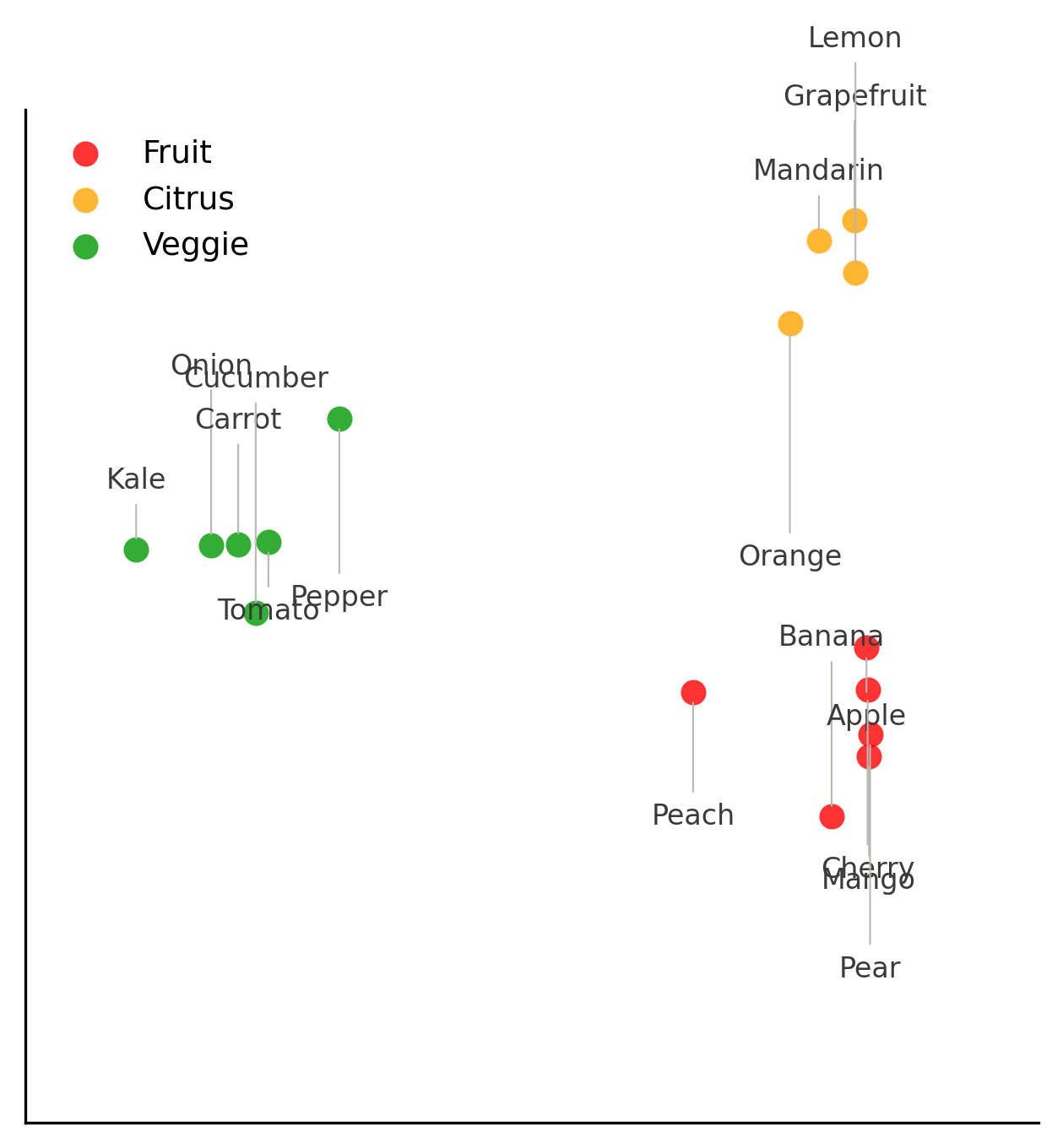}
\caption{SVD-$\beta$, full document}
\end{subfigure}
\hfill
\begin{subfigure}[b]{0.48\textwidth}
\centering
\includegraphics[width=\textwidth]{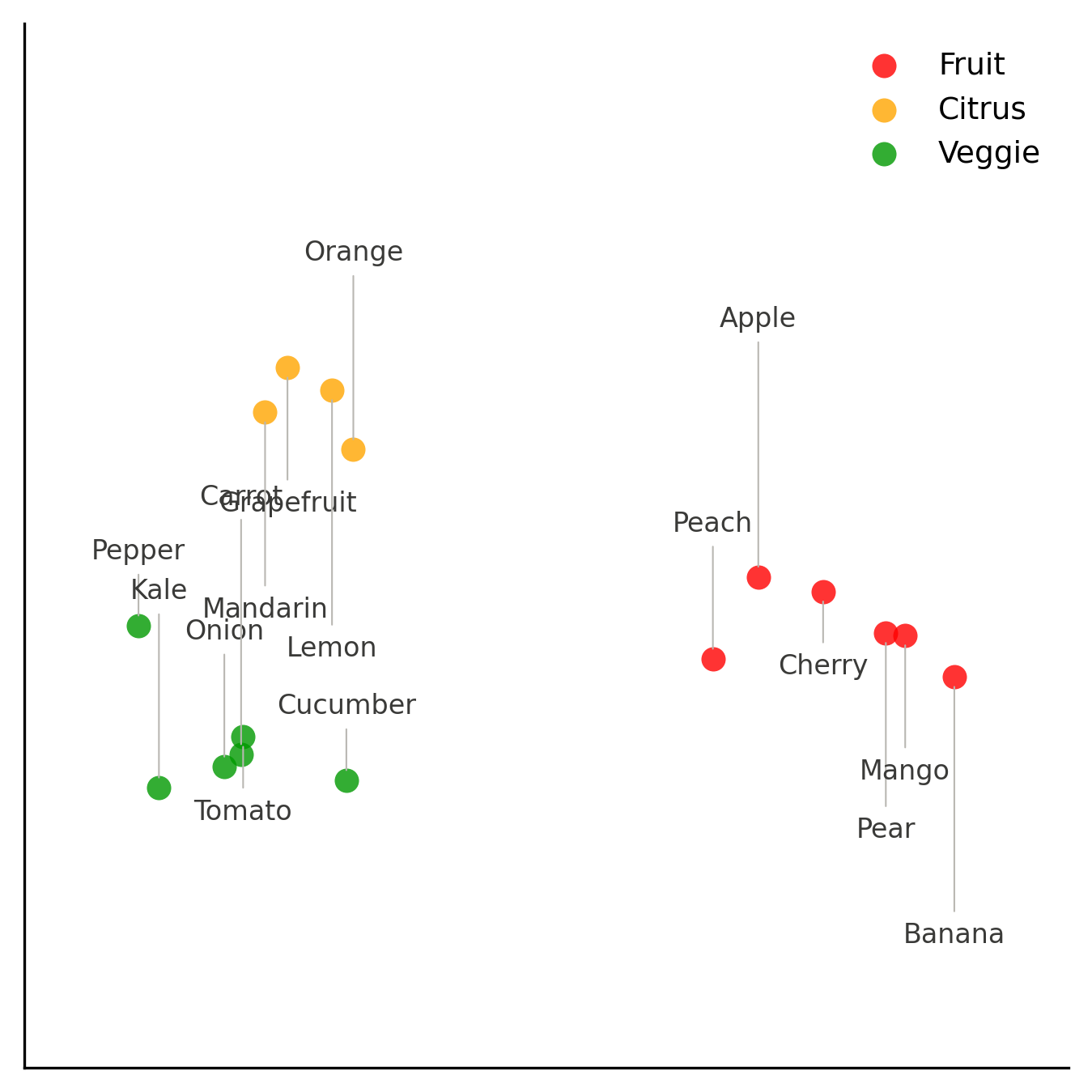}
\caption{SGNS, $J=5$}
\end{subfigure}
\caption[Word embeddings, Design 2 ($K=3$), large corpus]{Word embeddings for Design 2 ($K=3$) on the larger corpus. SVD-$\beta$ uses the full document as context; SGNS uses $J=5$. Counterpart of Figure~\ref{fig:embeddings_noanchor_k3}. Both methods separate the three topics. Each panel is projected onto its own leading two principal directions rather than onto the first two raw coordinates: for SVD-$\beta$ the two coincide, since $\beta$ is built from the ordered eigenvectors of $\widehat{R} - \mathbf{1}_V\mathbf{1}_V^\top$, but the CBOW coordinate basis is chosen by the optimizer and is arbitrary, so a raw two-coordinate slice would cut obliquely through the topic plane.}
\label{app-fig:paper_scale_word_k3}
\end{figure}

\begin{figure}[htb!]
\centering
\begin{subfigure}[b]{0.48\textwidth}
\centering
\includegraphics[width=\textwidth]{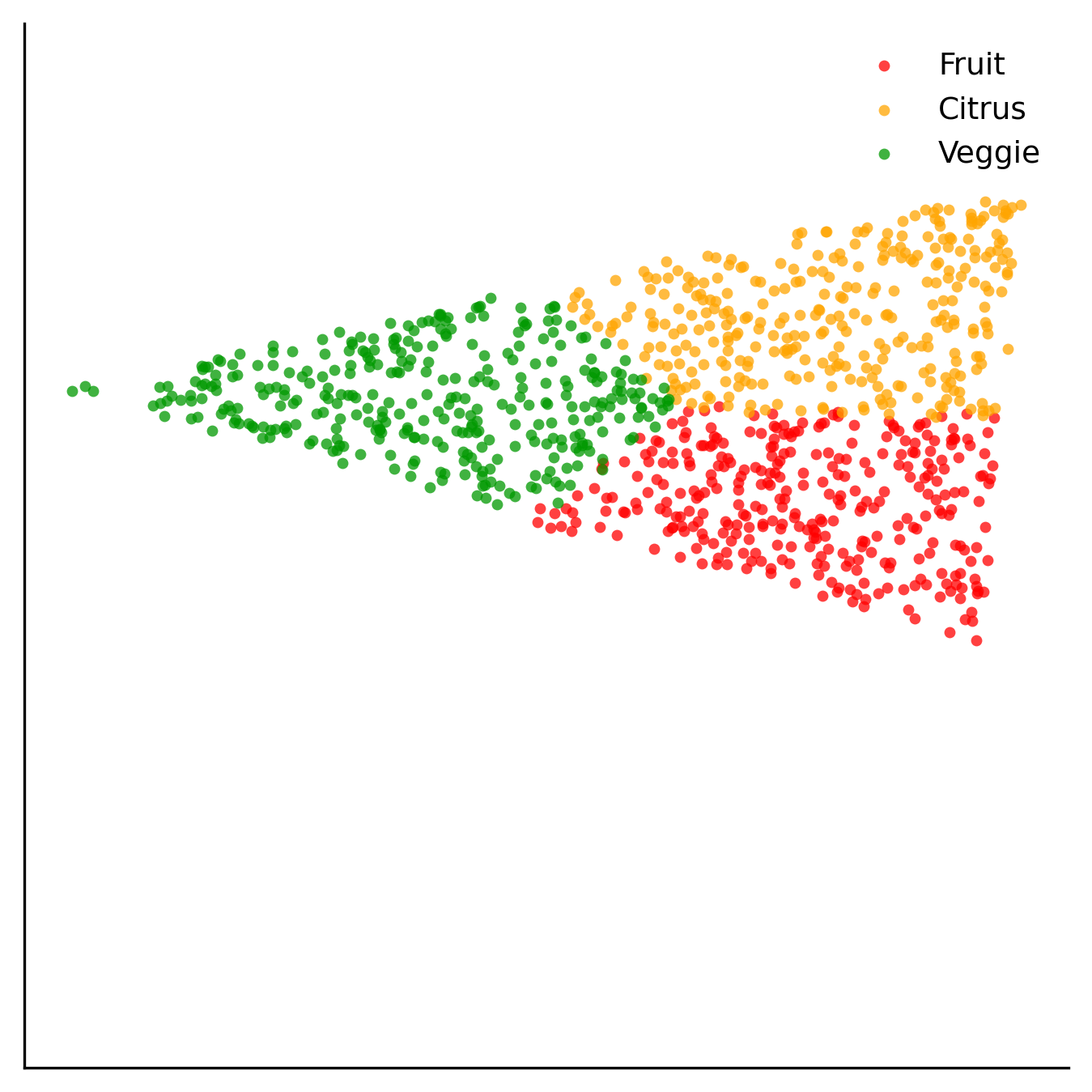}
\caption{SVD-$\beta$, full document}
\end{subfigure}
\hfill
\begin{subfigure}[b]{0.48\textwidth}
\centering
\includegraphics[width=\textwidth]{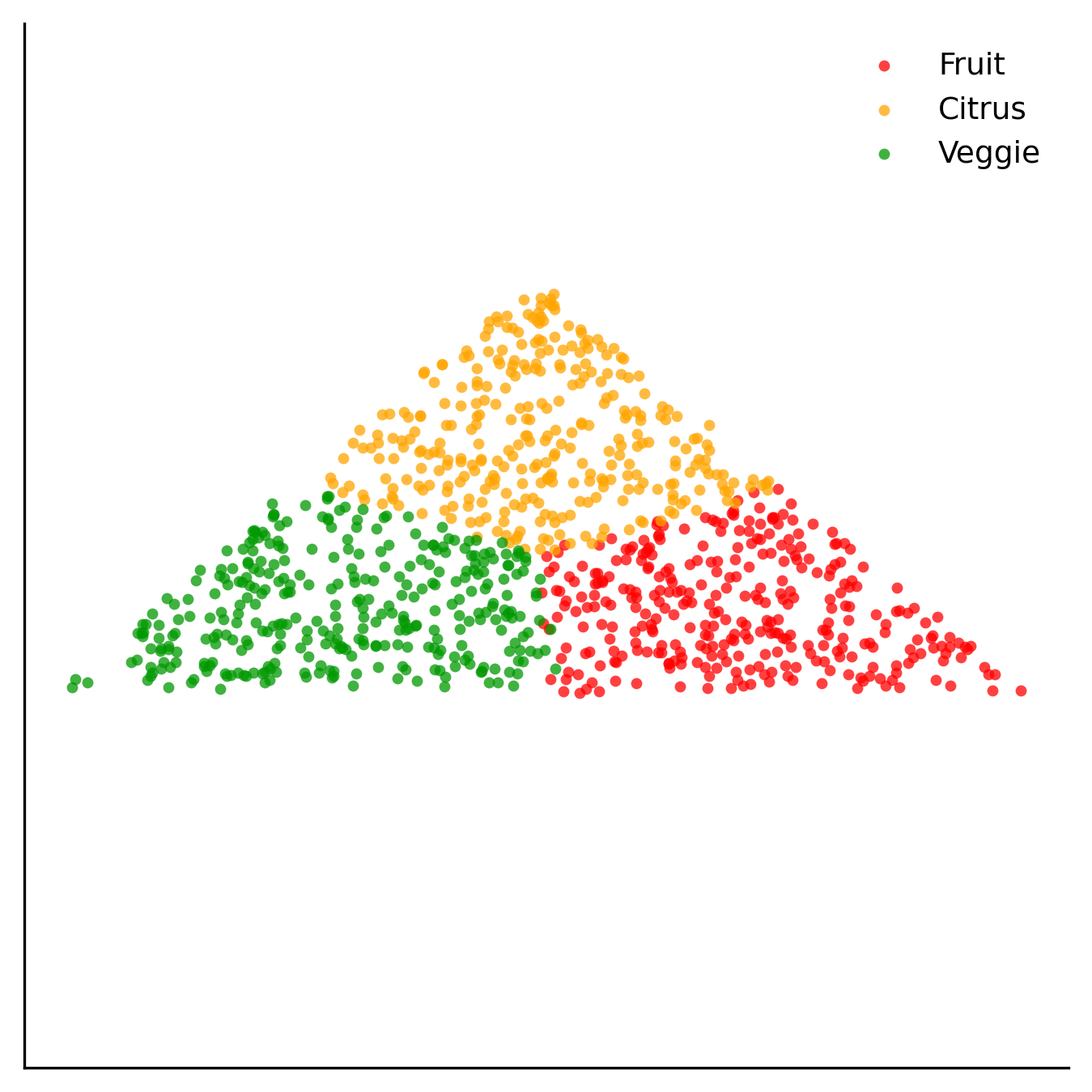}
\caption{SGNS, $J=5$}
\end{subfigure}
\caption[Document embeddings, Design 2 ($K=3$), large corpus]{Document embeddings for Design 2 ($K=3$) on the larger corpus, SVD-$\beta$ at full-document context and SGNS at $J=5$, colored by dominant topic. Counterpart of Figure~\ref{fig:doc_emb_noanchor_k3_alt}. With $\alpha=1$ the topic mixtures are spread across the simplex and the document embeddings fill the triangle spanned by the three topic centroids, as Proposition~\ref{prop:doc_embedding_general_freq} predicts.}
\label{app-fig:paper_scale_doc_k3}
\end{figure}

\clearpage

\subsection{Sparse Topic Mixtures: Dirichlet$(\alpha=0.01)$}\label{app-app:sim_sparse}

The simulations in Section~\ref{sec:simulations} use a symmetric Dirichlet$(\alpha=1)$ prior, which is uniform on the topic simplex. As a contrast, I repeat both designs with $\alpha=0.01$ (Holding $D=363$ and $N_d=487$). Under Dirichlet$(0.01)$, almost all mass concentrates at the simplex vertices: most documents are nearly pure-topic, with topic mixtures $\Theta_{\bullet d}$ close to a single $e_k$. By Proposition~\ref{prop:doc_embedding_general_freq}, document embeddings $\mu_d = C\,\Theta_{\bullet d}$ then concentrate at the topic centroids $\{c_k\}$ rather than filling the centroid simplex.

Figures~\ref{app-fig:sparse_word_emb_k2}--\ref{app-fig:sparse_doc_emb_k3_cluster} show the resulting embeddings. The $B$ matrices and pipeline are identical to Section~\ref{sec:simulations}; only the Dirichlet concentration parameter differs.

\begin{figure}[htbp]
\centering
\begin{subfigure}[b]{0.48\textwidth}
\centering
\includegraphics[width=\textwidth]{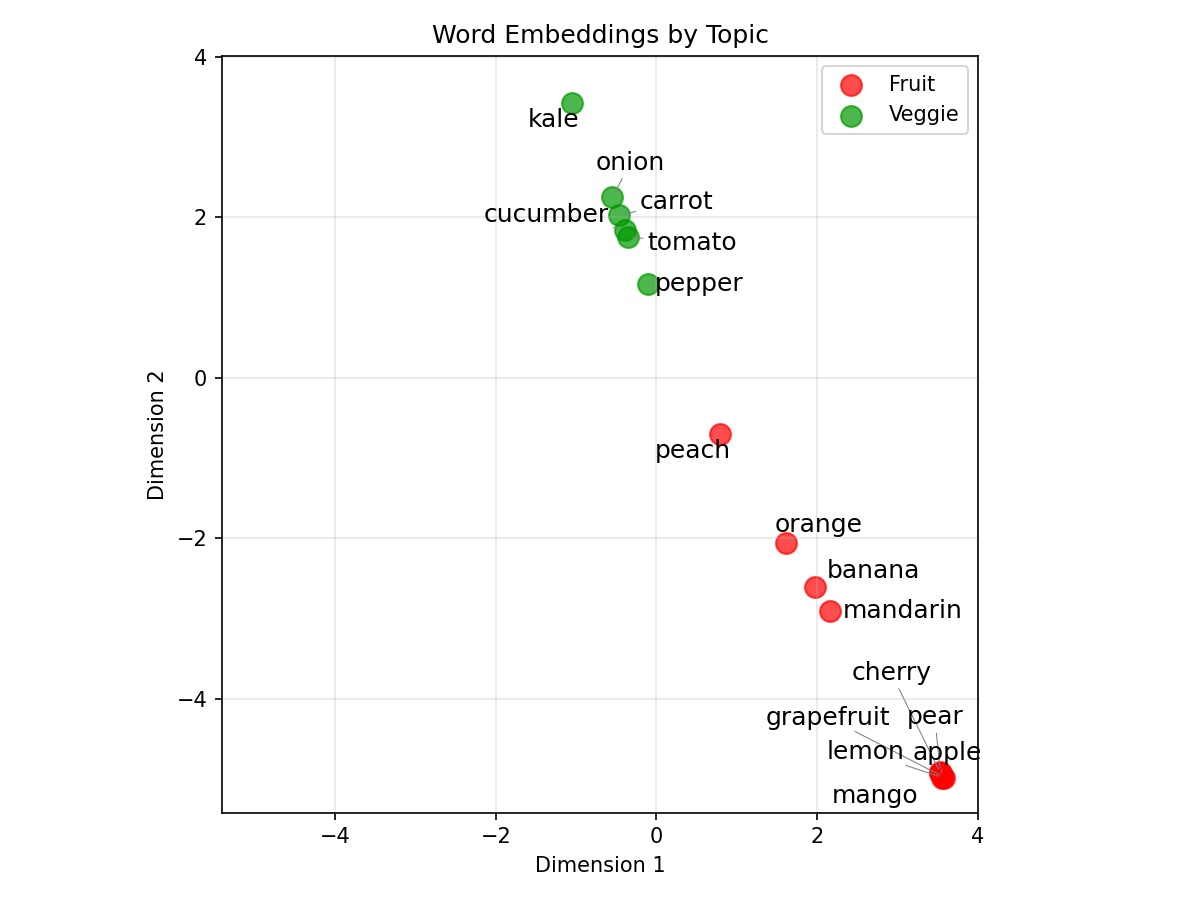}
\caption{CBOW}
\end{subfigure}
\hfill
\begin{subfigure}[b]{0.48\textwidth}
\centering
\includegraphics[width=\textwidth]{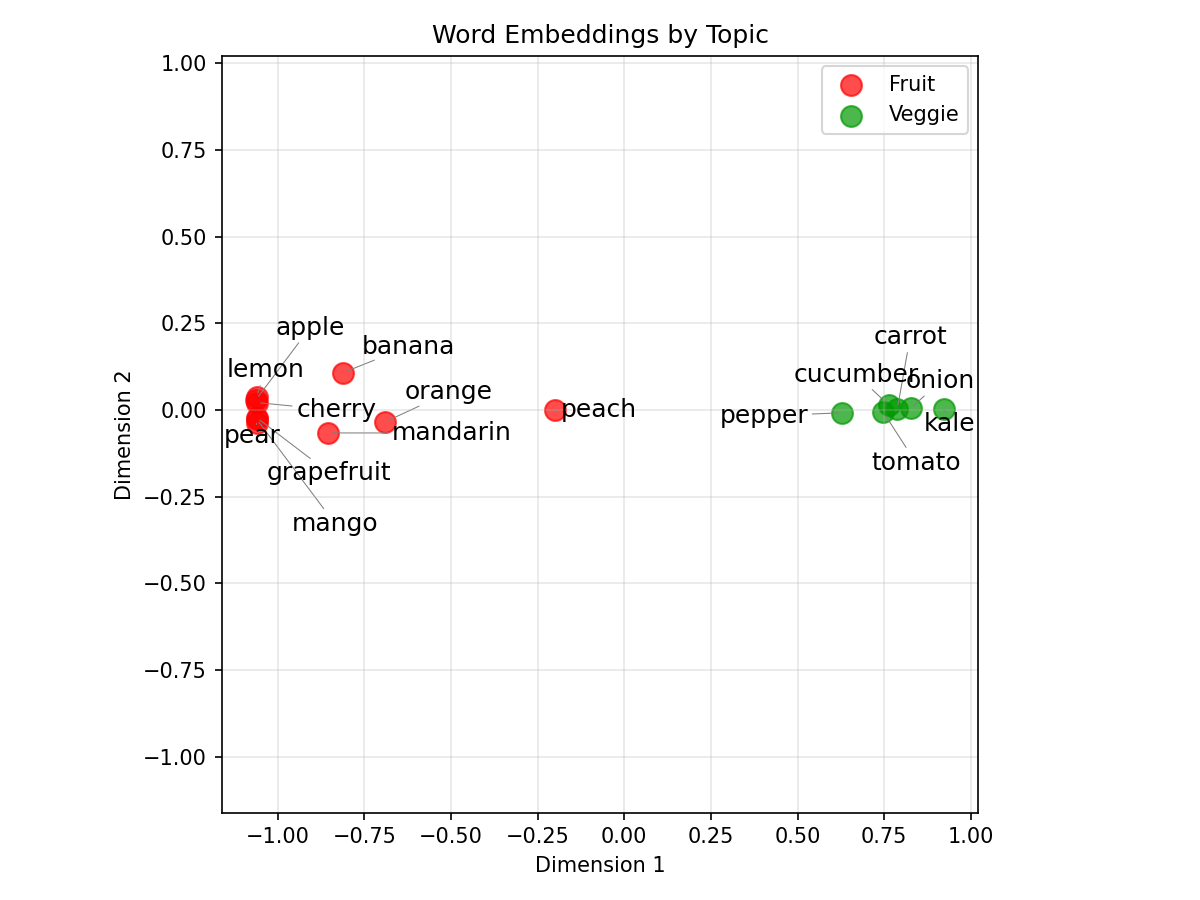}
\caption{SVD-$\beta$}
\end{subfigure}
\caption[Word embeddings, Design 1 ($K=2$)]{Word embeddings for Design 1 ($K=2$) under Dirichlet$(\alpha=0.01)$.}
\label{app-fig:sparse_word_emb_k2}
\end{figure}

\begin{figure}[ht!]
\centering
\begin{subfigure}[b]{0.48\textwidth}
\centering
\includegraphics[width=\textwidth]{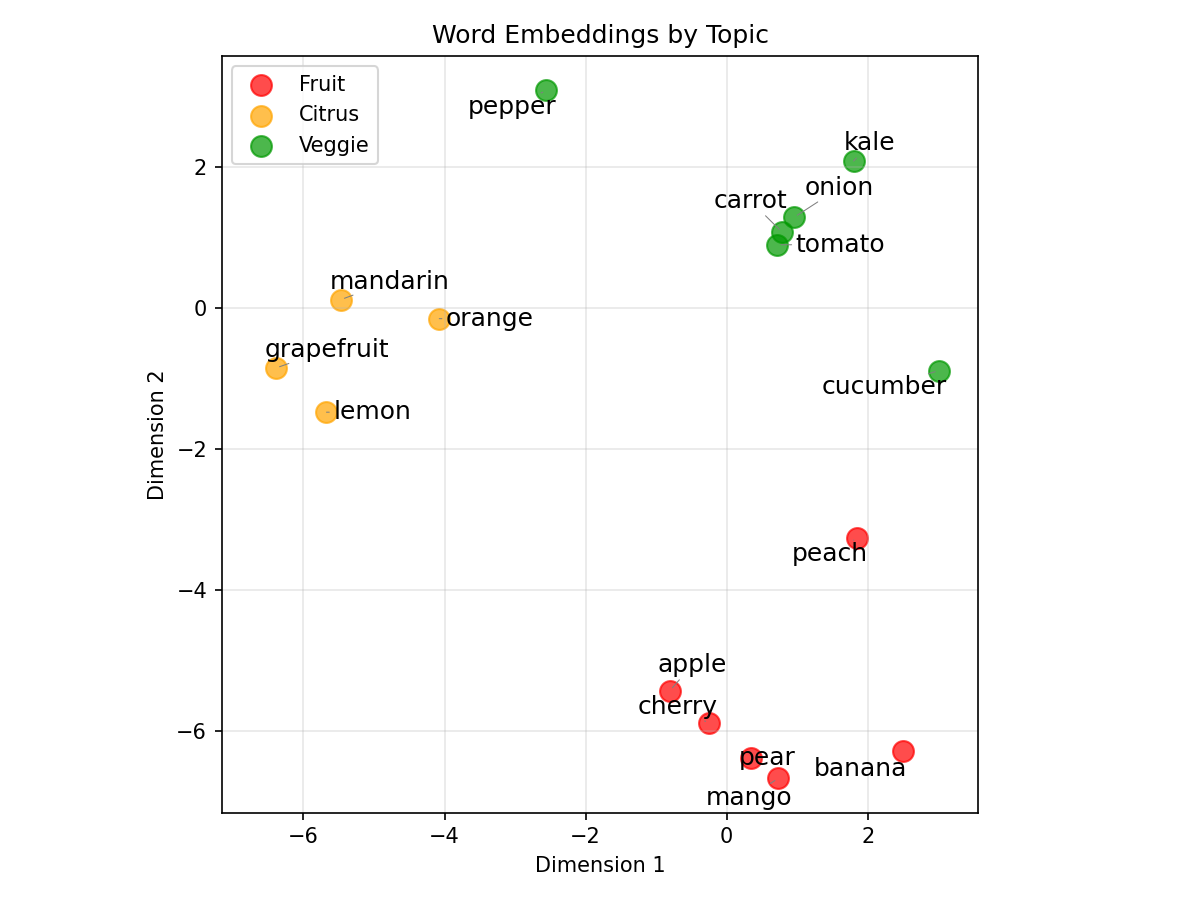}
\caption{CBOW}
\end{subfigure}
\hfill
\begin{subfigure}[b]{0.48\textwidth}
\centering
\includegraphics[width=\textwidth]{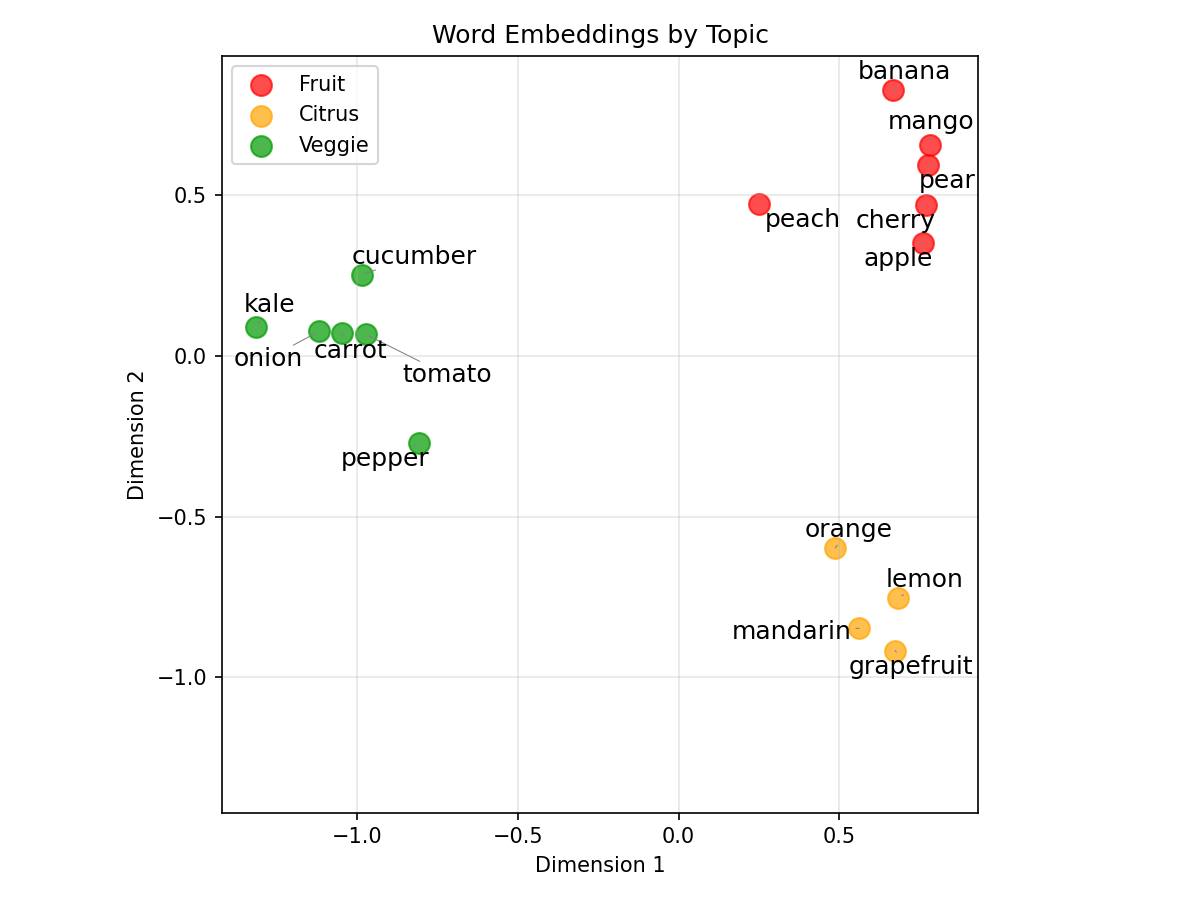}
\caption{SVD-$\beta$}
\end{subfigure}
\caption[Word embeddings, Design 2 ($K=3$)]{Word embeddings for Design 2 ($K=3$) under Dirichlet$(\alpha=0.01)$.}
\label{app-fig:sparse_word_emb_k3}
\end{figure}

\begin{figure}[hb!]
\centering
\begin{subfigure}[b]{0.48\textwidth}
\centering
\includegraphics[width=\textwidth]{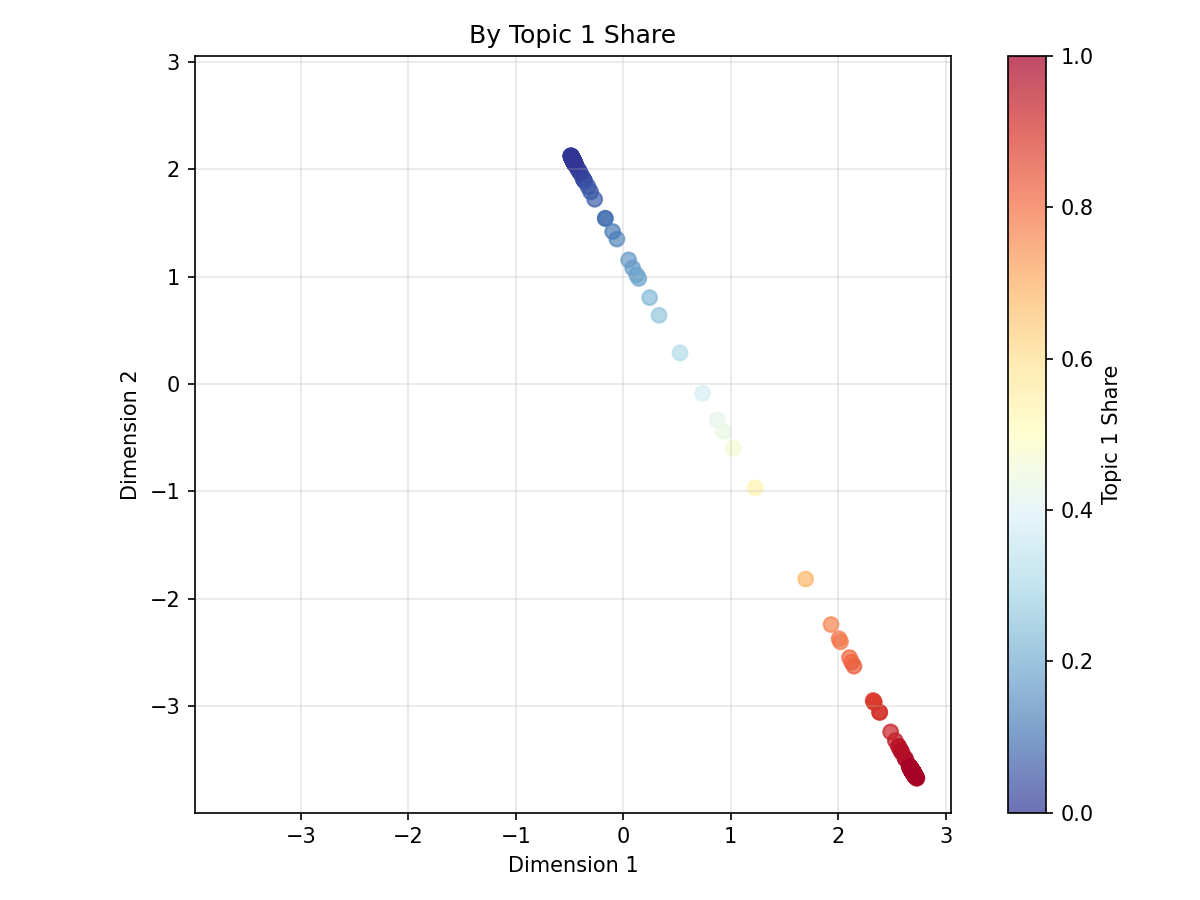}
\caption{CBOW}
\end{subfigure}
\hfill
\begin{subfigure}[b]{0.48\textwidth}
\centering
\includegraphics[width=\textwidth]{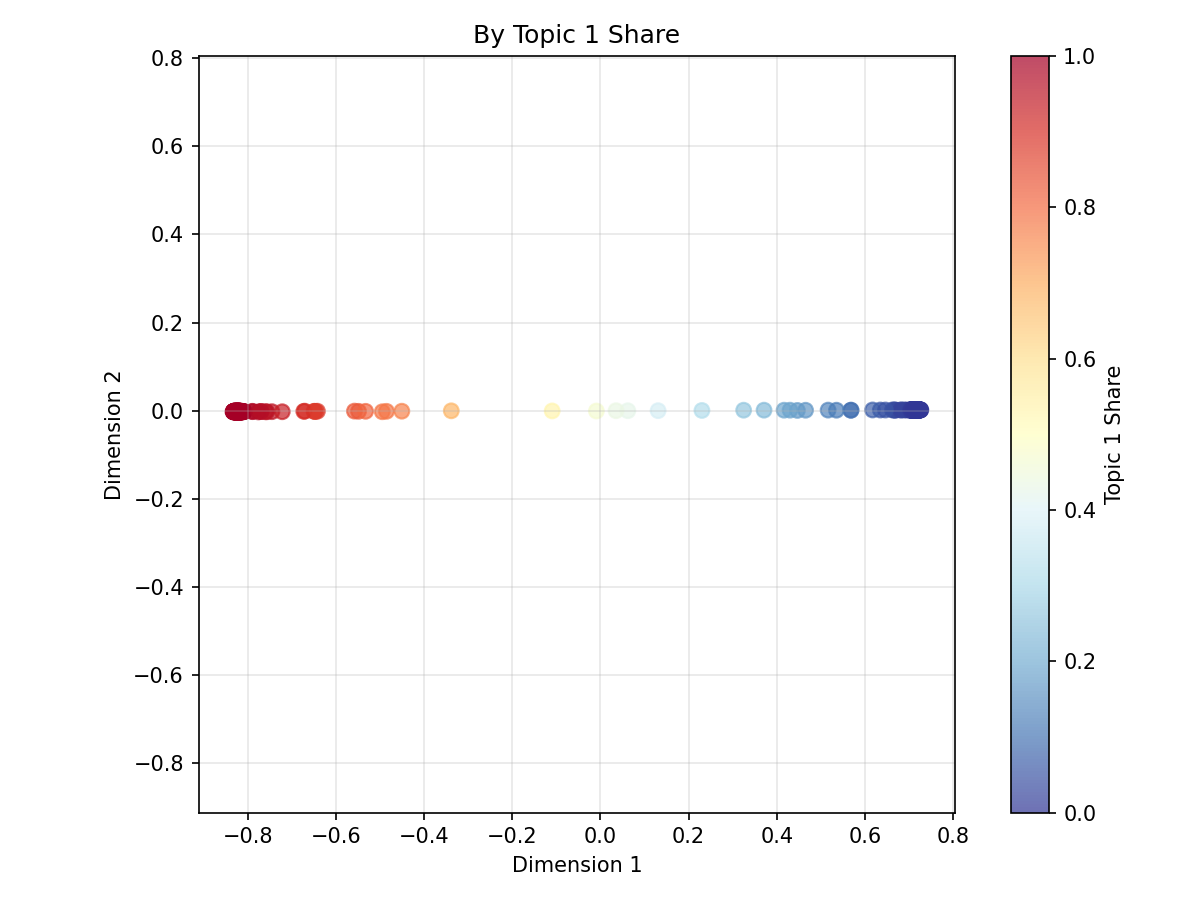}
\caption{SVD-$\beta$}
\end{subfigure}
\caption[Document embeddings, Design 1 ($K=2$)]{Document embeddings for Design 1 ($K=2$) under Dirichlet$(\alpha=0.01)$. Documents cluster at the two endpoints of the line segment $[c_1, c_2]$ rather than filling it; color indicates the share of topic~1.}
\label{app-fig:sparse_doc_emb_k2}
\end{figure}

\begin{figure}[htbp]
\centering
\begin{subfigure}[b]{0.48\textwidth}
\centering
\includegraphics[width=\textwidth]{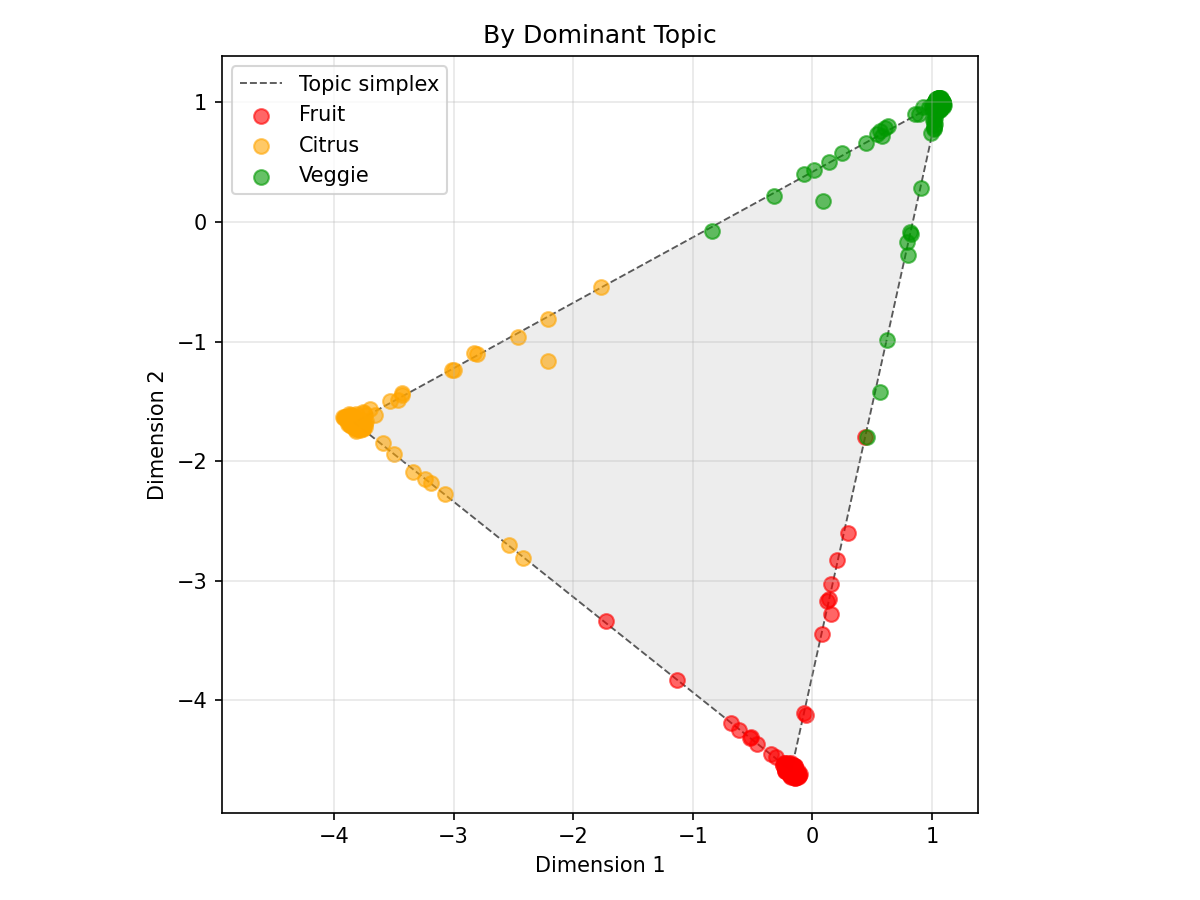}
\caption{CBOW}
\end{subfigure}
\hfill
\begin{subfigure}[b]{0.48\textwidth}
\centering
\includegraphics[width=\textwidth]{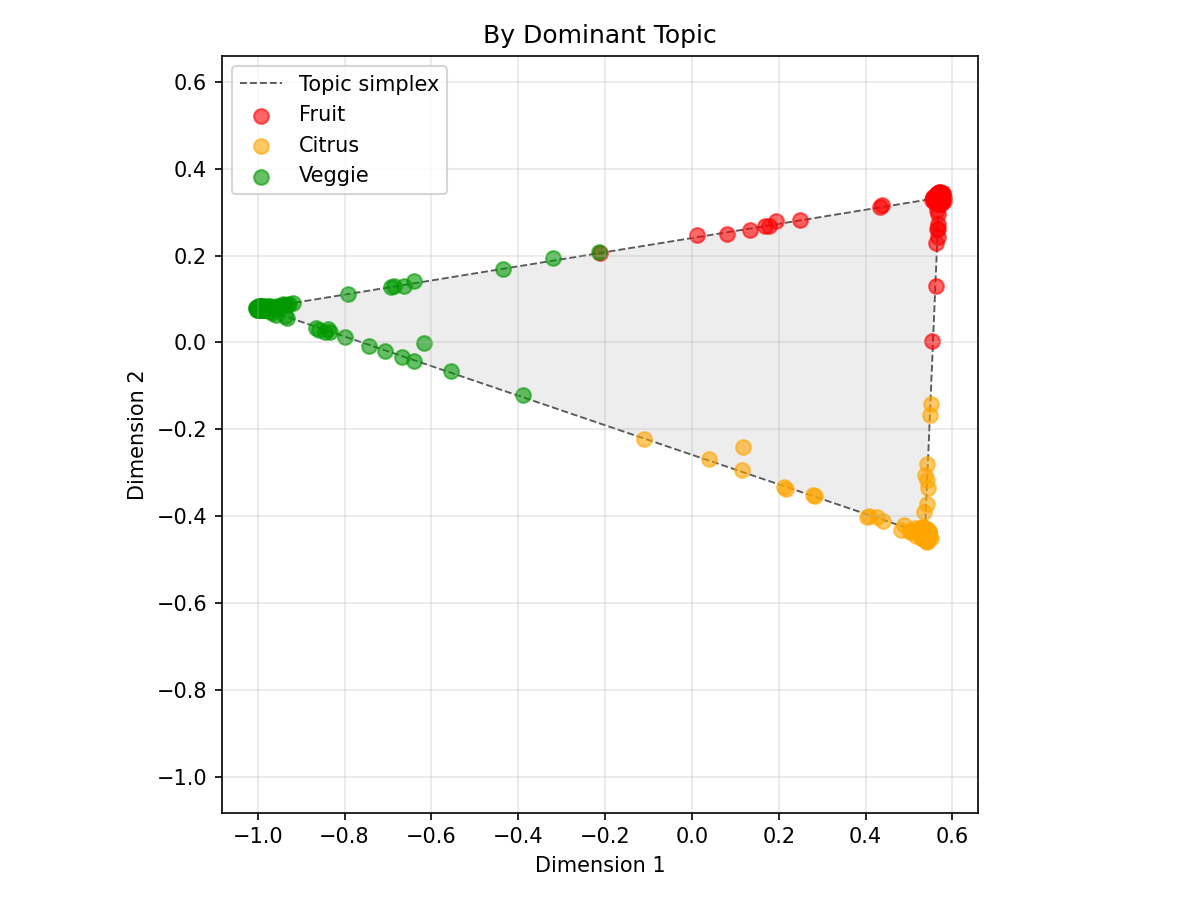}
\caption{SVD-$\beta$}
\end{subfigure}
\caption[Document embeddings, Design 2 ($K=3$)]{Document embeddings for Design 2 ($K=3$) under Dirichlet$(\alpha=0.01)$, colored by dominant topic. With most documents near-pure on a single topic, the dominant-topic partition is well-separated and is recovered cleanly by both embeddings ($k$-means with $K = 3$). The shaded region is the topic simplex $\mathrm{conv}\{c_1,c_2,c_3\}$; documents concentrate at its vertices rather than filling it, which is the contrast with Figure~\ref{fig:doc_emb_noanchor_k3_alt}.}
\label{app-fig:sparse_doc_emb_k3_cluster}
\end{figure}

The sparse-prior case illustrates the boundary regime for Lemma~\ref{lem:voronoi_membership}: when $\theta_{d,k^*(d)} \approx 1$, the dominant-topic threshold $\theta_{d,k^*(d)} > 1 - \eta$ is met for essentially every document, and $k$-means recovers the dominant-topic partition. At $\alpha = 0.01$, $\Pr(\theta_{d,k^*} > 0.5) \approx 1$ for both $K = 2$ and $K = 3$, consistent with the visible vertex concentration.

\clearpage

\subsection{Interior Archetypes ($L < K$)}\label{app-app:sim_archetype}

The simulations above illustrate the case $L = K$, where the number of clusters in document space matches the number of topics. Proposition~\ref{prop:archetype_identification} covers the more general case in which documents concentrate around $L$ \emph{archetype mixtures} $\pi^*_1, \ldots, \pi^*_L \in \Delta_{K-1}$ that need not coincide with the simplex vertices. The empirical application in Section~\ref{sec:application} sits in this regime: $k$-means on the CBSA corpus recovers $L = 5$ clusters whose centroids are interior mixtures of an unknown number of latent topics.

To illustrate this scaling, I repeat Design~2 ($K = 3$ topics, same $B$ matrix as in Section~\ref{sec:simulations}) but replace the Dirichlet$(\alpha = 1)$ prior on $\Theta_{\bullet d}$ with a two-archetype mixture. Each document is assigned uniformly to one of two archetypes,
\[
\pi^*_1 = (0.65,\, 0.25,\, 0.10), \qquad \pi^*_2 = (0.10,\, 0.10,\, 0.80),
\]
and its topic mixture is drawn from a tight Dirichlet centered on the assigned archetype, $\Theta_{\bullet d} \sim \text{Dirichlet}(20 \cdot \pi^*_\ell)$. Archetype~1 is fruit-heavy with some citrus; archetype~2 is nearly pure vegetable. Neither lies at a simplex vertex. The word geometry is not carried over unchanged. By Theorem~\ref{thm:word_embedding_guarantee} the word metric is $\Sigma_\Theta$-weighted. Figure \ref{app-fig:archetype_doc_emb_cluster} illustrates. The shaded region is the topic simplex $\mathrm{conv}\{c_1,c_2,c_3\}$, computed from the true $B$ and the estimated word embeddings and projected onto the same two coordinates as the documents; Proposition~\ref{prop:doc_embedding_general_freq} places every expected document embedding inside it. Crosses mark the two archetypes' expected embeddings, $\sum_k \pi^*_{\ell k}\, c_k$. Both sit in the interior, which is the $L < K$ regime of Proposition~\ref{prop:archetype_identification}. The simplex is far flatter than in Figure~\ref{app-fig:sparse_doc_emb_k3_cluster} even though $B$ is the same, for the reason given above: $\Sigma_\Theta$ is nearly rank one here. That flattening is present in both panels and in the full $K$ coordinates, not an artifact of the projection.

\begin{figure}[htbp]
\centering
\begin{subfigure}[b]{0.48\textwidth}
\centering
\includegraphics[width=\textwidth]{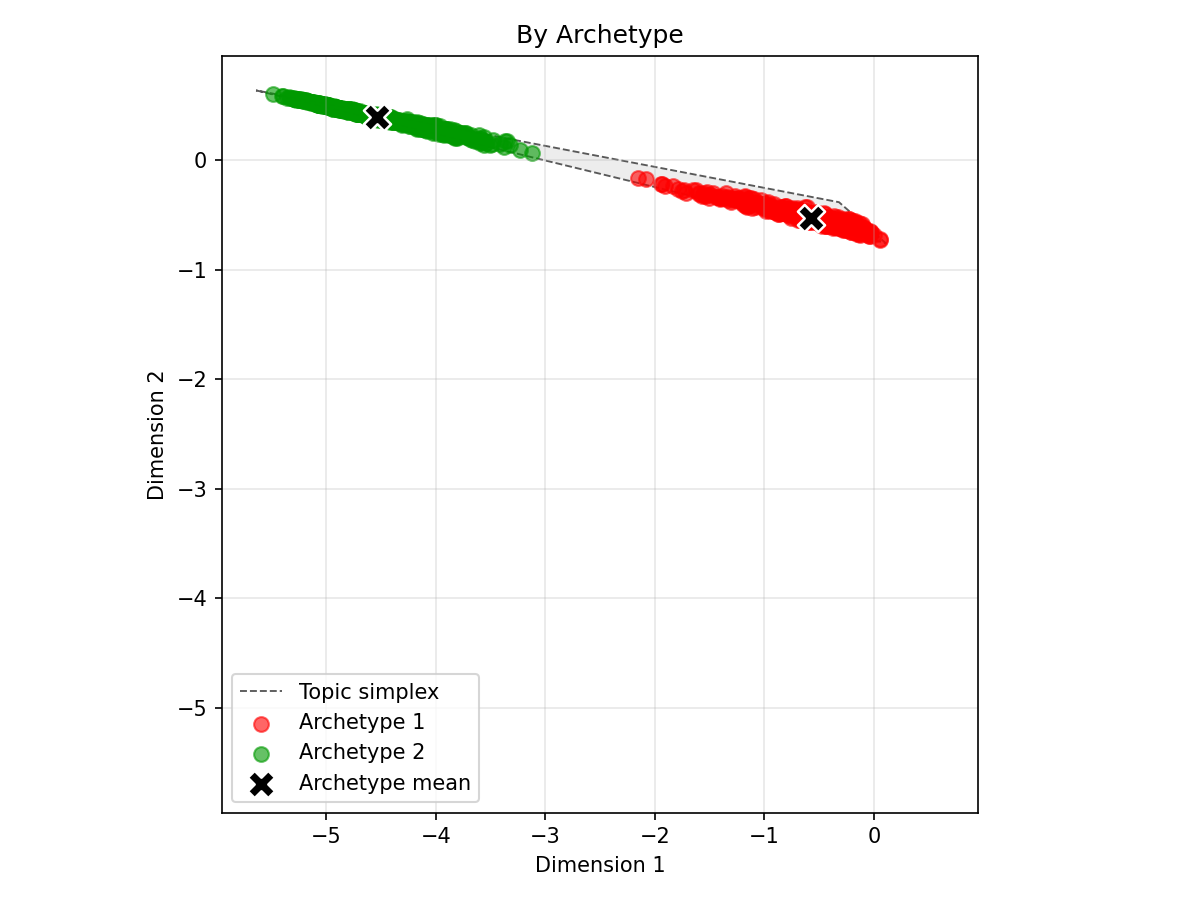}
\caption{CBOW}
\end{subfigure}
\hfill
\begin{subfigure}[b]{0.48\textwidth}
\centering
\includegraphics[width=\textwidth]{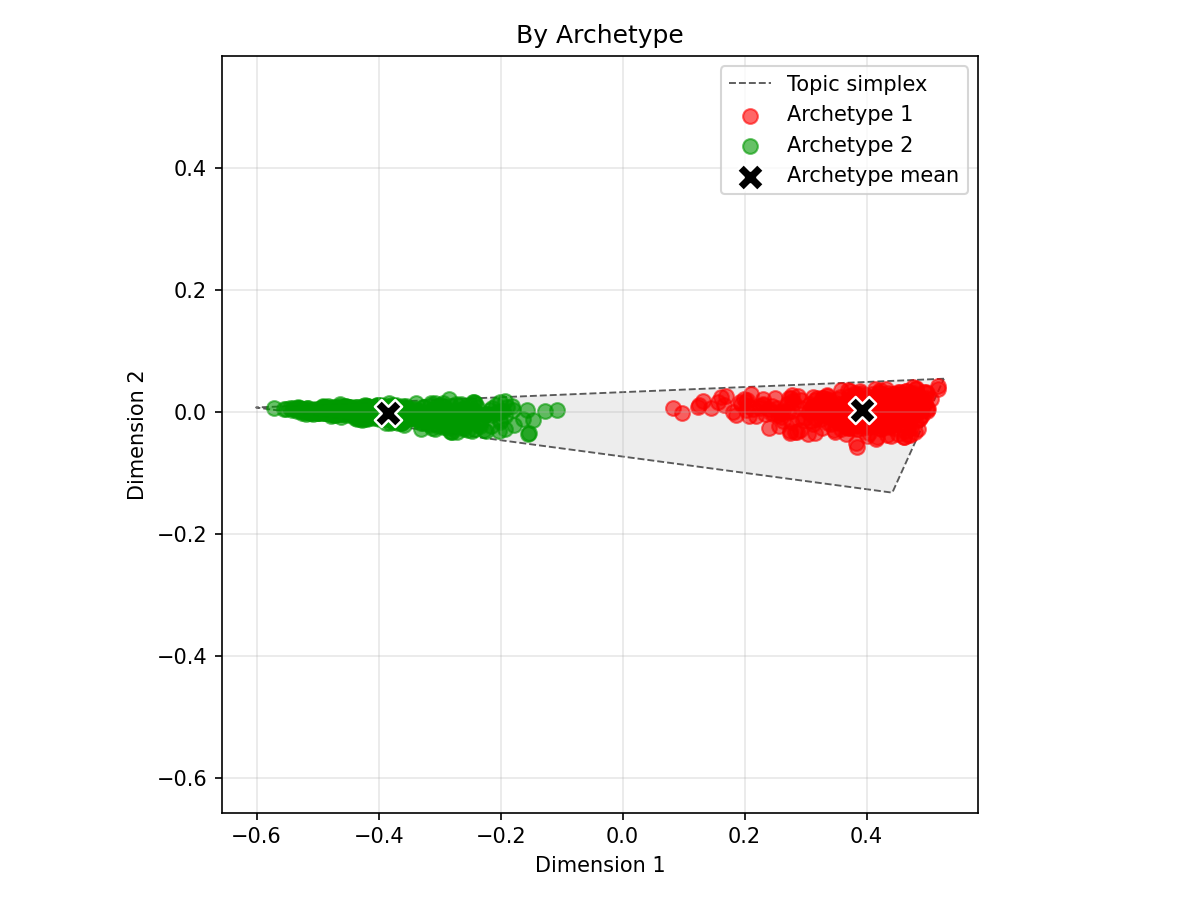}
\caption{SVD-$\beta$}
\end{subfigure}
\caption[Document embeddings by archetype label]{Document embeddings colored by archetype label. The shaded region is the topic simplex $\mathrm{conv}\{c_1,c_2,c_3\}$, computed from the true $B$ and the estimated word embeddings and projected onto the same two coordinates as the documents. Crosses mark the two archetypes' expected embeddings, $\sum_k \pi^*_{\ell k}\, c_k$.}
\label{app-fig:archetype_doc_emb_cluster}
\end{figure}

The archetype design illustrates Proposition~\ref{prop:archetype_identification} in the $L = 2 < K = 3$ regime: cluster identity in embedding space is governed by the archetype assignment, not by the dominant-topic vertex. This mirrors the empirical situation in Section~\ref{sec:application}, where the choice of $L = 5$ is governed by the resolvable archetypes in the corpus and need not equal the (unobserved) number of latent topics.

\clearpage

\subsection{CBOW}\label{app-sec:CBOW}

Here, I depict the counterpart of Figures~\ref{fig:embeddings_large}-Figure~\ref{fig:doc_emb_noanchor_k3_alt} using continuous bag-of-words (CBOW) embeddings.

\begin{figure}[htb!]
\centering
\begin{subfigure}[b]{0.48\textwidth}
\centering
\includegraphics[width=\textwidth]{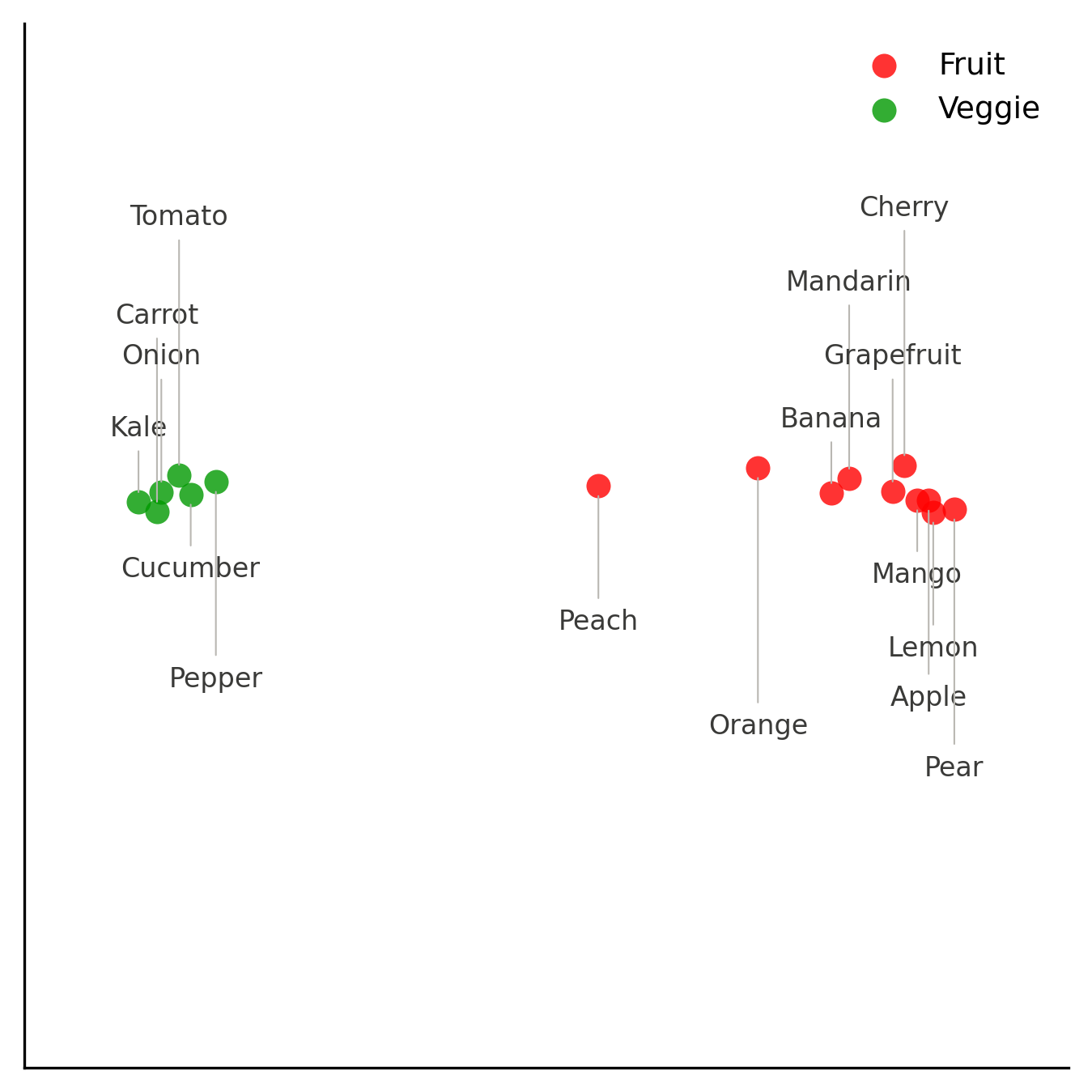}
\caption{Word embeddings, Design 1 ($K=2$)}
\label{app-fig:cbow_word_k2}
\end{subfigure}
\begin{subfigure}[b]{0.48\textwidth}
\centering
\includegraphics[width=\textwidth]{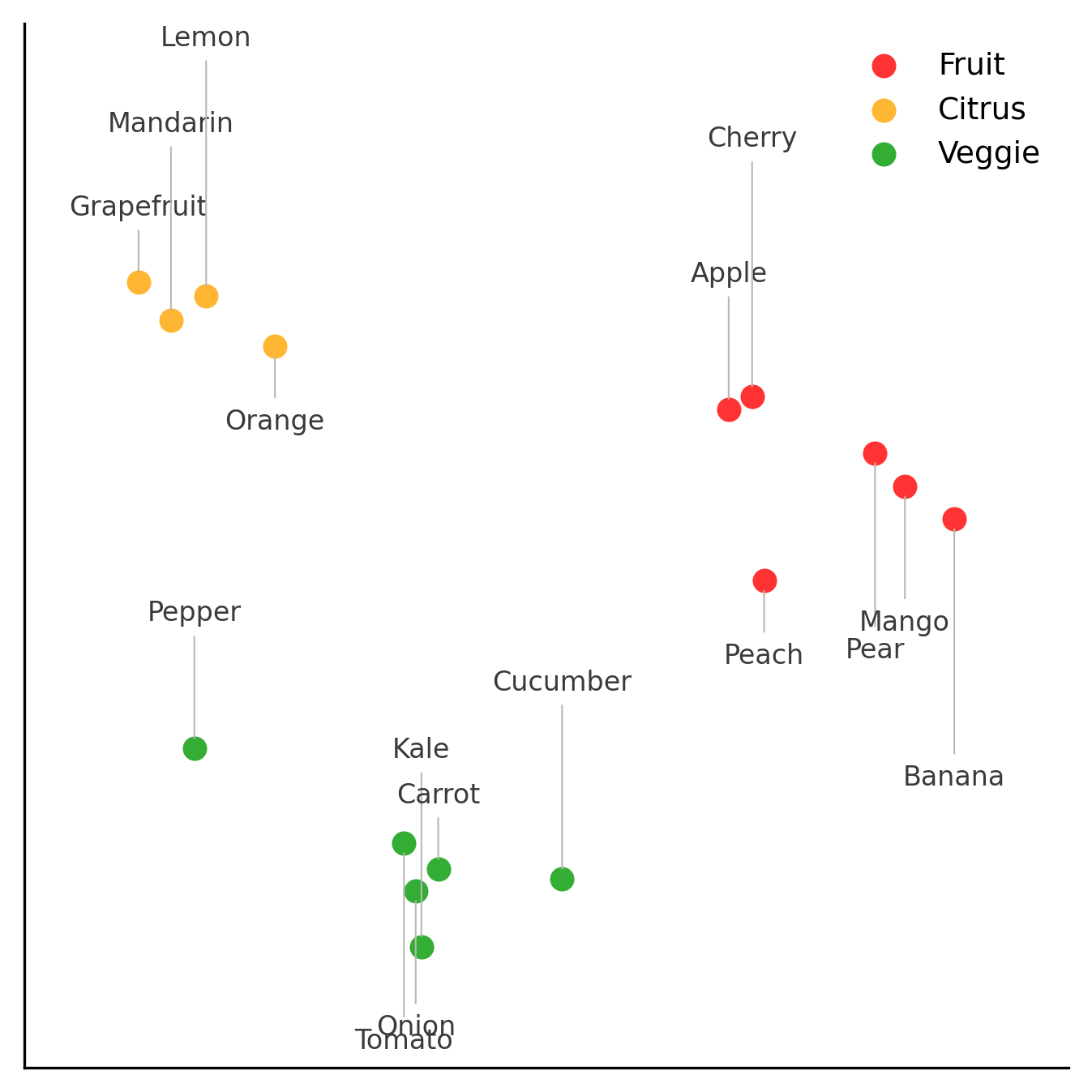}
\caption{Word embeddings, Design 2 ($K=3$)}
\label{app-fig:cbow_word_k3}
\end{subfigure}

\begin{subfigure}[b]{0.48\textwidth}
\centering
\includegraphics[width=\textwidth]{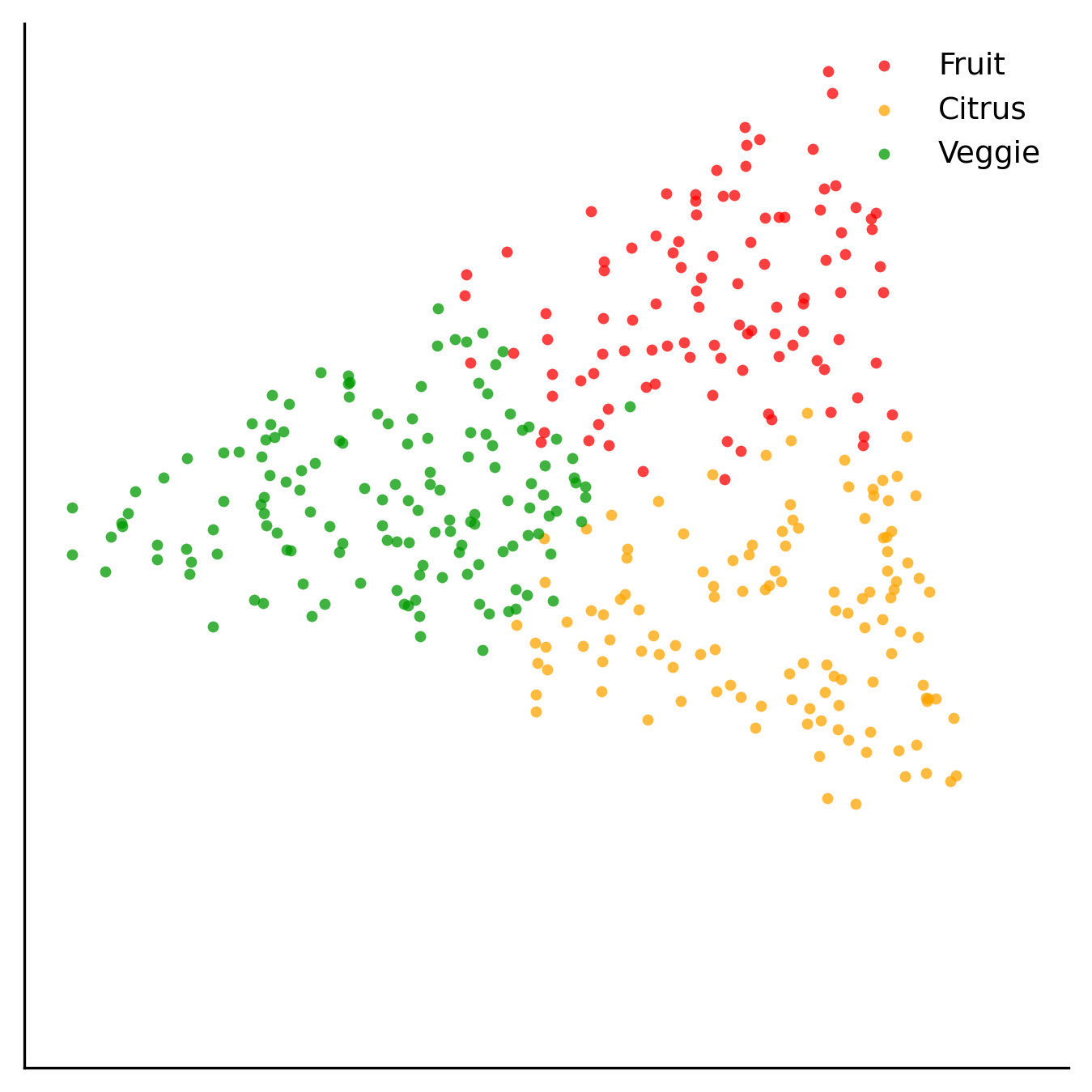}
\caption{Document embeddings, Design 2 ($K=3$)}
\label{app-fig:cbow_doc_k3}
\end{subfigure}
\caption[CBOW embeddings]{Estimated embedding using CBOW on applicaton-sized corpus. Colors mark the dominant topic in the true $B$. }
\end{figure}

\clearpage

\subsection{UMAP Projections}\label{app-app:sim_umap}

Figure \ref{fig:doc_emb_noanchor_k3_alt} projects the $3$-dimensional embeddings to two dimensions by retaining the first two coordinates. The empirical application in Section~\ref{sec:application} instead uses UMAP \citep{mcinnes2018umap}, a nonlinear dimensionality-reduction method, because the application's embeddings live in $50$ dimensions (closed-form SVD-$\beta$ and both corpus-trained arms), $300$ (pretrained Google News) or $3{,}072$ (the LLM embeddings). For visual parallelism with the application figures, this section repeats the Design~2 ($K=3$) embeddings under the same UMAP projection used in Section~\ref{sec:application}. UMAP's nonlinear transformation can curve or warp the simplex but generally preserves the cluster structure.

\begin{figure}[htbp]
\centering
\begin{subfigure}[b]{0.32\textwidth}
\centering
\includegraphics[width=\textwidth]{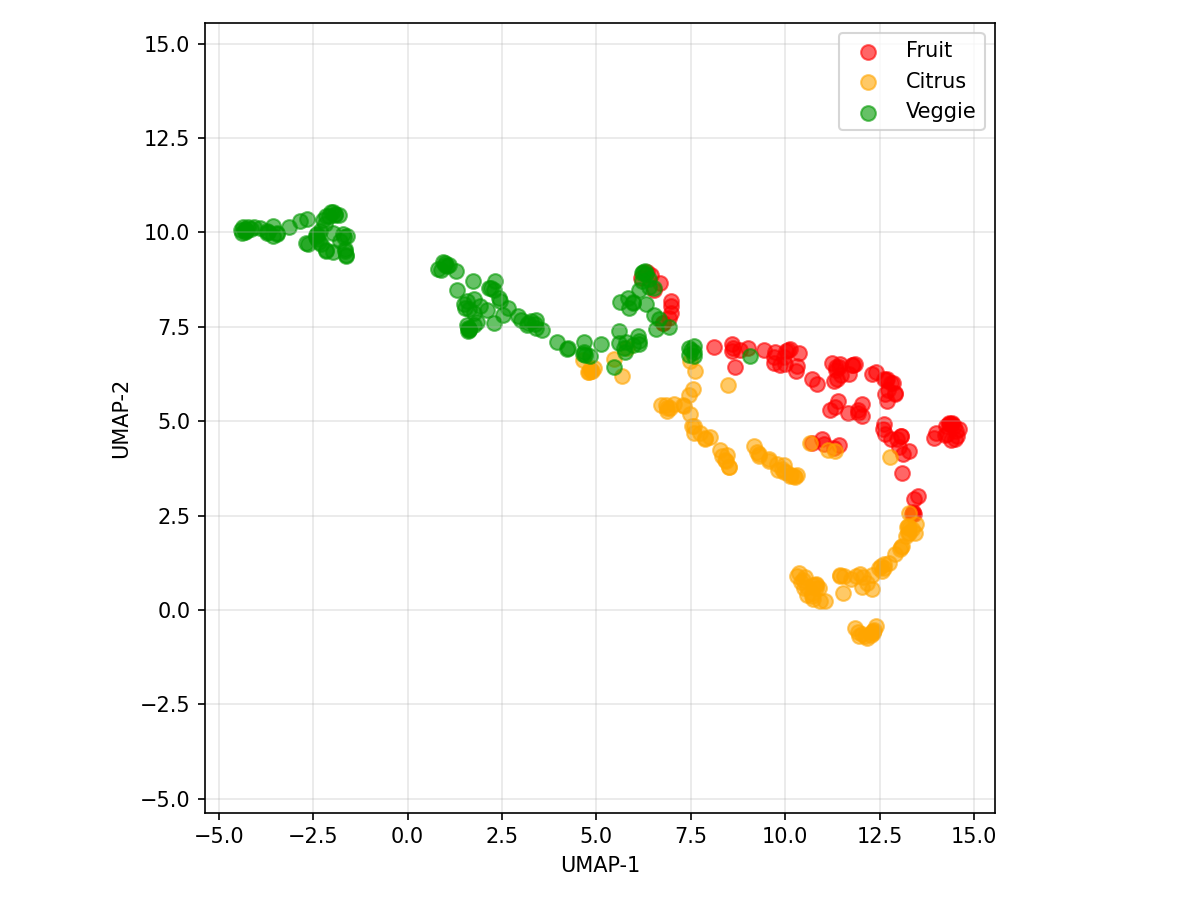}
\caption{SVD-$\beta$}
\end{subfigure}
\hfill
\begin{subfigure}[b]{0.32\textwidth}
\centering
\includegraphics[width=\textwidth]{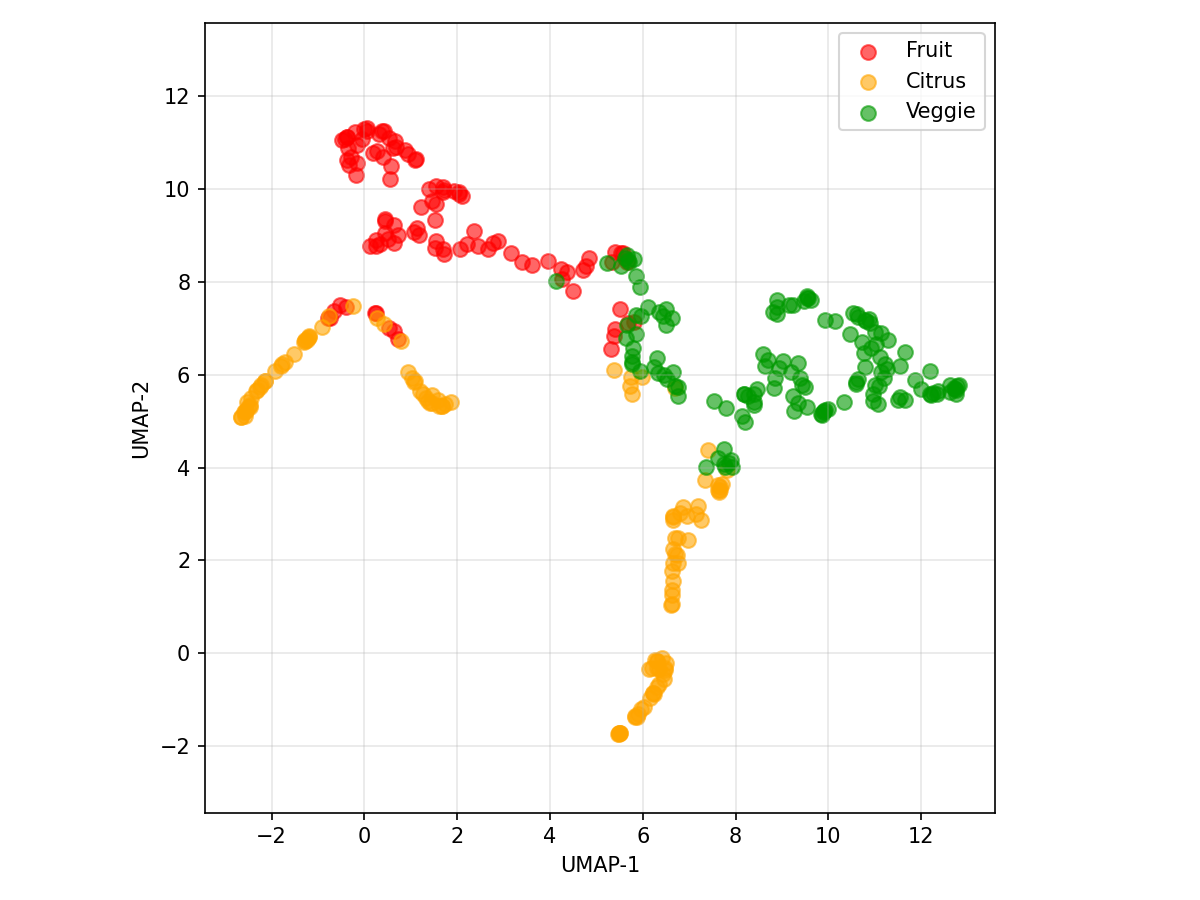}
\caption{SGNS}
\end{subfigure}
\hfill
\begin{subfigure}[b]{0.32\textwidth}
\centering
\includegraphics[width=\textwidth]{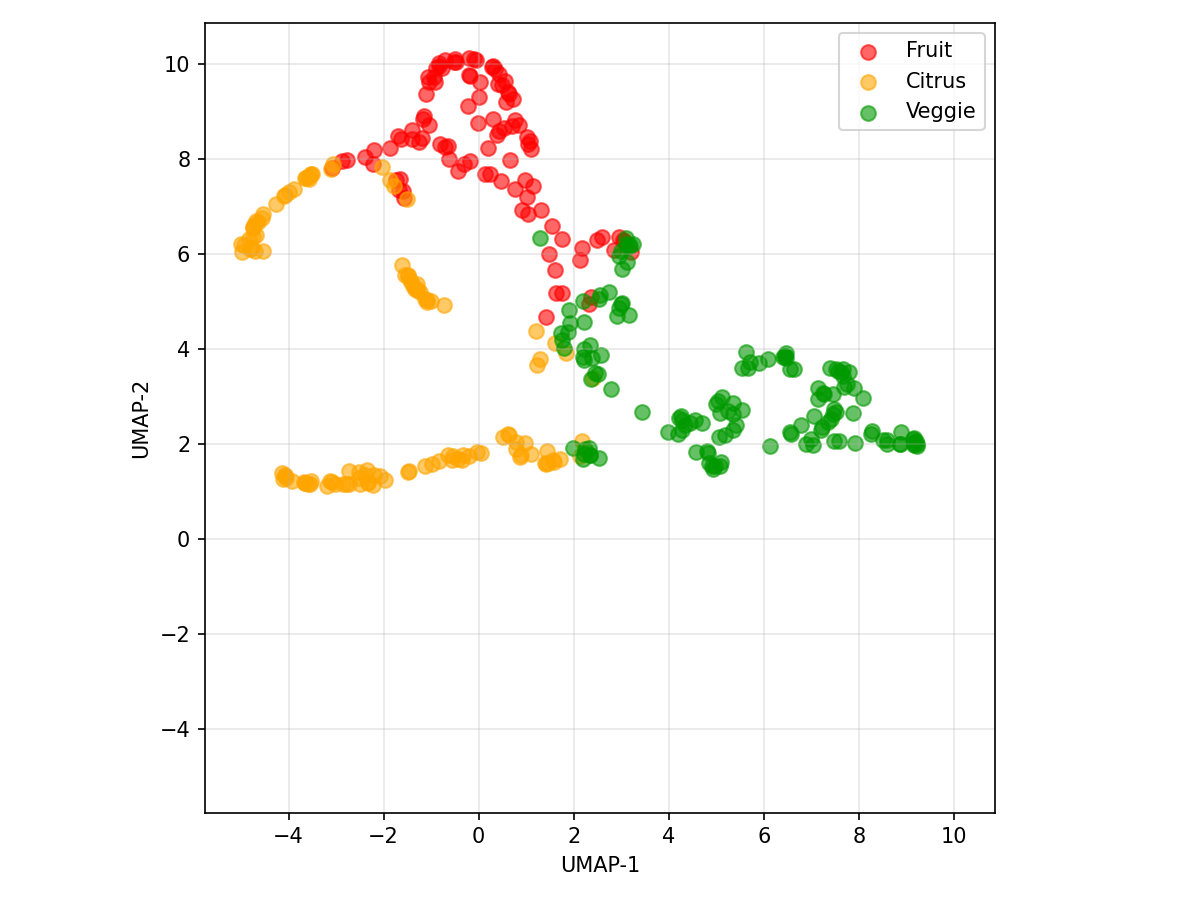}
\caption{CBOW}
\end{subfigure}

\vspace{0.6em}

\begin{subfigure}[b]{0.32\textwidth}
\centering
\includegraphics[width=\textwidth]{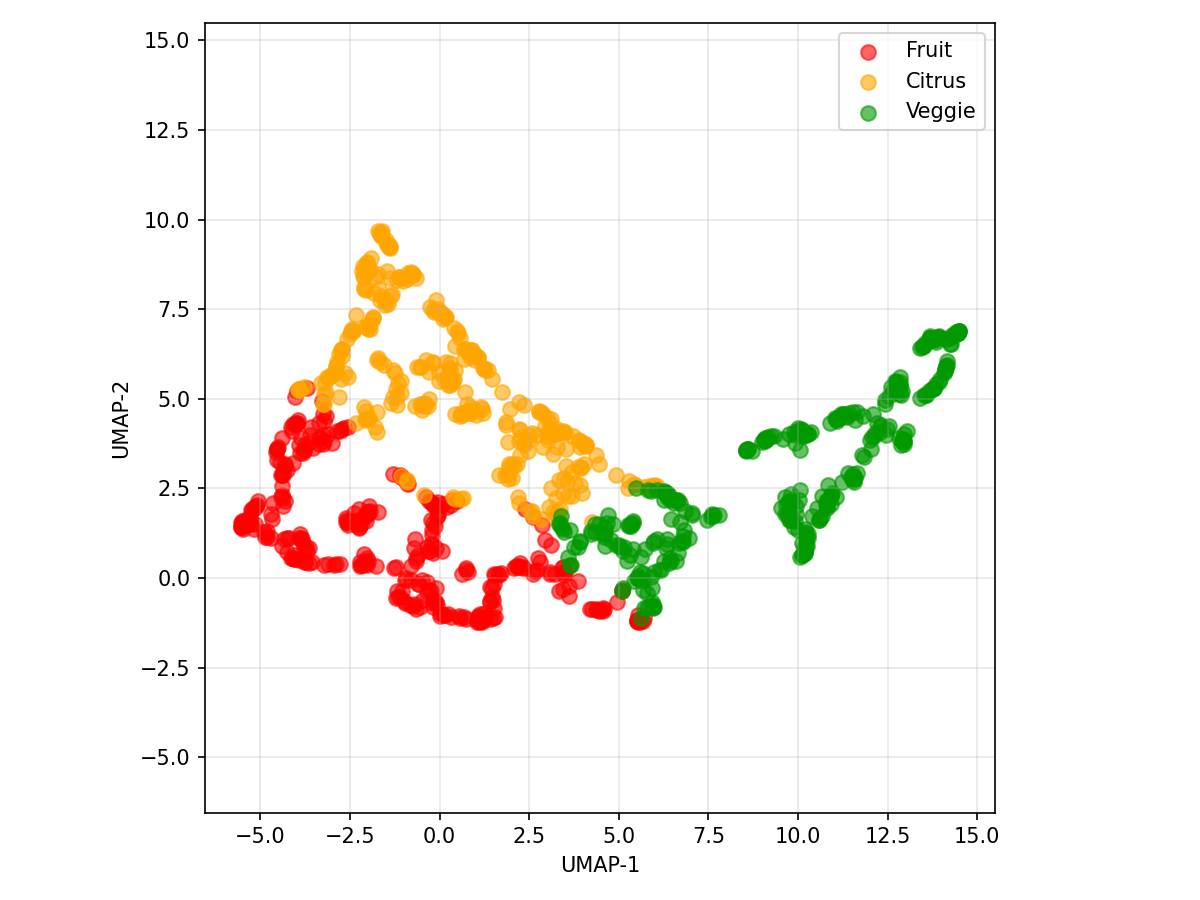}
\caption{SVD-$\beta$, large corpus}
\end{subfigure}
\hfill
\begin{subfigure}[b]{0.32\textwidth}
\centering
\includegraphics[width=\textwidth]{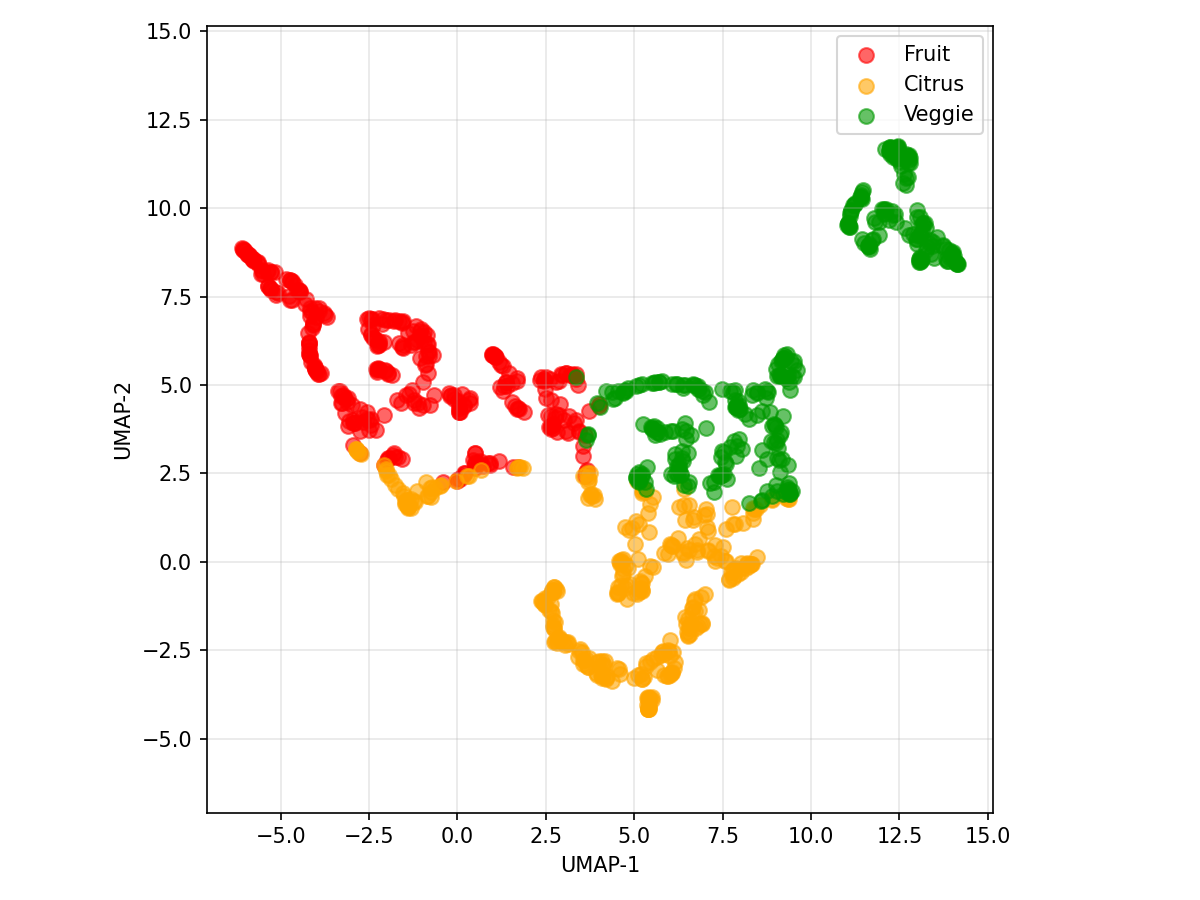}
\caption{SGNS, large corpus}
\end{subfigure}
\hfill
\begin{subfigure}[b]{0.32\textwidth}
\centering
\includegraphics[width=\textwidth]{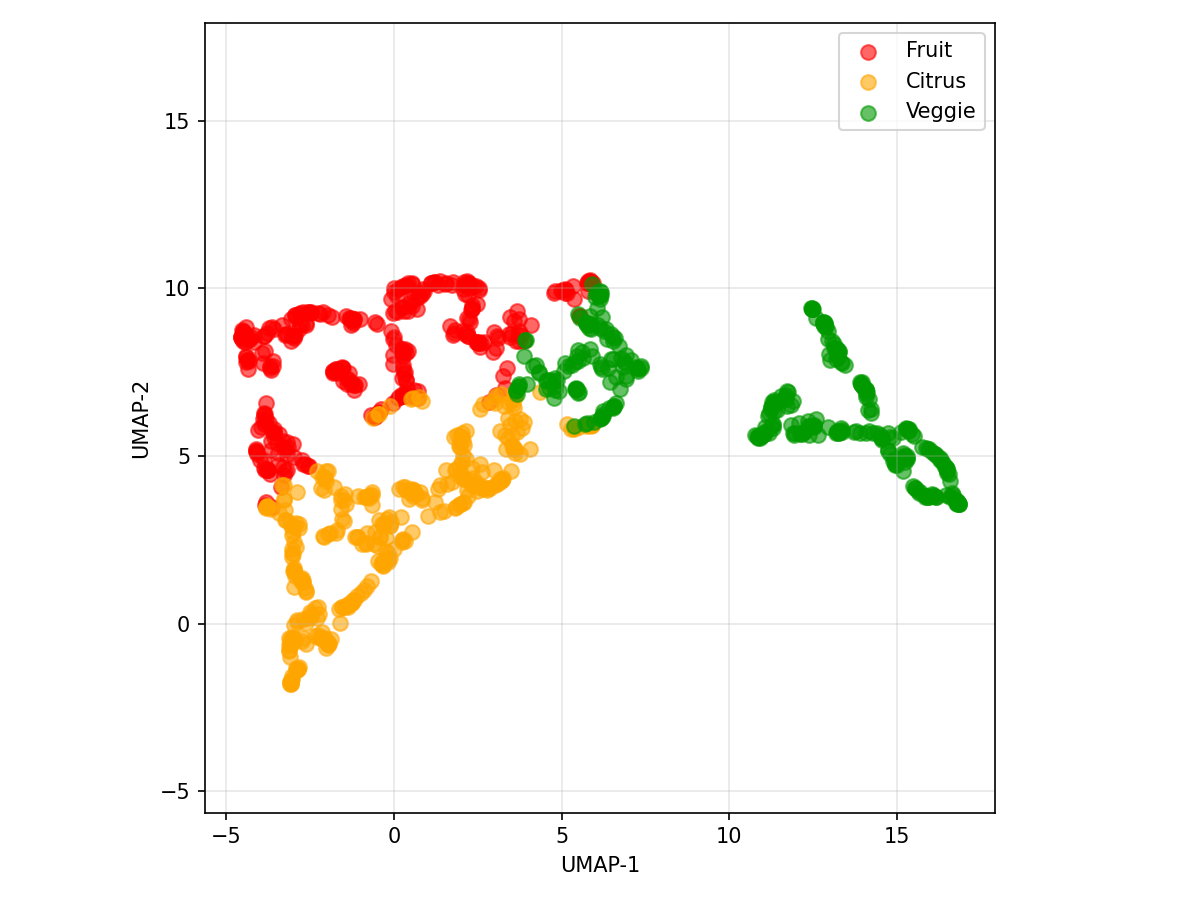}
\caption{CBOW, large corpus}
\end{subfigure}
\caption[Document embeddings, Design 2 ($K=3$), UMAP projection]{Document embeddings for Design 2 ($K=3$), UMAP-projected and colored by dominant topic. Top row: application scale ($D=363$, $N_d=487$, full document as context), the same embeddings Figure~\ref{fig:doc_emb_noanchor_k3_alt} projects linearly. Bottom row: the larger corpus ($D=1{,}000$, $N_d=10{,}000$, $J=5$), the counterpart of Figure~\ref{app-fig:paper_scale_doc_k3}. Cluster structure is preserved under UMAP at both scales.}
\label{app-fig:umap_doc_emb_k3_cluster}
\end{figure}

 \clearpage

 \subsection{Rank of the SGNS Target under a Sparser Prior}\label{app-app:target_rank_sparse}

Table~\ref{tab:target_rank_alpha1} draws every column of $B$ and every topic mixture from a symmetric Dirichlet$(\alpha=1)$, which puts no mass at the boundary of either simplex: each topic loads on every word and each document on every topic. Table~\ref{app-tab:target_rank_alpha01} repeats that grid at $\alpha = 0.1$, where both are more concentrated.

Two things change. The level falls sharply: $\pi_{K-1}$ runs from $0.68$ to $0.89$, against $0.91$ to $0.99$ at $\alpha=1$, so a rank-$(K-1)$ approximation now misses between a tenth and a third of the spectral mass of $\log R$. And the invariance to $V$ breaks for moderately sized $K$. This suggests the cost of the rank constraint grows with the vocabulary under a sparse prior.

Both follow from a larger departure from independence. The diagnostic
$\varrho = \|R - \mathbf{1}_V\mathbf{1}_V^\top\|_\infty$ stays below $0.6$ in every cell at $\alpha=1$,
but ranges from $0.84$ to $6.5$ here and exceeds one in $25$ of the $30$ cells.
Remark~\ref{rem:target_vs_embedding} expands $\log R = \beta\beta^\top + O(\varrho^2)$ under
$\varrho < 1$, so that expansion does not cover most of this table. 

\begin{table}[bh!]
\centering
\footnotesize
\setlength{\tabcolsep}{6pt}
\begin{tabular}{lrrrrrr}
\toprule
$V$ & $K=2$ & $K=3$ & $K=5$ & $K=10$ & $K=25$ & $K=50$ \\
\midrule
100 & $0.677$ & $0.693$ & $0.728$ & $0.779$ & $0.831$ & $0.890$ \\
250 & $0.677$ & $0.692$ & $0.715$ & $0.743$ & $0.788$ & $0.827$ \\
500 & $0.676$ & $0.691$ & $0.711$ & $0.731$ & $0.755$ & $0.808$ \\
1,000 & $0.677$ & $0.692$ & $0.705$ & $0.721$ & $0.726$ & $0.786$ \\
2,500 & $0.676$ & $0.691$ & $0.704$ & $0.712$ & $0.702$ & $0.747$ \\
\bottomrule
\end{tabular}
\caption[Rank of the SGNS target across $V$ and $K$]{Share of the spectral
mass of $\log R$ carried by its leading $K-1$ eigenvalues,
$\pi_{K-1} = \sum_{i \le K-1}|\lambda_i| / \sum_i |\lambda_i|$, at
$\alpha = 0.1$. Each column of $B$ and each topic mixture
$\Theta_{\bullet d}$ is drawn from a symmetric Dirichlet$(\alpha)$; $R$ is then
formed in population from $B$ and $G$, so no corpus is sampled. Means over 8
draws; the largest standard deviation in any cell is $0.016$. The
negative-sampling shift contributes one further eigenvalue, of size $V\log\nu$,
and is excluded.}
\label{app-tab:target_rank_alpha01}
\end{table}

\clearpage
\clearpage
\bibliographystyle{plainnat}
\bibliography{embeddings}
\end{document}